\newtheorem{remark}{Remark}
\newtheorem{proposition}{Proposition}[section]
\newtheorem{theorem}{Theorem}[section]
\newtheorem{corollary}{Corollary}[section]
\newtheorem{lemma}{Lemma}[section]
\numberwithin{equation}{section}
\newenvironment{proof}{\smallskip\noindent\emph{Proof.}\hspace{1pt}}%
{\hspace{-5pt}{\nobreak\quad\nobreak\hfill\nobreak$\square$\vspace{8pt}%
		\par}\smallskip\goodbreak}
\newcommand{\Lbar}{\underline{L}}
\newcommand{\Hbar}{\underline{H}}
\newcommand{\omegabar}{\underline{\omega}}
\newcommand{\hnabla}{\widehat{\nabla}}
\newcommand{\chihat}{\widehat{\chi}}
\newcommand{\chibar}{\underline{\chi}}
\newcommand{\chibarhat}{\underline{\widehat{\chi}}}
\newcommand{\ubar}{\underline{u}}
\newcommand{\be}{\begin{equation}}
\newcommand{\ee}{\end{equation}}
\newcommand{\bm}{\begin{align*}}
\newcommand{\enm}{\end{align*}}
\newcommand{\bespeq}{\begin{equation}\begin{split}}
\newcommand{\espeq}{\end{split}\end{equation}}
\newcommand{\alphabar}{\underline{\alpha}}
\newcommand{\betabar}{\underline{\beta}}
\newcommand{\etabar}{\underline{\eta}}
\newcommand{\Hodge}[1]{\prescript{*}{}{#1}}
\def\a {\alpha}
\def\b {\beta}
\newcommand{\tr}{\mbox{tr}}
\newcommand{\m}{\mu}
\newcommand{\n}{\nu}
\newcommand\restri[2]{{% we make the whole thing an ordinary symbol
		\left.\kern-\nulldelimiterspace % automatically resize the bar with \right
		#1 % the function
		%\vphantom{\big|} % pretend it's a little taller at normal size
		\right|_{#2} % this is the delimiter
}}
\definecolor{ffqqqq}{rgb}{1.,0.,0.}
\definecolor{uuuuuu}{rgb}{0.26666666666666666,0.26666666666666666,0.26666666666666666}
\def\ps@pprintTitle{%
  \let\@oddhead\@empty
  \let\@evenhead\@empty
  \let\@oddfoot\@empty
  \let\@evenfoot\@oddfoot
}
\def\@author#1{\g@addto@macro\elsauthors{\normalsize%
    \def\baselinestretch{1}%
    \upshape\authorsep#1\unskip\textsuperscript{%
      \ifx\@fnmark\@empty\else\unskip\sep\@fnmark\let\sep=,\fi
      \ifx\@corref\@empty\else\unskip\sep\@corref\let\sep=,\fi
      }%
    \def\authorsep{\unskip,\space}%
    \global\let\@fnmark\@empty
    \global\let\@corref\@empty  %% Added
    \global\let\sep\@empty}%
    \@eadauthor={#1}
}
\begin{document}
\begin{frontmatter}
	
\title{Einstein-Yang-Mills equations in the double null framework}
\author{Puskar Mondal\fnref{fn1,fn2}}
\ead{puskar_mondal@fas.harvard.edu}
\author{Shing-Tung Yau\fnref{fn1,fn2}}
\ead{yau@math.harvard.edu}
\fntext[fn1]{Centre of Mathematical Sciences and Applications, Harvard University}
\fntext[fn2]{Department of Mathematics, Harvard University}

\begin{abstract}
\vspace{5pt}
\par \noindent 	We prove a semi-global gauge-invariant estimate for the solutions of the characteristic initial value problem associated with the coupled Einstein-Yang-Mills equations. In particular, we prove the existence of \textit{a} future development of regular initial data on a pair of incoming and outgoing null hypersurfaces emanating from a spacelike topological $2$-sphere. This marks the first study of the characteristic initial value problem of Einstein's equations with  a non-linear source. 
\end{abstract}
\end{frontmatter}

\begin{abstract}
 
\end{abstract}

\medskip

\tableofcontents

% ========== Begin sections here

\begin{section} {Introduction and motivation}
\noindent One of the important problems of modern general relativity is the dynamical formation of spacetime singularities and their stability properties. According to Penrose's \textit{weak cosmic censorship conjecture}, \cite{penrose1999question}, the singularities in a general relativistic system can not be accessed by a future observer. If the singularities were to occur, they had to be hidden behind a horizon and therefore are not accessible to an observer located in the domain of outer communication. In the original singularity theorem of Penrose, the formation of a future singularity was understood in terms of the null geodesic incompleteness and such an incompleteness required the formation of a \textit{trapped} surface in a spacetime with certain topological properties (such as the spacetime admits a non-compact Cauchy hypersurface) \cite{penrose1965gravitational}. Following Penrose's analysis, trapped surface formation implies geodesic incompleteness, and therefore at a formal level formation of a trapped surface corresponds to the formation of a black hole. However, the major challenge in the fully general relativistic setting (and possibly without symmetry) is the precise condition under which a trapped surface may form. A few years after Penrose's incompleteness theorem was published, Schoen and Yau \cite{schoen1983existence} proved that for an asymptotically flat initial data set with mass density large on a large region, there is a closed trapped surface in the initial data. The future evolution of such data would then generate a geodesically incomplete spacetime according to Penrose's theorem. Much Later, \cite{christodoulou2012formation} proved the formation of trapped surface in an evolutionary manner. More specifically, he showed that regular dispersed initial data that contains no trapped surface \textit{can} lead to the formation of a trapped surface under the Einsteinian evolution of vacuum spacetime. Later \cite{klainerman2012formation} presented a simplified proof of the formation of a trapped surface in vacuum gravity and enlarged the admissible set of initial data.

Moving one step further, one would like to understand the formation of black holes (trapped surfaces) including suitable sources. This is of course motivated by the fact that our universe contains the structure and such structure is expected to arise due to matter (or radiation)-gravity interaction. Therefore, it is important to couple Einstein's equations with suitable sources and subsequently study the coupled dynamics. There has been progress in studying the trapped surface formation in the context of source-coupled Einstein dynamics over the past few years. \cite{yu2011dynamical} proved the dynamical formation of a trapped surface by coupling electromagnetic field to Einstein's gravity without symmetry assumption. \cite{an2020trapped} established a trapped surface formation criterion for the Einstein-Maxwell-charged scalar field system under the assumption of spherical symmetry. There are several other studies including Vlasov matter source (\cite{andreasson2012black}), perfect fluid source \cite{burtscher2014formation}, and null dust source \cite{moschidis2020proof} and studies in the context of trapped surface formation by focusing incoming gravitational radiation from the past null infinity \cite{an2014trapped} as well. The first step towards studying a trapped surface formation is to establish a well-posedness result of the characteristic initial value problem.

Apart from the formation of singularities that are hidden behind a horizon and therefore inaccessible to the observers located in the domain of outer communication, naked singularities are of significant importance in general relativity. As we have mentioned in beginning, the existence of this type of singularity is ruled out by Penrose's weak cosmic censorship conjecture. In other words, the existence of such a singularity that is accessible by an observer at future null infinity would indicate a pathological breakdown of Einstein's theory.  
Christodoulou \cite{christodoulou1994examples} showed a possible formation of a naked singularity in the context of Einstein-scalar field dynamics right before the collapse to a black hole. The genericity of such singularity is known to be violated i.e., perturbations seem to destroy such singularity \cite{christodoulou1999instability, liu2018robust} and as such they appear to be rather an artifact of high symmetry of the spacetimes. Apart from the study of \cite{christodoulou1994examples}, recently \cite{rodnianski2018asymptotically,rodnianski2019naked} introduced a new type of geometric twisting phenomenon that contributes to the formation of a `naked' singularity in a self-similar vacuum setting. However, the genericity of such solutions remains to be studied. Recently, Yau, Chen, and Du \cite{Yau} constructed a remarkable family of spherically symmetric solutions of the Einstein-Yang-Mills equations that possess the property of being regular at the center of symmetry. However, the spacetime Riemann curvature (suitable invariant) is shown to blow up at the apparent horizon. Robust numerical studies suggest $C^{0}$-stability of such solutions in the class of spherical symmetry. Contrary to the Einstein-Maxwell system or Einstein-scalar field system, Einstein-Yang-Mills equations are tremendously rich even in spherical symmetry and exhibit non-trivial dynamics. The numerical result of Bartnik \cite{bartnik1988particlelike} first showed the existence of a countable family of soliton-type solutions that are globally regular. Later Yau, Wasserman, and Smoller \cite{smoller1991smooth} rigorously proved the existence of such soliton-like solutions. However, such solutions were proven to be unstable against perturbations \cite{straumann1990instability}. Later \cite{smoller1993existence} also proved the existence of an infinite family of black hole solutions with a regular event horizon. The existence of these nontrivial solutions essentially unfolds the rich characteristics of the Einstein-Yang-Mills system. Due to the non-linear characteristics of the Yang-Mills fields, the fully coupled Einstein-Yang-Mills system is dynamically flexible i.e., both the possibility of the existence of regular solutions and the formation of singularities are open. This is precisely due to the fact that the non-linearity of Yang-Mills fields can counterbalance the non-linearity of gravity and the formation of singularity or regularity of the solutions is essentially dictated by the dominating one which in turn depends on several additional conditions. Returning back to the EYM solution \cite{Yau} containing a naked singular horizon, one is compelled to ask the following question: can these solutions arise in an evolutionary manner? In other words, one would want to study an initial value problem where the initial data is assumed to be sufficiently regular and possesses a degree of genericity and investigate whether such data can yield these naked singular solutions in finite time. This is motivated by the weak cosmic censorship conjecture \cite{penrose1999question} that rules out the possibility of the existence of evolutionary naked singularity (arising from regular and generic data). In addition, one would also like to provide analytical arguments supporting the stability (instability) of these solutions.

Motivated by these fundamental problems, we initiate the study of the dynamics of the Einstein-Yang-Mills system in the setting of a characteristic initial value formulation. In particular, we want to explore the nonlinear interaction of gravity and the Yang-Mills field and study two problems in the potential future: deducing the criteria to form trapped surfaces and naked singularities. The first step towards proving a trapped surface formation result is to establish a semi-global existence property of the coupled system i.e., one needs to ensure that the spacetime exists for long enough to form a trapped surface. Since the null hypersurfaces \textit{are} the carrier of the gravitational and Yang-Mills radiation (both have the same characteristics), it is most natural to work in this \textit{double null} framework. In addition, the naked singular solution of \cite{Yau} does not arise at the origin but rather on a sphere of finite radius, and therefore the question of the stability of such solutions translates to an exterior stability problem. In other words, one would like to understand if one perturbs these solutions, can the energy of the perturbations escape through the outgoing null cones, or can they potentially focus to form a trapped surface thereby destroying the naked singularities. In order to address such a question, the double null framework seems to be the most natural one to adapt. 

The study of characteristic initial value problem for vacuum Einstein equation was initiated by Rendall \cite{rendall1992characteristic}. In particular, \cite{rendall1992characteristic} proved the existence of a solution to the characteristic initial value problem in a small enough neighborhood of the intersection of an outgoing and an incoming null hypersurface. This construction is not very useful in the context of studying trapped surface formation since in the latter one ought to evolve the initial data long enough along one of the null directions. Later Luk \cite{luk2012local} improved the time of existence along one of the null directions in the context of vacuum gravity in a fairly straightforward way. However, it turns out that coupling to the Yang-Mills source (or Maxwell for that matter) complicates the analysis, and as such the analysis of \cite{luk2012local} does not apply due to obstruction of closing the regularity argument. Roughly, the complication arises due to the presence of Yang-Mills source terms in the null Bianchi equations for the Weyl curvature. The appearance is such that one requires the Yang-Mills curvature components to have a regularity level one order higher than that of Weyl curvature components. In other words, if we work with $K$ ($K\geq 3$) angular derivatives of Weyl curvature in $L^{2}(H,  \Hbar)$ ($H$ and $  \Hbar$ denote the outgoing and incoming null hypersurfaces to be defined later), then from the null Bianchi equations for the Weyl curvature, one would need to control $K+1$ angular derivatives of the Yang-Mills curvature. This in turn would require control of the $K+1$ angular derivatives of the connection coefficients from the null Yang-Mills equations. However, this seems to be incompatible with the analysis of \cite{luk2012local} since the latter is compatible with controlling $K$ angular derivatives of the connection coefficients on the topological 2-spheres. To circumvent this issue, elliptic estimates become indispensable. Here, we choose to work with the optimal regularity level of \cite{klainerman2012formation} (or a higher-order regularity level consistent with the optimal regularity in a relative sense). In particular, we work with only $1$ angular derivative of the Weyl curvature bounded in $L^{2}(H,  \Hbar)$. This in turn requires control of $2$ angular derivatives of the Yang-Mills curvature and the space-time connection coefficients. This regularity argument can be closed by means of the aforementioned elliptic estimates and trace estimates. Of course, one can propagate these estimates to successive higher orders yielding estimates for a classical solution. In addition to the subtlety associated with the regularity level, an important difficulty arises in the choice of gauge. Since Yang-Mills theory is a gauge theory after all, one ought to work in a particular choice of gauge (or equivalently descend to the `orbit space' of the theory). Unfortunately, there does not exist a global gauge choice in Yang-Mills theory (in topological terms, one can not find a single chart to cover the entire orbit space \cite{singer1978some}). The traditional choice of Lorentz gauge is known to develop finite time coordinate singularities in non-abelian theory contrary, to the linear Maxwell theory where such a breakdown is absent.  In fact, the geometry of the orbit space of the theory (i.e., the space of connections modulo the bundle automorphisms) has a non-trivial effect in the matter of gauge choice \cite{babelon1981riemannian, narasimhan1979geometry}. The positivity of sectional curvature \cite{babelon1981riemannian} of the orbit space leads to the development of the so-called `Gribov horizon'  \cite{moncrief1979gribov} which essentially indicates the breakdown of the so-called `Coulomb' gauge. However, this gauge issue can be avoided since the Yang-Mills equations are manifestly hyperbolic in the double null framework if one works with the fully gauge covariant derivative instead of splitting it into the spacetime covariant derivative part and the pure gauge part \footnote{This gauge-invariant formalism is not known to have applied before}. Therefore, in the double null framework, we work with a manifestly hyperbolic system of coupled Einstein-Yang-Mills Bianchi equations supplemented by the constraints (of elliptic nature) and transport equations. In addition, our analysis does not require a smallness assumption on the initial data. 

The structure of the article is as follows. Starting from the null structure equations and a bootstrap assumption on the connection coefficients, we derive the necessary estimates for the connection coefficients which allows us to estimate the sectional curvature of the topological $2-$spheres throughout the spacetime slab of interest using the null Hamiltonian constraint. This in turn allows us to utilize the null Codazzi equations to obtain elliptic estimates. Utilizing these estimates, we then use a direct integration by parts argument and the null evolution equations for the Weyl curvature and the Yang-Mills curvature to obtain the energy estimates in terms of the initial data thereby closing the bootstrap argument. A few conclusions are drawn based on our result and we make a conjecture about the nonlinear exterior stability of the Minkowski space under Einstein-Yang-Mills perturbations.

\end{section}

\section{Preliminaries}
\subsection{Canonical double null foliation}
\noindent Let $M$ be a $C^{\infty}-$ manifold equipped with a Lorentzian metric $g$. The information contained in Einstein-Yang-Mills equations is captured here through the structure equations, null Bianchi equations, and the null Yang-Mills equations associated with a double null foliation $(u,\ubar)$. The constant $u$ and $\ubar$ hypersurfaces are outgoing and incoming null hypersurfaces, respectively and they intersect at a spacelike topological $2-$sphere that is denoted by $S_{u\ubar}$.  The null hypersurfaces $H$ and $  \Hbar$ of $(M,g)$ are described by the level sets of the optical functions $u$ and $\ubar$, respectively. Assume $u$ and $\ubar$ satisfy the Eikonal equations 
\begin{eqnarray}
g^{\mu\nu}\partial_{\mu}u\partial_{\nu}u=0,~g^{\mu\nu}\partial_{\mu}\ubar\partial_{\nu}\ubar=0,
\end{eqnarray}
where $g^{\mu\nu}:(g^{-1})^{\mu\nu}$.
Through the variation of $u$ and $\ubar$, we can foliate a spacetime slab $\mathcal{D}_{u,\ubar}$ by these two families of null hypersurfaces. The geodesic generators of the double null foliation are the vector fields $L$ and $\Lbar$ given by 
\begin{eqnarray}
L:=-g^{\mu\rho}\partial_{\rho}u\partial_{\mu},~\Lbar:=-g^{\mu\rho}\partial_{\rho}\ubar\partial_{\mu}
\end{eqnarray}
and manifestly they satisfy 
\begin{eqnarray}
\nabla_{L}L=0=\nabla_{\Lbar}\Lbar.
\end{eqnarray}
Whenever, we say $H$ (resp. $  \Hbar$) we will always mean $H_{u}$ (resp. $  \Hbar_{\ubar}$) i.e., level sets of $u$ (resp. $\ubar$). In this notation, $H_{0}$ and $  \Hbar_{0}$ are the two initial null hypersurfaces corresponding to $u=0$ and $\ubar=0$, respectively on which the data is to be provided. Intersection of $H$ and $  \Hbar$ is a topological $2-$ sphere $S_{u,\ubar}$. Evidently, the slab $\mathcal{D}_{u,\ubar}$ (see the picture) is the causal future of $S_{0,0}$ extended up to $u=\epsilon$ and $\ubar=J$ for sufficiently large $J>0$. A point in $(M,g)$ is denoted by $(u,\ubar,\theta^{1},\theta^{2})$. Let $(\theta^{1},\theta^{2})$ be a chart of an open set of the initial sphere $S_{0,0}$. These functions can be extended to define a coordinate chart in an open subset of spacetime. First define $(\theta^{1},\theta^{2})$ on $H_{0}$ by solving 
\begin{eqnarray}
L(\theta^{A})=0,~A=1,2.
\end{eqnarray}
Now we can obtain ($\theta^{1},\theta^{2}$) for $u>0$ by solving 
\begin{eqnarray}
\Lbar(\theta^{A})=0,~ ~A=1,2.   
\end{eqnarray}
In other words, we may first define a coordinate chart $\mathcal{A}$ on $S_{0,0}$ then drag it by the flow of the vector field $L$ along $H_{0}$ and then drag it by the flow of $\Lbar$ to fill out the entire slab $\mathcal{D}_{u,\ubar}$. For another chart $\mathcal{B}$ on $S_{0,0}$, one repeats the procedure. Provided $\mathcal{A}$ and $\mathcal{B}$ cover the sphere $S_{0,0}$, the constructed charts cover the desired open set of spacetime. The spacetime metric in the double null coordinates takes the following form (in a local chart $(u,\ubar,\theta^{1},\theta^{2})$) 
\begin{eqnarray}
g:=-2\Omega^{2}(du\otimes d\ubar+d\ubar\otimes du)+\gamma_{AB}(d\theta^{A}-b^{A}du)\otimes (d\theta^{B}-b^{B}du),
\end{eqnarray}
where $\Omega$ is the null lapse function and $b:=b^{A}\frac{\partial}{\partial \theta^{A}}$ is the null shift vector field. $\{\theta^{A}\}_{A=1}^{2}$ are the coordinates on the topological sphere $S_{u,\ubar}$. The induced metric on $S_{u,\ubar}$ is $\gamma_{AB}$. We can identify a normalized null frame $(e_{4},e_{3},e_{1},e_{2})$ such that $g(e_{4},e_{4})=g(e_{3},e_{3})=0=g(e_{A},e_{4})=g(e_{A},e_{3})=0$ and $g(e_{4},e_{3})=-2$, where $(e_{1},e_{2})$ is an arbitrary frame on $S_{u,\ubar}$. We may identify $e_{4}$ and $e_{3}$ as follows 
\begin{eqnarray}
e_{4}=\Omega^{-1}\frac{\partial}{\partial \ubar},~e_{3}=\Omega^{-1}(\frac{\partial}{\partial u}+b^{A}\frac{\partial}{\partial \theta^{A}}).
\end{eqnarray}
For more detailed information about the double null foliation of a spacetime, see \cite{klainerman2012formation,luk2012local, klainerman2012evolution}\\

\subsection{Yang-Mills Theory}
\noindent Now we define a Yang-Mills theory on $(M,g)$. We denote by $\mathfrak{P}$ a $C^{\infty}$ principal bundle with base a $3+1$ dimensional Lorentzian manifold $M$ and a Lie group $G$. We assume that $G$ is compact semi-simple (for physical purposes) and therefore admits a positive definite non-degenerate bi-invariant metric. Its Lie algebra $\mathfrak{g}$ by construction admits an adjoint invariant, positive definite scalar product denoted by  $\langle~,~\rangle$ which enjoys the property: for $A, B, C\in \mathfrak{g}$,
\begin{eqnarray}
\label{eq:adinvpp}
\langle[A,B],C\rangle=\langle A,[B,C]\rangle.
\end{eqnarray}
as a consequence of adjoint invariance.
A Yang-Mills connection is defined as a $1-$form $w$ on $\mathfrak{P}$ with values in $\mathfrak{g}$ endowed with compatibility properties. Its representative in a local trivialization of $\mathfrak{P}$ over $U\subset M$, 
\begin{eqnarray}
\varphi: p\mapsto (x,a),~p\in \mathfrak{P},~x\in U,~a\in G
\end{eqnarray}
is the $1-$form $s^{*}w$ on $U$, where $s$ is the local section of $\mathfrak{P}$ corresponding canonically to the local trivialization  $s(x)=\varphi^{-1}(x,e)$, called a \textit{gauge}. Let $A_{1}$ and $A_{2}$ be representatives of $w$ in gauges $s_{1}$ and $s_{2}$ over $U_{1}\subset M$ and $U_{2}\in M$. In $U_{1}\cap U_{2}$, one has 
\begin{eqnarray}
\label{eq:gauge}
A_{1}=Ad(u^{-1}_{12})A_{2}+u_{12}\Theta_{MC},
\end{eqnarray}
where $\Theta_{MC}$ is the Maurer-Cartan form on $G$, and $u_{12}:U_{1}\cap U_{2}\to G$ generates the transformation between the two local trivializations: \begin{eqnarray}
s_{1}=R_{u_{12}}s_{2},
\end{eqnarray}
$R_{u_{12}}$ is the right translation on $\mathfrak{P}$ by $u_{12}$. Given the principal bundle $\mathfrak{P}\to M$, a Yang-Mills potential $A$ on $M$ is a section of the fibered tensor product $T^{*}M\otimes_{M}\mathfrak{P}_{Affine,\mathfrak{g}}$ where $\mathfrak{P}_{Affine,\mathfrak{g}}$ is the affine bundle with base $M$ and typical fibre $\mathfrak{g}$, associated to $\mathfrak{P}$ via relation (\ref{eq:gauge}). If $\widehat{A}$ is another Yang-Mills potential on $M$, then $A-\widehat{A}$ is a section of the tensor product of vector bundles $T^{*}M\otimes_{M}P_{Ad,\mathfrak{g}}$, where $\mathfrak{P}_{Ad,\mathfrak{g}}:=\mathfrak{P}\times _{Ad}\mathfrak{g}$ is the vector bundle associated to $\mathfrak{P}$ by the adjoint representation of $G$ on $\mathfrak{g}$. There is an inner product in the fibers of $\mathfrak{P}_{Ad,\mathfrak{g}}$, induced from that on $\mathfrak{g}$. The curvature $\mathbf{C}$ of the connection $w$ considered as a $1-$ form on $\mathfrak{P}$ is a $\mathfrak{g}$-valued $2-$form on $\mathfrak{P}$. Its representative in a gauge where $w$ is represented by $A$ is given by 
\begin{eqnarray}
F:=dA+[A,A],
\end{eqnarray}
and relation between two representatives $F_{1}$ and $F_{2}$ on $U_{1}\cap U_{2}$ is $F_{1}=Ad(u^{-1}_{12})F_{2}$ and therefore $F$ is a section of the vector bundle $\Lambda^{2}T^{*}M\otimes _{M}\mathfrak{P}_{Ad,\mathfrak{g}}$. For a section $\mathfrak{O}$ of the vector bundle $\otimes^{k}T^{*}M\otimes _{M}\mathfrak{P}_{Ad,\mathfrak{g}}$, a natural covariant derivative is defined as follows 
\begin{eqnarray}
\label{eq:covariant}
\widehat{D}\mathfrak{O}:=D\mathfrak{O}+[A,\mathfrak{O}],
\end{eqnarray}
where $D$ is the usual covariant derivative induced by the Lorentzian structure of $M$ and by construction $\widehat{D}\mathfrak{O}$ is a section of the vector bundle $\otimes^{k+1}T^{*}M\otimes _{M}\mathfrak{P}_{Ad,\mathfrak{g}}$. The Yang-Mills coupling constant $g_{YM}$ is set to 1. If the space of connections in a particular Sobolev class is denoted by $\mathcal{A}$ and $\mathcal{G}$ is the automorphism group of the bundle $\mathfrak{P}$, then the true configuration space (or the orbit space) of the theory is $\mathcal{A}/\mathcal{G}$\footnote{$\mathcal{A}/\mathcal{G}$ is an infinite dimensional manifold modulo certain additional criteria}. 

\noindent The classical Yang-Mills equations (in the absence of sources)
correspond to setting the natural (spacetime and gauge as defined in \ref{eq:covariant}) covariant divergence of this curvature
two-form $F$ to zero. By virtue of its definition in terms of the connection, this curvature also satisfies
the Bianchi identity that asserts the vanishing of its gauge covariant exterior derivative. Taken
together, these equations provide a geometrically natural nonlinear generalization of Maxwell's
equations (when the latter are written in terms of a `vector potential') and of course, play a
fundamental role in modern elementary particle physics. If nontrivial bundles are considered
or nontrivial spacetime topologies are involved, then the foregoing so-called `local trivializations' of the bundles in question must be patched together to give global descriptions but, by virtue of the covariance of the formalism, there is a natural way of carrying out this patching procedure, at least over those regions of spacetime where the connections are well-defined. 

We choose a vector space $V$ and a matrix representation for the action of $G$ on $V$. For simplicity, let us confine our attention to real representations though, in fact, this restriction is inessential. We now consider vector bundles over spacetime (so-called ‘associated’ bundles) with standard fiber $\simeq V$. Cross sections of such bundles would represent, in physical terms, multiplets of Higgs fields. To formulate field equations for such Higgs fields that are naturally covariant with respect to automorphisms of the associated vector bundle (i.e., with respect to gauge transformations acting on the Higgs fields) one needs a covariant derivative operator $\widehat{D}$ or connection defined on this bundle. Such an object is naturally induced from a ‘fundamental’ connection on the principal $G$-bundle described above and in turn, induces (when expressed relative to a local trivialization) a one-form on (some local chart for) the base manifold with values in the chosen matrix representation for the Lie algebra $\mathfrak{g}$. Let us consider the dimension of the group $G$ to be $\text{dim}_{G}$ and since $\mathfrak{g}:=T_{e}G$, it has a natural vector space structure. Assume that the vector space $\mathfrak{g}$ has a basis $\{\chi_{A}\}_{A=1}^{\text{dim}_{G}}$ given by a set of $k\times k$ real valued matrices ($k$ being the dimension of the representation $V$ of the Lie algebra $\mathfrak{g}$). The connection $1-$form field is then defined to be 
\begin{eqnarray}
A:=A^{A}_{\mu}\chi_{A}dx^{\mu}=A^{A}_{\mu}(\chi_{A})^{P}_{Q}dx^{\mu}=A^{P}~_{Q\mu}dx^{\mu},~P,Q=1,2,3,...,k.
\end{eqnarray}
From now on by the connection 1-form field $A_{\mu}$, we will always mean $A^{P}~_{Q\mu}$. In the current setting $A\in \Omega^{1}(M;\text{End}(V))$, where $\text{End}(V)$ denotes the space of endomorphisms of the vector space $V$. The curvature of this connection is defined to be the Yang-Mills field $F\in \Omega^{2}(M;\text{End}(V))$
\begin{eqnarray}
F^{P}~_{Q\mu\nu}:=\partial_{\mu}A^{P}~_{Q\nu}-\partial_{\nu}A^{P}~_{Q\mu}+[A,A]^{P}~_{Q\mu\nu},
\end{eqnarray}
where the bracket is defined on the Lie algebra $\mathfrak{g}$ and given by the commutator of matrices under multiplication. The Yang-Mills coupling constant is set to unity. Since $G$ is compact, it admits a positive definite adjoint invariant metric on $\mathfrak{g}$. We choose a basis of $\mathfrak{g}$ such that this adjoint invariant metric takes the Cartesian form $\delta_{AB}$ and work with representations for which the bases satisfy 
\begin{eqnarray}
-\text{tr}(\chi_{A}\chi_{B})=(\chi_{A})^{P}_{Q}(\chi_{B})^{Q}_{P}=\delta_{AB}.
\end{eqnarray}

\noindent Under a gauge transformation by $\mathcal{U}$, the $\mathfrak{g}$-valued 1-form field $A$ transforms as 
\begin{eqnarray}
A_{\mu}\mapsto \mathcal{U}^{-1}A_{\mu}\mathcal{U}+\mathcal{U}\partial_{\mu}\mathcal{U}^{-1} 
\end{eqnarray}
and therefore $A_{\mu}$ is not a tensor in the sense that it is \textit{not} a $(1,1)$ section of the associated $V-$bundle over $\mathcal{D}_{u,\ubar}$. For any $\mathfrak{g}$-valued section $\mathcal{K}^{P}~_{Q\mu_{1}\mu_{2}\mu_{3}....\mu_{k}}$ of a vector bundle over $\mathcal{D}_{u,\ubar}$ that transforms as a tensor under the gauge transformation, the gauge covariant derivative is defined to be  
\begin{align}
\label{eq:gaugecov}
\nonumber &\widehat{D}_{\alpha}\mathcal{K}^{P}~_{Q\mu_{1}\mu_{2}\mu_{3}....\mu_{k}}
:= D_{\alpha}\mathcal{K}^{P}~_{Q\mu_{1}\mu_{2}\mu_{3}....\mu_{k}}+A^{P}~_{R\alpha}\mathcal{K}^{R}~_{Q\mu_{1}\mu_{2}\mu_{3}....\mu_{k}}-A^{R}~_{Q\alpha}\mathcal{K}^{P}~_{R\mu_{1}\mu_{2}\mu_{3}....\mu_{k}}\\
=& D_{\alpha}\mathcal{K}^{P}~_{Q\mu_{1}\mu_{2}\mu_{3}....\mu_{k}}+[A
,\mathcal{K}]^{P}~_{Q\mu_{1}\mu_{2}\mu_{3}....\mu_{k}},
\end{align}
where $D_{\alpha}$ is the ordinary spacetime covariant derivative with respect to a Lorentzian metric on $\mathcal{D}_{u,\ubar}$. Even though the connection of the gauge bundle appears in the definition of the gauge covariant derivative, we will never make explicit use of it in the current context but rather work with the fully gauge covariant derivative $\widehat{D}$. More specifically, in our analysis, we  will encounter the commutator of the fully gauge covariant derivative which yields Riemann curvature and Yang-Mills curvature components. In other words, using the fully gauge covariant derivative, we do not encounter any connection terms, allowing us to obtain estimates in a gauge-invariant way. The commutator of the fully gauge covariant derivative while acting on a $\mathfrak{g}$-valued section of a vector bundle $\mathcal{K}^{P}~_{Q\mu_{1}\mu_{2}\mu_{3}....\mu_{k}}$ (or a section of the mixed bundle) yields 
\begin{align} \nonumber 
[\widehat{D}_{\alpha},\widehat{D}_{\beta}]\mathcal{K}^{P}~_{Q\mu_{1}\mu_{2}\mu_{3}....\mu_{k}}=& F^{P}~_{R\alpha\beta}\mathcal{K}^{R}~_{Q\mu_{1}\mu_{2}\mu_{3}....\mu_{k}}-F^{R}~_{Q\alpha\beta}\mathcal{K}^{P}~_{R\mu_{1}\mu_{2}\mu_{3}....\mu_{k}}\\ -&\sum_{i}R^{\gamma}~_{\mu_{i}\alpha\beta}\mathcal{K}^{P}~_{Q\mu_{1}\mu_{2}\mu_{3}...\hat{\gamma}....\mu_{k}},
\end{align}
where $\hat{\gamma}$ indicates the removal of the index $\hat{\mu}_{i}$ and replacing by $\gamma$. Note that $\widehat{D}$ is compatible with both the metrics and therefore the commutator produces curvature of the mixed bundle (Yang-Mills curvature and spacetime curvature). The action of $\widehat{D}$ is only well-defined on \textit{sections} of the mixed bundle.

\begin{center}
\begin{figure}
\begin{center}
\includegraphics[width=13cm,height=60cm,keepaspectratio,keepaspectratio]{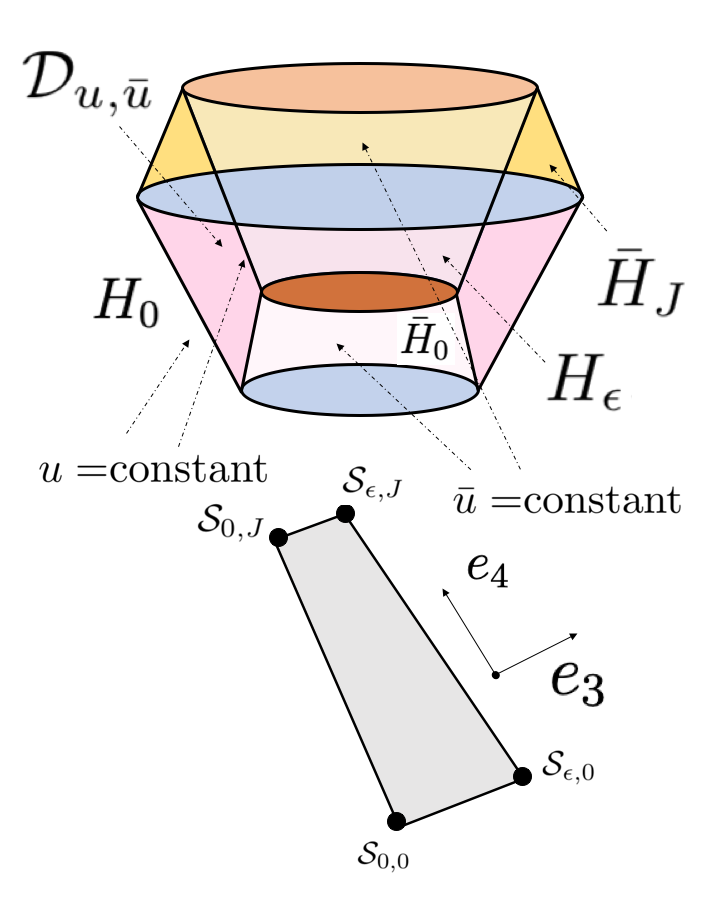}
\end{center}
\begin{center}
\caption{Figure depicting the double null foliation adapted to the current framework. The bottom picture is constructed after taking a formal quotient of the top figure by the topological $2-$spheres i.e., each point of the bottom figure denotes a topological $2-$sphere. The theorem of \cite{rendall1992characteristic} deals with the existence of a solution to the characteristic initial value problem in an $\epsilon$ neighborhood of $S_{0,0}$. Note $\mathcal{D}_{u,\ubar}$ denotes the bulk spacetime region that is causal future of $S_{0,0}$ up to $u=\epsilon, \ubar=J$.}
\label{fig:doublenull}
\end{center}
\end{figure}
\end{center}

\section{Field equations in the double null framework}
\subsection{Bianchi and Yang-Mills equations}
\noindent In this section, we explicitly define all the entities associated with the double null foliation and write down structure equations. The spacetime covariant derivative $D$ admits the following usual decomposition in terms of its components parallel and orthogonal to the topological $2-$sphere $S_{u,\ubar}$
\begin{eqnarray}
D_{e_{a}}e_{b}=\nabla_{e_{a}}e_{b}-\frac{1}{2}\langle\mathcal{D}_{e_{a}}e_{b},e_{3}\rangle e_{4}-\frac{1}{2}\langle\mathcal{D}_{e_{a}}e_{b},e_{4}\rangle e_{3},
\end{eqnarray}
where $\nabla$ is the $S_{u,\ubar}$-parallel covariant derivative. Similar to the spacetime covariant derivative, the spacetime gauge covariant derivative $\widehat{D}$ admits the following decomposition
\begin{eqnarray}
\widehat{D}_{e_{a}}e_{b}=\hnabla_{e_{a}}e_{b}-\frac{1}{2}\langle\widehat{D}_{e_{a}}e_{b},e_{3}\rangle e_{4}-\frac{1}{2}\langle\widehat{D}_{e_{a}}e_{b},e_{4}\rangle e_{3}\\\nonumber 
=\nabla_{e_{a}}e_{b}-\frac{1}{2}\langle\mathcal{D}_{e_{a}}e_{b},e_{3}\rangle e_{4}-\frac{1}{2}\langle\mathcal{D}_{e_{a}}e_{b},e_{4}\rangle e_{3}
\end{eqnarray}
simply because the basis $(e_{3},e_{4},e_{a})_{a=1}^{2}$ are not sections of the gauge bundle and therefore gauge covariant derivative acts as ordinary covariant derivative. Now recalling the following definitions of the outgoing and incoming null second fundamental form of $S_{u,v}$
\begin{eqnarray}
\chi_{ab}:=\langle D_{e_{a}}e_{4},e_{b}\rangle,~~~~\bar{\chi}_{ab}:=\langle D_{e_{a}}e_{3},e_{b}\rangle,
\end{eqnarray}
the spacetime covariant (also gauge covariant) derivative satisfies
\begin{eqnarray}
D_{e_{a}}e_{b}=\nabla_{e_{a}}e_{b}+\frac{1}{2}\bar{\chi}_{ab}e_{4}+\frac{1}{2}\chi_{ab}e_{3}.
\end{eqnarray}
Now let $K\in sections\{TS_{u,\ubar}\}$ and $Z$ be a $\mathfrak{g}-$valued frame vector field on $S_{u,\ubar}$ that transforms as a tensor under a gauge transformation, then using the definition of the gauge covariant derivative (\ref{eq:gauge}) and its metric compatibility property, the following holds for the gauge covariant derivative
\begin{eqnarray}
\widehat{D}_{K}Z=\hnabla_{K}Z+\frac{1}{2}\bar{\chi}(K,Z)e_{4}+\frac{1}{2}\chi(K,Z)e_{3},
\end{eqnarray}
where $\chi(K,Z):=\chi_{ab}K_{a}Z^{P}~_{Qb}$ and $\bar{\chi}(K,Z):=\chi_{ab}K_{a}Z^{P}~_{Qb}$.
Now we recall the definitions of the remaining connection coefficients adapted to the double null framework 
\begin{eqnarray}
\eta_{a}:=-\frac{1}{2}\langle D_{e_{3}}e_{a},e_{4}\rangle,~\omega:=-\frac{1}{4}\langle D_{e_{4}}e_{3},e_{4}\rangle=-\frac{1}{2}D_{e_{4}}\ln\Omega,\\\nonumber 
\etabar_{a}:=-\frac{1}{2}\langle D_{e_{4}}e_{a},e_{3}\rangle,~\omegabar:=-\frac{1}{4}\langle D_{e_{3}}e_{4},e_{3}\rangle=-\frac{1}{2}D_{e_{3}}\ln\Omega
\end{eqnarray}
and the torsion $\zeta_{a}:=\frac{1}{2}\langle D_{e_{a}}e_{4},e_{3}\rangle=\frac{1}{2}(\eta-\etabar)$.
Utilizing these definitions, let us write down the kinematical set of structural equations 
\begin{eqnarray}
D_{e_{a}}e_{3}=\bar{\chi}_{ab}e_{b}+\zeta_{a}e_{3},~D_{e_{3}}e_{a}=\nabla_{e_{3}}e_{a}+\eta_{a}e_{3},\\
D_{e_{4}}e_{a}=\nabla_{e_{4}}e_{a}+\etabar_{a}e_{4},~\nabla_{e_{3}}e_{3}=-2\omegabar e_{3},\\
D_{e_{4}}e_{4}=-2\omega e_{4},~\nabla_{e_{4}}e_{3}=2\omega e_{3}+2\etabar_{a}e_{a},\\
D_{e_{3}}e_{4}=2\omegabar e_{4}+2\eta_{a}e_{a},~
D_{e_{4}}e_{a}=\nabla_{e_{4}}e_{a}+\etabar_{a}e_{4},\\
D_{e_{a}}e_{4}=\chi_{ab}e_{b}-\zeta_{a}e_{4}.
\end{eqnarray}
Also note $\eta_{a}=\zeta_{a}+\nabla_{e_{a}}\ln\Omega,~\etabar_{a}=-\zeta_{a}+\nabla_{e_{a}}\ln\Omega$. Utilizing these kinematical structure equations, we obtain the dynamical set of structural equations suitable trace of which are nothing but Einstein's equations sourced by Yang-Mills stress-energy tensor and expressed in the double null framework. Before writing down such equations, let us recall the following decomposition of the spacetime Riemann curvature tensor
\begin{eqnarray}
\label{eq:weyl}
R_{\alpha\beta\gamma\delta}=W_{\alpha\beta\gamma\delta}+\frac{1}{2}(g_{\alpha\gamma}R_{\beta\delta}+g_{\beta\delta}R_{\alpha\gamma}-g_{\beta\gamma}R_{\alpha\delta}-g_{\alpha\delta}R_{\beta\gamma})+\frac{1}{6}R(g_{\alpha\delta}g_{\beta\gamma}-g_{\alpha\gamma}g_{\beta\delta}),
\end{eqnarray}
where $W$ is the Weyl curvature tensor describing pure gravitational degrees of freedom. $W$ is trace-free and enjoys the same algebraic symmetry as the Riemann curvature. Since the Ricci and the scalar curvatures are fixed by Einstein's field equations, we ought to write down the equation for the null components of the Weyl curvature. These equations will constitute the null Bianchi equations. The components of the Weyl curvature are defined as follows 
\begin{eqnarray}
\alpha_{ab}:=W(e_{a},e_{4},e_{b},e_{4}),~\bar{\alpha}_{ab}:=W(e_{a},e_{3},e_{b},e_{3}),\\
2\beta_{a}:=W(e_{4},e_{3},e_{4},e_{a}),~
2\bar{\beta}_{a}:=W(e_{3},e_{4},e_{3},e_{a}),\\
\rho:=\frac{1}{4}W(e_{4},e_{3},e_{4},e_{3}),
~\sigma:=\frac{1}{4}~^{*}W(e_{4},e_{3},e_{4},e_{3}),
\end{eqnarray}
where $^{*}W$ is the left Hodge dual of $W$ defined as follows 
\begin{eqnarray}
^{*}W_{\alpha\beta\gamma\delta}:=\frac{1}{2}\epsilon_{\alpha\beta\mu\nu}W^{\mu\nu}~_{\gamma\delta},
\end{eqnarray}
where $\epsilon_{\alpha\beta\mu\nu}$ is the volume form on the spacetime $M$. In addition to the Weyl field, we also have the Yang-Mills curvature $F:=\frac{1}{2}F^{P}~_{Q\mu\nu}dx^{\mu}\wedge dx^{\nu}\in \Omega^{2}(M;\text{End}(V)),~P, Q=1,2,3,....,\text{dim}(V)$. The components of the Yang-Mills curvature $F$ are defined as follows 
\begin{eqnarray}
\alpha^{F}_{a}:=F^{P}~_{Q}(e_{a},e_{4}),~\bar{\alpha}^{F}_{a}:=F^{P}~_{Q}(e_{a},e_{3}),~\rho^{F}:=\frac{1}{2}F^{P}~_{Q}(e_{3},e_{4}),\\\nonumber 
\sigma^{F}=\frac{1}{2}~^{*}F^{P}~_{Q}(e_{3},e_{4})=F^{P}~_{Q}(e_{1},e_{2}).
\end{eqnarray}
Also decompose the null second fundamental forms $\chi,\bar{\chi}$ into their trace and trace-free components
\begin{eqnarray}
\chi=\widehat{\chi}+\frac{1}{2}\ \tr\chi \gamma,~\bar{\chi}=\chibarhat+\frac{1}{2}\ \tr\chibar \gamma.
\end{eqnarray}
Einstein's equations (with the choice of unit $8\pi G=c=1$)
\begin{eqnarray}
R_{\mu\nu}-\frac{1}{2}Rg_{\mu\nu}=\mathfrak{T}_{\mu\nu}
\end{eqnarray}
in the double null framework reads 
\begin{eqnarray}
\label{eq:connection1}
\nabla_{4} \tr\chi+\frac{1}{2}( \tr\chi)^{2}=-|\widehat{\chi}|^{2}_{\gamma}-2\omega  \tr\chi-\mathfrak{T}_{44}\\
\nabla_{4}\widehat{\chi}+ \tr\chi \widehat{\chi}=-2\omega\widehat{\chi}-\alpha\\
\nabla_{3} \tr\chibar+\frac{1}{2}( \tr\chibar)^{2}=-|\chibarhat|^{2}_{\gamma}-2\omegabar \tr\chibar-\mathfrak{T}_{33}\\
\nabla_{3}\chibarhat+ \tr\chibar\chibarhat=-2\omegabar\chibarhat-\bar{\alpha}\\
\label{eq:eta}
\nabla_{4}\eta_{a}=-\chi\cdot(\eta-\etabar)-\beta-\frac{1}{2}\mathfrak{T}_{a4}\\
\nabla_{3}\etabar_{a}=-\bar{\chi}\cdot(\etabar-\eta)+\bar{\beta}+\frac{1}{2}\mathfrak{T}_{a3}\\
\nabla_{4}\omegabar=2\omega\omegabar+\frac{3}{4}|\eta-\etabar|^{2}-\frac{1}{4}(\eta-\etabar)\cdot(\eta+\etabar)-\frac{1}{8}|\eta+\etabar|^{2}\nonumber+\frac{1}{2}\rho+\frac{1}{4}\mathfrak{T}_{43}\\
\nabla_{3}\omega=2\omega\omegabar+\frac{3}{4}|\eta-\etabar|^{2}+\frac{1}{4}(\eta-\etabar)\cdot(\eta+\etabar)-\frac{1}{8}|\eta+\etabar|^{2}\nonumber+\frac{1}{2}\rho+\frac{1}{4}\mathfrak{T}_{43}\\
\nabla_{4} \tr\chibar+\frac{1}{2} \tr\chi  \tr\chibar=2\omega  \tr\chibar+2\text{div}\etabar+2|\etabar|^{2}_{\gamma}+2\rho-\widehat{\chi}\cdot\chibarhat\\
\nabla_{3} \tr\chi+\frac{1}{2} \tr\chibar \tr\chi=2\omegabar \tr\chi+2\text{div}\eta+2|\eta|^{2}+2\rho-\widehat{\chi}\cdot \chibarhat\\
\nabla_{4}\chibarhat+\frac{1}{2} \tr\chi\chibarhat=\nabla\hat{\otimes}\etabar+2\omega\chibarhat-\frac{1}{2} \tr\chibar\widehat{\chi}+\etabar\hat{\otimes}\etabar+\hat{\mathfrak{T}}_{ab}\\
\nabla_{3}\widehat{\chi}+\frac{1}{2} \tr\chibar\widehat{\chi}=\nabla\hat{\otimes}\eta+2\omegabar\widehat{\chi}-\frac{1}{2} \tr\chi\chibarhat+\eta\hat{\otimes}\eta+\hat{\mathfrak{T}}_{ab}\\
\label{eq:1}
\text{div}\widehat{\chi}=\frac{1}{2}\nabla  \tr\chi-\frac{1}{2}(\eta-\etabar)\cdot(\widehat{\chi}-\frac{1}{2} \tr\chi\gamma_{ab})-\beta+\frac{1}{2}\mathfrak{T}(e_{4},\cdot)\\
\text{div}\chibarhat=\frac{1}{2}\nabla  \tr\chibar-\frac{1}{2}(\etabar-\eta)\cdot(\chibarhat-\frac{1}{2} \tr\chibar\gamma_{ab})-\bar{\beta}+\frac{1}{2}\mathfrak{T}(e_{3},\cdot)\\
\text{curl}\eta=\chibarhat\wedge\widehat{\chi}+\sigma\epsilon=-\text{curl} \etabar\\
\label{eq:4}
K-\frac{1}{2}\widehat{\chi}\cdot \chibarhat+\frac{1}{4} \tr\chi  \tr\chibar=-\rho+\frac{1}{4}\mathfrak{T}_{43},
\end{eqnarray}
where $\mathfrak{T}_{\mu\nu}:=\frac{1}{2}\left(F^{P}~_{Q\mu\alpha}F^{Q}~_{P\nu}~^{\alpha}+^{*}F^{P}~_{Q\mu\alpha}~^{*}F^{Q}~_{P\nu}~^{\alpha}\right)$ is the Yang-Mills stress-energy tensor and $K$ is the sectional curvature of the topological $2-$sphere $S_{u,\ubar}$ (or a constant multiple of Gauss curvature). Here equations (\ref{eq:1}-\ref{eq:4}) are the null-constraint equations. Now recall the Bianchi identities 
\begin{eqnarray}
D_{\mu}R_{\alpha\beta\nu\lambda}+D_{\nu}R_{\alpha\beta\lambda\mu}+D_{\lambda}R_{\alpha\beta\mu\nu}=0
\end{eqnarray}
which together with the decomposition (\ref{eq:weyl}) and Einstein's equations yields the following Yang-Mills type equations for the Weyl curvature 
\begin{eqnarray}
D^{\alpha}W_{\alpha\beta\gamma\delta}=J[\mathfrak{T}]_{\beta\gamma\delta},~D_{[\mu}W_{\gamma\delta]\alpha\beta}=\frac{1}{3}\epsilon_{\nu\mu\gamma\delta}J[\mathfrak{T}]^{*\nu}~_{\alpha\beta}.
\end{eqnarray}
Here $J[\mathfrak{T}]$ is the source term determined fully by the Yang-Mills stress-energy tensor $\mathfrak{T}$. After elementary algebraic manipulations, the differential equations for the Weyl curvature may be cast into the following double-null form \footnote{$A\hat{\otimes}B:=(A\otimes B+B\otimes A-A\cdot B\gamma)$,~$(A\wedge B):=\epsilon^{ac}\gamma^{bd}A_{ab}B_{cd}$,~$(\text{curl} A)_{a_{1}\cdot\cdot\cdot a_{n}}:=\epsilon^{cd}\nabla_{c}A_{da_{1}\cdot\cdot\cdot a_{n}}$}
\begin{eqnarray}
\label{eq:bianchi1}
\nabla_{3}\alpha_{ab}+\frac{1}{2} \tr\chibar\alpha_{ab}=(\nabla\hat{\otimes}\beta)_{ab}+4\omegabar\alpha_{ab}-3(\widehat{\chi}\rho+~^{*}\widehat{\chi}\sigma)_{ab}+((\zeta+4\eta)\hat{\otimes}\beta)_{ab}\\\nonumber -D_4 R_{AB}+\frac{1}{2}(D_{3}R_{44}-D_{4}R_{43})\gamma_{ab}+\frac{1}{2}\left(D_b R_{4a} + D_a R_{4b}\right)\\
\nabla_{4}\beta_{a}+2 \tr\chi\beta_{a}=(\text{div}\alpha)_{a}-2\omega\beta_{a}+(\eta\cdot\alpha)_{a}-\frac{1}{2}(D_{a}R_{44}-D_{4}R_{4a})\\
\nabla_{3}\beta_{a}+ \tr\chibar\beta_{a}=\nabla_{a}\rho+~^{*}\nabla_{a}\sigma+2\omegabar\beta_{a}+2(\widehat{\chi}\cdot \bar{\beta})_{a}+3(\eta\rho+~^{*}\eta\sigma)_{a}\\\nonumber+\frac{1}{2}(D_{a}R_{34}-D_{4}R_{3a}),\\
\nabla_{4}\sigma+\frac{3}{2} \tr\chi\sigma=-\text{div}~^{*}\beta+\frac{1}{2}\chibarhat\cdot~^{*}\alpha-\zeta\cdot~^{*}\beta-2\etabar\cdot~^{*}\beta\\\nonumber-\frac{1}{4}(D_{\mu}R_{4\nu}-D_{\nu}R_{4\mu})\epsilon^{\mu\nu}~_{34}\\
\nabla_{3}\sigma+\frac{3}{2} \tr\chibar\sigma=-\text{div}~^{*}\bar{\beta}+\frac{1}{2}\widehat{\chi}\cdot~^{*}\bar{\alpha}-\zeta\cdot~^{*}\bar{\beta}-2\eta\cdot~^{*}\bar{\beta}\\\nonumber+\frac{1}{4}(D_{\mu}R_{3\nu}-D_{\nu}R_{3\mu})\epsilon^{\mu\nu}~_{34}\\
\nabla_{4}\rho+\frac{3}{2} \tr\chi\rho=\text{div}\beta-\frac{1}{2}\chibarhat\cdot\alpha+\zeta\cdot\beta+2\etabar\cdot\beta\\\nonumber -\frac{1}{4}(D_{3}R_{44}-D_{4}R_{34})\\
\nabla_{3}\rho+\frac{3}{2} \tr\chibar\rho=-\text{div}\bar{\beta}-\frac{1}{2}\widehat{\chi}\cdot\bar{\alpha}+\zeta\cdot\bar{\beta}-2\eta\cdot\bar{\beta}\\\nonumber+\frac{1}{4}(D_{3}R_{34}-D_{4}R_{33})\\
\nabla_{4}\bar{\beta}_{a}+ \tr\chi\bar{\beta}_{a}=-\nabla_{a}\rho+~^{*}\nabla_{a}\sigma+2\omega\bar{\beta}_{a}+2(\chibarhat\cdot\beta)_{a}-3(\etabar\rho-~^{*}\etabar\sigma)_{a}\\\nonumber-\frac{1}{2}(D_{a}R_{43}-D_{3}R_{4a})\\
\nabla_{3}\bar{\beta}_{a}+2 \tr\chibar\bar{\beta}_{a}=-(\text{div}\bar{\alpha})_{a}-2\omegabar\bar{\beta}_{a}+(\etabar\cdot\bar{\alpha})_{a}\\\nonumber +\frac{1}{2}(D_{a}R_{33}-D_{3}R_{3a})\\
\label{eq:bianchi2}
\nabla_{4}\bar{\alpha}_{ab}+\frac{1}{2} \tr\chi\bar{\alpha}=-(\nabla\hat{\otimes}\bar{\beta})_{ab}+4\omega\bar{\alpha}_{ab}-3(\chibarhat_{ab}\rho-~^{*}\chibarhat_{ab}\sigma)+((\zeta-4\etabar)\hat{\otimes}\bar{\beta})_{ab}\\\nonumber- D_3 R_{AB} +\frac{1}{2}(D_{4}R_{33}-D_{3}R_{34})\gamma_{ab}+\frac{1}{2}\left(D_a R_{3b} + D_b R_{3a}\right),
\end{eqnarray}
where $R_{\mu\nu}=\mathfrak{T}_{\mu\nu}$ due to the trace-free property of the Yang-Mills stress-energy tensor. The Yang-Mills equations 
\begin{eqnarray}
\widehat{D}_{\mu}F^{P}~_{Q\nu\lambda}+\widehat{D}_{\nu}F^{P}~_{Q\lambda\mu}+\widehat{D}_{\lambda}F^{P}~_{Q\mu\nu}=0,~g^{\alpha\beta}\widehat{D}_{\alpha}F^{P}~_{Q\beta\mu}=0
\end{eqnarray}
imply the following double null Yang-Mills equations 
\begin{eqnarray}
\label{eq:YM1}
\hnabla_{4}\bar{\alpha}^{F}+\frac{1}{2} \tr\chi\bar{\alpha}^{F}=-\hnabla\rho^{F}-~^{*}\hnabla\sigma^{F}-2~^{*}\etabar\sigma^{F}-2\etabar\rho^{F}+2\omega\bar{\alpha}^{F}-\chibarhat\cdot\alpha^{F}\\
\hnabla_{3}\alpha^{F}+\frac{1}{2} \tr\chibar\alpha^{F}=-\hnabla\rho^{F}+~^{*}\hnabla\sigma^{F}-2~^{*}\eta\sigma^{F}+2\eta\rho^{F}+2\omegabar\alpha^{F}-\widehat{\chi}\cdot \bar{\alpha}^{F}\\
\hnabla_{4}\rho^{F}=-\widehat{\text{div}} \alpha^{F}- \tr\chi\rho^{F}-(\eta-\etabar)\cdot\alpha^{F}\\
\hnabla_{4}\sigma^{F}=-\widehat{\text{curl}} \alpha^{F}- \tr\chi \sigma^{F}+(\eta-\etabar)\cdot ~^{*}\alpha^{F}\\
\hnabla_{3}\rho^{F}=-\widehat{\text{div}} \bar{\alpha}^{F}- \tr\chibar\rho^{F}+(\eta-\etabar)\cdot\bar{\alpha}^{F}\\
\label{eq:YM2}
\hnabla_{3}\sigma^{F}=-\widehat{\text{curl}} \bar{\alpha}^{F}- \tr\chibar\sigma^{F}+(\eta-\etabar)\cdot~^{*}\bar{\alpha}^{F},
\end{eqnarray}
where $\hnabla$ is the horizontal (tangential to $S_{u,\ubar}$) component of the spacetime gauge covariant derivative $\widehat{D}$. We have the following lemma regarding the properties of the null Bianchi and null Yang-Mills equations. \noindent We may write down explicitly the null components of the Yang-Mills stress-energy tensor
\begin{eqnarray}
\mathfrak{T}_{43}=\rho^{F}\cdot\rho^{F}+\sigma^{F}\cdot\sigma^{F},~\mathfrak{T}_{4A}=\alpha^{F}_{A}\cdot\rho^{F}+\epsilon_{A}~^{B}\alpha^{F}_{B}\cdot\sigma^{F},\\
\mathfrak{T}_{3A}=-\bar{\alpha}^{F}_{A}\cdot\rho^{F}+\epsilon_{A}~^{B}\bar{\alpha}^{F}_{B}\cdot\sigma^{F},~\mathfrak{T}_{44}=\alpha^{F}_{A}\cdot\alpha^{F}_{B}\gamma^{AB},~\mathfrak{T}_{33}=\bar{\alpha}^{F}_{A}\cdot\bar{\alpha}^{F}_{B}\gamma^{AB},\\
\mathfrak{T}_{ab}=\frac{1}{2}(\rho^{F}\cdot\rho^{F}+\sigma^{F}\cdot\sigma^{F})\gamma_{AB}-(\alpha^{F}_{A}\cdot \bar{\alpha}^{F}_{B}+\bar{\alpha}^{F}_{A}\cdot \alpha^{F}_{B}-\alpha^{F}_{C}\cdot\bar{\alpha}^{F}_{D}\gamma^{CD}\gamma_{AB}),
\end{eqnarray}
where $\cdot$ indicates the inner product on the fibers of the associated gauge bundle.\\ 
\begin{proposition}
\textit{The null Bianchi equations (\ref{eq:bianchi1}-\ref{eq:bianchi2}) are manifestly hyperbolic.}
\end{proposition}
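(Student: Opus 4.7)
The plan is to exhibit the Weyl components $(\alpha,\beta,\rho,\sigma,\bar\beta,\bar\alpha)$ as a symmetric hyperbolic system with respect to the double null pair $(e_3,e_4)$, where the ``time'' directions are the null generators and the ``spatial'' derivatives are tangential to $S_{u,\ubar}$. Inspecting the system (\ref{eq:bianchi1})--(\ref{eq:bianchi2}), I observe that it decomposes into four natural coupled blocks: the pair $(\alpha,\beta)$ coupled through $\nabla_3\alpha\sim\nabla\hat\otimes\beta$ and $\nabla_4\beta\sim\mbox{div }\alpha$; the triple $(\beta,\rho,\sigma)$ coupled through $\nabla_3\beta\sim\nabla\rho+{}^\ast\nabla\sigma$ and $\nabla_4\rho\sim\mbox{div }\beta$, $\nabla_4\sigma\sim-\mbox{div }{}^\ast\beta$; the triple $(\rho,\sigma,\bar\beta)$ coupled through $\nabla_3\rho\sim-\mbox{div }\bar\beta$, $\nabla_3\sigma\sim-\mbox{div }{}^\ast\bar\beta$ and $\nabla_4\bar\beta\sim-\nabla\rho+{}^\ast\nabla\sigma$; and finally the pair $(\bar\beta,\bar\alpha)$ coupled through $\nabla_3\bar\beta\sim-\mbox{div }\bar\alpha$ and $\nabla_4\bar\alpha\sim-\nabla\hat\otimes\bar\beta$.

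The key algebraic ingredient is the formal $L^2$--adjointness, on the 2-sphere $S_{u,\ubar}$, between the angular operators appearing in each block: for a symmetric traceless 2-tensor $T$ and a 1-form $V$ one has
\begin{equation*}
\int_{S_{u,\ubar}}\langle T,\nabla\hat\otimes V\rangle\,dA_{\gamma}=-2\int_{S_{u,\ubar}}\langle\mbox{div }T,V\rangle\,dA_{\gamma},
\end{equation*}
and analogously $\mbox{div }$ is the (negative) formal adjoint of $\nabla$ acting on scalars, and $\mbox{div }{}^\ast$ is the adjoint of ${}^\ast\nabla$. I would then contract each Bianchi equation in a block with the corresponding Weyl component (for example, $\alpha$ with the $\nabla_3\alpha$ equation, $\beta$ with the $\nabla_4\beta$ equation, and add the two after applying the adjointness identity) and integrate over $S_{u,\ubar}$. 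Up to the undifferentiated connection and curvature coefficients, the angular-derivative terms cancel in the principal part, leaving an identity of the form
\begin{equation*}
\nabla_3\int_{S_{u,\ubar}}|\alpha|^2+\nabla_4\int_{S_{u,\ubar}}|\beta|^2=\text{lower order in }(\alpha,\beta),
\end{equation*}
and analogously for each of the other three blocks. Integrating these identities against the natural volume form of $\mathcal{D}_{u,\ubar}$ furnishes an $L^2$ flux identity on the outgoing hypersurface $H_u$ and the incoming hypersurface $\Hbar_{\ubar}$ with manifestly positive definite flux integrands, which is the defining feature of a symmetric hyperbolic system.

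Regarding the source terms $J[\mathfrak{T}]$ on the right-hand sides of (\ref{eq:bianchi1})--(\ref{eq:bianchi2}): they involve at most $\widehat{D}F$, so from the standpoint of the Weyl system they are zeroth order in the unknowns $(\alpha,\beta,\rho,\sigma,\bar\beta,\bar\alpha)$ and therefore do not enter the principal symbol. They appear only as forcing terms in the energy identity, to be controlled later by the energy of the Yang-Mills curvature once the full coupled bootstrap is set up. Consequently hyperbolicity of the Bianchi block is independent of them.

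The main subtlety I anticipate is bookkeeping: one must verify that every term on the right-hand side of each equation in a block either matches cleanly with its adjoint partner in another equation of the same block (producing the cancellation) or is genuinely lower order, i.e.\ of the form (connection coefficient)$\,\times\,$(Weyl component) without any further angular derivative. In particular, the $\mbox{tr}\chi,\mbox{tr}\chibar,\omega,\omegabar$ factors multiplying the Weyl fields on the left, and the $\zeta,\eta,\etabar,\widehat\chi,\chibarhat$ factors multiplying Weyl fields on the right, must all be tracked to confirm that they contribute only undifferentiated ``$0$-order'' terms in the energy identity. Once this verification is carried out, the symmetric hyperbolic structure is exhibited explicitly and the proposition follows.
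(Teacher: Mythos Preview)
Your proposal is correct, but it takes a genuinely different route from the paper's proof of this proposition. The paper argues via the Bel--Robinson tensor
\[
Q_{\alpha\beta\gamma\delta}=W_{\alpha\rho\gamma\sigma}W_{\beta}{}^{\rho}{}_{\delta}{}^{\sigma}+{}^{*}W_{\alpha\rho\gamma\sigma}{}^{*}W_{\beta}{}^{\rho}{}_{\delta}{}^{\sigma},
\]
contracts with the timelike vector $n=\tfrac{1}{2}(e_{3}+e_{4})$ to form the current $\mathfrak{C}=Q(n,n,n,\cdot)$, and integrates its divergence over $\mathcal{D}_{u,\ubar}$. Positivity of the fluxes $Q(n,n,n,e_{4})$ and $Q(n,n,n,e_{3})$ then follows from the known algebraic positivity of $Q$, and the divergence produces only terms algebraic in $W$ (together with the source $J[\mathfrak{T}]$), which establishes hyperbolicity in one stroke. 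Your approach, by contrast, exhibits the symmetric hyperbolic structure directly at the level of the null Bianchi equations: you pair each block of equations, use the $L^{2}(S_{u,\ubar})$ adjointness between $\mbox{div}$ and $\nabla\hat\otimes$ (and between $\mbox{div}\,{}^{*}$ and ${}^{*}\nabla$), and verify cancellation of the principal angular-derivative terms. This is precisely the mechanism the paper invokes \emph{later}, in Section~7 (Lemmas~\ref{71}--\ref{72}), when it derives the actual energy estimates by ``direct integration by parts'' rather than via Bel--Robinson. What the Bel--Robinson route buys is brevity and automatic positivity without block-by-block bookkeeping; what your route buys is that it is the computation one must eventually do anyway for the quantitative estimates, and it makes the pairing structure $(\alpha,\beta)$, $(\beta,\rho,\sigma)$, $(\rho,\sigma,\bar\beta)$, $(\bar\beta,\bar\alpha)$ explicit.
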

\begin{proof}
The proof is a simple consequence of the existence of the Bel-Robinson tensor for the Weyl curvature
\begin{eqnarray}
Q_{\alpha\beta\gamma\delta}:=W_{\alpha\rho\gamma\sigma}W_{\beta}~^{\rho}~_{\delta}~^{\sigma}+~^{*}W_{\alpha\rho\gamma\sigma}~^{*}W_{\beta}~^{\rho}~_{\delta}~^{\sigma}
\end{eqnarray}
Indeed one may explicitly obtain energy identities for the Weyl curvature energy. For a future directed unit time-like vector field $n=\frac{1}{2}(e_{3}+e_{4})$, construct the current 
\begin{eqnarray}
\mathfrak{C}:=Q(n,n,n,\cdot).
\end{eqnarray}
Integrating the divergence of $\mathfrak{C}$ over the spacetime domain $\mathcal{D}_{u,\ubar}$ yields 
\begin{eqnarray}
\int_{\mathcal{D}_{u,\ubar}}\mathcal{D}_{\mu}\mathfrak{C}^{\mu}=\int_{H}\mathfrak{C}(e_{4})+\int_{  \Hbar_{\ubar}}\mathfrak{C}(e_{3})-\int_{H_{0}}\mathfrak{C}(e_{4})-\int_{  \Hbar_{\bar{0}}}\mathfrak{C}(e_{3}),
\end{eqnarray}
where $\mathcal{D}_{\mu}\mathfrak{C}^{\mu}$ is algebraic in the Weyl curvature $W$
\begin{eqnarray}
\mathcal{D}_{\mu}\mathfrak{C}^{\mu}=\mathcal{D}^{\mu}(Q_{\mu\beta\gamma\delta}n^{\beta}n^{\gamma}n^{\delta})=(W_{\beta}~^{\mu}~_{\delta}~^{\nu}J(\mathfrak{T})_{\mu\gamma\nu}+W_{\beta}~^{\mu}~_{\gamma}~^{\nu}J(\mathfrak{T})_{\mu\delta\nu}\\\nonumber 
+~^{*}W_{\beta}~^{\mu}~_{\delta}~^{\nu}~^{*}J(\mathfrak{T})_{\mu\gamma\nu}+~^{*}W_{\beta}~^{\mu}~_{\gamma}~^{\nu}~^{*}J(\mathfrak{T})_{\mu\delta\nu})+3Q_{\mu\beta\gamma\delta}\mathcal{D}^{\mu}n^{\beta}n^{\gamma}n^{\delta}.
\end{eqnarray}
Explicit calculations show 
\begin{eqnarray}
\mathcal{C}(e_{4})=Q(W)(n,n,n,e_{4})
\approx\left(|\alpha|^{2}_{\gamma}+|\bar{\beta}|^{2}_{\gamma}+|\beta|^{2}_{\gamma}+\rho^{2}+\sigma^{2}\right),\\
\mathcal{C}(e_{3})=Q(W)(n,n,n,e_{3})\approx\left(|\alpha|^{2}_{\gamma}+|\bar{\beta}|^{2}_{\gamma}+|\beta|^{2}_{\gamma}+\rho^{2}+\sigma^{2}\right),
\end{eqnarray}
where the involved constants are purely numerical positive constants. This completes the proof.
\end{proof}
\noindent For a gauge covariant system such as Yang-Mills, we can prove symmetric hyperbolic characteristics by means of integration by parts argument. First, we prove an integration by parts lemma. 
\begin{lemma}
\label{integration}
\textit{Let $f:\mathcal{M}\to\mathbb
R$ be a gauge-invariant object on spacetime. The following integration by parts identities holds for $f$:
\begin{eqnarray}
\int_{D_{u,\underline{u}}}\nabla_{4}f=\int_{\underline{H}_{\underline{u}}}f-\int_{\underline{H}_{\underline{u}_{0}}}f+\int_{D_{u,\underline{u}}}(2\omega-\tr\chi)f
\end{eqnarray}
and 
\begin{eqnarray}
\int_{D_{u,\underline{u}}}\nabla_{3}f=\int_{H_{u}}f-\int_{H_{u_{0}}}f+\int_{D_{u,\underline{u}}}(2\underline{\omega}-\tr\underline{\chi})f,
\end{eqnarray}
}
\end{lemma}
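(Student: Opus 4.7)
The plan is to unfold both integrals in the adapted double-null coordinates $(u,\ubar,\theta^1,\theta^2)$ and reduce the identities to one-dimensional integration by parts along the $\ubar$ direction for the first identity and the $u$ direction (plus tangential integration by parts) for the second. Gauge-invariance of $f$ is precisely what lets me treat it as an ordinary scalar function: $\nabla_4 f = e_4(f) = \Omega^{-1}\partial_{\ubar}f$ and $\nabla_3 f = e_3(f) = \Omega^{-1}(\partial_u + b^A\partial_A)f$, with no gauge-connection terms appearing. In the chosen chart the spacetime volume element is $dV = 2\Omega^2\sqrt{\det\gamma}\,du\,d\ubar\,d\theta^1 d\theta^2$, and with the natural null-hypersurface volume forms $dV_{\Hbar} = 2\Omega\sqrt{\det\gamma}\,du\,d\theta^1 d\theta^2$ on $\Hbar_{\ubar}$ and $dV_{H} = 2\Omega\sqrt{\det\gamma}\,d\ubar\,d\theta^1 d\theta^2$ on $H_u$, the two boundary integrals on the right-hand sides will emerge directly from the fundamental theorem of calculus.

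For the first identity, after rewriting $\int_{D_{u,\ubar}}\nabla_4 f\,dV$ as $\int 2\Omega\sqrt{\det\gamma}\,\partial_{\ubar}f\,du\,d\ubar\,d\theta^1 d\theta^2$ and integrating by parts in $\ubar$, the non-boundary piece is controlled by $\partial_{\ubar}(2\Omega\sqrt{\det\gamma})$. Two first-variation identities close the computation: from $\omega = -\tfrac{1}{2}e_4\ln\Omega$ I obtain $\partial_{\ubar}\Omega = -2\omega\Omega^2$, and from $[\partial_{\ubar},\partial_A]=0$ together with $\chi_{AB} = \langle D_{\partial_A}e_4,\partial_B\rangle$ I obtain $\partial_{\ubar}\gamma_{AB} = 2\Omega\chi_{AB}$, hence $\partial_{\ubar}\sqrt{\det\gamma} = \Omega\,\tr\chi\sqrt{\det\gamma}$. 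Combining gives $\partial_{\ubar}(2\Omega\sqrt{\det\gamma}) = 2\Omega^2(\tr\chi - 2\omega)\sqrt{\det\gamma}$, so the bulk correction is precisely $\int_{D_{u,\ubar}}(2\omega - \tr\chi)f\,dV$, matching the claim.

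The second identity proceeds identically in spirit but must absorb the angular shift vector. I integrate by parts in $u$, which produces the $H_u, H_{u_0}$ boundary terms, and in each $\theta^A$, which produces no boundary contribution because each fiber $S_{u,\ubar}$ is closed. The total bulk integrand is therefore controlled by $\partial_u(2\Omega\sqrt{\det\gamma}) + \partial_A(2\Omega b^A\sqrt{\det\gamma})$. Writing $\Lbar := \partial_u + b^A\partial_A = \Omega e_3$, I use $e_3\ln\Omega = -2\omegabar$ to get $\Lbar\Omega = -2\omegabar\Omega^2$, and the Lie-derivative identity $\mathcal{L}_{\Lbar}\gamma_{AB} = 2\Omega\chibar_{AB}$, which in the coordinate chart reads $\Lbar(\gamma_{AB}) = 2\Omega\chibar_{AB} - \partial_A b^C\gamma_{CB} - \partial_B b^C\gamma_{AC}$ and after tracing yields $\Lbar(\sqrt{\det\gamma}) = (\Omega\,\tr\chibar - \partial_A b^A)\sqrt{\det\gamma}$. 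The key point is that the $\partial_A b^A$ term picked up from this trace cancels exactly the $\partial_A b^A$ term produced by the angular integration by parts, leaving the clean expression $2\Omega^2(\tr\chibar - 2\omegabar)\sqrt{\det\gamma}$ and hence the desired identity. This cancellation of the shift-divergence pieces in the $\nabla_3$ case is the only genuinely delicate point in the whole argument; everything else is routine bookkeeping with the volume form and the kinematical definitions of $\omega, \omegabar, \tr\chi, \tr\chibar$.
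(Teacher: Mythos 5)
Your proof is correct and follows essentially the same route as the paper's: rewrite the bulk integrand in adapted double-null coordinates, integrate by parts along the relevant null direction, and convert the variation of $\Omega\mu_\gamma$ back to the connection coefficients via $\partial_{\ubar}\Omega=-2\omega\Omega^{2}$, $\partial_{\ubar}\gamma_{AB}=2\Omega\chi_{AB}$ and their $\Lbar$-analogues. For the $\nabla_{3}$ identity the paper writes only "the other part follows in a similar fashion," so your explicit check that the $\partial_{A}b^{A}$ contribution produced by $\Lbar(\sqrt{\det\gamma})$ cancels the $\partial_{A}b^{A}$ arising from the angular integration by parts supplies exactly the one nontrivial detail the paper leaves implicit. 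One small convention mismatch: you use the canonical spacetime volume form $2\Omega^{2}\mu_{\gamma}\,du\,d\ubar$ and the compatible boundary forms $2\Omega\mu_{\gamma}$, whereas the paper's definitions of $\int_{\mathcal{D}}$, $\int_{H}$, $\int_{\Hbar}$ omit the factor of $2$; since the same factor appears on every term of the identity, this is a self-consistent rescaling and harmless for the lemma.
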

\begin{proof}
The proof is a simple consequence of the following integration by parts procedure:
\begin{equation}
\begin{split}
\int_{D_{u\underline{u}}}\nabla_{4}f=&\int_{u_{0}}^{u}du^{'}\int_{\underline{u}_{0}}^{\underline{u}}\left(\int_{S_{u^{'}\underline{u}^{'}}}\frac{\partial f}{\partial \underline{u}^{'}}\Omega \mu_{\gamma}\right)d\underline{u}^{'}\\\nonumber 
=&\int_{u_{0}}^{u}du^{'}\int_{\underline{u}_{0}}^{\underline{u}}\left\{\frac{d}{d\underline{u}^{'}}\int_{S_{u^{'}\underline{u}^{'}}}f\Omega \mu_{\gamma}-\int_{S_{u^{'}\underline{u}^{'}}}f(\frac{\partial\Omega}{\partial \underline{u}^{'}}+\frac{\Omega}{2}\tr_{\gamma}\partial_{u}\gamma)\mu_{\gamma}\right\}d\underline{u}^{'}\\\nonumber 
=&\int_{u_{0}}^{u}du^{'}\int_{S_{u^{'}\underline{u}}}f\Omega \mu_{\gamma}-\int_{u_{0}}^{u}\int_{S_{u^{'}\underline{u}_{0}}}f\Omega \mu_{\gamma}+\int_{D_{u\underline{u}}}f(2\omega-\tr\chi)\\\nonumber 
=&\int_{\underline{H}_{\underline{u}}}f-\int_{\underline{H}_{\underline{u}_{0}}}f+\int_{D_{u\underline{u}}}(2\omega-\tr\chi)f.
\end{split}
\end{equation}
The other part follows in a similar fashion.
\end{proof}
\begin{proposition}
\label{hyperbolic}
Null Yang-Mills equations are manifestly hyperbolic.
\end{proposition}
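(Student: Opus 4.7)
The plan is to mimic the Bel–Robinson style energy argument from the previous proposition, but with the Yang–Mills curvature pairing $\lvert\alpha^F\rvert^2+\lvert\bar\alpha^F\rvert^2+(\rho^F)^2+(\sigma^F)^2$ playing the role of $Q(n,n,n,n)$. Concretely, I would first contract each of the six null Yang–Mills equations (\ref{eq:YM1})–(\ref{eq:YM2}) with its matching field component, using the fibre inner product: contract the $\hnabla_3\alpha^F$ equation with $\alpha^F$, the $\hnabla_4\bar\alpha^F$ equation with $\bar\alpha^F$, the $\hnabla_4\rho^F$ and $\hnabla_3\rho^F$ equations with $\rho^F$, and similarly for $\sigma^F$. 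Because the fibre metric is adjoint-invariant (\ref{eq:adinvpp}), the quantities $\langle \alpha^F,\alpha^F\rangle$, $\langle\bar\alpha^F,\bar\alpha^F\rangle$, $(\rho^F)^2$, $(\sigma^F)^2$ are gauge-invariant scalars, and the fully gauge-covariant derivatives $\hnabla_4,\hnabla_3,\hnabla$ differentiate them as ordinary derivatives, e.g.\ $\langle \hnabla_4\bar\alpha^F,\bar\alpha^F\rangle = \tfrac12\nabla_4\lvert\bar\alpha^F\rvert^2$.

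Next I would sum the resulting identities and integrate over the slab $\mathcal{D}_{u,\ubar}$. Applying Lemma \ref{integration} to the $\nabla_4$-terms produces boundary integrals over $  \Hbar_{\ubar}$ minus $  \Hbar_0$, while the $\nabla_3$-terms produce boundary integrals over $H_u$ minus $H_0$, in each case modulo a spacetime integral with coefficient $2\omega-\tr\chi$ or $2\omegabar-\tr\chibar$. The crucial step is to integrate by parts on the $2$-spheres in the mixed terms $\langle\hnabla\rho^F,\bar\alpha^F\rangle$, $\langle {}^*\hnabla\sigma^F,\bar\alpha^F\rangle$, $\langle\hnabla\rho^F,\alpha^F\rangle$, $\langle {}^*\hnabla\sigma^F,\alpha^F\rangle$ using the gauge-invariance of the fibre inner product: these produce exactly the top-order divergence and curl terms $\rho^F\,\widehat{\dv}\,\bar\alpha^F$, $\sigma^F\,\widehat{\curl}\,\bar\alpha^F$, $\rho^F\,\widehat{\dv}\,\alpha^F$, $\sigma^F\,\widehat{\curl}\,\alpha^F$, which cancel precisely against the corresponding terms coming from the scalar equations for $\hnabla_4\rho^F$, $\hnabla_4\sigma^F$, $\hnabla_3\rho^F$, $\hnabla_3\sigma^F$.

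Once these top-order cancellations occur, what remains on the left-hand side are the boundary energies
\begin{equation*}
\int_{H_u}\!\bigl(\lvert\alpha^F\rvert^2+(\rho^F)^2+(\sigma^F)^2\bigr)+\int_{  \Hbar_{\ubar}}\!\bigl(\lvert\bar\alpha^F\rvert^2+(\rho^F)^2+(\sigma^F)^2\bigr),
\end{equation*}
controlled by their counterparts on $H_0$ and $  \Hbar_0$, plus spacetime error terms that are quadratic in the Yang–Mills curvature with coefficients built from the connection coefficients $\tr\chi,\tr\chibar,\widehat{\chi},\chibarhat,\eta,\etabar,\omega,\omegabar$. Since these errors are manifestly algebraic of lower differential order, the identity has the structure of a symmetric-hyperbolic energy identity, which is what is meant by ``manifestly hyperbolic'' in parallel with the previous proposition.

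The main obstacle I anticipate is the spatial integration by parts step: one has to verify that $\hnabla$ really behaves as a well-defined operator on the fibre-inner-product expressions (so that no stray $[A,\cdot]$ commutator survives to spoil gauge invariance), and that the divergence and curl terms match with the correct signs and factors between the $(\alpha^F,\bar\alpha^F)$ equations and the $(\rho^F,\sigma^F)$ equations. Adjoint invariance of $\langle\cdot,\cdot\rangle$ under the bracket is what guarantees this; once that is in place, the rest is a bookkeeping exercise analogous to the Maxwell energy identity in the double-null frame.
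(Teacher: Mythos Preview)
Your proposal is correct and follows essentially the same approach as the paper: contract the null Yang--Mills equations with the matching curvature components, use that the fibre inner product is $\hnabla$-compatible so that $\nabla_4\lvert\bar\alpha^F\rvert^2=2\langle\bar\alpha^F,\hnabla_4\bar\alpha^F\rangle$ etc., apply Lemma~\ref{integration}, and then integrate by parts on the spheres to cancel the principal $\hnabla\rho^F$, ${}^*\hnabla\sigma^F$ terms against the $\widehat{\mathrm{div}}$, $\widehat{\mathrm{curl}}$ terms. The only cosmetic difference is that the paper works with the two Bianchi triples $(\bar\alpha^F,\rho^F,\sigma^F)$ and $(\alpha^F,\rho^F,\sigma^F)$ separately (doing the first in detail and noting the second is analogous), whereas you superpose all six equations at once; both arrangements give the same energy identity and the same algebraic cancellations.
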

\begin{proof}
The integration lemma can be utilized to prove the manifestly hyperbolic characteristics of the Yang-Mills equations while expressed in the double null coordinates. First consider the null triple $(\underline{\alpha}^{F},\rho^{F},\sigma^{F})$ and recall their gauge covariant evolution equations 
\begin{equation}
    \hnabla_4 \alphabar^F +\frac{1}{2}\tr\chi\alphabar^F = - \hnabla \rho^F + \Hodge{\hnabla}\sigma^F -2 \Hodge{\etabar} \sigma^F - 2 \etabar \rho^F + 2 \omega \alphabar^F - \chibarhat \cdot \alpha^F,
\end{equation}
\begin{equation}
    \hnabla_3 \rho^F + \tr\chibar \rho^F = - \widehat{\text{div}} \alphabar^F + \left( \eta-\etabar \right)\cdot \alphabar^F,
\end{equation}
\begin{equation}
    \hnabla_3 \sigma^F + \tr\chibar \sigma^F = -\widehat{\text{curl}}\alphabar^F+\left(\eta-\etabar\right)\cdot\Hodge{\alphabar}^F.
\end{equation}
Now define $f_{1}:=|\underline{\alpha}^{F}|^{2}_{\gamma,\delta},~f_{2}=|\rho^{F}|^{2}_{\gamma,\delta}$, and $f_{3}:=|\sigma^{F}|^{2}_{\gamma,\delta}$. With these definitions in mind, let us apply the integration lemma to $f_{1}$, $f_{2}$, and $f_{3}$ to yield
\begin{eqnarray}
\int_{D_{u,\underline{u}}}\nabla_{4}f_{1}+\int_{D_{u,\underline{u}}}\nabla_{3}f_{2}+\int_{D_{u,\underline{u}}}\nabla_{3}f_{3}=\int_{\underline{H}_{\underline{u}}}f_{1}+\int_{H_{u}}f_{2}+\int_{H_{u}}f_{3}-\int_{\underline{H}_{\underline{u}_{0}}}f_{1}\\\nonumber 
-\int_{H_{u_{0}}}f_{2}-\int_{H_{u_{0}}}f_{3}+\int_{D_{u,\underline{u}}}(2\omega-\tr\chi)f_{1}+\int_{D_{u,\underline{u}}}(2\underline{\omega}-\tr\underline{\chi})(f_{2}+f_{3}).
\end{eqnarray}
In order for these equations to exhibit a hyperbolic characteristic, the left-hand side should simplify to terms that are algebraic in $\underline{\alpha}^{F}, \rho^{F},$ and $\sigma^{F}$ upon using the null evolution equations. Now we note the most important point: $f_{1}$, $f_{2}$, and $f_{3}$ are gauge-invariant objects and therefore we have the following as a consequence of the compatibility of the gauge covariant connection $\hat{\nabla}$ with the metrics $\gamma$ and $\delta$ of the fibers:
\begin{eqnarray}
\nabla_{4}f_{1}=2\langle \underline{\alpha}^{F},\widehat{\nabla}_{4}\underline{\alpha}^{F}\rangle_{\gamma,\delta},~\nabla_{3}f_{2}=2\langle \rho^{F},\widehat{\nabla}_{3}\rho^{F}\rangle_{\gamma,\delta},~\nabla_{3}f_{3}=2\langle \sigma^{F},\widehat{\nabla}_{3}\sigma^{F}\rangle_{\gamma,\delta}.
\end{eqnarray}
Now we only focus on the principal terms for  the hyperbolicity argument.

\begin{equation}
\begin{split}
\langle\underline{\alpha}^{F},\widehat{\nabla}_{4}\underline{\alpha}^{F}\rangle_{\gamma,\delta}&=\langle \underline{\alpha}^{F},-\hnabla \rho^F + \Hodge{\hnabla}\sigma^F+\cdot\cdot\cdot\cdot\rangle_{\gamma,\delta}\\
&=-\text{div}\langle\underline{\alpha}^{F},\rho^{F}\rangle_{\gamma,\delta}+\langle\widehat{\text{div}}\underline{\alpha}^{F},\rho^{F}\rangle_{\gamma,\delta}+\text{div}~^{*}\langle\underline{\alpha}^{F},\sigma^{F}\rangle_{\gamma,\delta}+\langle\widehat{\text{curl}}\underline{\alpha}^{F},\sigma^{F}\rangle +\text{l.o.t}
\end{split}
\end{equation} 

\begin{equation} 
\langle \rho^{F},\hnabla _{3}\rho^{F}\rangle_{\gamma,\delta}=\langle \rho^{F},-\widehat{\text{div}} \alphabar^F+\cdot\cdot\cdot\cdot\rangle_{\gamma,\delta}\end{equation} \begin{equation}
\langle \sigma^{F},\hnabla _{3}\sigma^{F}\rangle_{\gamma,\delta}=\langle \sigma^{F}, -\widehat{\text{curl}}\underline{\alpha}^{F}+\cdot\cdot\cdot\cdot\rangle_{\gamma,\delta}.
\end{equation}
Now after addition, we have 
\begin{equation}
\begin{split}
&\langle\underline{\alpha}^{F},\widehat{\nabla}_{4}\underline{\alpha}^{F}\rangle_{\gamma,\delta}+\langle \rho^{F},\hnabla _{3}\rho^{F}\rangle_{\gamma,\delta}+\langle \sigma^{F},\hnabla _{3}\sigma^{F}\rangle_{\gamma,\delta}\\ 
=&-\text{div}\langle\underline{\alpha}^{F},\rho^{F}\rangle_{\gamma,\delta}+\langle\widehat{\text{div}}\underline{\alpha}^{F},\rho^{F}\rangle_{\gamma,\delta}+\text{div}~^{*}\langle\underline{\alpha}^{F},\sigma^{F}\rangle_{\gamma,\delta}+\langle\widehat{\text{curl}}\underline{\alpha}^{F},\sigma^{F}\rangle\\ 
&\hspace{5cm}-\langle \rho^{F},\widehat{\text{div}} \alphabar^F\rangle_{\gamma,\delta}-\langle \sigma^{F}, \widehat{\text{curl}}\underline{\alpha}^{F}\rangle_{\gamma,\delta}+\text{l.o.t}\\ 
=&-\text{div}\langle\underline{\alpha}^{F},\rho^{F}\rangle_{\gamma,\delta}+\text{div}~^{*}\langle\underline{\alpha}^{F},\sigma^{F}\rangle_{\gamma,\delta}+\text{l.o.t}
\end{split}
\end{equation}
which, upon integration over the topological $2-$sphere, yields terms that are algebraic in $\underline{\alpha}^{F}, \rho^{F}$, and $\sigma^{F}$. Here $\langle\underline{\alpha}^{F},\rho^{F}\rangle_{\gamma,\delta}=(\underline{\alpha}^{F})^{P}~_{Qab}(\rho^{F})^{Q}~_{P}~^{b}$ and $~^{*}\langle \underline{\alpha}^{F},\sigma^{F}\rangle_{\gamma,\delta}=\epsilon^{ca}(\underline{\alpha}^{F})^{P}~_{Qab}\rho^{Q}~_{P}~^{b}$. The most vital property that is utilized here is the compatibility of the connection $\hnabla$ with the inner product $\langle~,~\rangle_{\gamma,\delta}$ induced by the fiber metrics $\gamma$ and $\delta$ together with the Hodge structure present in the null Yang-Mills equations. Notice that nowhere in the procedure did we require explicit information about the Yang-Mills connection $1-$form $A^{P}~_{Q\mu}dx^{\mu}$. The remaining Yang-Mills null evolution equations may be utilized in a  similar  manner to obtain energy identities associated with the triple $\alpha^{F},\rho^{F},\sigma^{F}$. This concludes the proof of the hyperbolic characteristics of the null Yang-Mills equations.
\end{proof}

\subsection{Commutation Formulae}
\noindent Let us now write down the commutator formulas for $\mathfrak{g}-$valued sections of vector bundles over $S_{u,\ubar}$ i.e., sections of the bundle $\otimes^{n}T^{*}M\otimes _{M}\mathfrak{P}_{Ad,\mathfrak{g}}$ that transform as tensors under gauge transformations
\begin{eqnarray}
[\hnabla_{4},\hnabla_{B}]\mathcal{G}^{P}~_{QA_{1}A_{2}A_{3}\cdot\cdot\cdot\cdot A_{n}}=[\widehat{D}_{4},\widehat{D}_{B}]\mathcal{G}^{P}~_{QA_{1}A_{2}A_{3}\cdot\cdot\cdot\cdot A_{n}}\nonumber+(\nabla_{B}\log\Omega)\hnabla_{4}\mathcal{G}^{P}~_{QA_{1}A_{2}A_{3}\cdot\cdot\cdot\cdot A_{n}}\\\nonumber 
-\gamma^{CD}\chi_{BD}\hnabla_{C}\mathcal{G}^{P}~_{QA_{1}A_{2}A_{3}\cdot\cdot\cdot\cdot A_{n}}-\sum_{i=1}^{n}\gamma^{CD}\chi_{BD}\etabar_{A_{i}}\mathcal{G}^{P}~_{QA_{1}A_{2}A_{3}\cdot\cdot\hat{A}_{i}C\cdot\cdot A_{n}}\\\nonumber 
+\sum_{i=1}^{n}\gamma^{CD}\chi_{A_{i}B}\etabar_{D}\mathcal{G}^{P}~_{QA_{1}A_{2}A_{3}\cdot\cdot\hat{A}_{i}C\cdot\cdot A_{n}}
\end{eqnarray}
and 
\begin{eqnarray}
[\widehat{D}_{4},\widehat{D}_{A}]\mathcal{G}^{P}~_{QA_{1}A_{2}\cdot\cdot\cdot\cdot A_{n}}=(\nabla_{A}\log\Omega)\hnabla_{4}\mathcal{G}^{P}~_{QA_{1}A_{2}\cdot\cdot\cdot\cdot A_{n}}-\sum_{i}R(e_{C},e_{A_{i}},e_{4},e_{A})\nonumber\mathcal{G}^{P}~_{QA_{1}\cdot\cdot\hat{A}_{i},\cdot\cdot A_{n}}\\\nonumber 
+F^{P}~_{R4A}\mathcal{G}^{R}~_{QA_{1}A_{2}\cdot\cdot\cdot A_{n}}-F^{R}~_{Q4A}\mathcal{G}^{P}~_{RA_{1}A_{2}\cdot\cdot\cdot A_{n}}
\end{eqnarray}
which through the algebraic Bianchi identity and the definition of the null curvature components may be written in the following schematic forms (indices are suppressed)
\begin{eqnarray}
[\widehat{D}_{4},\widehat{D}_{A}]\mathcal{G}\sim (\beta+\alpha^{F}\cdot(\rho^{F}+\sigma^{F}))\mathcal{G}+\alpha^{F}\mathcal{G}+(\eta+\etabar)\hnabla_{4}\mathcal{G}
\end{eqnarray}
and 
\begin{eqnarray}
[\hnabla_{4},\hnabla_{B}]\mathcal{G}\sim[\widehat{D}_{4},\widehat{D}_{B}]\mathcal{G} \nonumber+(\eta+\etabar)\hnabla_{4}\mathcal{G}-\chi\hnabla\mathcal{G}+\chi\etabar\mathcal{G}.
\end{eqnarray}
Similar schematic expressions hold for $[\hnabla_{3},\hnabla_{B}]\mathcal{G}$ 
\begin{eqnarray}
[\hnabla_{3},\hnabla_{B}]\mathcal{G}\sim[\widehat{D}_{3},\widehat{D}_{B}]\mathcal{G}+(\eta+\etabar)\hnabla_{3}\mathcal{G}-\bar{\chi}\nabla\mathcal{G}+\bar{\chi}\eta\mathcal{G}
\end{eqnarray}
and 
\begin{eqnarray}
[\widehat{D}_{3},\widehat{D}_{A}]\mathcal{G}\sim (\bar{\beta}+\bar{\alpha}^{F}\cdot(\rho^{F}+\sigma^{F}))\mathcal{G}+\bar{\alpha}^{F}\mathcal{G}+(\eta+\etabar)\hnabla_{3}\mathcal{G}.
\end{eqnarray}
For a function (or gauge invariant object) $f$ the following holds 
\begin{eqnarray}
[\hnabla_{4},\hnabla_{A}]f\sim (\eta+\etabar)\nabla_{4}f-\chi\nabla f,~[\hnabla_{3},\hnabla_{A}]f\sim (\eta+\etabar)\nabla_{3}f-\bar{\chi}\nabla f.
\end{eqnarray}
We will use these commutation formulas while deriving higher-order energy estimates. Note that we write the schematic form since we would not require the exact form while deriving estimates. Below is a lemma for a general commutation scheme
\begin{lemma} 
\label{commutation}
\textit{Suppose $\mathcal{G}$ is a section of the product vector bundle $~^{k}\otimes T^{*}\mathbb{S}^{2}\otimes P_{Ad,\rho}$, $k\geq 1$,  that satisfies $\hnabla_{4}\mathcal{G}=\mathcal{F}_{1}$ and $\hnabla_{4}\hnabla^{I}\mathcal{G}=\mathcal{F}^{I}_{1}$, then $\mathcal{F}^{I}_{1}$ verifies the following schematic expression:}
\begin{equation}
\begin{split}
\mathcal{F}^{I}_{1}\sim&\sum_{J_{1}+J_{2}+J_{3}+J_{4}=I-1}\nabla^{J_{1}}(\eta+\underline{\eta})^{J_{2}}\nabla^{J_{3}}\beta\hnabla^{J_{4}}\mathcal{G}\\+&\sum_{J_{1}+J_{2}+J_{3}+J_{4}+J_{5}=I-1}\nabla^{J_{1}}(\eta+\underline{\eta})^{J_{2}}\hnabla^{J_{3}}\alpha^{F}\hnabla^{J_{4}}(\rho^{F},\sigma^{F})\hnabla^{J_{5}}\mathcal{G}\\
+&\sum_{J_{1}+J_{2}+J_{3}+J_{4}=I-1}\nabla^{J_{1}}(\eta+\underline{\eta})^{J_{2}}\hnabla^{J_{3}}\alpha^{F}\hnabla^{J_{4}}\mathcal{G}+\sum_{J_{1}+J_{2}+J_{3}=I}\nabla^{J_{1}}(\eta+\underline{\eta})^{J_{2}}\hnabla^{J_{3}}\mathcal{F}_{1}\\+&\sum_{J_{1}+J_{2}+J_{3} +J_{4}=I}\nabla^{J_{1}}(\eta+\underline{\eta})^{J_{2}}\hat{\nabla}^{J_{3}}\chi\hnabla^{J_{4}}\mathcal{G}.
\end{split}
\end{equation}
Similarly, for $\hnabla_{3}\mathcal{G}=\mathcal{F}_{2}$, and $\hnabla_{3}\hnabla^{I}\mathcal{G}=\mathcal{F}^{I}_{2}$, 
\begin{equation}
\begin{split}
\mathcal{F}^{I}_{2}\sim &\sum_{J_{1}+J_{2}+J_{3}+J_{4}=I-1}\nabla^{J_{1}}(\eta+\underline{\eta})^{J_{2}}\nabla^{J_{3}}\underline{\beta}\hnabla^{J_{4}}\mathcal{G}\\+&\sum_{J_{1}+J_{2}+J_{3}+J_{4}+J_{5}=I-1}\nabla^{J_{1}}(\eta+\underline{\eta})^{J_{2}}\hnabla^{J_{3}}\underline{\alpha}^{F}\hnabla^{J_{4}}(\rho^{F},\sigma^{F})\hnabla^{J_{5}}\mathcal{G}\\
+&\sum_{J_{1}+J_{2}+J_{3}+J_{4}=I-1}\nabla^{J_{1}}(\eta+\underline{\eta})^{J_{2}}\hnabla^{J_{3}}\underline{\alpha}^{F}\hnabla^{J_{4}}\mathcal{G}+\sum_{J_{1}+J_{2}+J_{3}=I}\nabla^{J_{1}}(\eta+\underline{\eta})^{J_{2}}\hnabla^{J_{3}}\mathcal{F}_{2}\\+&\sum_{J_{1}+J_{2}+J_{3} +J_{4}=I}\nabla^{J_{1}}(\eta+\underline{\eta})^{J_{2}}\hat{\nabla}^{J_{3}}\underline{\chi}\hnabla^{J_{4}}\mathcal{G}.
\end{split}
\end{equation}
\end{lemma}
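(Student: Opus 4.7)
\smallskip\noindent\emph{Proof proposal.} The plan is to argue by induction on $I$ using the schematic commutator identities already established in the discussion preceding the lemma, and then to iteratively apply the Leibniz rule for $\hnabla$ on the resulting product expressions. The two statements in the lemma are proved in parallel, since the $\hnabla_3$ version is formally identical to the $\hnabla_4$ version under $4 \leftrightarrow 3$, $\chi \leftrightarrow \chibar$, $\beta \leftrightarrow \betabar$, $\alpha^F \leftrightarrow \alphabar^F$.

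The base case $I=1$ reads $\hnabla_4 \hnabla \mathcal{G} = \hnabla \mathcal{F}_1 + [\hnabla_4,\hnabla]\mathcal{G}$. The second term, by the schematic commutator
\[
[\hnabla_4,\hnabla_B]\mathcal{G} \sim (\beta + \alpha^F\cdot(\rho^F+\sigma^F))\mathcal{G} + \alpha^F\mathcal{G} + (\eta+\etabar)\hnabla_4\mathcal{G} - \chi\hnabla\mathcal{G} + \chi\etabar\mathcal{G},
\]
plugs directly into the five schematic families on the right-hand side of the claim (with $J_i$'s such that the total index count is $I-1=0$ or $I=1$), after absorbing the purely algebraic term $\chi\etabar\mathcal{G}$ into the last family (which allows one factor of $\chi$ with no derivative, i.e.\ $J_3=0$). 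The term $(\eta+\etabar)\hnabla_4\mathcal{G}=(\eta+\etabar)\mathcal{F}_1$ is absorbed into the $\nabla^{J_3}\mathcal{F}_1$ family with $J_3=0$, $J_2=1$.

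For the inductive step, assume the schematic expression holds for $I-1$. Write
\[
\hnabla_4 \hnabla^I\mathcal{G} = \hnabla(\hnabla_4\hnabla^{I-1}\mathcal{G}) + [\hnabla_4,\hnabla]\hnabla^{I-1}\mathcal{G}.
\]
For the first summand, apply $\hnabla$ termwise to the inductive expression for $\mathcal{F}_1^{I-1}$; by Leibniz, each factor picks up an extra angular derivative, shifting one of the $J_k$ indices up by $1$ and thereby increasing the total from $I-1$ (or $I-2$ for the $\mathcal{F}_1$-family) to $I$ (or $I-1$). For the second summand, use the schematic commutator once more, but now acting on $\hnabla^{I-1}\mathcal{G}$; the resulting terms $\beta\hnabla^{I-1}\mathcal{G}$, $\alpha^F(\rho^F+\sigma^F)\hnabla^{I-1}\mathcal{G}$, $\alpha^F\hnabla^{I-1}\mathcal{G}$, $(\eta+\etabar)\hnabla_4\hnabla^{I-1}\mathcal{G}$, $\chi\hnabla^I\mathcal{G}$, and $\chi\etabar\hnabla^{I-1}\mathcal{G}$ all match, either directly or (for the $(\eta+\etabar)\hnabla_4\hnabla^{I-1}\mathcal{G}$ term) after substituting in the inductive expression for $\mathcal{F}_1^{I-1}$ and reabsorbing the $(\eta+\etabar)$ factor into the existing product $(\eta+\etabar)^{J_2}$, which simply raises $J_2$ by $1$ and preserves the total index count.

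The main obstacle is purely bookkeeping: one must verify that every commutation produced along the way lands in one of the five schematic families with the correct total index constraint, and that new curvature/connection factors generated by $[\hnabla_4,\hnabla]$ are of the types already admitted (namely $\beta$, $\alpha^F$, $\rho^F$, $\sigma^F$, $\chi$, $\eta+\etabar$). This is guaranteed because the commutator $[\hnabla_4,\hnabla_B]$ introduces no new objects beyond these, and because derivatives falling on $\beta$, $\alpha^F$, $(\rho^F,\sigma^F)$, $\chi$, or $(\eta+\etabar)$ are already allowed by the summation indices. The $\hnabla_3$ version is obtained by repeating the argument verbatim with the replacements noted above, using the corresponding commutator $[\hnabla_3,\hnabla_B]\mathcal{G} \sim [\widehat{D}_3,\widehat{D}_B]\mathcal{G} + (\eta+\etabar)\hnabla_3\mathcal{G} - \chibar\hnabla\mathcal{G} + \chibar\eta\mathcal{G}$ together with $[\widehat{D}_3,\widehat{D}_A]\mathcal{G} \sim (\betabar + \alphabar^F\cdot(\rho^F+\sigma^F))\mathcal{G} + \alphabar^F\mathcal{G} + (\eta+\etabar)\hnabla_3\mathcal{G}$.
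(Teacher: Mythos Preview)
Your proposal is correct and follows exactly the approach indicated in the paper: induction on $I$ with the base case $I=1$ read off from the schematic commutator formulae preceding the lemma, and the inductive step via $\hnabla_4\hnabla^I\mathcal{G}=\hnabla\mathcal{F}_1^{I-1}+[\hnabla_4,\hnabla]\hnabla^{I-1}\mathcal{G}$. In fact you supply more detail than the paper itself, which simply asserts the base case, states the inductive hypothesis, and refers to \cite{AnAth} for the remainder.
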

\begin{proof}
For $I=1$, this identity is clearly satisfied due to the calculations above. Assume, it holds for $J=I-1$ and show that it holds for $J=I$. We omit the proof and refer to \cite{AnAth}. 
\end{proof} 

\subsection{Integration and norms}
\noindent Now we define the norms that are adapted to the double null framework. We need to define the norms within the bulk spacetime $\mathcal{D}_{u,\ubar}$, on the null hypersurfaces ($H$ and $  \Hbar$), and on the topological $2-$ sphere $S_{u,\ubar}$. First, we need an integration measure. On $\mathcal{D}_{u,\ubar}$, we can use the canonical volume measure corresponding to the spacetime metric. On the null hypersurfaces, the metric is degenerate, and therefore no canonical choice of volume form is available. For a function $f$, we define its integration over $\mathcal{D}_{u,\ubar}, H,  \Hbar$, and $S_{u,\ubar}$ as follows 
\begin{eqnarray}
\int_{S_{u,\ubar}}f:=\int_{S_{u,\ubar}}f \mu_{\gamma},~
\int_{\mathcal{D}_{u,\ubar}}f:=\int_{0}^{u}\int_{0}^{\ubar}\int_{S}\Omega^{2}f\mu_{\gamma}dud\ubar,\\
\int_{H}f:=\int_{0}^{\ubar}\int_{S}f \Omega\mu_{\gamma}d\ubar^{'}, ~\int_{  \Hbar}f:=\int_{0}^{u}\int_{S}f \Omega\mu_{\gamma}du,
\end{eqnarray}
where $\mu_{\gamma}$ is the volume form associated with the metric $\gamma$ on $S_{u,\ubar}$. Now we define the norm of a $\mathfrak{g}-$valued section associated with $S_{u,\ubar},H$, and $  \Hbar$. Let $\mathcal{G}$ be a $\mathfrak{g}-$valued section of a vector bundle over $S_{u,\ubar}$ (we will also call $\mathcal{G}$ as a $\mathfrak{g}-$valued horizontal tensor field). Its $L^{p}$ norms ($1\leq p<\infty$) are defined as follows 
\begin{eqnarray}
||\mathcal{G}||^{p}_{L^{p}(S_{u,\ubar})}:=\int_{S_{u,\ubar}}(|\mathcal{G}|^{2}_{\gamma,\delta})^{\frac{p}{2}}\mu_{\gamma},\\
||\mathcal{G}||^{p}_{L^{p}(H)}:=\int_{0}^{\ubar}\int_{S}(|\mathcal{G}|^{2}_{\gamma,\delta})^{\frac{p}{2}} \Omega\mu_{\gamma}d\ubar,~
||\mathcal{G}||^{p}_{L^{p}(  \Hbar)}:=\int_{0}^{u}\int_{S}(|\mathcal{G}|^{2}_{\gamma,\delta})^{\frac{p}{2}} \Omega\mu_{\gamma}du,
\end{eqnarray}
where $|\mathcal{G}|^{2}_{\gamma,\delta}$ is defined as follows 
\begin{eqnarray}
|\mathcal{G}|^{2}_{\gamma,\delta}:=\mathcal{G}^{P}~_{QA_{1}A_{2}A_{3}\cdot\cdot\cdot\cdot A_{N}}\mathcal{G}^{Q}~_{PB_{1}B_{2}B_{3}\cdot\cdot\cdot\cdot B_{N}}\gamma^{A_{1}B_{1}}\gamma^{A_{2}B_{2}}\gamma^{A_{3}B_{3}}\cdot\cdot\cdot\cdot \gamma^{A_{N}B_{N}}.
\end{eqnarray}
$L^{\infty}$ norm over $S_{u,\ubar}$ is defined as $||\mathcal{G}||_{L^{\infty}(S_{u,\ubar})}:=\sup_{\theta^{1},\theta^{2}\in S_{u,\ubar}}\sqrt{|\mathcal{G}|^{2}_{\gamma,\delta}(\theta^{1},\theta^{2})}$. 
The norm of the initial data specified on the hypersurfaces $H_{0}$ and $\Hbar_{0}$ are defined as follows. We use $\mathcal{O}_{0},\mathcal{W}_{0},$ and $\mathcal{F}_{0}$ to denote the initial norms of the connection coefficients, the Weyl curvature, and the Yang-Mills curvature, respectively. The norm of the connection coefficients on the initial null hypersurfaces $H_{0}$ and $\Hbar_{0}$ is defined as follows
\begin{eqnarray}
\label{eq:connectioninitialnorm}
\nonumber\mathcal{O}_{0}:=\\
\sup_{S\subset H_{0},~S\in  \Hbar_{0}}\sup_{\varphi\in\{\widehat{\chi}, \tr\chi,\chibarhat, \tr\chibar,\omega,\omegabar,\eta,\etabar\}}\max\left(\sum_{I=0}^{2}||\nabla^{I}\varphi||_{L^{2}(S)},\nonumber\sum_{I=0}^{1}||\nabla^{I}\varphi||_{L^{4}(S)},||\varphi||_{L^{\infty}(S)}\right)\\+\sup_{S\subset H_{0},~S\in  \Hbar_{0}}\sup_{\varphi\in(\eta,\omegabar)} ||\nabla_{3}\varphi||_{L^{2}(S)}+\sup_{S\subset H_{0},~S\in  \Hbar_{0}}\sup_{\varphi\in(\etabar,\omega)}||\nabla_{4}\varphi||_{L^{2}(S)}\\\nonumber+\left(\sup_{u,\ubar} C|C^{-1}\gamma_{round}(X,X)\leq \gamma(X,X)\leq C\gamma_{round}(X,X),\gamma~\text{is~a~metric~on}\right.\\\nonumber\left. ~S_{u,0} ~or~S_{0,\ubar}, \forall X\in TS_{u,0}~or~TS_{0,\ubar}\right).
\end{eqnarray}
The norm of the Weyl curvature components on the initial null hypersurfaces $H_{0}$ and $\Hbar_{0}$ is defined as follows
\begin{eqnarray}
\label{eq:weylinitialnorm}
\mathcal{W}_{0}:=\nonumber\sup_{\Psi\in\{\alpha,\beta,\bar{\beta},\rho,\sigma\}}||\nabla\Psi||_{L^{2}(H_{0})}+\sup_{\Psi\in\{\bar{\alpha},\beta,\bar{\beta},\rho,\sigma\}}||\nabla\Psi||_{L^{2}(  \Hbar_{0})}\\ +\max\left(||\nabla_{4}\alpha||_{L^{2}(H_{0})},||\nabla_{3}\bar{\alpha}||_{L^{2}(  \Hbar_{0})},||\nabla_{4}\beta||_{L^{2}(  \Hbar_{0})},||\nabla_{3}\bar{\beta}||_{L^{2}(H_{0})}\right)
\\\nonumber+\sup_{S\subset H_{0},~S\in  \Hbar_{0}}\sup_{\Psi\in\{\alpha,\bar{\alpha},\beta,\bar{\beta},\rho,\sigma\}}\max\left(\sum_{I=0}^{1}||\nabla^{I}\Psi||_{L^{2}(S)},||\Psi||_{L^{4}(S)}\right).
\end{eqnarray}
Lastly, the norm of the Yang-Mills curvature components on the initial null hypersurfaces $H_{0}$ and $\Hbar_{0}$ is defined as follows
\begin{eqnarray}
\label{eq:yanginitialnorm}
\mathcal{F}_{0}:=\nonumber\sum_{I=0}^{2}\left(\sup_{\Phi\in\{\alpha^{F},\rho^{F},\sigma^{F}\}}||\hnabla^{I}\Phi||_{L^{2}(H_{0})}+\sup_{\Phi\in\{\bar{\alpha}^{F},\rho^{F},\sigma^{F}\}}||\hnabla^{I}\Phi||_{L^{2}(  \Hbar_{0})}\right.\\\nonumber\left.+\sup_{\Phi\in\{\rho^{F},\sigma^{F}\}}||\hnabla^{I}_{4}\Phi||_{L^{2}(  \Hbar_{0})}\right.\\
\left.+\sup_{\Phi\in\{\rho^{F},\sigma^{F}\}}||\hnabla^{I}_{3}\Phi||_{L^{2}(H_{0})}+||\hnabla^{I}_{4}\alpha^{F}||_{L^{2}(H_{0})}+||\hnabla^{I}_{3}\bar{\alpha}^{F}||_{L^{2}(H_{0})}\right.\\\nonumber \left.+\sup_{\Phi\in(\rho^{F},\sigma^{F})}||\hnabla_{4}\hnabla\Phi||_{L^{2}(  \Hbar_{0})}+\sup_{\Phi\in(\rho^{F},\sigma^{F})}||\hnabla_{3}\hnabla\Phi||_{L^{2}(H_{0})}\right)\\\nonumber+||\hnabla_{4}\hnabla\alpha^{F}||_{L^{2}(H_{0})}+||\hnabla_{3}\hnabla\bar{\alpha}^{F}||_{L^{2}(  \Hbar_{0})}
\\\nonumber +\sup_{S\subset H_{0},~S\subset  \Hbar_{0}}\sup_{\Phi\in\{\alpha^{F},\bar{\alpha}^{F},\rho^{F},\sigma^{F}\}}\max\left(\sum_{I=0}^{1}||\hnabla^{I}\Phi||_{L^{4}(S)}\right).
\end{eqnarray}
Here $\gamma_{round}$ is the standard round metric on a $2-$sphere and $J>0$ is sufficiently large but finite. In addition to the initial data norm, we also define the following curvature norms defined on $H_{u}$ and $\Hbar_{\ubar}$ 
\begin{eqnarray}
\label{eq:weylnorm}
\mathcal{W}:=\sum_{I=0}^{1}\left(\sup_{u}\sup_{\Psi\in\{\alpha,\beta,\bar{\beta},\rho,\sigma\}}||\nabla^{I}\Psi||_{L^{2}(H)}\nonumber+\sup_{\ubar}\sup_{\Psi\in\{\bar{\alpha},\beta,\bar{\beta},\rho,\sigma\}}||\nabla^{I}\Psi||_{L^{2}(  \Hbar)}\right.\\ 
\left.+\sup_{u}||\nabla_{4}\alpha||_{L^{2}(H)}+\sup_{\ubar}||\nabla_{3}\bar{\alpha}||_{L^{2}(  \Hbar)}\right),\\
\label{eq:yangnorm}
\mathcal{F}:=\sum_{I=0}^{2}\left(\sup_{u}\sup_{\Phi\in\{\alpha^{F},\rho^{F},\sigma^{F}\}}||\hnabla^{I}\Phi||_{L^{2}(H)}\nonumber+\sup_{\ubar}\sup_{\Phi\in\{\bar{\alpha}^{F},\rho^{F},\sigma^{F}\}}||\hnabla^{I}\Phi||_{L^{2}(  \Hbar)}\right)\\ +\sup_{u}||\hnabla_{4}\alpha^{F}||_{L^{2}(H)}+\sup_{\ubar}||\hnabla_{3}\bar{\alpha}^{F}||_{L^{2}(  \Hbar)}+\sup_{u}||\hnabla_{4}\hnabla_{4}\alpha^{F}||_{L^{2}(H)}\\\nonumber+\sup_{\ubar}||\hnabla_{3}\hnabla_{3}\bar{\alpha}^{F}||_{L^{2}(  \Hbar)}+\sup_{\ubar}||\hnabla_{3}\hnabla\bar{\alpha}^{F}||_{L^{2}(  \Hbar)}
+\sup_{u}||\hnabla_{4}\hnabla\alpha^{F}||_{L^{2}(H)}.
\end{eqnarray}
In addition to the $L^{2}(H_{u})$ (resp. $L^{2}(\Hbar_{\ubar}))$ norms, we also need to define the following norms defined on the topological $2-$sphere $S_{u,\ubar}$ i.e., $H_{u}\cap \Hbar_{\ubar}$
\begin{eqnarray}
    \mathcal{W}(S):=\sup_{u,\ubar}||\alpha,\bar{\alpha},\beta,\bar{\beta},\rho,\sigma||_{L^{4}(S_{u,\ubar})},\\
\mathcal{F}(S):=\sum_{i=0}^{1}\sup_{u,\ubar}||\hnabla^{i}(\alpha^{F},\bar{\alpha}^{F},\rho^{F},\sigma^{F})||_{L^{4}(S_{u,\ubar})},\\
\end{eqnarray}
Following the decomposition (\ref{eq:weyl}), the Weyl curvature and the Yang-Mills curvature completely determine the Riemann curvature of the spacetime. 
\section{Main theorem and idea of the proof}
\subsection{Main Theorem}
\noindent In this section, we describe the main result of the article and sketch the main argument behind the proof. As discussed previously, the Yang-Mills sourced null Bianchi equations (\ref{eq:bianchi1}-\ref{eq:bianchi2}) and null Yang-Mills equations are manifestly hyperbolic contrary to the equations $d\Gamma+\Gamma\Gamma=R$ (\ref{eq:connection1}-\ref{eq:4}) and $dA+[A,A]=F$ (definition of the Yang-Mills curvature in terms of the connection) which are manifestly non-hyperbolic, where $\Gamma$, $A$, $R$, and $F$ denote connection coefficients on the frame tangent bundle, connection on the principle $G-$bundle (or gauge bundle), the curvature of frame tangent bundle, and curvature of the principle $G-$bundle, respectively. Nevertheless, the equations $d\Gamma+\Gamma\Gamma=R$ (\ref{eq:connection1}-\ref{eq:4}) exhibit a special structure as we shall observe. In addition, we will never use the equations $dA+[A,A]=F$ but rather derive estimates with the full gauge covariant derivatives (since working at the level of connection would require a choice of Yang-Mills gauge). While proving a local existence in the Cauchy problem (this needs to be done at the end utilizing the estimates obtained from analysis in double null gauge) for Einstein-Yang-Mills equations, one ought to work with the connections directly. However, the connections can always be estimated in terms of the gauge invariant norms of the Yang-Mills curvature and a local existence theory can be obtained (e.g., Yang-Mills equations take the form of a symmetric hyperbolic system in temporal gauge and one can estimate the residual spacial connections in terms of the electric field that can be constructed by means of the null components of the Yang-Mills curvature). The primary factor behind the semi-global existence feature in the context of the characteristic initial value problem is the remarkable null structure associated with the nonlinearities of the Einstein-Yang-Mills equations while expressed in the double null gauge.  These null structures are more obvious if one writes down the gauge covariant wave equations for the Weyl curvature and the Yang-Mills curvature (see \cite{moncrief,klainerman2007kirchoff}). Essentially this null structure played a crucial role in establishing the non-linear stability of Minkowski space under vacuum and electromagnetic perturbations \cite{christodoulou1993global,zipser2000global}. We first write down the theorem and then briefly discuss the idea of the proof and in particular how the remarkable structure of the Einstein-Yang-Mills equations is conducive to proving a semi-global existence result. We omit the description of a characteristic initial value problem since we are only interested in obtaining estimates. For a precise formulation, the reader is referred to section 2.3 of \cite{luk2012local}\\

\noindent \begin{theorem}
    Let $(M,g)$ be a $3+1$ dimensional globally hyperbolic Lorentzian manifold and $F$ be the curvature of a principle $G-$bundle $\mathfrak{P}$ on $M$ such that the pair $(g,F)$ verify the Einstein-Yang-Mills equations 
\begin{gather}
	 \label{Ein}   R[g]_{\mu\nu} - \frac{1}{2}R[g]g_{\m\n}= \mathfrak{T}_{\m\n},\\
	 \label{Yan}
	 \mathfrak{T}_{\m\n} := \frac{1}{2} \left( {F^P}_{Q\m\alpha}{{F^Q}_{P\n}}^{\alpha} + {\Hodge{F}^P}_{Q\m\alpha}{{\Hodge{F}^Q}_{P\n}}^{\alpha} \right),\\
	    \widehat{D}_{\m} {{F^P}}_{Q\n \lambda} +\widehat{D}_{\n} {{F^P}}_{Q\lambda \m}+\widehat{D}_{\lambda} {{F}^P}_{Q\m\n}=0,  g^{\a\b}\widehat{D}_{\a}{F^P}_{Q\b\m}=0,
	\end{gather}
where by ${\Hodge{F}^{P}}_Q$ we denote the Hodge dual of ${F^P}_Q$. Let $H_{0}$ and $\Hbar_{0}$ be two null hypersurfaces and $S_{0,0}:=H_{0}\cap \Hbar_{0}$. The initial data of the connections $(\tr\chi,\tr\chibar,\chihat,\chibarhat,\eta,\etabar,\omega,\omegabar)$ and curvatures ($\alpha,\alphabar,\beta,\betabar,\rho,\sigma,\alpha^{F},\alphabar^{F},\rho^{F},\sigma^{F}$) are provided on the double-null hypersurfaces $H_{0}$ and $\Hbar_{0}$ such that these data verify the characteristic constraint equations \footnote{Given `free data' on the intersection sphere $S_{0,0}$, one can integrate the characteristic constraint equations to obtain the data on $H_{0}$ and $\Hbar_{0}$. Therefore, any data provided on $H_{0}$ and $\Hbar_{0}$ must verify the compatibility condition that they are the solution of characteristic constraint equations (ODE) on $H_{0}$ and $\Hbar_{0}$ with data on $S_{0,0}$. See \cite{luk2012local} for a detailed discussion} on the pair $(H_{0},\Hbar_{0})$ and their corresponding norms $\mathcal{O}_{0}, \mathcal{W}_{0}, \mathcal{F}_{0}$ (\ref{eq:connectioninitialnorm}-\ref{eq:yanginitialnorm}) verify $\mathcal{O}_{0}, \mathcal{W}_{0}, \mathcal{F}_{0}<\infty$.
Given a large but finite $J>0$, there exists a sufficiently small $\epsilon>0$ depending on the initial data $\mathcal{O}_{0}, \mathcal{W}_{0}, \mathcal{F}_{0}$ such that the norms $\mathcal{W}$ (\ref{eq:weylnorm}) and $\mathcal{F}$ (\ref{eq:yangnorm}) of the Weyl curvature and the Yang-Mills curvature remain bounded uniformly throughout the future causal domain $\mathcal{D}_{u,\ubar}$ of $S_{0,0}$ foliated by the two families of the incoming and outgoing null hypersurfaces $H_{u}$ and $\Hbar_{\ubar}$ for $u\in[0,\epsilon],\ubar\in[0, J]$ in terms of the initial data norm $\mathcal{O}_{0}, \mathcal{W}_{0}, \mathcal{F}_{0}<\infty$ i.e.,  
\begin{eqnarray}
\mathcal{W}<C(\mathcal{O}_{0}, \mathcal{W}_{0}, \mathcal{F}_{0}),~\mathcal{F}<C(\mathcal{O}_{0}, \mathcal{W}_{0}, \mathcal{F}_{0}).  
\end{eqnarray}
A solution to the coupled Einstein-Yang-Mills equations (\ref{Ein}) exists in the function space defined by the norm $\mathcal{W}$ and $\mathcal{F}$ in $\mathcal{D}_{u,\ubar}$.
\end{theorem}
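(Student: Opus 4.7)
The plan is to close a bootstrap argument on the spacetime slab $\mathcal{D}_{u,\ubar}$ for $u\in[0,\epsilon]$, $\ubar\in[0,J]$, where $\epsilon$ will be chosen sufficiently small at the end depending only on $\mathcal{O}_0,\mathcal{W}_0,\mathcal{F}_0$ and $J$. First I would posit bootstrap assumptions of the form $\mathcal{O}\leq \Delta_0$, $\mathcal{W}(S)+\mathcal{F}(S)\leq \Delta_1$, and $\mathcal{W}+\mathcal{F}\leq \Delta_2$ on the entire slab, for large finite constants $\Delta_i$ to be tuned. The strategy is then to show that each of these can be improved by a constant strictly smaller than itself once $\epsilon$ is small, using in a critical way that integrations in the $\ubar$-direction over $[0,J]$ are long but can be offset by integration in the $u$-direction over $[0,\epsilon]$ which brings smallness factors.

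Second, I would estimate the connection coefficients. Equations \eqref{eq:connection1}--\eqref{eq:4} furnish transport equations for $\tr\chi,\widehat{\chi},\eta,\omega$ along $L$ (integrate in $\ubar$), and for $\tr\chibar,\chibarhat,\etabar,\omegabar$ along $\Lbar$ (integrate in $u$, which carries the small factor $\epsilon$). Integrating along $e_{4}$ gives bounds dependent on $J$ and $\mathcal{W},\mathcal{F}$, while integrating along $e_{3}$ gives $\epsilon$-small bounds. To gain a derivative without losing regularity in the Yang-Mills source, I would use the null Codazzi system \eqref{eq:1}--\eqref{eq:4}: the Hamiltonian constraint \eqref{eq:4} controls the Gauss curvature $K$ of each $S_{u,\ubar}$ in terms of the connection coefficients and $\rho,\mathfrak{T}_{43}$, which in turn feeds into elliptic estimates on $S_{u,\ubar}$ for the Hodge systems satisfied by $\widehat{\chi},\chibarhat,\eta,\etabar$. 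Combined with trace estimates (to pass from $L^{2}(H)$ or $L^{2}(\Hbar)$ to $L^{2}(S)$ and $L^{4}(S)$), this lets me control two angular derivatives of the connection coefficients although only one angular derivative of the Weyl curvature is propagated through the energy identities, which is the crux of closing the regularity count for Yang-Mills.

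Third, I would derive the curvature energy estimates. For the Weyl components I would apply the Bel-Robinson current $\mathfrak{C}=Q(n,n,n,\cdot)$ from the first proposition above to the Bianchi equations \eqref{eq:bianchi1}--\eqref{eq:bianchi2}, and for the Yang-Mills components I would use the gauge-invariant integration by parts of Lemma \ref{integration} together with the structural identities displayed in the proof of Proposition \ref{hyperbolic}, applied to the triples $(\alphabar^F,\rho^F,\sigma^F)$ and $(\alpha^F,\rho^F,\sigma^F)$. Differentiating Bianchi and Yang-Mills once using Lemma \ref{commutation}, and pairing with the commuted unknowns, produces boundary fluxes on $H_u$ and $\Hbar_{\ubar}$ matched by bulk terms of schematic type $\Gamma\cdot\Psi\cdot\Psi$, $\Gamma\cdot\Psi\cdot F$, and $F\cdot F\cdot\Psi$, where $\Gamma$ denotes connection coefficients, $\Psi$ the Weyl null components, and $F$ the Yang-Mills null components. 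Each such error term is controlled via Cauchy–Schwarz in $(u,\ubar)$ combined with the connection estimates from the previous step, absorbing dangerous top-order $(\eta,\etabar,\omega,\omegabar)$-derivatives through the elliptic/trace package; the $\ubar$-integrated terms on the bad side contribute factors $J$, but they come paired either with $\epsilon$ from a complementary $u$-integration or with $\Delta_i$ from the bootstrap on a coefficient that carries $\epsilon$ itself after integration from the initial data.

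Finally, choosing $\Delta_i$ appropriately and then $\epsilon$ small (dependent on $J$ and the initial norms), the right-hand sides of all the propagation inequalities can be shown to be strictly smaller than the bootstrap constants, closing the continuity argument and extending the estimates to the full slab $u\in[0,\epsilon]$, $\ubar\in[0,J]$. The existence of a solution in the function space defined by $\mathcal{W}$ and $\mathcal{F}$ then follows from these a priori estimates by a standard approximation/continuation argument (e.g., iterating Rendall's local existence \cite{rendall1992characteristic} and using the estimates to prevent breakdown). The main obstacle, and the reason a naive adaptation of \cite{luk2012local} fails, is precisely the derivative mismatch created by the appearance of the Yang-Mills curvature $F$ in the sources of the Weyl Bianchi equations: controlling $K$ angular derivatives of $\Psi$ forces $K+1$ angular derivatives of $F$, hence $K+1$ derivatives of $\Gamma$, which is one level above what transport alone provides. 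The elliptic estimates for the Hodge systems on $S_{u,\ubar}$, fed by the Gauss curvature bound coming from the null Hamiltonian constraint, are what bridges this gap and allows the bootstrap to close in a gauge-invariant manner through $\widehat{D}$, avoiding any Yang-Mills gauge choice.
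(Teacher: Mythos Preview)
Your outline captures the broad architecture correctly (bootstrap on connections, transport plus Hodge/elliptic estimates to recover the extra angular derivative, energy identities for the curvature pairs, Rendall plus continuation for existence), and you correctly identify the derivative mismatch between Weyl and Yang--Mills as the central obstruction. However, there is a concrete gap in your treatment of the ``bad'' curvature components $\bar\alpha$ and $\bar\alpha^F$, and it stems from a misclassification of the connection coefficients.

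You split the connection coefficients as $(\tr\chi,\widehat{\chi},\eta,\omega)$ along $e_4$ versus $(\tr\chibar,\chibarhat,\etabar,\omegabar)$ along $e_3$. This is not the operative split. The paper's classification is $\varphi_g=(\tr\chibar,\chibarhat,\etabar,\omega,\tr\chi,\widehat{\chi})$ (all of which admit $\nabla_3$ transport equations, including the mixed equations for $\tr\chi,\widehat{\chi}$ and the $\nabla_3\omega$ equation) versus $\varphi_b=(\eta,\omegabar)$ (which admit only $\nabla_4$ equations). In particular you have $\omega$ and $\omegabar$ reversed. The point is that $\varphi_g$ is controlled by $C(\mathcal{O}_0)$ \emph{alone}, while $\varphi_b$ is controlled only by $C(\mathcal{O}_0,\mathcal{W},\mathcal{F})$.

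This distinction is exactly what makes the $\bar\alpha$ (and $\bar\alpha^F$) energy close. In the integration-by-parts identity for the pair $(\bar\alpha,\bar\beta)$ the dangerous bulk term is $\int_{\mathcal{D}_{u,\ubar}}|\nabla\bar\alpha|^2(2\omega-\tfrac12\tr\chi)$. You cannot put $|\nabla\bar\alpha|^2$ on $H$ (it is only controlled on $\Hbar$), so there is no $\epsilon$ to be gained here; your sentence ``the $\ubar$-integrated terms on the bad side \ldots\ come paired either with $\epsilon$ from a complementary $u$-integration'' does not apply to this term. The mechanism the paper uses is Gr\"onwall in $\ubar$, and for that to close the bootstrap the coefficient must be $C(\mathcal{O}_0)$, not $C(\mathcal{O}_0,\mathcal{W},\mathcal{F})$; otherwise one gets $\mathcal{W}\leq C\,e^{C(\mathcal{W},\mathcal{F})J}$, which is circular. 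The null structure that saves the day is precisely that $\omega$ and $\tr\chi$ are both $\varphi_g$, hence bounded by $C(\mathcal{O}_0)$. The same observation is needed at every level for $|\hnabla^I\bar\alpha^F|^2$ and for $|\nabla_3\bar\alpha|^2$, $|\hnabla_3^I\bar\alpha^F|^2$. You should rework the connection step to exploit the $\nabla_3$ equations for $\tr\chi,\widehat{\chi},\omega$ and then make this Gr\"onwall argument explicit in the curvature step.

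A minor point: the paper does not actually use the Bel--Robinson current for the Weyl estimates but instead pairs the null Bianchi equations directly (e.g.\ $\nabla_3\alpha$ with $\nabla_4\beta$, $\nabla_4\bar\alpha$ with $\nabla_3\bar\beta$) and integrates by parts so that the principal terms cancel. Either route should work, but the direct pairing makes the cancellation of the $\gamma$-trace source terms $\langle\bar\alpha,(D_4R_{33}-D_3R_{34})\gamma\rangle=0$ transparent, which is another structural point you would need to check.
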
 

\begin{remark}
Once we obtain the gauge invariant estimates for the Weyl and Yang-Mills curvature, all the degrees of freedom are exhausted. One can solve the Cauchy problem for the quadruple $(\widehat{g},\widehat{K},E,B)$ satisfying constraint equations in the bulk $\mathcal{D}_{u,\ubar}$ in spacetime harmonic-temporal gauge given data on some initial slice lying in the bulk. In the light of current estimates (Weyl curvature is controlled in $H^{1}$ and Yang-Mills curvature is controlled in $H^{2}$), the regularity of such solution would be $H^{3}\times H^{2}\times H^{2}\times H^{2}$, where $\widehat{g}$, $\widehat{K}$, $E$, and $B$ are the induced Riemannian metric on a Cauchy hypersurface $\Sigma$ by the Lorentzian structure of $(M,g)$, its second-fundamental form, chromo-electric, and chromo-magnetic field associated to the Yang-Mills curvature $F$, respectively. We will discuss the existence issue in the appropriate section. Doing so would require invoking Randall's theorem on characteristic initial value problem \cite{rendall1992characteristic}. One may continue to prove these estimates to the successive higher orders and therefore one can establish the result for the classical solution.
\end{remark}

\subsection{Idea of the Proof}
\noindent  Let us now discuss the idea of the proof. The inclusion of Yang-Mills sources introduces additional difficulties that need to be addressed. We first assume that the connection coefficients $\varphi$ enjoy an upper bound (possibly large), namely the bootstrap constant. This upper bound allows us to estimate the ellipticity constant of the metric on the topological $2-$spheres in terms of the initial data (independent of the assumed upper bound of the connection coefficients if the $u$ direction is chosen sufficiently small e.g., $u\in[0,\epsilon]$) thereby allowing us to utilize the standard Sobolev inequalities. We start the main estimates with the connection coefficients assuming the finiteness of the curvature norms and $L^{2}(H,\Hbar)$ norm of second angular derivatives of the connection coefficients. The good connection coefficients $\varphi_{g}$ satisfy a $\nabla_{3}$ equation and therefore they gain a smallness factor $\epsilon$ through the integration of the transport equations. As a consequence, they can be bounded by the initial data alone. On the contrary, the bad connection coefficients $\varphi_{b}$ ($\{\varphi\}:=\{\varphi_{g}\}\cup \{\varphi_{b}\}$) satisfying $\nabla_{4}$ equations are estimated in terms of the curvature norms and the $L^{2}(H,  \Hbar)$ norm of second angular derivatives of $\varphi$. The remarkable point to note here is that in the structure equation for these bad connection coefficients $\varphi_{b}$ the terms $\varphi_{b}\varphi_{b}$ do not appear. This is precisely a consequence of the null structure of the non-linearities present. This allows us to employ Gr\"onwall's inequality to yield the desired estimates. Next, using the available estimates, we show that the undetermined connection norms $||\nabla^{2}\varphi||_{L^{2}(H,\Hbar}$ is determined by the curvature norms (Weyl and Yang-Mills) through a series of transport-elliptic estimates. These estimates are absolutely necessary to close the regularity argument. 

In the next step, we use simple integration by parts arguments to obtain the estimates of the Weyl and Yang-Mills curvatures (contrary to using the Bel-Robinson and Yang-Mills stress-energy tensors). This is a consequence of the fact that the Bianchi equations for the Weyl curvature and the gauge-covariant null Yang-Mills equations exhibit symmetric hyperbolic characteristics (the integration by parts argument for the Bianchi equations are standard, see proposition (\ref{hyperbolic}) for the argument for a gauge-covariant system such as Yang-Mills equations here). Once again the good curvature components ($\alpha,\beta,\bar{\beta},\rho,\sigma,\alpha^{F},\rho^{F},\sigma^{F}$) enjoy a gain of a smallness factor $\epsilon$ and therefore are innocuous. However, the bad curvature components $(\bar{\alpha},\bar{\alpha}^{F})$ do not gain such a small factor since they are integrated along $  \Hbar$. Therefore, in the energy estimate, they pose a potential obstruction. However, the remarkable null structure of non-linearities once again remedies the situation. In other words, the connection coefficients multiplying the terms $|\nabla\bar{\alpha}|^{2}$ and $|\hnabla\bar{\alpha}^{F}|^{2},|\hnabla^{2}\bar{\alpha}^{F}|^{2}$ are good connection coefficients (i.e., they satisfy the $\nabla_{3}$ transport equations) and therefore are estimated completely by means of the initial data. More precisely, for these curvature components (e.g., $\hnabla\alphabar^{F}$), we obtain inequalities such as 
\begin{eqnarray}
||\hnabla\alphabar^{F}||^{2}_{L^{2}(\Hbar_{\ubar})}\leq ||\hnabla\alphabar^{F}||^{2}_{L^{2}(\Hbar_{0})}+C(\mathcal{O}_{0})\int_{0}^{\ubar}||\hnabla\alphabar^{F}||^{2}_{L^{2}(\Hbar_{\ubar^{'}})}d\ubar^{'}    
\end{eqnarray}
Therefore, we can use a Gr\"onwall's inequality to obtain the final estimate purely in terms of the initial data $\mathcal{O}_{0}$. In addition, there are several occasions where the null structure of the Einstein-Yang-Mills equations plays a subtle role. Another important point to note here is that while working in optimal regularity characteristic initial value problem (one derivative of curvature), one ought to estimate $\nabla_{4}\alpha$ and $\nabla_{3}\alphabar$ separately. Most importantly, the null Bianchi equations (\ref{eq:bianchi1} and \ref{eq:bianchi2}) contain terms $\nabla_{4}R_{AB}$ and $\nabla_{3}R_{AB}$, respectively, which would produce terms $\hnabla_{4}\hnabla_{4}\alpha^{F}$ and   $\hnabla_{3}\hnabla_{3}\alphabar^{F}$. 
However, there are no direct estimates for $\nabla_{4}\alpha^{F}$ (or $\nabla_{3}\alphabar^{F}$) since $\alpha^{F}$ and $\alphabar^{F}$ only verifies $\hnabla_{3}$ and $\hnabla_{4}$ equations. This complicates the situation substantially since we need a separate energy norm for these terms and suitably constructed Bianchi-pair integration is utilized in a hierarchical manner to complete the argument. Once we have obtained the final estimates (that are independent of the initial bootstrap constant chosen and only depend on the initial data), we may choose the initial bootstrap constant accordingly to close the argument. A novelty in our proof of the estimates is the use of fully gauge-invariant norms of the Yang-Mills curvature components instead of working in a particular choice of gauge.     

\begin{remark}
We note that the gauge group for the Yang-Mills theory is compact allowing for a positive definite adjoint invariant inner product. Therefore, it is convenient to work with the gauge co-variant derivative since this derivative is compatible with the positive definite inner product on the Lie-algebra $\mathfrak{g}$. A similar strategy on the Einstein part would fail since the associated gauge group would be $SO(3,1)$ which is non-compact.    
\end{remark}

\subsection{Sketch of the existence proof}
\noindent The idea of the proof behind the existence of a solution of the coupled Einstein-Yang-Mills system throughout the domain $\mathcal{D}_{u,\ubar}$ given data on the two null hypersurfaces relies on Rendall's theorem \cite{rendall1992characteristic} and standard energy argument for the Cauchy problem. In doing so, we would utilize the estimates on the Weyl and Yang-Mills curvature. We have given the characteristic data on the two null hypersurfaces $H_{0}$ and $\Hbar_{0}$ for a quasi-linear wave equation, Rendall's theorem guarantees the existence of a unique solution in a sufficiently small neighborhood of the topological $2-$ sphere $S_{0,0}:=H_{0}\cap \Hbar_{0}$. To be more precise, we state Rendall's theorem 
\begin{theorem}\cite{rendall1992characteristic}
Let us consider a quasi-linear wave equation 
\begin{eqnarray}
\label{eq:quasi}
g^{\mu\nu}(\Phi)\partial_{\mu}\partial_{\nu}\Phi=\mathfrak{F}(\Phi,\partial \Phi),
\end{eqnarray}
where $g(\Phi)$ and $\mathfrak{F}(\Phi,\partial\Phi)$ are smooth in their respective arguments with smooth initial data prescribed on the two intersecting initial null hypersurfaces $H_{0}$ and $\Hbar_{0}$. Suppose $H_{0}$ and $\Hbar_{0}$ intersect at the topological $2-$sphere $S_{0,0}$ and all derivatives of $\Phi$ are continuous up to $S_{0,0}$, then there exists a small neighbourhood of $S_{0,0}$ in its future such that a unique solution to (\ref{eq:quasi}) exists within this neighbourhood.
\end{theorem}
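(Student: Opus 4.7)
The plan is to reduce the characteristic initial value problem to a standard Cauchy problem for quasi-linear wave equations and to appeal to the classical Leray--Choquet-Bruhat local well-posedness theory. First I would introduce local null coordinates $(u,\ubar,\theta^1,\theta^2)$ around $S_{0,0}$ in which $H_0=\{u=0\}$ and $\Hbar_0=\{\ubar=0\}$; in these coordinates the principal part of $g^{\mu\nu}(\Phi)\partial_\mu\partial_\nu$ takes the form $2g^{u\ubar}(\Phi)\partial_u\partial_{\ubar}$ plus terms involving only derivatives tangential to $S_{u,\ubar}$, with $g^{u\ubar}(\Phi)\neq 0$ by non-degeneracy of the Lorentzian metric. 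This structural feature lets one solve the wave equation algebraically for $\partial_u\partial_{\ubar}\Phi$ in terms of $\Phi$, $\partial\Phi$, and the angular second derivatives of $\Phi$ alone.

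The second step is to construct a smooth extension $\tilde\Phi$ of the characteristic data to a full neighborhood $\mathcal{U}$ of $S_{0,0}$. The prescribed data $\phi_0:=\Phi|_{H_0}$ and $\psi_0:=\Phi|_{\Hbar_0}$ determine all derivatives of $\Phi$ along $H_0$ and $\Hbar_0$ separately, and the standing assumption that every derivative of $\Phi$ is continuous up to $S_{0,0}$ guarantees that the two families of jets match there. Repeatedly differentiating the wave equation and evaluating at $S_{0,0}$ then recovers every mixed jet $\partial_u^{a}\partial_{\ubar}^{b}\partial_\theta^{\,c}\Phi|_{S_{0,0}}$ by induction on $a+b$: the equation provides $\partial_u\partial_{\ubar}\Phi$ from purely tangential data, and higher mixed orders reduce to already-computed ones after further differentiation in $u$, $\ubar$, or $\theta$. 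A Whitney--Borel construction then furnishes a smooth $\tilde\Phi$ on $\mathcal{U}$ that restricts to $\phi_0$ on $H_0\cap\mathcal{U}$, to $\psi_0$ on $\Hbar_0\cap\mathcal{U}$, and whose infinite jet at every point of $S_{0,0}$ is the computed one. Picking a small spacelike hypersurface $\Sigma\subset\mathcal{U}$ in the causal future of $S_{0,0}$---for instance a mild deformation of $\{u-\ubar=0\}$ rendered spacelike with respect to $g(\tilde\Phi)$---the restriction $(\tilde\Phi,\partial_n\tilde\Phi)|_\Sigma$ provides smooth Cauchy data, and the Leray--Choquet-Bruhat theorem yields a unique smooth $\Phi$ solving the quasi-linear wave equation in a neighborhood $\mathcal{N}$ of $\Sigma$, chosen small enough that its domain of dependence meets $H_0\cup\Hbar_0$ only inside $\mathcal{U}$.

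The most delicate step---and the one I expect to be the main obstacle---is verifying that this $\Phi$ actually realizes the prescribed data, i.e.\ that $\Phi|_{H_0\cap\mathcal{N}}=\phi_0$ and $\Phi|_{\Hbar_0\cap\mathcal{N}}=\psi_0$. For this I would study $w:=\Phi-\tilde\Phi$, which after Taylor expansion of the nonlinearity satisfies a linear wave equation with vanishing Cauchy data on $\Sigma$ and a source vanishing to infinite order along $H_0\cup\Hbar_0$ (by the Borel construction). Restricting the equation to $H_0$ and exploiting that $H_0$ is characteristic (so $g^{uu}(\tilde\Phi)=0$), the principal part degenerates to a first-order transport operator along the null generators of $H_0$ acting on the transverse derivative $\partial_u w$; the resulting linear transport ODE has vanishing initial value at the point where each generator meets $\Sigma$, so ODE uniqueness forces $\partial_u w\equiv 0$ on $H_0\cap\mathcal{N}$, and combined with $w|_{S_{0,0}}=0$ this propagates to $w|_{H_0\cap\mathcal{N}}=0$. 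The symmetric argument on $\Hbar_0$ settles existence, while uniqueness of the characteristic solution follows by applying standard Cauchy uniqueness on $\Sigma$ to any two candidate solutions, whose induced Cauchy data must agree by running the same transport argument in reverse.
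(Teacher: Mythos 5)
The paper does not prove this theorem but cites it from Rendall (1992), so I will compare your proposal to the known argument. You correctly identify the structural ingredients --- working in double null coordinates where the principal part becomes $2g^{u\ubar}\partial_u\partial_{\ubar}$ plus angular terms, computing formal jets by differentiating the equation, building a smooth approximate solution $\tilde\Phi$ via a Whitney--Borel construction, and solving an auxiliary Cauchy problem --- but the step you flag as ``the most delicate,'' verifying that the solution attains the prescribed characteristic data, has a genuine gap, and the gap lives exactly in your transport argument on $H_0$.

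You claim that on $H_0$ (where $g^{uu}=0$) the wave equation for $w=\Phi-\tilde\Phi$ degenerates to a first-order transport ODE for $\partial_u w$ along the null generators. It does not: the equation restricted to $H_0$ reads
\begin{equation}
2g^{u\ubar}\,\partial_{\ubar}(\partial_u w)\ +\ g^{AB}\nabla_A\nabla_B\bigl(w|_{H_0}\bigr)\ +\ \text{l.o.t.}\ =\ R|_{H_0},
\end{equation}
a single scalar relation that couples the transverse unknown $\partial_u w|_{H_0}$ to second \emph{angular} derivatives of the separate unknown $w|_{H_0}$. It is not a closed transport system, and treating it as a PDE to be integrated along the generators places the angular Laplacian on the wrong side (a backward heat-type problem), so one cannot conclude $\partial_u w\equiv 0$, let alone $w\equiv 0$, on $H_0$. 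Two further issues compound this: the surface $\{u-\ubar=0\}$ you propose to deform is \emph{timelike} in a double null metric (its conormal $du-d\ubar$ satisfies $g^{\mu\nu}(du-d\ubar)_\mu(du-d\ubar)_\nu=-2g^{u\ubar}>0$), so ``mild deformation'' does not make it spacelike; and your Borel step prescribes the infinite jet of $\tilde\Phi$ only \emph{at} $S_{0,0}$, whereas one needs the transverse jets of $\tilde\Phi$ along all of $H_0\cup\Hbar_0$, since otherwise the residual $R$ is uncontrolled on the null hypersurfaces away from the corner. Rendall sidesteps all of this by placing the auxiliary spacelike slice $\Sigma$ to the \emph{past} of $S_{0,0}$: the transverse jets of the formal solution are integrated out along all of $H_0$ and $\Hbar_0$ via the constraint (transport) equations, the Borel extension $\tilde\Phi$ is required to match those jets on the full null hypersurfaces so that $R$ vanishes to infinite order there, $R$ is then replaced by a smooth $\hat R$ vanishing identically in the past of $H_0\cup\Hbar_0$, and one solves the Cauchy problem for $w$ forward from $\Sigma$ with zero data and source $-\hat R$. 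Finite speed of propagation forces $w\equiv 0$ throughout the past of $H_0\cup\Hbar_0$, hence $w=0$ on $H_0\cup\Hbar_0$ by continuity, and in the future quadrant $\Phi=\tilde\Phi+w$ solves the original equation with the prescribed characteristic data. This domain-of-dependence argument --- not the transport step you propose --- is the actual content of Rendall's theorem.
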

Now it is well-known since the work of \cite{yvone} that Einstein's equations reduce to a system of quasi-linear wave equations in the space-time harmonic gauge. Therefore, in the space-time harmonic gauge, Einstein's equations (or reduced Einstein's equations to be precise) fall under the category (\ref{eq:quasi}), and Rendall's theorem applies. The final step is then to confirm that the gauge condition is verified throughout the domain of existence. This once again follows from the fact that the space-time gauge condition $\Gamma^{\mu}=0$ ($\Gamma^{\mu}$ is suitably contracted Connection coefficients of the spacetime metric $g$) verifies a wave type equation from which preservation of gauge condition follows in a straightforward way (see section 8 of chapter 6 in \cite{choquet2008general}). Our aim is to extend Rendall's theorem to the coupled Einstein-Yang-Mills system. We want to reduce the coupled system to a system of quasi-linear wave equations. Since Yang-Mills curvature appears as the source term in Einstein's equations and hence does not alter the principal symbol, we shall only focus on the Yang-Mills equations. Yang-Mills equations can be reduced to a symmetric hyperbolic system in the temporal gauge \cite{satah}. For the moment, if we consider the metric in the form (suitable for the Cauchy problem)
$~^{1+3}g=-N^{2}dt\otimes dt+g_{ij}dx^{i}\otimes dx^{j}$, where $\{x^{i}\}_{i=0}^{3}=(t,x^{1},x^{2},x^{3})$ is local chart, $N$ is the lapse function. Also, let $\mathbf{n}:=\frac{1}{N}\partial_{t}$ be the $t=$constant hypersurface orthogonal future directed unit normal field. In the framework of a Cauchy problem, the gauge covariant Yang-Mills equations read \footnote{These coupled equations are essentially the first-order formulation of the semi-linear wave equation for the Yang-Mills curvature 
$\widehat{D}^{\alpha}\widehat{D}_{\alpha}F^{\hat{a}}~_{\hat{b}\mu\nu}=2F^{\hat{a}}~_{\hat{c}\mu\beta}F^{\hat{c}}~_{\hat{b}\nu}~^{\beta}-2F^{\hat{a}}~_{\hat{c}\nu\beta}F^{\hat{c}}~_{\hat{b}\mu}~^{\beta}-R^{\gamma}~_{\beta\mu\nu}F^{\hat{a}}~_{\hat{b}\gamma}~^{\beta}-R^{\gamma}~_{\nu}F^{\hat{a}}~_{\hat{b}\gamma\mu}-R^{\gamma}~_{\mu}F^{\hat{a}}~_{\hat{b}\nu\gamma}$}
\begin{eqnarray}
\widehat{\mathcal{L}}_{{\partial}_{t}}\mathcal{E}_{i}=-N\epsilon_{i}~^{jk}\widehat{\nabla}_{j}H_{k}-2NK_{i}~^{j}\mathcal{E}_{j}+Ntr_{g}K\mathcal{E}_{i}-\epsilon_{i}~^{jk}\nabla_{j}N H_{k},\\
\widehat{\mathcal{L}}_{{\partial}_{t}}H_{i}=N\epsilon_{i}~^{jk}\widehat{\nabla}_{j}\mathcal{E}_{k}-2NK_{i}~^{j}H_{j}+Ntr_{g}KH_{i}+\epsilon_{i}~^{jk}\nabla_{j}N \mathcal{E}_{k},
\end{eqnarray}
where $\mathcal{E}_{i}:=F(\mathbf{n},\partial_{i}), H_{i}=~^{*}F(\mathbf{n},e_{i})$ are the chromo-electric and chromo-magnetic fields, respectively. Here $\widehat{\mathcal{L}}$ denotes the gauge-covariant Lie derivative operator. More explicitly, $\widehat{\mathcal{L}}_{\partial_{t}}:=\partial_{t}+[A_{0},\cdot]$. These system of equations are symmetric hyperbolic and one may obtain a local well-posedness result in the temporal gauge ($A_{0}=0$ and the spatial connections $A_{i}$ verify $A_{i}(t)=\int_{0}^{t}N\mathcal{E}_{i}dt^{'}$). Therefore, in the spacetime-harmonic-temporal gauge, the complete Einstein-Yang-Mills equations reduce to coupled quasi-linear-semi-linear wave equations (while Einstein's equations are quasi-linear, Yang-Mills equations are only semi-linear) for the spacetime metric and the Yang-Mills curvature. An application of  Renadall's theorem \cite{rendall1992characteristic} then yields a unique solution in a sufficiently small neighborhood of $S_{0,0}$ lying in the causal future of $S_{0,0}$ 
\begin{theorem}
\cite{rendall1992characteristic}
Given a regular initial data set, there exists a small neighborhood to the future of $S_{0,0}$ such that Einstein-Yang-Mills equations can be solved within it.  
\end{theorem}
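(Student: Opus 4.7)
The plan is to reduce the coupled Einstein-Yang-Mills system to a system of quasi-linear (for the metric) and semi-linear (for the Yang-Mills curvature) wave equations in a combined spacetime-harmonic and temporal gauge, apply the Rendall theorem quoted above to obtain a local solution in a neighbourhood to the future of $S_{0,0}$, and then verify that the gauges and the constraints propagate so that the resulting object genuinely solves the full Einstein-Yang-Mills system (\ref{Ein})--(\ref{Yan}) rather than just the reduced one.

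First, for the metric, I would impose the spacetime-harmonic gauge condition $\Gamma^\mu = g^{\alpha\beta}\Gamma^\mu_{\alpha\beta} = 0$. In this gauge Einstein's equations take the standard reduced form $g^{\alpha\beta}\partial_\alpha\partial_\beta g_{\mu\nu} = \mathfrak{H}_{\mu\nu}(g,\partial g) + 2\mathfrak{T}_{\mu\nu}$, which, together with the tracelessness of $\mathfrak{T}$, is manifestly quasi-linear in $g$ and semi-linear in the Yang-Mills curvature $F$. For the Yang-Mills sector, one passes to the second-order wave equation $\widehat{D}^\alpha \widehat{D}_\alpha F = \widehat{D}F \ast F + R \ast F$ noted in the footnote of the excerpt; equivalently one can work with the pair $(\mathcal{E}_i,H_i)$ written down above. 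To fit the template of the Rendall theorem, I would then fix the temporal gauge $A_0 = 0$ and recover the spatial connection from $A_i(t) = \int_0^t N\mathcal{E}_i\,dt'$; then the full unknown $(g_{\mu\nu},F^P{}_{Q\mu\nu})$ satisfies a system of quasi-linear wave equations of exactly the form $\tilde{g}^{\alpha\beta}(\Phi)\partial_\alpha\partial_\beta \Phi = \mathfrak{F}(\Phi,\partial \Phi)$ covered by the theorem.

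The data required by Rendall's theorem consists of $\Phi$ and all of its derivatives, continuous up to $S_{0,0}$, on the two null hypersurfaces $H_0$ and $\underline{H}_0$. By hypothesis the initial norms $\mathcal{O}_0,\mathcal{W}_0,\mathcal{F}_0$ are finite, and the characteristic constraint equations have already been integrated on $H_0,\underline{H}_0$ starting from free data on $S_{0,0}$, so the induced $(g,\partial g,F,\partial F)$ on the two incoming and outgoing null cones meet the regularity hypotheses. Rendall's theorem then produces a unique solution $\Phi = (g,F)$ of the reduced system in a (possibly very small) neighbourhood $\mathcal{U} \subset J^+(S_{0,0})$.

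Finally, I have to check that the gauge choices and the remaining constraints are preserved so that the reduced solution is a genuine Einstein-Yang-Mills solution on $\mathcal{U}$. For $\Gamma^\mu$ this is the classical Choquet-Bruhat argument: contracting the reduced equations with the second Bianchi identity and the Yang-Mills conservation $\widehat{D}^\mu \mathfrak{T}_{\mu\nu} = 0$ gives a homogeneous linear wave equation $\Box_g \Gamma^\mu = L(\Gamma,\partial\Gamma)$; vanishing of $\Gamma^\mu$ and $\partial \Gamma^\mu$ on the two initial null hypersurfaces (guaranteed by the characteristic constraints) then forces $\Gamma^\mu \equiv 0$ on $\mathcal{U}$ by uniqueness for linear wave equations on a causal domain. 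For the Yang-Mills side, preservation of $A_0 = 0$ is automatic from the definition $A_i(t) = \int_0^t N\mathcal{E}_i$, while the Gauss-type constraint $\widehat{D}^i \mathcal{E}_i = 0$ propagates by an analogous commutator computation using the temporal equation for $\mathcal{E}_i$ and the Bianchi identity for $F$. I expect the main obstacle to be precisely this gauge and constraint propagation argument in the characteristic setting, where one must confirm that the vanishing of $\Gamma^\mu$ and of the Yang-Mills constraints on $H_0\cup\underline{H}_0$ (a consequence of the data satisfying the characteristic constraint equations) is strong enough to feed the uniqueness step; once that is in place, $(g,F)|_{\mathcal{U}}$ solves the original Einstein-Yang-Mills system and the theorem follows.
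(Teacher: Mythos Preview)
Your proposal is correct and follows essentially the same route as the paper: impose spacetime-harmonic gauge on the metric and temporal gauge on the Yang-Mills connection so that the coupled system becomes quasi-linear/semi-linear wave equations of Rendall's type, apply Rendall's theorem to get a solution near $S_{0,0}$, and then propagate the gauge conditions and constraints via their own homogeneous wave equations and the Bianchi identities. The paper also notes the alternative spacetime-harmonic-Lorentz gauge, but otherwise your argument and the paper's coincide.
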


\noindent The remaining task is to verify that the gauge conditions and constraints are propagated throughout the domain of existence. This is a rather straightforward calculation and as such a consequence of the Bianchi identities (see \cite{choquet2008general}). This is rather monotonous and therefore we do not proceed with the calculations here. Note that there are other gauges that would work equally well. The Spacetime Harmonic Lorentz gauge (SHL) is one such gauge choice among many others. In the SHL gauge, the complete system reduces to a coupled semi-linear wave equations for the connections (see Yvonne Choquet Bruhat's book \cite{choquet2008general} for the local well-posedness of Einstein-Yang-Mills equations in this gauge) 
\begin{eqnarray}
g^{\mu\nu}\partial_{\mu}\partial_{\nu}A_{\alpha}=\mathcal{G}_{\alpha}(A,\partial A),
\end{eqnarray}
where $\partial A$ appears linearly in $\mathcal{G}_{\alpha}$. Therefore, Rendall's theorem directly applies as well. Therefore, at this point, we are able to apply Rendall's theorem to the Einstein-Yang-Mills system both in spacetime harmonic temporal and spacetime harmonic Lorentz gauge.\\

\noindent Now we sketch the existence proof by utilizing the apriori estimates on the Weyl and Yang-Mills curvature together with Rendall's theorem. For zero Yang-Mills fields, Luk \cite{luk2012local} sketched an argument for the existence theorem. We follow \cite{luk2012local}. Let us consider the spacetime domain $\mathcal{D}_{u,\ubar}:\{u\leq \epsilon, \ubar\leq J\}$ and define a time function $t:=u+\ubar$. Now by virtue of the null characteristic of $L$ and $\Lbar$, $\nabla t$ is time-like. Denote the level sets of $t$ by $\Sigma_{t}$. By the theorem of \cite{rendall1992characteristic}, if the spacetime does not exist in the whole of $D_{u,\ubar}$ as a solution of the coupled Einstein-Yang-Mills equations, then there exists a $t^{*}\in (0,J+\epsilon)$ such that
\begin{eqnarray}
t^{*}=\sup\{t|\text{the~spacetime~exists~in}~D_{u,\ubar}\cap\cup_{\tau\in[0,t)}\Sigma_{\tau}\}.
\end{eqnarray}
Now on each Cauchy slice $\Sigma_{t}$, let $\widehat{g}_{t}, \widehat{k}_{t}, \mathcal{E}_{t}, H_{t}$ be the induced Riemannian metric, second fundamental form, the chromo-electric field, and the chromo-magnetic field, respectively. We shall argue that they converge to $(\widehat{g}_{t^{*}},\widehat{k}_{t^{*}},\mathcal{E}_{t^{*}},H_{t^{*}})$ in a smooth way and $(\widehat{g}_{t^{*}},\widehat{k}_{t^{*}},\mathcal{E}_{t^{*}},H_{t^{*}})$ verify the Einstien-Yang-Mills constraint equations. To this end, it suffices to show that all the derivatives of the metric $g$ and the Yang-Mills curvatures are uniformly bounded for $t<t^{*}$. Since the Weyl and Yang-Mills curvatures exhaust all the degrees of freedom, we want to control $||\nabla^{I_{1}}_{4}\nabla^{J_{1}}_{3}\nabla^{K_{1}}\Psi||_{L^{2}(\Sigma_{t})}$ and $||\hnabla^{I_{2}}_{4}\hnabla^{J_{2}}_{3}\hnabla^{K_{2}}\Phi||_{L^{2}(\Sigma_{t})}$ for all $I_{1,2},J_{1,2},$ and $K_{1,2}$, where $\Psi=\{\alpha,\beta,\betabar,\rho,\sigma,\alphabar\}$,~$\Phi:\{\alpha^{F},\rho^{F},\sigma^{F},\alphabar^{F}\}$. This is a consequence of the apriori estimate (the main estimates) that we obtain section 7 and 8. In other words, we have 
\begin{eqnarray}
\label{eq:apriori}
||\nabla^{I_{1}}_{4}\nabla^{J_{1}}_{3}\nabla^{K_{1}}\Psi||_{L^{2}(\Sigma_{t})}\leq C_{I_{1},J_{1},K_{1}},~||\hnabla^{I_{2}}_{4}\hnabla^{J_{2}}_{3}\hnabla^{K_{2}}\Phi||_{L^{2}(\Sigma_{t})}\leq C_{I_{2},J_{2},K_{2}}~\forall t<t^{*} 
\end{eqnarray}
and $C_{I_{1},J_{1},K_{1}},C_{I_{2},J_{2},K_{2}}$ are independent of $t^{*}$. This can be proven in an iterative manner. Now let us construct spacetime harmonic coordinates on $\Sigma_{t^{*}}$. Let $g_{ij}=(\widehat{g}_{t^{*}})_{ij}$, $g_{00}=-1,g_{0i}=0$ and $\partial_{t}g_{ij}=-2k_{ij}$, $i,j\in \{1,2,3\}$, and use the freedom on $\partial_{t}g_{0i}$ and $\partial_{t}g_{00}$ to satisfy $\Gamma^{\mu}:=g^{\alpha\beta}\Gamma[g]^{\mu}_{\alpha\beta}=0$. Similarly, since the Yang-Mills estimates are gauge invariant, we may set $A_{0}=0$ for temporal gauge or $\mathfrak{A}:=\partial_{\mu}A^{\mu}=0$ for Lorentz gauge. Due to the constraint equations on $\Sigma_{t^{*}}$ that are satisfied by continuity, we have $\partial_{t}\Gamma^{\mu}=0$, $\partial_{t}\mathfrak{A}=0$ on $\Sigma_{t^{*}}$ as well. Now, from the theory of quasi-linear hyperbolic equation, it is straightforward to show that there exists a time $T>t^{*}$ such that a unique solution of the coupled Einstein-Yang-Mills system exists in the time slab $t\in[t^{*},T]$. Since the gauge and constraints verify wave equations, if they are zero initially with zero speed, they continue to be zero in the domain of existence. The size of the future domain of existence $T-t^{*}$ depends on the size of the data on $\Sigma_{t^{*}}$ which by means of the estimates (\ref{eq:apriori}) (obtained due to the estimates in sections 7 and 8) is bounded from above uniformly. Therefore, $T-t^{*}$ is strictly greater than zero. This violates the maximality of the domain of existence and therefore $t^{*}=J+\epsilon$ i.e., the domain within which we proved the apriori estimates. Now the question remains how to perform a smooth change of coordinate from the spacetime harmonic gauge to the canonical double-null gauge. \\

\noindent In order to smoothly change the coordinates from the spacetime harmonic coordinates to the canonical double null coordinate, we solve the Eikonal equations 
\begin{eqnarray}
g^{\mu\nu}\partial_{\mu}u\partial_{\nu}u=0,~g^{\mu\nu}\partial_{\mu}\ubar\partial_{\nu}\ubar=0
\end{eqnarray}
and the transport equation for the angular variable on topological $2-$sphere $S_{u,\ubar}$
\begin{eqnarray}
\frac{\partial\theta^{A}}{\partial \ubar}=0,~A=\{1,2\}.
\end{eqnarray}
These equations can be solved in the neighbourhood of $\Sigma_{t^{*}}$ and doing so one obtains a smooth solution $(u,\ubar,\theta^{1},\theta^{2})$. By uniqueness, these solutions agree with the canonical double null coordinate functions in a neighborhood to the past of $\Sigma_{t^{*}}$. We can then change to
the $(u,\ubar,\theta^{1},\theta^{2})$ coordinates. This completes the sketch of the existence proof. 

\noindent 

\section{Important inequalities}
\noindent Throughout our analysis, we need to employ Sobolev inequalities on the topological 2-sphere $S$ at different stages. However, since the metric on $S$ is dynamical, we need to make appropriate bootstrap assumptions. Technically, one could define the norms and energies with respect to a background metric on $S$ and try to control the additional terms that arise. However, we will start with making a bootstrap assumption on the connection coefficients similar to \cite{luk2012local,klainerman2012formation}. Let us assume the following 
\begin{eqnarray}
\label{eq:bootstrapinitial}
\sup_{u,\ubar}||\varphi||_{L^{\infty}(S_{u,\ubar})}\leq \Delta,
\end{eqnarray}
where $\varphi:=( \tr\chibar,\chibarhat,\etabar,\omega, \tr\chi,\widehat{\chi},\eta,\omegabar)$ and $\Delta$ is possibly a large constant. Later, we will prove that one can choose $\Delta$ such that the bootstrap assumption (\ref{eq:bootstrapinitial}) is closed, in particular, the improved estimates do not depend on $\Delta$. Under this assumption, it is straightforward to prove that the null lapse function $\Omega$, the ellipticity constant of the dynamical metric $\gamma$, and the shift vector field $b^{A}\frac{\partial}{\partial \theta^{A}}$ are bounded in the spacetime slab $D_{u,\bar
u}$ ($u\in [0,\epsilon],~\ubar\in [0,J]$) in terms of the initial data $\mathcal{O}_{0}$ (see \cite{klainerman2012formation,luk2012local}). For example, consider the null-lapse function $\Omega$. It verifies the following equation by the definition of the connection coefficient $\omegabar$ 
\begin{eqnarray}
  \omegabar=-\frac{1}{2}\Omega^{-1}\nabla_{3}\Omega=\frac{1}{2}\frac{d\Omega^{-1}}{du}  
\end{eqnarray}
which upon integration yields 
\begin{eqnarray}
 ||\Omega^{-1}-2||_{L^{\infty}(S_{u,\ubar})}\leq C \int_{0}^{u}||\omegabar||_{L^{\infty}({S_{u^{'},\ubar}})}\lesssim \epsilon \delta  
\end{eqnarray}
which implies for sufficiently small $\epsilon$, $\frac{1}{4}\leq \Omega\leq 4$.  Similarly one may estimate the metric coefficients $\gamma_{AB}$ induced on the $2-$ sphere $S_{u,\ubar}$. These estimates provide us with estimates on the area of $S_{u,\ubar}$ (\cite{klainerman2012formation,luk2012local}). Once we have the metric components under control in this double null gauge, we may write down the following set of inequalities that will be useful throughout.\\

\begin{proposition}
\label{sobolev}
\textit{There exists $\epsilon_{0}=\epsilon_{0}(\mathcal{O}_{0},\Delta)$ such that the following gauge invariant Sobolev inequalities hold for any horizontal gauge field strength in the spacetime slab $\mathcal{D}_{u,\ubar}$ ($u\in[0,\epsilon],\ubar\in[0,J]$), $\epsilon\leq \epsilon_{0}$
\begin{eqnarray}
\label{eq:inequal1}
||\mathcal{G}||_{L^{4}(S_{u,\ubar}})\leq C(\mathcal{O}_{0})\sum_{I=0}^{1}||\hnabla^{I}\mathcal{G}||_{L^{2}(S_{u,\ubar})},\\
||\mathcal{G}||_{L^{\infty}(S_{u,\ubar})}\leq C(\mathcal{O}_{0})(||\mathcal{G}||_{L^{4}(S_{u,\ubar})}+||\hnabla\mathcal{G}||_{L^{4}(S_{u,\ubar})}),\\
\label{eq:sobolev4}
||\mathcal{G}||_{L^{\infty}(S_{u,\ubar}})\leq C(\mathcal{O}_{0})\sum_{I=0}^{2}||\hnabla^{I}\mathcal{G}||_{L^{2}(S_{u,\ubar})}.
\end{eqnarray}
}
\end{proposition}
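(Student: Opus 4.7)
The plan is to reduce the three gauge-invariant estimates to the classical Sobolev inequalities for scalar functions on a fixed round 2-sphere, exploiting two structural features: the gauge-covariant derivative $\hnabla$ is compatible with the fiber metric $\delta$ (and the tangential metric $\gamma$), and the bootstrap assumption (\ref{eq:bootstrapinitial}) controls the geometry of $(S_{u,\ubar},\gamma)$ uniformly.

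First I would establish metric control. By integrating the transport equations $\nabla_4 \gamma = 2\Omega \chi$ and $\nabla_3 \gamma = 2\Omega \chib$ along $\Lb$ (starting from $\Sou$) and then along $L$ (starting from $\Suo$), the bootstrap hypothesis $\|\varphi\|_{L^\infty(S_{u,\ubar})} \leq \Delta$ together with the smallness of $\epsilon$ forces
\[
\tfrac12 \gamma_{\mathrm{round}}(X,X) \leq \gamma(X,X) \leq 2 \gamma_{\mathrm{round}}(X,X) \qquad \forall X \in TS_{u,\ubar},
\]
uniformly in $(u,\ubar)\in[0,\epsilon]\times[0,J]$, provided $\epsilon_0 = \epsilon_0(\mathcal{O}_0,\Delta)$ is small enough. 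This is already recorded in the initial data norm (\ref{eq:connectioninitialnorm}); the smallness factor from integrating along $L$ for $u\in[0,\epsilon]$ absorbs the $\Delta$-dependence so that the final ellipticity constant is controlled purely by $\mathcal{O}_0$. Consequently the area form $\mu_\gamma$ is comparable to the round area form with constants depending only on $\mathcal{O}_0$.

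Next I would pass from the gauge-covariant object $\mathcal{G}$ to the scalar $f := |\mathcal{G}|_{\gamma,\delta}$. Because $\hnabla$ preserves the inner product $\langle\,,\rangle_{\gamma,\delta}$ on the fibers of the mixed bundle (this is precisely the compatibility invoked in the proof of Proposition \ref{hyperbolic}), a Kato-type inequality holds pointwise wherever $f>0$:
\begin{equation*}
|\nabla f|_\gamma \;=\; \frac{|\langle \hnabla \mathcal{G}, \mathcal{G}\rangle_{\gamma,\delta}|}{|\mathcal{G}|_{\gamma,\delta}} \;\leq\; |\hnabla \mathcal{G}|_{\gamma,\delta},
\end{equation*}
and at points where $f=0$ one argues by a standard $\sqrt{f^2+\varepsilon}$ approximation. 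Since $f$ is now a genuine scalar function on $S_{u,\ubar}$, and since the metric is uniformly comparable to $\gamma_{\mathrm{round}}$, the classical $L^4$ Gagliardo-Nirenberg inequality on the round 2-sphere yields
\[
\|f\|_{L^4(S_{u,\ubar})} \leq C(\mathcal{O}_0)\bigl(\|f\|_{L^2(S_{u,\ubar})} + \|\nabla f\|_{L^2(S_{u,\ubar})}\bigr) \leq C(\mathcal{O}_0)\sum_{I=0}^{1}\|\hnabla^I \mathcal{G}\|_{L^2(S_{u,\ubar})},
\]
which is (\ref{eq:inequal1}). The second inequality follows from applying Morrey's embedding $W^{1,4}(S^2)\hookrightarrow L^\infty(S^2)$ to $f$ and using the Kato bound once more, while (\ref{eq:sobolev4}) is obtained by iterating: $\|\mathcal{G}\|_{L^\infty}\lesssim \|\mathcal{G}\|_{L^4}+\|\hnabla\mathcal{G}\|_{L^4}$, and each $L^4$ norm is bounded by one more angular derivative in $L^2$ via the first inequality applied to $\mathcal{G}$ and to $\hnabla\mathcal{G}$.

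The main obstacle, if any, is the uniform ellipticity step: one must verify that the constant $C(\mathcal{O}_0)$ in the metric comparison does not hide the bootstrap constant $\Delta$. This is where the smallness of $\epsilon$ is essential, since $\Delta$ enters multiplied by $\epsilon$ after integration along the $L$-direction from the initial data on $\Sou$, and can therefore be absorbed. Once uniform ellipticity is secured, the rest is a textbook application of 2-dimensional Sobolev embedding combined with the Kato inequality, with no need for a choice of Yang-Mills gauge at any stage.
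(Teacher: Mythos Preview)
Your proposal is correct and follows essentially the same approach as the paper: reduce to the scalar $f=|\mathcal{G}|_{\gamma,\delta}$ (the paper writes $f=(|\mathcal{G}|^2+\delta)^{1/2}$ and sends $\delta\to 0$, which is exactly your $\sqrt{f^2+\varepsilon}$ regularization), use the metric-compatibility of $\hnabla$ to get the Kato bound $|\nabla f|\leq |\hnabla\mathcal{G}|$, and then invoke the standard scalar Sobolev embeddings on $S_{u,\ubar}$ with constants controlled by $\mathcal{O}_0$ via the bootstrap. Your discussion of the metric comparison and why $\Delta$ disappears from the final constant is more explicit than the paper's, but the argument is the same.
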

\begin{proof} Let $f=\left(\mathcal{G}^{P}~_{QA_{1}A_{2}\cdot\cdot\cdot\cdot A_{n}}\mathcal{G}^{Q}~_{PB_{1}B_{2}\cdot\cdot\cdot\cdot B_{n}}\gamma^{A_{1}B_{1}}\gamma^{A_{2}B_{2}}\cdot\cdot \cdot\cdot\gamma^{A_{n}B_{n}}+\delta\right)^{1/2}$ and apply the standard Sobolev embedding for scalars
\begin{eqnarray}
||f||_{L^{4}(S)}\leq C(\mathcal{O}_{0})\left(||f||_{L^{2}(S)}+||\frac{\langle\mathcal{G},\hnabla\mathcal{G}\rangle}{(|\mathcal{G}|^{2}+\delta)^{\frac{1}{2}}}||_{L^{2}(S)}\right)\leq C\left(||\mathcal{G}||_{L^{2}(S)}\nonumber+||\hnabla\mathcal{G}||_{L^{2}(S)}\right)
\end{eqnarray}
after letting $\delta\to0$.
Note that $\nabla f^{2}=2\mathcal{G}^{P}~_{QA_{1}A_{2}\cdot\cdot\cdot\cdot A_{n}}\hnabla\mathcal{G}^{Q}~_{PB_{1}B_{2}\cdot\cdot\cdot\cdot B_{n}}\gamma^{A_{1}B_{1}}\gamma^{A_{1}B_{1}}\cdot\cdot\cdot\cdot \gamma^{A_{n}B_{n}}$ since $f$ is gauge invariant and $\hnabla$ is metrics compatible. The second inequality follows in a similar way. The last inequality is a consequence of the first two.
\end{proof}

\begin{proposition}
\label{transport}
Given $\sup_{u,\ubar}|| \tr\chi, \tr\chibar||_{L^{\infty}(S_{u,\ubar})}\leq C(\mathcal{O}_{0})$, the following inequalities hold throughout $\mathcal{D}_{u,\ubar}$ ($u\in[0,\epsilon],\ubar\in[0,J]$) for a sufficiently small $\epsilon$
\begin{eqnarray}
||\mathcal{G}||_{L^{p}(S_{u,\ubar})}\leq C(\mathcal{O}_{0})\left(||\mathcal{G}||_{L^{p}(S_{u,\ubar^{'}})}+\int_{\ubar^{'}}^{\ubar}||\hnabla_{4}\mathcal{G}||_{L^{p}(S_{u,\ubar^{''}})}d\ubar^{''}\right)\\
||\mathcal{G}||_{L^{p}(S_{u,\ubar})}\leq C\left(||\mathcal{G}||_{L^{p}(S_{u^{'},\ubar})}+\int_{u^{'}}^{u}||\hnabla_{3}\mathcal{G}||_{L^{p}(S_{u^{''},\ubar)}}du^{''}\right)
\end{eqnarray}
for $1\leq p\leq \infty$. Here $\mathcal{G}$ can be a section of the mixed bundle and $\hnabla$ is the gauge covariant connection compatible with the metrics of both fibers.
\end{proposition}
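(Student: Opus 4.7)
The plan is to derive a Gr\"onwall-type differential inequality for $\|\mathcal{G}\|^p_{L^p(S_{u,\ubar})}$ as a function of $\ubar$ (with $u$ fixed) and then integrate. The starting point is the evolution identity for the volume measure, which in the double null framework reads $\frac{\partial}{\partial\ubar}\mu_{\gamma}=\Omega\tr\chi\,\mu_{\gamma}$ (a direct consequence of $e_{4}=\Omega^{-1}\partial_{\ubar}$ and the definition $\chi_{AB}=\langle D_{e_{A}}e_{4},e_{B}\rangle$, exactly as already exploited in Lemma~\ref{integration}). Combined with $\partial_{\ubar}f=\Omega\nabla_{4}f$ on scalars, this gives
\begin{equation*}
\frac{d}{d\ubar}\int_{S_{u,\ubar}}f\,\mu_{\gamma}=\int_{S_{u,\ubar}}\Omega\bigl(\nabla_{4}f+(\tr\chi)f\bigr)\mu_{\gamma}.
\end{equation*}

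I would next specialize to $f=(|\mathcal{G}|^{2}_{\gamma,\delta}+\varepsilon)^{p/2}$ for a regularization parameter $\varepsilon>0$ (to handle the non-smoothness of $|\mathcal{G}|$ at its zero set, and ultimately pass $\varepsilon\to 0$). The crucial point here is the gauge-invariance of $|\mathcal{G}|^{2}_{\gamma,\delta}$ together with the compatibility of $\hnabla$ with both the induced metric $\gamma$ and the fiber metric $\delta$: these together give
\begin{equation*}
\nabla_{4}|\mathcal{G}|^{2}_{\gamma,\delta}=2\langle \mathcal{G},\hnabla_{4}\mathcal{G}\rangle_{\gamma,\delta},
\end{equation*}
so that $\nabla_{4}(|\mathcal{G}|^{2}+\varepsilon)^{p/2}=p(|\mathcal{G}|^{2}+\varepsilon)^{(p-2)/2}\langle \mathcal{G},\hnabla_{4}\mathcal{G}\rangle_{\gamma,\delta}$. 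A fiberwise Cauchy--Schwarz bounds this by $p(|\mathcal{G}|^{2}+\varepsilon)^{(p-1)/2}|\hnabla_{4}\mathcal{G}|_{\gamma,\delta}$. Inserting into the displayed identity, using the hypothesis $\|\tr\chi\|_{L^{\infty}}\leq C(\mathcal{O}_{0})$, the bound $\Omega\leq C(\mathcal{O}_{0})$ derived from the bootstrap assumption (\ref{eq:bootstrapinitial}), and H\"older's inequality on $S_{u,\ubar}$, I arrive at
\begin{equation*}
\frac{d}{d\ubar}\|(|\mathcal{G}|^{2}+\varepsilon)^{1/2}\|^{p}_{L^{p}(S_{u,\ubar})}\leq C(\mathcal{O}_{0})p\,\|(|\mathcal{G}|^{2}+\varepsilon)^{1/2}\|^{p-1}_{L^{p}(S_{u,\ubar})}\|\hnabla_{4}\mathcal{G}\|_{L^{p}(S_{u,\ubar})}+C(\mathcal{O}_{0})p\,\|(|\mathcal{G}|^{2}+\varepsilon)^{1/2}\|^{p}_{L^{p}(S_{u,\ubar})}.
\end{equation*}
Dividing through by $p\,\|(|\mathcal{G}|^{2}+\varepsilon)^{1/2}\|^{p-1}_{L^{p}(S_{u,\ubar})}$ yields a linear differential inequality for $y(\ubar):=\|(|\mathcal{G}|^{2}+\varepsilon)^{1/2}\|_{L^{p}(S_{u,\ubar})}$. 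Integrating in $\ubar$ and applying Gr\"onwall's lemma on $[\ubar',\ubar]\subset[0,J]$ gives the exponential factor $e^{C(\mathcal{O}_{0})J}$, which is absorbed into a new constant $C(\mathcal{O}_{0})$. Finally, letting $\varepsilon\to 0$ by dominated convergence recovers the first stated inequality for $1\leq p<\infty$.

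The second stated inequality in the $u$ direction is obtained by the same argument with the roles of $(e_{3},\Lbar,\tr\chibar)$ and $(e_{4},L,\tr\chi)$ interchanged, using that $e_{3}=\Omega^{-1}(\partial_{u}+b^{A}\partial_{\theta^{A}})$ and that the integration-by-parts companion identity of Lemma~\ref{integration} holds on $H_{u}$. For the endpoint $p=\infty$ I would not differentiate directly; instead I would either (a) pass to the limit $p\to\infty$ in the finite-$p$ estimate, which is legitimate because the implicit constant $C(\mathcal{O}_{0})$ in the differential inequality above is uniform in $p$ once $\tr\chi$ and $\Omega$ are in $L^\infty$, or (b) apply the fundamental theorem of calculus pointwise in $(\theta^{1},\theta^{2})$ to the integral curve $\ubar\mapsto(u,\ubar,\theta^{1},\theta^{2})$ of $\partial_{\ubar}$ and note that the coordinate chart was constructed precisely so that these integral curves close up at fixed angular coordinates. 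The main technical nuisance, and really the only genuinely delicate step, is the $\varepsilon$-regularization needed to legitimize differentiating $|\mathcal{G}|^{p}$ at points where $\mathcal{G}$ vanishes; every other ingredient (the volume-form evolution, gauge invariance, compatibility of $\hnabla$ with the fiber metrics, and the boundedness of $\Omega$ and $\tr\chi$) is already in hand from the preceding sections.
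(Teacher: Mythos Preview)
Your proposal is correct and follows essentially the same approach the paper indicates: the paper's entire proof is the sentence ``Follows in an exactly similar way as that of the first one~\ref{sobolev},'' meaning the $\varepsilon$-regularization $f=(|\mathcal{G}|^{2}_{\gamma,\delta}+\varepsilon)^{1/2}$ combined with the gauge-invariance identity $\nabla_{4}|\mathcal{G}|^{2}_{\gamma,\delta}=2\langle\mathcal{G},\hnabla_{4}\mathcal{G}\rangle_{\gamma,\delta}$ to reduce to the scalar case. You have simply written out in full the details that the paper leaves implicit, including the volume-form evolution from Lemma~\ref{integration}, the H\"older/Gr\"onwall step, and the $p=\infty$ endpoint.
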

\begin{proof}
Follows in an exactly similar way as that of the first one \ref{sobolev}.    
\end{proof}

\begin{proposition}
The following inequalities hold for any horizontal gauge field strength $\mathcal{G}$ throughout $\mathcal{D}_{u,\ubar}$ ($u\in[0,\epsilon],\ubar\in[0,J]$ under the bootstrap assumption (\ref{eq:bootstrapinitial})) (which in turn controls the metric on $S_{u,\ubar}$)
\begin{eqnarray}
\label{eq:gagliardo1}
||\mathcal{G}||_{L^{4}(S_{u,\ubar})}\leq C(\mathcal{O}_{0})\left(||\mathcal{G}||^{\frac{1}{2}}_{L^{2}(S_{u,\ubar})}||\hnabla\mathcal{G}||^{\frac{1}{2}}_{L^{2}(S_{u,\ubar})}+||\mathcal{G}||_{L^{2}(S_{u,\ubar})}\right),\\
\label{eq:gagliardo2}
||\mathcal{G}||_{L^{\infty}(S_{u,\ubar})}\leq C(\mathcal{O}_{0})\left(||\mathcal{G}||^{\frac{1}{2}}_{L^{4}(S_{u,\ubar})}||\hnabla\mathcal{G}||^{\frac{1}{2}}_{L^{4}(S_{u,\ubar})}+||\mathcal{G}||_{L^{4}(S_{u,\ubar})}\right).
\end{eqnarray}
\end{proposition}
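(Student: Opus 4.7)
The plan is to reduce these two gauge-invariant interpolation inequalities to standard scalar Gagliardo–Nirenberg estimates on the topological 2-sphere $S_{u,\ubar}$, following precisely the regularization device used in the proof of Proposition \ref{sobolev}. Define, for $\delta>0$,
\begin{equation*}
f_{\delta}:=\bigl(|\mathcal{G}|^{2}_{\gamma,\delta}+\delta\bigr)^{1/2},
\end{equation*}
which is a smooth, strictly positive, gauge-invariant scalar on $S_{u,\ubar}$. Because $\hnabla$ is compatible with the inner product $\langle\,\cdot\,,\,\cdot\,\rangle_{\gamma,\delta}$ on the fibers, a direct computation yields the Kato-type bound
\begin{equation*}
|\nabla f_{\delta}|=\frac{|\langle\mathcal{G},\hnabla\mathcal{G}\rangle_{\gamma,\delta}|}{f_{\delta}}\leq|\hnabla\mathcal{G}|_{\gamma,\delta},
\end{equation*}
so that $\|\nabla f_{\delta}\|_{L^{p}(S_{u,\ubar})}\leq\|\hnabla\mathcal{G}\|_{L^{p}(S_{u,\ubar})}$ for every $1\leq p\leq\infty$. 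This reduces everything to scalar inequalities for $f_{\delta}$.

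Next I would invoke the classical Gagliardo–Nirenberg inequalities on a compact Riemannian 2-manifold: for any scalar $f\in W^{1,2}(S_{u,\ubar})$,
\begin{equation*}
\|f\|_{L^{4}(S_{u,\ubar})}\leq C\bigl(\|f\|_{L^{2}(S_{u,\ubar})}^{1/2}\|\nabla f\|_{L^{2}(S_{u,\ubar})}^{1/2}+\|f\|_{L^{2}(S_{u,\ubar})}\bigr),
\end{equation*}
and for $f\in W^{1,4}(S_{u,\ubar})$,
\begin{equation*}
\|f\|_{L^{\infty}(S_{u,\ubar})}\leq C\bigl(\|f\|_{L^{4}(S_{u,\ubar})}^{1/2}\|\nabla f\|_{L^{4}(S_{u,\ubar})}^{1/2}+\|f\|_{L^{4}(S_{u,\ubar})}\bigr),
\end{equation*}
where the constant $C$ depends only on the isoperimetric constant and on an ellipticity bound for the induced metric $\gamma$. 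Applying these to $f=f_{\delta}$ and using the Kato bound to replace $\|\nabla f_{\delta}\|$ by $\|\hnabla\mathcal{G}\|$, I pass to the limit $\delta\downarrow 0$ using dominated convergence (since $f_{\delta}\downarrow|\mathcal{G}|_{\gamma,\delta}$ monotonically and $|\nabla f_{\delta}|$ is uniformly dominated by $|\hnabla\mathcal{G}|$) to obtain the desired inequalities for $\mathcal{G}$ itself.

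The one remaining point is to justify that the constant can be taken of the form $C(\mathcal{O}_{0})$, i.e.\ independent of the bootstrap constant $\Delta$. This is the place where the bootstrap assumption (\ref{eq:bootstrapinitial}) does its real work: under (\ref{eq:bootstrapinitial}), the transport equation $\omegabar=\tfrac{1}{2}\,d\Omega^{-1}/du$ and its analog for $\gamma_{AB}$ along the $e_{3}$ (or $e_{4}$) direction, integrated over $u\in[0,\epsilon]$ with $\epsilon\leq\epsilon_{0}(\mathcal{O}_{0},\Delta)$, show that the metric $\gamma_{AB}$ on $S_{u,\ubar}$ stays $O(\epsilon\Delta)$-close to its initial value on $S_{0,\ubar}$ (respectively $S_{u,0}$). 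Choosing $\epsilon$ small enough, the ellipticity ratio of $\gamma$ and the isoperimetric constant of $(S_{u,\ubar},\gamma)$ are controlled entirely by the corresponding quantities for the initial metric, hence by $\mathcal{O}_{0}$ alone. This is exactly the mechanism already invoked before Proposition \ref{sobolev}, and it furnishes the uniform constant $C(\mathcal{O}_{0})$.

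The only substantive obstacle is the last point—ensuring the geometric constants in the underlying scalar Gagliardo–Nirenberg inequalities do not deteriorate as the bootstrap constant $\Delta$ is varied. Once $\epsilon$ is fixed small relative to $\mathcal{O}_{0}$ and $\Delta$, the uniform control on $\gamma$ is standard, and the remainder of the argument is mechanical; no further difficulty arises from the gauge-bundle structure, precisely because $f_{\delta}$ is a gauge-invariant scalar and the Kato inequality transfers all derivative estimates from $\nabla f_{\delta}$ to $\hnabla\mathcal{G}$ without loss.
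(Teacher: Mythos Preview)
Your proposal is correct and follows essentially the same approach as the paper: the paper's proof simply reads ``Similar to the proof of the proposition \ref{sobolev},'' and that proposition is proved exactly by introducing $f=(|\mathcal{G}|^{2}_{\gamma,\delta}+\delta)^{1/2}$, applying the scalar inequality, using metric compatibility of $\hnabla$ to bound $|\nabla f|$ by $|\hnabla\mathcal{G}|$, and letting $\delta\to 0$. Your treatment of the uniform constant $C(\mathcal{O}_{0})$ via the bootstrap-controlled metric is also precisely the mechanism the paper invokes in the discussion preceding Proposition \ref{sobolev}.
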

\begin{proof}
Similar to the proof of the proposition \ref{sobolev}.    
\end{proof}

\begin{proposition}
Under the bootstrap assumption (\ref{eq:bootstrapinitial}), following holds for any horizontal gauge field strength $\mathcal{G}$ throughout $\mathcal{D}_{u,\ubar}$ ($u\in[0,\epsilon],\ubar\in[0,J]$)
\begin{eqnarray}
||\mathcal{G}||_{L^{4}(S_{u,\ubar})}\leq C(\mathcal{O}_{0})\left(||\mathcal{G}||_{L^{4}(S_{0,\ubar})}+||\mathcal{G}||^{1/2}_{L^{2}(H)}||\hnabla_{4}\mathcal{G}||^{1/4}_{L^{2}(H)}(||\mathcal{G}||_{L^{2}(H)}+||\hnabla\mathcal{G}||_{L^{2}(H)})^{1/4}\right),\\
||\mathcal{G}||_{L^{4}(S_{u,\ubar})}\leq C\left(||\mathcal{G}||_{L^{4}(S_{0,\ubar})}+||\mathcal{G}||^{1/2}_{L^{2}(  \Hbar)}||\hnabla_{3}\mathcal{G}||^{1/4}_{L^{2}(  \Hbar)}(||\mathcal{G}||_{L^{2}(  \Hbar)}+||\hnabla\mathcal{G}||_{L^{2}(  \Hbar)})^{1/4}\right).
\end{eqnarray}
\end{proposition}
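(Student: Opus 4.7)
\medskip
\noindent\emph{Proof plan.}  The plan is to apply the transport identity of Proposition~\ref{transport} to the gauge-invariant scalar $|\mathcal{G}|^{4}$, and then reduce the resulting bulk integral on $H_{u}$ to the stated product of $L^{2}(H)$ norms by a combination of Gagliardo–Nirenberg on each leaf $S_{u,\ubar'}$ and Hölder/Cauchy–Schwarz in $\ubar'$. Since $\hnabla$ is compatible with the fiber metrics, $\nabla_{4}|\mathcal{G}|^{4}=4|\mathcal{G}|^{2}\langle\mathcal{G},\hnabla_{4}\mathcal{G}\rangle$ holds as an identity of gauge-invariant scalars, so the transport lemma (applied to $|\mathcal{G}|^{4}$ with the appropriate $\tr\chi$ correction coming from the changing measure on $S_{u,\ubar'}$) yields
\[
\|\mathcal{G}\|_{L^{4}(S_{u,\ubar})}^{4}\;\le\;C\Bigl(\|\mathcal{G}\|_{L^{4}(S_{u,0})}^{4}+\int_{0}^{\ubar}\!\!\int_{S}|\mathcal{G}|^{3}|\hnabla_{4}\mathcal{G}|\,\Omega\,\mu_{\gamma}\,d\ubar'\Bigr),
\]
after absorbing the zeroth-order term $\Omega\tr\chi|\mathcal{G}|^{4}$ by Grönwall using the bootstrap control on $\tr\chi$. (The initial sphere arising from integration along $H_{u}$ is naturally $S_{u,0}$; I read the $S_{0,\ubar}$ that appears in the statement as a typographical inconsistency and ignore the distinction below.)

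\medskip
\noindent Next I bound the bulk integral. Hölder on $S$ with $(6,2)$ gives $\int_{S}|\mathcal{G}|^{3}|\hnabla_{4}\mathcal{G}|\le\|\mathcal{G}\|_{L^{6}(S)}^{3}\|\hnabla_{4}\mathcal{G}\|_{L^{2}(S)}$, and the two-dimensional Gagliardo–Nirenberg inequality on $S_{u,\ubar'}$ (a consequence of Proposition~\ref{sobolev}, using the metric compatibility of $\hnabla$ on the fibered inner product to handle the gauge-invariance) yields
\[
\|\mathcal{G}\|_{L^{6}(S)}^{3}\;\le\;C\,\|\mathcal{G}\|_{L^{2}(S)}\bigl(\|\mathcal{G}\|_{L^{2}(S)}+\|\hnabla\mathcal{G}\|_{L^{2}(S)}\bigr)^{2}.
\]
Combining these estimates puts four $L^{2}(S)$-type factors inside the $\ubar'$-integral, of which two $\bigl(\|\mathcal{G}\|_{L^{2}(S)}, \|\hnabla_{4}\mathcal{G}\|_{L^{2}(S)}\bigr)$ will be turned into the $L^{2}(H)$ norms $\|\mathcal{G}\|_{L^{2}(H)},\|\hnabla_{4}\mathcal{G}\|_{L^{2}(H)}$ directly by Cauchy–Schwarz in $\ubar'$. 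The remaining two copies of $\|\mathcal{G}\|_{L^{2}(S)}+\|\hnabla\mathcal{G}\|_{L^{2}(S)}$ must be split: one copy is integrated to produce $\|\mathcal{G}\|_{L^{2}(H)}+\|\hnabla\mathcal{G}\|_{L^{2}(H)}$, while the other is reduced via a preliminary transport estimate for $\sup_{\ubar'}\|\mathcal{G}\|_{L^{2}(S_{u,\ubar'})}^{2}$ obtained by applying Proposition~\ref{transport} to $|\mathcal{G}|^{2}$ and Cauchy–Schwarz in $\ubar'$, which expresses $\sup_{\ubar'}\|\mathcal{G}\|_{L^{2}(S)}^{2}$ in terms of initial data plus $\|\mathcal{G}\|_{L^{2}(H)}\|\hnabla_{4}\mathcal{G}\|_{L^{2}(H)}$. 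After using Young's inequality to rebalance, taking fourth roots, and absorbing the remaining $\|\mathcal{G}\|_{L^{4}(S_{u,0})}$ contribution, one obtains the stated estimate.

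\medskip
\noindent The main obstacle is precisely this last bookkeeping step: the pointwise bound produced by Gagliardo–Nirenberg has total degree four in $L^{2}(S)$-factors of the form $\|\mathcal{G}\|,\|\hnabla\mathcal{G}\|,\|\hnabla_{4}\mathcal{G}\|$ with exponents $(1,2,1)$, whereas the stated RHS (raised to the fourth power) carries the combination $\|\mathcal{G}\|_{L^{2}(H)}^{2}\|\hnabla_{4}\mathcal{G}\|_{L^{2}(H)}(\|\mathcal{G}\|_{L^{2}(H)}+\|\hnabla\mathcal{G}\|_{L^{2}(H)})$, which has exponent pattern $(2,1,1)$ (or $(3,0,1)$). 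Matching these two patterns cannot be done by a single three-factor Hölder because it requires trading one power of $\|\hnabla\mathcal{G}\|$ for one power of $\|\mathcal{G}\|$; this is why the auxiliary transport identity for $\|\mathcal{G}\|_{L^{2}(S_{u,\ubar'})}^{2}$ — which introduces exactly this trade — is needed to close the exponent count.

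\medskip
\noindent The second inequality is proved by the same scheme with $\hnabla_{4}$ replaced by $\hnabla_{3}$, $H_{u}^{(0,\ubar)}$ replaced by $\Hbar_{\ubar}^{(u_{\infty},u)}$, and with the initial sphere $S_{u,0}$ replaced by $S_{0,\ubar}$, since on $\Hbar_{\ubar}$ the $u$-variable runs from $0$ so transport along the $\hnabla_{3}$ direction naturally terminates at $S_{0,\ubar}$; the $\tr\chibar$ correction from the changing area is again absorbed by Grönwall using the bootstrap $L^{\infty}$ control on $\tr\chibar$.
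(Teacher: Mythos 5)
Your plan is a genuinely different (and much more ambitious) route than the paper's. The paper's own proof is the one-liner \emph{``Similar to the proof of the proposition \ref{sobolev}''}: introduce the gauge-invariant scalar $f=\bigl(|\mathcal{G}|^{2}_{\gamma,\delta}+\delta\bigr)^{1/2}$, observe that metric compatibility of $\hnabla$ gives the pointwise bounds $|\nabla f|\le|\hnabla\mathcal{G}|$ and $|\nabla_{4}f|\le|\hnabla_{4}\mathcal{G}|$, apply the known \emph{scalar} codimension-one trace inequality (cited from \cite{klainerman2012formation}) to $f$, and let $\delta\to0$. That reduction establishes the Proposition without re-deriving the classical scalar estimate at all. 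Your plan instead tries to reprove the scalar estimate from scratch, and — as you yourself flag — the exponent bookkeeping does not close.

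Let me make the gap concrete. After transport and Cauchy–Schwarz you correctly reach a bulk integral of the form
\begin{equation*}
\int_{0}^{\ubar}\,\|\mathcal{G}\|_{L^{2}(S)}\bigl(\|\mathcal{G}\|_{L^{2}(S)}+\|\hnabla\mathcal{G}\|_{L^{2}(S)}\bigr)^{2}\,\|\hnabla_{4}\mathcal{G}\|_{L^{2}(S)}\,d\ubar',
\end{equation*}
whose worst summand carries $\|\hnabla\mathcal{G}\|_{L^{2}(S)}^{2}$. Turning four $L^{2}(S)$ factors under a single $d\ubar'$ integral into the two $L^{2}(H)$ factors on the target right-hand side necessarily requires pulling out a $\sup_{\ubar'}$ of the extra copies. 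There is no transport estimate for $\sup_{\ubar'}\|\hnabla\mathcal{G}\|_{L^{2}(S)}$ available without commuting $\hnabla_{4}$ through $\hnabla$, which would bring in curvature and second angular derivatives of $\mathcal{G}$ that the statement does not assume. The auxiliary transport estimate you invoke,
\begin{equation*}
\sup_{\ubar'}\|\mathcal{G}\|^{2}_{L^{2}(S_{u,\ubar'})}\ \lesssim\ \|\mathcal{G}\|^{2}_{L^{2}(S_{u,0})}+\|\mathcal{G}\|_{L^{2}(H)}\|\hnabla_{4}\mathcal{G}\|_{L^{2}(H)},
\end{equation*}
trades two powers of $\|\mathcal{G}\|$ for one power each of $\|\mathcal{G}\|$ and $\|\hnabla_{4}\mathcal{G}\|$; it goes in the wrong direction and cannot convert a surplus $\|\hnabla\mathcal{G}\|_{L^{2}(S)}$ into a $\|\mathcal{G}\|_{L^{2}(H)}$. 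So the ``trade'' you describe is not what that identity produces, and the sketch does not close as written. Using instead the interpolation $\|u\|_{L^{6}(S)}^{3}\lesssim\|u\|_{L^{4}(S)}^{2}\bigl(\|u\|_{L^{2}(S)}+\|\nabla u\|_{L^{2}(S)}\bigr)$ plus Young absorption yields the weaker bound $\|\mathcal{G}\|_{L^{4}(S)}\lesssim\|\mathcal{G}\|_{L^{4}(S_{u,0})}+\|\hnabla_{4}\mathcal{G}\|_{L^{2}(H)}^{1/2}\bigl(\|\mathcal{G}\|_{L^{2}(H)}+\|\hnabla\mathcal{G}\|_{L^{2}(H)}\bigr)^{1/2}$, which has exponent pattern $(0,\tfrac12,\tfrac12)$ rather than the stated $(\tfrac12,\tfrac14,\tfrac14)$ and is not the Proposition.

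Two things you did get right: the observation that $S_{0,\ubar}$ in the first displayed inequality is a typographical slip for $S_{u,0}$ (consistent with the later displays \eqref{eq:codim1}–\eqref{eq:codim2}), and the essential observation that $f=(|\mathcal{G}|^{2}_{\gamma,\delta}+\delta)^{1/2}$ is a gauge-invariant scalar with $|\nabla f|\le|\hnabla\mathcal{G}|$, $|\nabla_{4}f|\le|\hnabla_{4}\mathcal{G}|$. The simplest way to repair your write-up is to stop at that observation and invoke the scalar trace inequality of \cite{klainerman2012formation} as a black box, exactly as the paper does, rather than attempting a from-scratch derivation whose exponent accounting requires a sharper two-dimensional trace argument than plain Gagliardo–Nirenberg plus Hölder will give.
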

\begin{proof}
Similar to the proof of the proposition \ref{sobolev}.    
\end{proof}

\noindent All of these inequalities hold true for sections of tangent bundles of $S_{u,\ubar}$ (i.e., the non-gauge field strengths) and in such case, $\hnabla$ and $\nabla$ coincide. These will be the main inequalities that we will use throughout.

\section{Estimate of the connection and curvature components in terms of the initial data and curvature energy}
\noindent We divide the connection coefficients into two classes depending on the transport equations they satisfy. Notice that each element of the set $( \tr\chibar,\chibarhat,\etabar,\omega, \tr\chi,\widehat{\chi})$ satisfies a $\nabla_{3}$ equation, where as $\eta$ and $\omegabar$ only satisfy a $\nabla_{4}$ equation. We denote each element of $( \tr\chibar,\chibarhat,\etabar,\omega, \tr\chi,\widehat{\chi})$ by $\varphi_{g}$ and an element of $(\eta,\omegabar)$ by $\varphi_{b}$ and element of the whole set $( \tr\chibar,\chibarhat,\etabar,\omega, \tr\chi,\widehat{\chi},\eta,\omegabar)$ by $\varphi$. By $C$, we will always mean a universal constant while a constant that depends on the other entities will be denoted so. We first define the curvature norms (both Weyl and Yang-Mills) that we intend to estimate 
   
\begin{lemma}
\label{1}
Assume that $||\nabla^{2}\varphi_{g}||_{L^{2}(H,  \Hbar)},||\nabla^{2}\varphi_{b}||_{L^{2}(H,  \Hbar)},\mathcal{W},\mathcal{W}(S),\mathcal{F},\mathcal{F}(S)<\infty$. Then the connection coefficients satisfy the following point-wise estimates
\begin{eqnarray}
\sup_{u,\ubar}||\varphi_{g}||_{L^{\infty}(S_{u,\ubar})}\leq C\mathcal{O}_{0},~\sup_{u,\ubar}||\varphi_{b}||_{L^{\infty}(S_{u,\ubar})}\leq C(\mathcal{O}_{0},\mathcal{W},\mathcal{W}(S),\mathcal{F},\mathcal{F}(S),||\nabla^{2}\varphi||_{L^{2}(H)}).
\end{eqnarray}
\end{lemma}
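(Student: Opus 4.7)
The plan is to establish the two estimates separately using the transport estimates of proposition \ref{transport} along the appropriate null direction and then upgrade $L^p(S)$ bounds to $L^\infty(S)$ bounds via the functional inequalities of proposition \ref{sobolev} and (\ref{eq:gagliardo2}). The bootstrap assumption (\ref{eq:bootstrapinitial}) together with the assumed finiteness of $||\nabla^2\varphi||_{L^2(H,\Hbar)}$ ensures the ellipticity of the dynamical metric on $S_{u,\ubar}$, so those inequalities apply with constants depending only on $\mathcal{O}_0$.

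For the good connections $\varphi_g \in \{\tr\chibar,\chibarhat,\etabar,\omega,\tr\chi,\widehat{\chi}\}$ I would use the $\nabla_3$ structure equations (\ref{eq:connection1}) onward. Each such equation reads schematically
\[
\nabla_3 \varphi_g \;=\; \Psi \;+\; \Phi\cdot\Phi \;+\; \varphi\cdot\varphi,
\]
with $\Psi$ a Weyl component and $\Phi$ a Yang--Mills curvature component entering through the stress--energy. The $\nabla_3$ version of proposition \ref{transport} with $p=4$ (and with $\nabla$ commuted through) bounds $||\varphi_g||_{L^4(S_{u,\ubar})}$ and $||\nabla\varphi_g||_{L^4(S_{u,\ubar})}$ by the initial data on $\Hbar_0$, namely $\mathcal{O}_0$, plus an integral in $u'\in[0,\epsilon]$ of the RHS. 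The bootstrap (\ref{eq:bootstrapinitial}) and the assumed finiteness of $\mathcal{W}(S),\mathcal{F}(S)$ turn that integral into $\epsilon$ times a bounded constant, which for $\epsilon$ chosen small enough is absorbed; Gagliardo--Nirenberg (\ref{eq:gagliardo2}) then upgrades the $L^4$ control to $L^\infty$, giving $\sup_{u,\ubar}||\varphi_g||_{L^\infty(S_{u,\ubar})}\leq C\mathcal{O}_0$ independently of $\Delta$.

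For the bad connections $\varphi_b\in\{\eta,\omegabar\}$ I would use the $\nabla_4$ equations such as (\ref{eq:eta}) and the one for $\omegabar$. Because integration now runs over $\ubar\in[0,J]$, no $\epsilon$--smallness is gained, and Gronwall is the only tool. The key structural observation is that after replacing the good factors on the RHS by their $C\mathcal{O}_0$ bound from the previous step, the remaining dependence on $\varphi_b$ is at worst linear, the remaining forcing being the curvature/stress terms $\beta,\rho,\mathfrak{T}$ that are controlled by $\mathcal{W}(S),\mathcal{F}(S)$. Transport in $\ubar$ with $p=4$ then yields, via Gronwall, an $L^4(S)$ bound on $\varphi_b$. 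Commuting $\nabla$ through $\nabla_4$ using lemma \ref{commutation} produces a transport equation for $\nabla\varphi_b$ whose RHS involves first derivatives of curvature plus $\varphi\cdot\nabla\varphi_b$ and lower--order algebraic couplings; the same Gronwall argument combined with the $L^2(H)$ bounds inside $\mathcal{W},\mathcal{F}$ and the assumed $||\nabla^2\varphi||_{L^2(H)}$ (entering precisely when one commutes across $\chi,\widehat{\chi}$ terms and when one later needs to pass from $L^2$ to $L^4$ on $S$) yields $||\nabla\varphi_b||_{L^4(S_{u,\ubar})}$ in terms of the allowed data. Inserting both into (\ref{eq:gagliardo2}) gives the desired $L^\infty(S)$ bound.

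The main obstacle will be the $\varphi_b$ estimate. The absence of $\epsilon$--smallness in the $\nabla_4$ direction makes the argument hinge entirely on the null structure of the Einstein--Yang--Mills equations, which forbids $\varphi_b\cdot\varphi_b$ nonlinearities in the $\nabla_4\varphi_b$ equations (mixed products $\varphi_g\cdot\varphi_b$ are benign once $\varphi_g$ has been bounded by $\mathcal{O}_0$). A secondary subtlety is the appearance, after commutation, of the Yang--Mills curvature source terms $\alpha^F,\rho^F,\sigma^F$ at one angular derivative higher than the Weyl sources --- this is precisely why the hypothesis includes $\mathcal{F}$ (with two derivatives of $F$) as well as the $L^2(H)$ norm of $\nabla^2\varphi$, and confirms the remark in the introduction that Yang--Mills sources cost one extra derivative relative to the vacuum analysis of \cite{luk2012local}. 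Once these ingredients are in place, a single simultaneous Gronwall closes the two estimates.
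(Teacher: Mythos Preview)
Your overall strategy—transport plus a functional inequality to reach $L^\infty$—is reasonable in spirit, but the particular route through $L^4(S)$ bounds on $\nabla\varphi$ does not close with the available norms. To run $L^4$ transport on $\nabla\varphi_b$ along $\nabla_4$ (and likewise $\nabla\varphi_g$ along $\nabla_3$), the commuted equation forces you to integrate $\|\nabla\Psi\|_{L^4(S)}$ in $\ubar$ for Weyl components $\Psi$ (e.g.\ $\nabla\beta$ from $\nabla_4\nabla\eta$, or $\nabla\bar\alpha$ from $\nabla_3\nabla\chibarhat$). Neither $\mathcal{W}(S)$ (which contains only $L^4(S)$ of $\Psi$ itself) nor $\mathcal{W}$ (which contains only $L^2(H,\Hbar)$ of $\nabla\Psi$) controls this: an $L^2(H)$ norm does not give pointwise-in-$\ubar$ control of $\|\nabla\Psi\|_{L^4(S)}$, and Sobolev on $S$ would demand a second angular derivative of curvature that you do not have. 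Your parenthetical remark about using $\|\nabla^2\varphi\|_{L^2(H)}$ to ``pass from $L^2$ to $L^4$ on $S$'' suffers the same defect, since $L^2(H)$ is integrated in $\ubar$ and yields no $L^2(S)$ bound at a fixed $\ubar$. Note also that your schematic $\nabla_3\varphi_g=\Psi+\Phi\cdot\Phi+\varphi\cdot\varphi$ omits the $\text{div}\,\eta$ term in the $\nabla_3\tr\chi$ and $\nabla_3\widehat\chi$ equations.

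The paper avoids this obstruction by never going through $\|\nabla\varphi\|_{L^4(S)}$. It applies the transport inequality directly at $p=\infty$, so that the curvature contribution enters as $\sup_{S}\int_0^{\ubar}|\Psi|\,d\ubar'$, and then controls the resulting \emph{trace norm} $\|\Psi\|_{tr(H)}=\sup_S\big(\int_0^{\ubar}|\Psi|^2\,d\ubar'\big)^{1/2}$ via the trace inequality (\ref{eq:trace2}). The key trick is to invert the structure equation, writing for instance $\beta=-\nabla_4\eta+\text{(lower order)}$, so that $\|\beta\|_{tr(H)}$ is bounded by $L^2(H)$ norms of $\nabla_4^2\eta$, $\nabla^2\eta$, and $\nabla_4\nabla\eta$: the first is computed by differentiating the $\nabla_4\eta$ equation once more (producing $\text{div}\,\alpha$, $\hnabla_4\alpha^F$, etc., all sitting in $\mathcal{W},\mathcal{F}$), and the second is precisely the assumed $\|\nabla^2\varphi\|_{L^2(H)}$. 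This is the actual role of that hypothesis---as an input to the trace inequality, not as a bridge from $L^2(S)$ to $L^4(S)$. The analogous mechanism with $\nabla_3$ and $tr(\Hbar)$ handles the $\bar\alpha$, $\rho$, and $\nabla\eta$ terms in the $\varphi_g$ equations, after which the $\epsilon^{1/2}$ gained from the short $u$-interval closes the bootstrap on $\varphi_g$.
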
 
\begin{proof}
The proof relies on the direct transport inequalities (proposition \ref{transport}) 
\begin{eqnarray}
||\varphi||_{L^{\infty}(S_{u,\ubar})}\leq C(\mathcal{O}_{0})\left(||\varphi||_{L^{\infty}(S_{u,\ubar^{'}})}+\sup_{S_{u,\ubar}}\int_{\ubar^{'}}^{\ubar}|\nabla_{4}\varphi|d\ubar^{''}\right)\\
||\varphi||_{L^{\infty}(S_{u,\ubar})}\leq C\left(||\varphi||_{L^{\infty}(S_{u^{'},\ubar})}+\sup_{S_{u,\ubar}}\int_{u^{'}}^{u}|\nabla_{3}\varphi|du^{''}\right)
\end{eqnarray}
and delicate trace estimates. First, assume the bootstrap assumption of the good connection coefficients $\varphi_{g}$ 
\begin{eqnarray}
\label{eq:boot1}
\sup_{u\in[0,\epsilon],\ubar\in[0,J]}||\varphi_{g}||_{L^{\infty}(S_{u,\ubar})}\leq 2C \mathcal{O}_{0}.
\end{eqnarray}
Using the transport equations and trace estimates, we will improve this bootstrap estimate thereby completing the proof. First consider the bad connection coefficients $\varphi_{g}:=(\eta,\omegabar)$\footnote{note that one may obtain an equation for $\nabla_{3}\eta$ by means of the relation $\nabla_{3}\eta=-\nabla_{3}\etabar+2\nabla_{3}\nabla\ln\Omega$}
\begin{eqnarray}
||\eta||_{L^{\infty}(S_{u,\ubar})}\leq C||\eta||_{L^{\infty}(S_{u,0})}+C\sup_{S_{u,\ubar}}\int_{0}^{\ubar}|-\chi\cdot(\eta-\etabar)-\beta-\frac{1}{2}(\alpha^{F}\rho^{F}-\alpha^{F}\sigma^{F})|d\ubar^{''} 
\end{eqnarray}
which under the boot-strap, Sobolev embedding (\ref{eq:sobolev4}), and the definition of the entity $\mathcal{F}(S)$ yields
\begin{eqnarray}
||\eta||_{L^{\infty}(S_{u,\ubar})}\leq C||\eta||_{L^{\infty}(S_{u,0})}+C(\mathcal{O}_{0})\int_{0}^{\ubar}||\eta||_{L^{\infty}(S)}d\ubar^{'}+C\mathcal{F}(S)+C\sup_{S_{u,\ubar}}(\int_{0}^{\ubar}|\beta|^{2}d\ubar^{'})^{\frac{1}{2}}.
\end{eqnarray}
Now recall that $\sup_{S_{u,\ubar}}(\int_{0}^{\ubar}|\beta|^{2}d\ubar^{'})^{\frac{1}{2}}$ is nothing but the trace norm $||\beta||_{tr(H)}$ defined in (\cite{klainerman2012formation}). We may estimate this term by means of the following trace inequality (\cite{klainerman2012formation})
\begin{eqnarray}
\label{eq:trace2}
||\nabla_{4}A||_{tr(H)}\leq \left(||\nabla^{2}_{4}A||_{L^2(H)}+||A||_{L^{2}(H)}\right)^{1/2}||\nabla^{2}A||^{1/2}_{L^{2}(H)}+||\nabla_{4}\nabla A||_{L^{2}(H)}+||\nabla A||_{L^{2}(H)}
\end{eqnarray}
for a section $A$ of $TS_{u,\ubar}$. Now in order to estimate $||\beta||_{tr(H)}$, we write the following 
\begin{eqnarray}
\beta=-\nabla_{4}\eta\underbrace{-\chi\cdot(\eta-\etabar)-\frac{1}{2}(\alpha^{F}\rho^{F}-\alpha^{F}\sigma^{F})}_{I}.
\end{eqnarray}
The term $I$ is harmless and its tr norm can be estimated by $C(\mathcal{O},\mathcal{F},\mathcal{F}(S))$. Therefore we focus on the $||\cdot||_{tr(H)}$ of $\nabla_{4}\eta$. The trace inequality (\ref{eq:trace2}) yields 
\begin{eqnarray}
||\nabla_{4}\eta||_{tr(H)}\leq \left(||\nabla^{2}_{4}\eta||_{L^2(H)}+||\nabla_{4}\eta||_{L^{2}(H)}\right)^{1/2}||\nabla^{2}\eta||^{1/2}_{L^{2}(H)}\nonumber+||\nabla_{4}\nabla \eta||_{L^{2}(H)}+||\nabla \eta||_{L^{2}(H)}.
\end{eqnarray}
Each term on the right-hand side may be estimated as follows. Using the equation of motion, $\nabla_{4}\eta$ is estimated by the curvature norm 
\begin{eqnarray}
||\nabla_{4}\eta||_{L^{2}(H)}\leq C(\mathcal{O}_{0},\mathcal{W},\mathcal{F}(S)).
\end{eqnarray}
In order to estimate $||\nabla^{2}_{4}\eta||_{L^{2}(H)}$, we act $\nabla_{4}$ on the transport equation for $\eta$
\begin{eqnarray}
\nabla^{2}_{4}\eta=-(\nabla_{4}\widehat{\chi}+\frac{1}{2}\nabla_{4} \tr\chi \gamma)\cdot(\eta-\etabar)-\chi\cdot(\nabla_{4}\eta-\underbrace{\nabla_{4}\etabar}_{II})-\nabla_{4}\beta-\frac{1}{2}(\underbrace{\nabla_{4}\alpha^{F}}_{III}(\rho^{F}-\sigma^{F})\\\nonumber+\alpha^{F}(\nabla_{4}\rho^{F}-\nabla_{4}\sigma^{F})).
\end{eqnarray}
Now notice the terms $II$ and $III$ do not have explicit expressions in terms of the transport equations. However, term $III$ can be estimated by $\mathcal{F}$ and $(\eta-\etabar)$ verifies $\nabla_{4}$ transport equations. Using the evolution equations for $\widehat{\chi}$, $ \tr\chi$, and $\beta$ we obtain 
\begin{eqnarray}
\nabla^{2}_{4}\eta_{a}=-\left((- \tr\chi \widehat{\chi}-2\omega\widehat{\chi}-\alpha+\frac{1}{2}(-\frac{1}{2}( \tr\chi)^{2}-|\widehat{\chi}|^{2}_{\gamma}-2\omega  \tr\chi-\mathfrak{T}_{44}) \gamma)\cdot(\eta-\etabar)\right)_{a}\\\nonumber-\underbrace{(\chi\cdot(\nabla\omega+\frac{1}{2}\chi\cdot(\etabar-\zeta)+\omega(\etabar+\zeta)-\frac{1}{2}\beta -\frac{1}{2}~^{*}\sigma^{F}\cdot\alpha^{F}\nonumber-\frac{1}{2}\rho^{F}\cdot\alpha^{F})_{a}}_{V}\\\nonumber 
-(-2 \tr\chi\beta_{a}+(div\alpha)_{a}-2\omega\beta_{a}+(\eta\cdot\alpha)_{a}-\frac{1}{2}(\underbrace{\mathcal{D}_{a}R_{44}-\mathcal{D}_{4}R_{4a}}_{IV}))-\frac{1}{2}(\underbrace{\nabla_{4}\alpha^{F}_{a}}_{VI}(\rho^{F}-\sigma^{F})\\\nonumber+
\alpha^{F}_{a}(-\hat{div} \alpha^{F}- \tr\chi\rho^{F}-(\eta-\etabar)\cdot\alpha^{F}-(-\hat{curl} \alpha^{F}- \tr\chi \sigma^{F}+(\eta-\etabar)\cdot ~^{*}\alpha^{F})).
\end{eqnarray}
Explicit computation using Einstein's equation with Yang-Mills source i.e., $R_{\mu\nu}=\mathfrak{T}_{\mu\nu}$ yields 
\begin{eqnarray}
IV=\langle\alpha^{F},\hnabla_{b}\alpha^{F}\rangle-\chi_{bc}\mathfrak{T}_{c4}+\eta_{b}\mathfrak{T}_{44}-2\omega \mathfrak{T}_{4b}-\hnabla_{4}(\alpha^{F}_{b}\cdot\rho^{F}-\alpha^{F}_{b}\cdot\sigma^{F})
\end{eqnarray}
and therefore one more use of the evolution equation confirms that $||IV||_{L^{2}(H)}$ is controlled by $\mathcal{F}$ and $\mathcal{O}_{0}$. Similarly $||V||_{L^{2}(H)}$ and $||VI||_{L^{2}(H)}$ are controlled by $\mathcal{W}$, $\mathcal{F}$, and $\mathcal{O}_{0}$ under the bootstrap assumption (\ref{eq:boot1}). Luckily we have a $\nabla_{4}$ equation for $\eta-\etabar$ (the term $V$ arises due to $\nabla_{4}$ acting on $\eta-\etabar$). Also note that $||\Phi^{F}||_{L^{\infty}(S)}:=||(\alpha^{F},\bar{\alpha}^{F},\rho^{F},\sigma^{F})||_{L^{\infty}(S))}$ can be controlled by $||\Phi^{F}||_{L^{4}(S)}$ and $||\hnabla\Phi^{F}||_{L^{4}(S)}$ or equivalently $\mathcal{F}(S)$. Collecting all the terms and applying Sobolev embedding, we have
\begin{eqnarray}
\left(||\nabla^{2}_{4}\eta||_{L^2(H)}+||\nabla_{4}\eta||_{L^{2}(H)}\right)^{1/2}||\nabla^{2}\eta||^{1/2}_{L^{2}(H)}\leq C(\mathcal{O}_{0},\mathcal{W},\mathcal{F},\mathcal{F}(S),||\nabla^{2}\varphi||_{L^{2}(H)}),
\end{eqnarray}
where $\varphi$ can be any element of the set $( \tr\chibar,\chibarhat,\etabar,\omega, \tr\chi,\widehat{\chi},\eta,\omegabar)$.
The remaining is to estimate $||\nabla_{4}\nabla\eta||_{L^{2}(H)}$. We can do this by means of the evolution equation for $\eta$. Commuting the evolution equation for $\eta$ with the horizontal derivative $\nabla$ yields (schematically)
\begin{eqnarray}
\nabla_{4}\nabla\eta=-\nabla\chi\cdot (\eta-\etabar)-\chi\cdot(\nabla\eta-\nabla\etabar)-\nabla\beta-\frac{1}{2}\hnabla\alpha^{F}(\rho^{F}-\sigma^{F})-\frac{1}{2}\alpha^{F}(\hnabla\rho^{F}-\hnabla\sigma^{F})\\\nonumber 
+(\beta+\alpha^{F}(\rho^{F}-\sigma^{F})+\alpha^{F})\eta+(\eta+\etabar)(-\chi(\eta-\etabar)-\beta-\frac{1}{2}\alpha^{F}(\rho^{F}-\sigma^{F}))-\chi\nabla\eta+\chi\etabar\eta
\end{eqnarray}
which utilizing Sobolev embedding on $S_{u,\ubar}$ (to handle the term $||\eta||_{L^{4}(S_{u,v})}$) can be estimated as 
\begin{eqnarray}
||\nabla_{4}\nabla\eta||_{L^{2}(H)}\leq C(\mathcal{O}_{0},\mathcal{W},\mathcal{F},||\nabla^{2}\varphi||_{L^{2}(H)}).
\end{eqnarray}
Collecting all the terms together, we obtain 
\begin{eqnarray}
||\eta||_{L^{\infty}(S_{u,\ubar})}\leq C||\eta||_{L^{\infty}(S_{u,0})}+C(\mathcal{O}_{0},\mathcal{W},\mathcal{F},\mathcal{F}(S))+C(\mathcal{O}_{0})\int_{0}^{\ubar}||\eta||_{L^{\infty}(S)}d\ubar^{'}
\end{eqnarray}
which upon using Gr\"onwall yields 
\begin{eqnarray}
\label{eq:es1}
||\eta||_{L^{\infty}}(S_{u,\ubar})\leq C(\mathcal{O}_{0},\mathcal{W},\mathcal{F},\mathcal{F}(S),||\nabla^{2}\eta||_{L^{2}(H)}).
\end{eqnarray}
Later, we shall see that $\mathcal{F}(S)$ is actually dominated by $\mathcal{F}$ and $||\nabla^{2}\eta||_{L^{2}(H)}$ is dominated by $\mathcal{W}$. Now we want to estimate $\sup_{u,\ubar}||\omegabar||_{L^{\infty}(S_{u,\ubar})}$. Once again, the use of transport inequality (proposition \ref{transport}) yields 
\begin{eqnarray}
||\omegabar||_{L^{\infty}(S_{u,\ubar})}\leq C\left(||\omegabar||_{L^{\infty}(S_{u,\ubar^{'}})}+\sup_{S_{u,\ubar}}\int_{0}^{\ubar}|2\omega\omegabar+\frac{3}{4}|\eta-\etabar|^{2}-\frac{1}{4}(\eta-\etabar)\cdot(\eta+\etabar)\right.\\\nonumber 
\left.-\frac{1}{8}|\eta+\etabar|^{2}\nonumber+\frac{1}{2}\rho+\frac{1}{4}\mathfrak{T}_{43}|d\ubar^{''}\right).
\end{eqnarray}
Once again, we observe that all the terms except $\rho$ enjoy estimates that are controlled by $\mathcal{O}_{0},\mathcal{W},\mathcal{F}$, and  $\mathcal{F}(S)$. Therefore, we focus on the term $\rho$. The previous inequality reduces to 
\begin{eqnarray}
||\omegabar||_{L^{\infty}(S_{u,\ubar})}\leq\\\nonumber  C||\omegabar||_{L^{\infty}(S_{u,\ubar^{'}})}+C(\mathcal{O}_{0})\int_{0}^{\ubar}||\omegabar||_{L^{\infty}(S_{u,\ubar})}d\ubar^{'}+\mathcal{C}(\mathcal{O}_{0},\mathcal{F}(S))+C\sup_{S_{u,\ubar}}(\int_{0}^{\ubar}\rho^{2}d\ubar^{'})^{\frac{1}{2}}.
\end{eqnarray}
Now notice $\sup_{S_{u,\ubar}}(\int_{0}^{\ubar}\rho^{2}d\ubar^{'})^{\frac{1}{2}}=||\rho||_{tr(H)}$. Now we follow the same procedure as before i.e., utilize the trace inequality (\ref{eq:trace2}). Since $\omega$ does not satisfy a $\nabla_{4}$ equation, we want to get rid of $\omega$. Write $\rho$ as follows after the re-scaling $\omegabar=\Omega\Tilde{\omegabar}$
\begin{eqnarray}
\rho=\Omega\nabla_{4}\tilde{\omegabar}-\underbrace{\frac{3}{4}|\eta+\etabar|^{2}+\frac{1}{4}(\eta-\etabar)\cdot(\eta+\etabar)+\frac{1}{8}|\eta+\etabar|^{2}\nonumber-\frac{1}{4}((\rho^{F})^{2}+(\sigma^{F})^{2})}_{V}.
\end{eqnarray}
Here the term $V$ is harmless following the bootstrap (\ref{eq:boot1}) and the previous estimate (\ref{eq:es1}). Therefore, we only focus on the term $\Omega\nabla_{4}\tilde{\omegabar}$ and noting $||\Omega||_{L^{\infty}(S_{u,\ubar})}\leq C(\mathcal{O}_{0})$ under bootstrap, estimating $||\nabla_{4}\tilde{\omegabar}||_{tr(H)}$ suffices. Now we use the trace inequality to estimate $||\nabla_{4}\tilde{\omegabar}||_{tr(H)}$
\begin{eqnarray}
||\nabla_{4}\tilde{\omegabar}||_{tr(H)}\leq \left(||\nabla^{2}_{4}\tilde{\omegabar}||_{L^2(H)}+||\nabla_{4}\tilde{\omegabar}||_{L^{2}(H)}\right)^{1/2}||\nabla^{2}\tilde{\omegabar}||^{1/2}_{L^{2}(H)}\nonumber+||\nabla_{4}\nabla \tilde{\omegabar}||_{L^{2}(H)}+||\nabla \tilde{\omegabar}||_{L^{2}(H)}.
\end{eqnarray}
Following the same procedure as before and controlling $||\nabla_{4}\etabar||_{L^{2}(H)}$ by $||\nabla^{2}\etabar||_{L^{2}(H)}$ and $\mathcal{W}$. Collecting all the terms together, one obtains 
\begin{eqnarray}
||\omegabar||_{L^{\infty}(S_{u,\ubar})}\leq C||\omegabar||_{L^{\infty}(S_{u,0})}+\mathcal{C}(\mathcal{O}_{0},\mathcal{W},\mathcal{F},\mathcal{F}(S),||\nabla^{2}\varphi||_{L^{2}(H)})\nonumber+C(\mathcal{O}_{0})\int_{0}^{\ubar}||\omegabar||_{L^{\infty}(S_{u,\ubar})}d\ubar^{'}
\end{eqnarray}
which upon utilizing Gr\"onwall's inequality yields 
\begin{eqnarray}
||\omegabar||_{L^{\infty}(S_{u,\ubar})}\leq \mathcal{C}(\mathcal{O}_{0},\mathcal{W},\mathcal{F},\mathcal{F}(S),||\nabla^{2}\varphi||_{L^{2}(H)}). 
\end{eqnarray}
Now we estimate the good connection coefficients $\varphi_{g}$ i.e., the ones satisfying $\nabla_{3}$ equations. We use transport inequality (proposition \ref{transport}). First consider $ \tr\chibar$ and use that fact that $||\hnabla\bar{\alpha}^{F}||_{L^{4}(S)}$ is controlled by $\mathcal{F}$ together with Sobolev embedding (\ref{eq:sobolev4}) 
\begin{eqnarray}
|| \tr\chibar||_{L^{\infty}(S_{u,\ubar})}\leq\\\nonumber  \frac{C}{2}\left(|| \tr\chibar||_{L^{\infty}(S_{0,\ubar})}+\int_{0}^{u}||-\frac{1}{2}( \tr\chibar)^{2}-|\chibarhat|^{2}_{\gamma}-2\omegabar \tr\chibar-\mathfrak{T}_{33}||_{L^{\infty}(S_{u,\ubar})}du^{''}\right)\\\nonumber 
\leq \frac{C\mathcal{O}_{0}}{2}+\epsilon C(\mathcal{W},\mathcal{F},\mathcal{F}(S))\leq C\mathcal{O}_{0}
\end{eqnarray}
if we choose $\epsilon>0$ sufficiently small. Now consider $\chibarhat$ 
\begin{eqnarray}
||\chibarhat||_{L^{\infty}(S_{u,\ubar})}\leq \frac{C}{2}\left(||\chibarhat||_{L^{\infty}(S_{0,\ubar})}+\sup_{S_{u,\ubar}}\int_{0}^{u}|\nabla_{3}\chibarhat|du^{'}\right)\\\nonumber 
\leq \frac{C\mathcal{O}_{0}}{2}+\frac{C}{2}\int_{0}^{u}||- \tr\chibar\chibarhat-2\omegabar\chibarhat||_{L^{\infty}(S)}du^{'}+\frac{C}{2}\epsilon^{\frac{1}{2}}\sup_{S_{u,\ubar}}(\int_{0}^{\ubar}\bar{\alpha}^{2}d\ubar^{'})^{\frac{1}{2}}.
\end{eqnarray}
Now we need to estimate $\sup_{S_{u,\ubar}}(\int_{0}^{\ubar}\bar{\alpha}^{2}d\ubar^{'})^{\frac{1}{2}}=||\bar{\alpha}||_{tr(  \Hbar)}$. Proceeding exactly the similar way as before i.e., write $\alpha$ as follows 
\begin{eqnarray}
\bar{\alpha}=-\Omega\nabla_{3}\tilde{\chibarhat}- \tr\chibar\chibarhat
\end{eqnarray}
where $\tilde{\chibarhat}=\frac{1}{\Omega}\chibarhat$ since we do not have a $\nabla_{3}$ equation for $\omegabar$ and $||\Omega||_{L^{\infty}(S)}\leq C(\mathcal{O}_{0})$ under the bootstrap assumption \ref{eq:bootstrapinitial}. Proceeding exactly the same way and collecting all the terms we obtain
\begin{eqnarray}
||\chibarhat||_{L^{\infty}(S_{u,\ubar})}&\leq& \frac{C\mathcal{O}_{0}}{2}\nonumber+\epsilon C(\mathcal{O}_{0},\mathcal{W},\mathcal{F},\mathcal{F}(S),||\nabla^{2}(\etabar,\eta)||_{L^{2}(H)})\\\nonumber &&+\epsilon^{\frac{1}{2}} C(\mathcal{O}_{0},\mathcal{W},\mathcal{F},\mathcal{F}(S),||\nabla^{2}(\etabar,\eta)||_{L^{2}(H)})\\\nonumber 
&\leq& C\mathcal{O}_{0}
\end{eqnarray}
after choosing sufficiently small $\epsilon$. Since in the trace estimate we need $||\nabla^{2}_{3}\tilde{\chibarhat}||_{L^{2}(  \Hbar)}$, we will need $||\nabla_{3}\bar{\alpha}||_{L^{2}(  \Hbar)}$ and therefore we include $||\nabla_{3}\bar{\alpha}||_{L^{2}(  \Hbar)}$ in the definition of $\mathcal{W}$ from the beginning. Exact similar procedure yields 
\begin{eqnarray}
||\etabar||_{L^{\infty}(S)}\leq C\mathcal{O}_{0}.
\end{eqnarray}
Now for $ \tr\chi$ and $\widehat{\chi}$, we will encounter terms that can be estimated by $||\nabla^{2}(\eta,\etabar)||_{L^{2}(H,  \Hbar)}, \mathcal{W},\mathcal{F}$, and $\mathcal{F}(S)$. The transport inequality (proposition \ref{transport}) yields
\begin{eqnarray}
|| \tr\chi||_{L^{\infty}(S_{u,\ubar})}\leq C|| \tr\chi||_{L^{\infty}(S_{0,\ubar})}+C\int_{0}^{u}||-\frac{1}{2} \tr\chibar \tr\chi+2\omegabar \tr\chi\nonumber+2|\eta|^{2}-\widehat{\chi}\cdot \chibarhat||_{L^{\infty}(S)}\\\nonumber +C\epsilon^{\frac{1}{2}}\sup_{S_{u,\ubar}}(\int_{0}^{u}(\nabla\eta)^{2}du^{'})^{\frac{1}{2}}
+C\epsilon^{\frac{1}{2}}\sup_{S_{u,\ubar}}(\int_{0}^{u}\rho^{2}du^{'})^{\frac{1}{2}}.
\end{eqnarray}
Therefore notice that we need to estimate two terms $\sup_{S_{u,\ubar}}(\int_{0}^{u}(\nabla\eta)^{2}du^{'})^{\frac{1}{2}}=||\nabla\eta||_{tr(  \Hbar)}$ and $\sup_{S_{u,\ubar}}(\int_{0}^{u}\rho^{2}du^{'})^{\frac{1}{2}}:=||\rho||_{tr(  \Hbar)}$. We have the following estimate 
\begin{eqnarray}
|| \tr\chi||_{L^{\infty}(S_{u,\ubar})}\leq \frac{C\mathcal{O}_{0}}{2}+\epsilon C(\mathcal{O}_{0},\mathcal{W},\mathcal{F})+C\epsilon^{\frac{1}{2}}||\nabla\eta||_{tr(  \Hbar)}+C\epsilon^{\frac{1}{2}}||\rho||_{tr(  \Hbar)}\leq C\mathcal{O}_{0}.
\end{eqnarray}
Similarly, for $\widehat{\chi}$, we obtain through the use of the transport inequality (proposition \ref{transport})  
\begin{eqnarray}
||\widehat{\chi}||_{L^{\infty}(S_{u,\ubar})}\leq \frac{C\mathcal{O}_{0}}{2}+\epsilon C(\mathcal{O}_{0},\mathcal{W},\mathcal{F})+C\epsilon^{\frac{1}{2}}||\nabla\eta||_{tr(  \Hbar)}\leq C\mathcal{O}_{0},
\end{eqnarray}
where we can estimate $||\rho||_{tr(  \Hbar)}$ and $||\nabla\eta||_{tr(  \Hbar)}$ in terms of $||\nabla^{2}\varphi||_{L^{2}(  \Hbar)}$ and $(\mathcal{W}(S),\mathcal{F}(S),\mathcal{W},\mathcal{F})$ in a similar way as the other entities such as $\omegabar$ and $\eta$ and therefore we omit the calculations. The most important point to note here is the presence of a smallness factor $\epsilon^{\frac{1}{2}}$ with $tr(  \Hbar)$ norm of $\rho$ and $\nabla\eta$. This concludes the $L^{\infty}$ estimates for the connection coefficients.
\end{proof}
\begin{lemma}
\label{2}
Assume $\mathcal{W},\mathcal{F},\mathcal{F}(S),||\nabla^{2}\varphi||_{L^{2}(H,  \Hbar)}<\infty$. Then $||\varphi_{g}||_{L^{4}(S)}\leq C\mathcal{O}_{0}$ and $||\varphi_{b}||_{L^{4}(S)}\leq C(\mathcal{O}_{0},\mathcal{W},\mathcal{F},\mathcal{F}(S),||\nabla^{2}\varphi||_{L^{2}(H,  \Hbar)})$, where $\varphi_{g}:=(\widehat{\chi},\chibarhat,  \tr\chi,  \tr\chibar,\etabar,\omega)$ and $\varphi_{b}:=(\eta,\omegabar)$.
\end{lemma}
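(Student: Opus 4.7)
The plan is to apply the transport inequality of Proposition \ref{transport} with $p=4$ and to estimate the $L^4(S)$ norms of the resulting right-hand sides directly, without ever invoking the trace-norm arguments used in Lemma \ref{1}. Crucially, in the $L^4$ setting, Weyl curvature terms can be treated using Gagliardo-Nirenberg together with $\mathcal{W}$, which avoids the appearance of $\mathcal{W}(S)$ in the final estimate. Throughout, the $L^\infty$ bounds of Lemma \ref{1} for the connection coefficients will be used to handle quadratic connection products, while the Yang-Mills components of $\mathfrak{T}$ are treated via $\mathcal{F}(S)$.

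For the good coefficients $\varphi_g \in \{\widehat{\chi},\chibarhat,\tr\chi,\tr\chibar,\etabar,\omega\}$, I integrate the $\nabla_3$ structure equations from $u=0$, obtaining
$$\|\varphi_g\|_{L^4(S_{u,\ubar})} \leq C\mathcal{O}_0 + C\int_0^u \|\nabla_3\varphi_g\|_{L^4(S_{u',\ubar})}\,du'.$$
Each term in $\nabla_3\varphi_g$ is either quadratic in $\varphi$ (bounded in $L^\infty$ by $C\mathcal{O}_0$ via Lemma \ref{1}), the incoming Weyl curvature $\alphabar$ whose $L^4(S)$ norm is treated by Gagliardo-Nirenberg (\ref{eq:gagliardo1}) and $\mathcal{W}$, or Yang-Mills components of $\mathfrak{T}$ whose $L^4(S)$ norms are directly controlled by $\mathcal{F}(S)$. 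Since the $u'$-integral is over $[0,\epsilon]$, a Cauchy-Schwarz step produces an overall smallness factor that absorbs the nonlinear contributions, leaving $\|\varphi_g\|_{L^4(S_{u,\ubar})} \leq C\mathcal{O}_0$ once $\epsilon$ is chosen small depending only on the initial data.

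For the bad coefficients $\varphi_b \in \{\eta,\omegabar\}$, the analogous transport inequality integrates along $\ubar$, and the right-hand sides of the $\nabla_4$ equations contain the Weyl components $\beta$ and $\rho$. The main obstacle is to bound $\int_0^{\ubar}\|\beta\|_{L^4(S)}\,d\ubar'$ and $\int_0^{\ubar}\|\rho\|_{L^4(S)}\,d\ubar'$ without recourse to $\mathcal{W}(S)$. I handle this by applying the Gagliardo-Nirenberg inequality (\ref{eq:gagliardo1}) on each sphere,
$$\|\beta\|_{L^4(S)} \leq C(\mathcal{O}_0)\left(\|\beta\|_{L^2(S)}^{1/2}\|\nabla\beta\|_{L^2(S)}^{1/2} + \|\beta\|_{L^2(S)}\right),$$
and then performing Cauchy-Schwarz in $\ubar'$ to express the $\ubar'$-integral as a quantity controlled by $\|\beta\|_{L^2(\Hbar)}^{1/2}\|\nabla\beta\|_{L^2(\Hbar)}^{1/2}+\|\beta\|_{L^2(\Hbar)}$, which is bounded by $\mathcal{W}$; the treatment of $\rho$ is identical. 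The remaining terms on the right-hand side are products of connection coefficients (bounded in $L^\infty$ by Lemma \ref{1}, hence contributing factors dependent only on $\mathcal{O}_0, \mathcal{W}, \mathcal{F}, \mathcal{F}(S), \|\nabla^2\varphi\|_{L^2(H,\Hbar)}$) and Yang-Mills source terms controlled by $\mathcal{F}(S)$. A Gr\"onwall argument in $\ubar$ then yields the claimed bound on $\|\varphi_b\|_{L^4(S)}$ with no $\mathcal{W}(S)$ dependence, which is the essential improvement over the $L^\infty$ estimate of Lemma \ref{1}.
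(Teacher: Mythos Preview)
Your approach via Gagliardo--Nirenberg is a legitimate alternative to the paper's. The paper bounds $\|\beta\|_{L^4(S)}$ and $\|\rho\|_{L^4(S)}$ directly by the spherewise norm $\mathcal{W}(S)$, so its proof of this lemma actually carries $\mathcal{W}(S)$ dependence in the bound for $\varphi_b$ despite the stated form; your route---interpolating on each sphere and then integrating along the null direction to land on $L^2$ norms over the hypersurface---avoids $\mathcal{W}(S)$ and matches the statement more literally. One notational slip: integrating in $\ubar'$ at fixed $u$ places you on $H_u$, not $\Hbar$, so the resulting bound should read $\|\beta\|_{L^2(H)}^{1/2}\|\nabla\beta\|_{L^2(H)}^{1/2}+\|\beta\|_{L^2(H)}$; since $\beta,\rho$ sit in $\mathcal{W}$ on both hypersurfaces the conclusion is unaffected.

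There is, however, a genuine omission in your treatment of $\varphi_g$. The $\nabla_3$ equations for $\tr\chi$ and $\widehat{\chi}$ contain $\mathrm{div}\,\eta$ and $\nabla\widehat{\otimes}\eta$, which are \emph{first derivatives} of a connection coefficient and fall outside your trichotomy ``quadratic in $\varphi$ / incoming Weyl curvature $\alphabar$ / Yang--Mills source''. To bound $\|\nabla\eta\|_{L^4(S)}$ one needs Sobolev embedding on $S$ followed by Cauchy--Schwarz in $u'$, producing $\epsilon^{1/2}\|\nabla^2\eta\|_{L^2(\Hbar)}$; this is precisely where the hypothesis $\|\nabla^2\varphi\|_{L^2(H,\Hbar)}<\infty$ enters for the \emph{good} coefficients, and the paper invokes it at exactly this point. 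Two smaller inaccuracies: the Weyl contributions in $\nabla_3\varphi_g$ are not limited to $\alphabar$ ($\betabar$ appears in the $\etabar$ equation, $\rho$ in the $\omega$ and $\tr\chi$ equations), though your Gagliardo--Nirenberg argument applies equally well to them; and the quadratic connection products involve the bad coefficients $\omegabar,\eta$, whose $L^\infty$ bounds from Lemma~\ref{1} are $C(\mathcal{O}_0,\mathcal{W},\mathcal{F},\ldots)$ rather than $C\mathcal{O}_0$. This last point is harmless since the bound is multiplied by $\epsilon$, but it forces $\epsilon$ to be chosen small depending on the curvature norms as well, not only on the initial data.
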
 
\begin{proof}
 We prove it under the assumption 
\begin{eqnarray}
\label{eq:boot2}
||\varphi_{g}||_{L^{4}(S)}\leq C\mathcal{O}_{0}.
\end{eqnarray}
and later try to improve it. We start with $\eta$ since it satisfies a $\nabla_{4}$ equation. A direct application of the transport inequality (proposition \ref{transport}) yields 
\begin{eqnarray}
 ||\eta||_{L^{4}(S_{u,\ubar})}\leq C\left(||\eta||_{L^{4}(S_{u,0})}+\int_{0}^{\ubar}||\nabla_{4}\eta||_{L^{4}(S)}d\ubar^{'}\right)\\\nonumber 
 =C\left(||\eta||_{L^{4}(S_{u,0})}+\int_{0}^{\ubar}||-\chi\cdot(\eta-\etabar)-\beta-\frac{1}{2}\alpha^{F}\cdot(\rho^{F}-\sigma^{F})||_{L^{4}(S)}d\ubar^{'}\right)\\\nonumber 
 \leq C(\mathcal{O}_{0},\mathcal{W}(S),\mathcal{F}(S))+C(\mathcal{O}_{0})\int_{0}^{u}||\eta||_{L^{4}(S)}d\ubar^{'}.
 \end{eqnarray}
 An application of Gr\"onwall's inequality leads to 
 the desired estimate for $\eta$
\begin{eqnarray}
||\eta||_{L^{4}(S)}\leq C(\mathcal{O}_{0},\mathcal{W}(S),\mathcal{F}(S))e^{C(\mathcal{O}_{0})\ubar}\leq  C(\mathcal{O}_{0},\mathcal{W}(S),\mathcal{F}(S))
\end{eqnarray}
since $\ubar\leq J$. Now we repeat the same procedure for $\omegabar$ which also satisfies a $\nabla_{4}$ transport equation 
\begin{eqnarray}
||\omegabar||_{L^{4}(S_{u,\ubar})}\leq C\left(||\omegabar||_{L^{4}(S_{u,0})}\nonumber+\int_{0}^{\ubar}||\nabla_{4}\omegabar||_{L^{4}(S)}d\ubar^{'}\right)\\\nonumber 
=C\left(||\omegabar||_{L^{4}(S_{u,0})}+\int_{0}^{\ubar}||2\omega\omegabar+\frac{3}{4}|\eta-\etabar|^{2}-\frac{1}{4}(\eta-\etabar)\cdot(\eta+\etabar)-\frac{1}{8}|\eta+\etabar|^{2}\right.\\\nonumber
\left.+\frac{1}{2}\rho+\rho^{F}\cdot\rho^{F}+\sigma^{F}\cdot\sigma^{F}||_{L^{4}(S)}d\ubar^{'}\right)\\\nonumber 
\leq C(\mathcal{O}_{0},\mathcal{W},\mathcal{F},\mathcal{W}(S),\mathcal{F}(S),||\nabla^{2}\varphi||_{L^{2}(H)})+C(\mathcal{O}_{0})\int_{0}^{\ubar}||\omegabar||_{L^{4}(S)}d\ubar^{'}\\\nonumber 
\leq C(\mathcal{O}_{0},\mathcal{W},\mathcal{F},\mathcal{W}(S),\mathcal{F}(S),||\nabla^{2}\varphi||_{L^{2}(H)}) e^{C(\mathcal{O}_{0}J)}\\\nonumber 
\leq  C(\mathcal{O}_{0},\mathcal{W},\mathcal{F},\mathcal{W}(S),\mathcal{F}(S),||\nabla^{2}\varphi||_{L^{2}(H)}),
\end{eqnarray}
where the last step follows from an application of the Gr\"onwall's inequality and $\ubar\leq J$.
Notice an extremely important fact: $||\eta||_{L^{4}(S)}$only depends on $\mathcal{O}_{0},\mathcal{W}(S),$ and $\mathcal{F}(S)$, whereas $||\omegabar||_{L^{4}(S_{u,\ubar})}$ depends on $\mathcal{O}_{0},\mathcal{W},\mathcal{F},\mathcal{W}(S),\mathcal{F}(S),||\nabla^{2}\varphi||_{L^{2}(H)}$. This will be vital when we show $||\nabla^{2}\varphi||_{L^{2}(H/  \Hbar)}<\infty$ given $\mathcal{W},\mathcal{W}(S),\mathcal{F},\mathcal{F}(S)<\infty$. Now we move on to estimating $\varphi_{g}$ i.e., the connection coefficients that satisfy $\nabla_{3}$ equation. We start with $\etabar$. Using the assumption (\ref{eq:boot2}) together with the previous estimate (lemma 6.2) and estimates for $||\eta||_{L^{4}(S)}$, we obtain through the transport inequality (proposition \ref{transport}) 
\begin{eqnarray}
||\etabar||_{L^{4}(S)}\leq \frac{C}{2}\left(||\etabar||_{L^{4}(S_{0,\ubar})}+\int_{0}^{u}||\nabla_{3}\etabar||_{L^{4}(S)}du^{'}\right)\\\nonumber 
\leq \frac{C\mathcal{O}_{0}}{2}+\epsilon C(\mathcal{O}_{0},\mathcal{W}(S),\mathcal{F}(S))\leq C\mathcal{O}_{0}
\end{eqnarray}
for sufficiently small $\epsilon$. Estimate for $||\tr \chi||$ (and similar others involving $\nabla\eta/\etabar$) follows in a similar way but now we need to use $||\nabla\varphi||_{L^{4}(S)}\lesssim ||\nabla^{2}\varphi||_{L^{2}(S)}+||\nabla\varphi||_{L^{2}(S)}+||\varphi||_{L^{2}(S)}$ Sobolev embedding to close the estimate. transport inequality (proposition \ref{transport}) yields 
\begin{eqnarray}
||\tr\chi||_{L^{4}(S)}\leq \frac{C}{2}\nonumber\left(|| \tr\chi||_{L^{4}(S_{0,\ubar})}+\int_{0}^{u}||\nabla_{3} \tr\chi||_{L^{4}(S)}du^{'}\right)\\\nonumber 
=\frac{C}{2}\left(|| \tr\chi||_{L^{4}(S_{0,\ubar})}+\int_{0}^{u}||-\frac{1}{2} \tr\chibar \tr\chi+2\omegabar \tr\chi+2div\eta+2|\eta|^{2}+2\rho-\widehat{\chi}\cdot \chibarhat||_{L^{4}(S)}du^{'}\right)\\\nonumber 
\leq \frac{C\mathcal{O}_{0}}{2}+\epsilon C(\mathcal{O}_{0},\mathcal{W},\mathcal{F},\mathcal{W}(S),\mathcal{F}(S),||\nabla^{2}\varphi||_{L^{2}(H)})+\epsilon^{\frac{1}{2}}C(||\nabla^{2}\eta||_{L^{2}(  \Hbar)})\leq C\mathcal{O}_{0}
\end{eqnarray}
for sufficiently small $\epsilon$. Through a similar argument, we obtain the improved estimates for the remaining connection coefficients that is we prove
\begin{eqnarray}
||\varphi_{g}||_{L^{4}(S)}\leq C\mathcal{O}_{0}.
\end{eqnarray}
This concludes the proof of the lemma.
\end{proof}
These estimates will be extremely crucial in proving the following lemma as well as estimating $||\nabla^{2}\varphi||_{L^{2}(H/  \Hbar)}$ in terms of $\mathcal{W},\mathcal{F},\mathcal{W}(S),$ and $\mathcal{F}(S)$ in near future (lemma \ref{lemma5}). Notice another important point: since $||\eta||_{L^{4}(S)}\leq C(\mathcal{O}_{0},\mathcal{W},\mathcal{F},\mathcal{W}(S),\mathcal{F}(S))$, we have $||\omegabar||_{L^{2}(S)}\leq C(\mathcal{O}_{0},\mathcal{W},\mathcal{F},\mathcal{W}(S),\mathcal{F}(S))$ i.e., $L^{2}(S)$ of $\omegabar$ does not depend on $||\nabla^{2}\varphi||_{L^{2}(H)}$.

\begin{lemma}
\label{3}
\textit{Assume $\mathcal{W},\mathcal{F},\mathcal{F}(S),||\nabla^{2}\varphi||_{L^{2}(H,  \Hbar)}<\infty$. Then $||\nabla\varphi_{g}||_{L^{2}(S)}\leq C\mathcal{O}_{0}$ and $||\nabla\varphi_{b}||_{L^{2}(S)}\leq C(\mathcal{O}_{0},\mathcal{W},\mathcal{F},\mathcal{F}(S),||\nabla^{2}\varphi||_{L^{2}(H,  \Hbar)})$, where $\varphi_{g}:=(\widehat{\chi},\chibarhat,  \tr\chi,  \tr\chibar,\etabar,\omega)$ and $\varphi_{b}:=(\eta,\omegabar)$}.
\end{lemma}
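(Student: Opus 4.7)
The plan is to mirror the structure of Lemmas \ref{1} and \ref{2}: for each connection coefficient $\varphi$ I commute its null transport equation with the angular derivative $\nabla$, use the commutation identities recalled in the previous section to express $\nabla_4\nabla\varphi$ (or $\nabla_3\nabla\varphi$) schematically as a sum of curvature terms, Yang--Mills curvature terms, products of connection coefficients, and terms proportional to $\nabla\varphi$ itself, and then invoke the $L^2(S)$ transport inequality of Proposition \ref{transport}. Throughout I would work under the bootstrap
\begin{equation}
\sup_{u,\ubar}\|\nabla\varphi_g\|_{L^2(S_{u,\ubar})}\le 2C\mathcal{O}_0,
\end{equation}
which together with Lemmas \ref{1} and \ref{2} controls all lower-order factors by means of the Sobolev embeddings of Proposition \ref{sobolev}, and close the bootstrap by smallness in $\epsilon$ for $\varphi_g$ and by Gr\"onwall in $\ubar$ for $\varphi_b$.

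For $\varphi_b=(\eta,\omegab)$, which obey $\nabla_4$ equations, commuting with $\nabla$ using $[\nabla_4,\nabla]\varphi\sim(\eta+\etab)\nabla_4\varphi-\chi\nabla\varphi+\text{curvature}$ yields, for instance,
\begin{equation}
\nabla_4\nabla\eta\sim \nabla\chi\cdot(\eta-\etab)+\chi\cdot(\nabla\eta-\nabla\etab)+\nabla\beta+\hnabla\alpha^F\cdot(\rho^F,\sigma^F)+\alpha^F\cdot\hnabla(\rho^F,\sigma^F)+\text{l.o.t.}
\end{equation}
Proposition \ref{transport}, combined with Cauchy--Schwarz in $\ubar$ applied to the curvature term via $\int_0^{\ubar}\|\nabla\beta\|_{L^2(S)}\,d\ubar'\le \sqrt{J}\,\|\nabla\beta\|_{L^2(H_u)}\le \sqrt{J}\,\mathcal{W}$, together with the $L^\infty(S)$ and $L^4(S)$ control of $\varphi$ provided by Lemmas \ref{1}--\ref{2} and the definitions of $\mathcal{F}$ and $\mathcal{F}(S)$ for the Yang--Mills factors, gives
\begin{equation}
\|\nabla\eta\|_{L^2(S_{u,\ubar})}\le C(\mathcal{O}_0,\mathcal{W},\mathcal{F},\mathcal{F}(S))+C(\mathcal{O}_0)\int_0^{\ubar}\|\nabla\eta\|_{L^2(S_{u,\ubar'})}\,d\ubar'.
\end{equation}
A Gr\"onwall application, using $\ubar\le J$, yields the claimed bound for $\|\nabla\eta\|_{L^2(S)}$, and the analogous argument for the $\nabla_4\omegab$ equation---whose right-hand side produces $\nabla\rho$ (controlled by $\mathcal{W}$) and the already-estimated $\nabla\eta,\nabla\etab$---gives the same bound for $\|\nabla\omegab\|_{L^2(S)}$.

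For the good coefficients $\varphi_g=(\tr\chi,\widehat{\chi},\tr\chibar,\chibarhat,\etab,\omega)$, which satisfy $\nabla_3$ equations, the integration in Proposition \ref{transport} runs over $u\in[0,\epsilon]$ and Cauchy--Schwarz in $u$ supplies a factor $\epsilon^{1/2}$. Commuting $\nabla$ with the $\nabla_3$ equations produces, depending on the coefficient, terms of the form $\nabla\bar\alpha$, $\nabla\rho$, $\nabla\sigma$, $\hnabla(\rho^F,\sigma^F,\bar\alpha^F)$, and crucially (for $\tr\chi$ and $\widehat\chi$, whose $\nabla_3$ equations already contain $\text{div}\,\eta$ and $\nabla\hot\eta$) the terms $\nabla^2\eta$ and $\nabla^2\etab$. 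Each such contribution, once integrated along $u$, is dominated by $\epsilon^{1/2}$ times a norm controlled respectively by $\mathcal{W}$, by $\mathcal{F}$, or by the hypothesis $\|\nabla^2\varphi\|_{L^2(H,\Hbar)}<\infty$. Consequently, for $\epsilon$ small enough,
\begin{equation}
\|\nabla\varphi_g\|_{L^2(S_{u,\ubar})}\le \tfrac{C\mathcal{O}_0}{2}+\epsilon^{1/2}\,C(\mathcal{O}_0,\mathcal{W},\mathcal{F},\mathcal{F}(S),\|\nabla^2\varphi\|_{L^2(H,\Hbar)})\le C\mathcal{O}_0,
\end{equation}
which improves and closes the bootstrap.

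The principal obstacle is the top-order bookkeeping in the good sector: the $\nabla_3$ equations for $\tr\chi$ and $\widehat\chi$ already contain one angular derivative of $\eta$, so their commuted versions feature $\nabla^2\eta$, a regularity level not available pointwise in $(u,\ubar)$. The estimate therefore hinges on the fact that these dangerous second-derivative terms live on $\nabla_3$ equations integrated along the short direction $u\in[0,\epsilon]$, so that Cauchy--Schwarz converts the hypothesized $\|\nabla^2\eta\|_{L^2(H,\Hbar)}$ into an $\epsilon^{1/2}$-small quantity absorbable by the bootstrap. This is the manifestation, at the first-angular-derivative level, of precisely the null structure exploited in Lemmas \ref{1}--\ref{2}.
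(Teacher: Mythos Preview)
Your proposal is correct and follows essentially the same route as the paper's proof: bootstrap on $\|\nabla\varphi_g\|_{L^2(S)}$, commute the transport equations with $\nabla$, apply Proposition~\ref{transport}, close the bad coefficients via Gr\"onwall in $\ubar$ and the good ones via the $\epsilon^{1/2}$ gain from the short $u$-integration, with the key observation that the $\nabla^2\eta$ terms arising from $\tr\chi$ and $\widehat\chi$ are absorbed by the hypothesis $\|\nabla^2\varphi\|_{L^2(\Hbar)}<\infty$. One minor bookkeeping point: your displayed Gr\"onwall inequality for $\|\nabla\eta\|_{L^2(S)}$ should carry the dependence on $\|\nabla^2\varphi\|_{L^2(H)}$ as well, since the term $\nabla\chi\cdot(\eta-\etab)$ requires $\|\eta\|_{L^\infty(S)}$, which by Lemma~\ref{1} depends on $\|\nabla^2\varphi\|_{L^2(H)}$; this matches the statement of the lemma and the paper's own estimate~(\ref{eq:nablaeta}).
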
 
\begin{proof} We will prove this under the assumption 
\begin{eqnarray}
\label{eq:boot}
||\nabla\varphi_{g}||_{L^{2}(S)}\leq 2C\mathcal{O}_{0}.
\end{eqnarray}
 We will obtain a better estimate, therefore, closing the argument. As usual, we first start with $\eta$ since it satisfies a $\nabla_{4}$ equation. We commute $\nabla$ with the transport equation (\ref{eq:eta}) satisfied by $\eta$ to obtain (we write it in a schematic way) 
 \begin{eqnarray}
 \nabla_{4}\nabla\eta=-\nabla\chi(\eta-\etabar)-\chi(\nabla\eta-\nabla\etabar)-\nabla\beta-\frac{1}{2}\nabla\alpha^{F}\cdot(\rho^{F}-\sigma^{F})-\frac{1}{2}\alpha^{F}(\nabla\rho^{F}-\nabla\sigma^{F})\\\nonumber 
 +(\beta+\alpha^{F}\cdot(\rho^{F}-\sigma^{F}))\eta+(\eta+\etabar)(-\chi(\eta-\etabar)-\beta-\frac{1}{2}\alpha^{F}(\rho^{F}-\sigma^{F})).
 \end{eqnarray}
 The transport inequality (proposition \ref{transport}) for $\nabla\eta$ reads 
 \begin{eqnarray}
 ||\nabla\eta||_{L^{2}(S_{u,\ubar})}\leq C\left(||\nabla\eta||_{L^{2}(S_{u,0})}+\int_{0}^{\ubar}||\nabla_{4}\nabla\eta||_{L^{2}(S)}d\ubar^{'}\right).
 \end{eqnarray}
 Under the assumption (\ref{eq:boot}) and the  previous estimates (lemma \ref{1}-\ref{3}) we obtain
 \begin{eqnarray}
 ||\nabla_{4}\nabla\eta||_{L^{2}(S)}\leq C(\mathcal{O}_{0},\mathcal{W},\mathcal{F},\mathcal{F}(S),||\nabla^{2}\varphi||_{L^{2}(H)})+C(\mathcal{O}_{0})\int_{0}^{\ubar}||\nabla\eta||_{L^{2}(S)}d\ubar^{'}
 \end{eqnarray}
 which after using Gr\"onwall's inequality yields 
 \begin{eqnarray}
 \label{eq:nablaeta}
 ||\nabla\eta||_{L^{2}(S)}\leq C(\mathcal{O}_{0},\mathcal{W},\mathcal{F},\mathcal{F}(S),||\nabla^{2}\varphi||_{L^{2}(H)})e^{C(\mathcal{O}_{0})\ubar}\leq\\\nonumber  C(\mathcal{O}_{0},\mathcal{W},\mathcal{F},\mathcal{F}(S),||\nabla^{2}\varphi||_{L^{2}(H)})
 \end{eqnarray}
 since $\ubar\leq J$.
 A similar argument for $\omegabar$ yields 
 \begin{eqnarray}
 ||\nabla\omegabar||_{L^{2}(S)}\leq C(\mathcal{O}_{0},\mathcal{W},\mathcal{F},\mathcal{F}(S),||\nabla^{2}\varphi||_{L^{2}(H)}).
 \end{eqnarray}
 Now we want to estimate the good connection coefficients i.e., the ones that satisfy $\nabla_{3}$ equation. Let us start with $\etabar$. Commuting the transport equation of $\etabar$ with $\nabla$ yields
 \begin{eqnarray}
 \nabla_{3}\nabla\etabar=\\\nonumber -\nabla\bar{\chi}(\etabar-\eta)-\chi(\nabla\eta-\nabla\etabar)+\nabla\bar{\beta}+\frac{1}{2}\hnabla\bar{\alpha}^{F}\cdot(\rho^{F}-\sigma^{F})+\frac{1}{2}\bar{\alpha}^{F}\cdot(\nabla\rho^{F}-\nabla\sigma^{F})\\\nonumber+(\bar{\beta}+\bar{\alpha}^{F}\cdot(\rho^{F}-\sigma^{F}))\etabar+(\eta+\etabar)(-\bar{\chi}(\etabar-\eta)+\bar{\beta}+\frac{1}{2}\bar{\alpha}^{F}\cdot(\rho^{F}-\sigma^{F})).
 \end{eqnarray}
Similarly, we may apply the transport inequality (proposition \ref{transport}) for $||\nabla\etabar||_{L^{2}(S)}$ to obtain 
\begin{eqnarray}
||\nabla\etabar||_{L^{2}(S)}\leq \frac{C}{2}\left(||\nabla\etabar||_{L^{2}(S_{0,u})}+\int_{0}^{u}||\nabla_{3}\nabla\etabar||_{L^{2}(S)}du^{'}\right)\\\nonumber 
\leq \frac{C\mathcal{O}_{0}}{2}+\epsilon C(\mathcal{O}_{0},\mathcal{F},\mathcal{W}(S),\mathcal{F}(S),||\nabla^{2}\varphi||_{L^{2}(H)})+\epsilon^{\frac{1}{2}}C(\mathcal{W})
\end{eqnarray}
which can be made to satisfy the following estimate after choosing sufficiently small $\epsilon$ yields 
\begin{eqnarray}
||\nabla\etabar||_{L^{2}(S)}\leq C\mathcal{O}_{0}.
\end{eqnarray}
Now we estimate $ \tr\chi$. Commuting $\nabla$ with the transport equation of $ \tr\chi$ yields the following equation (schematic)
\begin{eqnarray}
\nabla_{3}\nabla  \tr\chi+\frac{1}{2}\nabla  \tr\chibar \tr\chi+\frac{1}{2} \tr\chibar\nabla  \tr\chi=\\\nonumber 2\nabla\omegabar \tr\chi+2\omegabar\nabla  \tr\chi+2\nabla^{2}\eta+2\eta\nabla\eta+2\nabla\rho-\nabla\widehat{\chi}\chibarhat-\widehat{\chi}\nabla\chibarhat.  
\end{eqnarray}
The transport inequality (proposition \ref{transport}) provides the following estimate for $\nabla  \tr\chi$ under the assumption (\ref{eq:boot}) together with the previous estimates for $\eta$ and $\omegabar$
\begin{eqnarray}
||\nabla  \tr\chi||_{L^{2}(S)}\leq \frac{C}{2}\left(||\nabla  \tr\chi||_{L^{2}(S_{0,\ubar})}+\int_{0}^{u}||2\nabla\omegabar \tr\chi+2\omegabar\nabla  \tr\chi+2\nabla^{2}\eta\right.\\\nonumber 
\left.+2\eta\nabla\eta\nonumber+2\nabla\rho-\nabla\widehat{\chi}\chibarhat-\widehat{\chi}\nabla\chibarhat||_{L^{2}(S)}du^{'}\right.\\\nonumber 
\left.\int_{0}^{u}||\frac{1}{2}\nabla  \tr\chibar \tr\chi+\frac{1}{2} \tr\chibar\nabla  \tr\chi||_{L^{2}(S)}du^{'}\right)\\\nonumber 
\leq \frac{C\mathcal{O}_{0}}{2}+\epsilon C(\mathcal{O}_{0},\mathcal{F},\mathcal{W},\mathcal{F}(S),||\nabla^{2}\varphi||_{L^{2}(H)})+\epsilon^{\frac{1}{2}}C(||\nabla^{2}\varphi||_{L^{2}(  \Hbar)},\mathcal{W})\leq C\mathcal{O}_{0}
\end{eqnarray}
for sufficiently small $\epsilon$. Notice that here we needed $||\nabla^{2}\varphi||_{L^{2}(  \Hbar)}$ or more precisely $||\nabla^{2}\eta||_{L^{2}(  \Hbar)}$. The remaining connection coefficients are estimated in an exactly similar way. We sketch the estimates below  
\begin{eqnarray}
||\nabla\chibarhat||_{L^{2}(S_{u,\ubar})}\leq \frac{C}{2}\left(||\nabla\chibarhat||_{L^{2}(S_{u^{'},\ubar})}+\int_{u^{'}}^{u}||\nabla_{3}\nabla\chibarhat||_{L^{2}(S_{u^{''},\ubar)}}du^{''}\right)\\\nonumber 
\leq \frac{C\mathcal{O}_{0}}{2}+\epsilon^{1/2}C(\mathcal{O}_{0},\mathcal{W},\mathcal{W}(S)\mathcal{F},\mathcal{F}(S))\leq C\mathcal{O}_{0},\\
||\nabla\etabar||_{L^{2}(S_{u,\ubar})}\leq C\left(||\nabla\hat{\etabar}||_{L^{2}(S_{u^{'},\ubar})}+\int_{u^{'}}^{u}||\nabla_{3}\nabla\hat{\etabar}||_{L^{2}(S_{u^{''},\ubar)}}du^{''}\right)\\\nonumber 
\leq \frac{C\mathcal{O}_{0}}{2}+\epsilon^{1/2}C(\mathcal{O}_{0},\mathcal{W},\mathcal{W}(S)\mathcal{F},\mathcal{F}(S))\leq C\mathcal{O}_{0},\\
||\nabla\omega||_{L^{2}(S_{u,\ubar})}\leq C\left(||\nabla\omega||_{L^{2}(S_{u^{'},\ubar})}+\int_{u^{'}}^{u}||\nabla_{3}\nabla\omega||_{L^{2}(S_{u^{''},\ubar)}}du^{''}\right)\\\nonumber 
\leq \frac{C\mathcal{O}_{0}}{2}+\epsilon^{1/2}C(\mathcal{O}_{0},\mathcal{W},\mathcal{W}(S)\mathcal{F},\mathcal{F}(S))\leq C\mathcal{O}_{0},\\ 
||\nabla  \tr\chi||_{L^{2}(S_{u,\ubar})}\leq C\left(||\nabla  \tr\chi||_{L^{2}(S_{u^{'},\ubar})}\nonumber+\int_{u^{'}}^{u}||\nabla_{3}\nabla  \tr\chi||_{L^{2}(S_{u^{''},\ubar)}}du^{''}\right)\\\nonumber 
\leq \frac{C\mathcal{O}_{0}}{2}+\epsilon^{1/2}C(\mathcal{O}_{0},\mathcal{W},\mathcal{W}(S)\mathcal{F},\mathcal{F}(S),||\nabla^{2}\eta||_{L^{2}(  \Hbar)})\leq C\mathcal{O}_{0},\\
||\nabla\widehat{\chi}||_{L^{2}(S_{u,\ubar})}\leq C\left(||\nabla\widehat{\chi}||_{L^{2}(S_{u^{'},\ubar})}\nonumber+\int_{u^{'}}^{u}||\nabla_{3}\nabla\widehat{\chi}||_{L^{2}(S_{u^{''},\ubar)}}du^{''}\right)\\\nonumber 
\leq \frac{C\mathcal{O}_{0}}{2}+\epsilon^{1/2}C(\mathcal{O}_{0},\mathcal{W},\mathcal{W}(S)\mathcal{F},\mathcal{F}(S),||\nabla^{2}\eta||_{L^{2}(  \Hbar)})\leq C\mathcal{O}_{0}
\end{eqnarray}
after choosing $\epsilon$ sufficiently small.
In this case, we will hit $\mathfrak{T}_{ab}\sim \rho^{F}\rho^{F}+\alpha^{F}\bar{\alpha}^{F}+\sigma^{F}\sigma^{F}$ by $\hnabla_{3}$ derivative and therefore we must incorporate $||\hnabla_{3}\bar{\alpha}^{F}||_{L^{2}(  \Hbar)}$ in the $\mathcal{F}$ norm. This completes the proof of the lemma.
\end{proof}
\begin{corollary}
\label{C1}
The Gauss curvature $K$ of the topological $2$-sphere $S_{u,\ubar}$ satisfies 
\begin{eqnarray}
||K||_{L^{4}(S_{u,\ubar})}\leq C(\mathcal{O}_{0},\mathcal{W}(S),\mathcal{F}(S)),\\
||\nabla K||_{L^{2}(S_{u,\ubar})}\leq C(\mathcal{O}_{0},\mathcal{F}(S))+||\nabla\rho||_{L^{2}(S_{u,\ubar})}.
\end{eqnarray}
\end{corollary}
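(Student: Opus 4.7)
The plan is to read off both estimates directly from the null Hamiltonian constraint (\ref{eq:4}), which expresses the sectional curvature algebraically as
\begin{equation*}
K=\tfrac{1}{2}\widehat{\chi}\cdot \chibarhat-\tfrac{1}{4}\tr\chi\,\tr\chibar-\rho+\tfrac{1}{4}\mathfrak{T}_{43},
\qquad \mathfrak{T}_{43}=\rho^{F}\cdot\rho^{F}+\sigma^{F}\cdot\sigma^{F}.
\end{equation*}
Every term on the right has already been controlled by the preceding lemmas and by the definitions of $\mathcal{W}(S)$ and $\mathcal{F}(S)$, so the corollary will follow from H\"older's inequality on $S_{u,\ubar}$ together with the Sobolev embeddings of Proposition~\ref{sobolev}.

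For the first inequality, I would bound each quadratic connection term by pairing the $L^{\infty}(S)$ estimate of one factor (Lemma~\ref{1}, applied to the good connections $\widehat{\chi},\chibarhat,\tr\chi,\tr\chibar\in\{\varphi_{g}\}$) with the $L^{4}(S)$ estimate of the other (Lemma~\ref{2}), yielding $\|\widehat{\chi}\cdot\chibarhat\|_{L^{4}(S)}+\|\tr\chi\,\tr\chibar\|_{L^{4}(S)}\leq C\mathcal{O}_{0}^{2}$. The Weyl term $\|\rho\|_{L^{4}(S)}$ is controlled directly by $\mathcal{W}(S)$. For the Yang--Mills contribution $\mathfrak{T}_{43}$, I use H\"older followed by the Sobolev inequality (\ref{eq:inequal1})--(\ref{eq:sobolev4}) to promote the $\mathcal{F}(S)$-bound on $\rho^{F},\sigma^{F}$ (which includes up to one angular covariant derivative in $L^{4}(S)$) to an $L^{\infty}(S)$ bound, so that $\|\rho^{F}\cdot\rho^{F}\|_{L^{4}(S)}\leq \|\rho^{F}\|_{L^{\infty}(S)}\|\rho^{F}\|_{L^{4}(S)}\leq C(\mathcal{O}_{0})\mathcal{F}(S)^{2}$, and similarly for $\sigma^{F}$.

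For the gradient estimate, I would apply $\nabla$ to the Gauss equation and use the metric compatibility of $\hnabla$ with the fiber inner product $\langle\cdot,\cdot\rangle_{\gamma,\delta}$, yielding schematically
\begin{equation*}
\nabla K\sim \nabla\varphi_{g}\cdot\varphi_{g}-\nabla\rho+\langle\rho^{F},\hnabla\rho^{F}\rangle+\langle\sigma^{F},\hnabla\sigma^{F}\rangle.
\end{equation*}
Pairing the $L^{2}(S)$ bound on $\nabla\varphi_{g}$ from Lemma~\ref{3} with the $L^{\infty}(S)$ bound on $\varphi_{g}$ from Lemma~\ref{1} controls the connection products by $C\mathcal{O}_{0}^{2}$. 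The Yang--Mills products are handled by H\"older, $\|\langle\rho^{F},\hnabla\rho^{F}\rangle\|_{L^{2}(S)}\leq \|\rho^{F}\|_{L^{\infty}(S)}\|\hnabla\rho^{F}\|_{L^{2}(S)}$, both factors being dominated by $\mathcal{F}(S)$ after invoking Proposition~\ref{sobolev}. The remaining term $\nabla\rho$ cannot be absorbed into the initial data norms at this stage (since $\mathcal{W}(S)$ only controls $\rho$ in $L^{4}(S)$, not its first angular derivative), and must appear explicitly on the right-hand side.

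The only subtlety I anticipate is bookkeeping: one needs the full gauge-covariant compatibility $\nabla|\rho^{F}|^{2}_{\delta}=2\langle\rho^{F},\hnabla\rho^{F}\rangle_{\delta}$ to ensure that $\nabla$ hitting the gauge-invariant scalar $\mathfrak{T}_{43}$ produces only $\hnabla$-derivatives of the Yang--Mills curvature (not an uncontrolled connection one-form), and one needs the Sobolev step that converts the $W^{1,4}(S)$ control of $\rho^{F},\sigma^{F}$ in $\mathcal{F}(S)$ into an $L^{\infty}(S)$ bound. Both are routine given the machinery already established in Sections~5--6.
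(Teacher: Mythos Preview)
Your proposal is correct and follows exactly the approach indicated in the paper: the result is read off directly from the null Hamiltonian constraint (\ref{eq:4}) together with Lemmas~\ref{1}, \ref{2}, \ref{3} and the definitions of $\mathcal{W}(S)$, $\mathcal{F}(S)$. Your write-up simply spells out in more detail the H\"older/Sobolev bookkeeping that the paper leaves implicit.
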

\textbf{Proof:} A direct consequence of the null Hamiltonian constraint (\ref{eq:4}), lemma (\ref{1}), (\ref{2}), (\ref{3}), and the definitions of $\mathcal{W}(S)$ and $\mathcal{F}(S)$.

\noindent \begin{lemma}
\label{lemma5} Let $\varphi_{g}:=( \tr\chibar,\chibarhat,\etabar,\omega, \tr\chi,\widehat{\chi})$ and $\varphi_{b}:=(\eta,\omegabar)$ and $\mathcal{W},\mathcal{F},\mathcal{F}(S)<\infty$. Then $||\nabla^{2}\varphi_{g}||_{L^{2}(H,  \Hbar)}\leq C(\mathcal{O}_{0},\mathcal{W},\mathcal{F},\mathcal{W}(S),\mathcal{F}(S))$ and $||\nabla^{2}\varphi_{b}||_{L^{2}(H,  \Hbar)}\leq C(\mathcal{O}_{0},\mathcal{W},\mathcal{F},\mathcal{W}(S),\mathcal{F}(S))$.
\end{lemma}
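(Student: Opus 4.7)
The plan is to combine Hodge-type elliptic estimates on each sphere $S_{u,\ubar}$ with transport estimates along $\nabla_3$ and $\nabla_4$, using the Codazzi equations and the div-curl structure of the null structure equations to reduce everything to quantities already bounded in terms of $\mathcal{O}_0, \mathcal{W}, \mathcal{F}, \mathcal{W}(S), \mathcal{F}(S)$. Lemmas \ref{1}--\ref{3} and Corollary \ref{C1} ensure that $\gamma$ on $S_{u,\ubar}$ is uniformly equivalent to a round metric and that $K\in L^4(S_{u,\ubar})$ uniformly; the standard Bochner/Hodge theory on $S^2$ then furnishes, for each sphere, inequalities of the schematic form
\begin{eqnarray*}
\|\nabla^2 \xi\|_{L^2(S_{u,\ubar})} \lesssim \|(\text{div}\,\xi,\, \text{curl}\,\xi)\|_{\dot{H}^1(S_{u,\ubar})} + \text{(Gauss-curvature terms)}
\end{eqnarray*}
for a 1-form $\xi$, with an analogous statement for symmetric trace-free 2-tensors involving only the divergence.

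I would first address the trace-free second fundamental forms. By Codazzi (\ref{eq:1}), $\text{div}\,\widehat{\chi}$ is expressed in terms of $\nabla \tr\chi$, $\beta$, $\mathfrak{T}(e_4,\cdot)$ and quadratic terms in $\varphi$, and analogously for $\widehat{\chibar}$. The elliptic estimate above therefore reduces $\|\nabla^2\widehat{\chi}\|_{L^2(H,\Hbar)}$ and $\|\nabla^2\widehat{\chibar}\|_{L^2(H,\Hbar)}$ to $\|\nabla^2 \tr\chi\|,\, \|\nabla^2 \tr\chibar\|$, the Weyl quantities $\|\nabla\beta\|,\, \|\nabla\betabar\|$ (in $\mathcal{W}$), their Yang-Mills analogues (in $\mathcal{F}$), and quadratic connection terms controlled by Lemmas \ref{1}--\ref{3}. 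For $\eta$ and $\etabar$, the div-curl system obtained from the structure equations for $\nabla_3 \tr\chi, \nabla_4 \tr\chibar$, together with $\text{curl}\,\eta = \chibarhat \wedge \widehat{\chi} + \sigma \epsilon = -\text{curl}\,\etabar$, reduces $\|\nabla^2\eta\|,\, \|\nabla^2\etabar\|$ in $L^2(H,\Hbar)$ to terms involving $\nabla\nabla_3\tr\chi,\, \nabla\nabla_4\tr\chibar$, the curvature derivatives $\nabla\rho,\, \nabla\sigma$ (both in $\mathcal{W}$), Yang-Mills stress-energy derivatives (in $\mathcal{F}$), and lower-order products already estimated.

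The remaining task is to estimate $\|\nabla^2 \tr\chi\|,\, \|\nabla^2 \tr\chibar\|,\, \|\nabla^2\omega\|,\, \|\nabla^2\omegabar\|$ in $L^2(H,\Hbar)$. I would commute each transport equation with $\nabla^2$ via Lemma \ref{commutation}. For the good coefficients $\tr\chibar,\, \widehat{\chibar},\, \etabar,\, \omega$ (and likewise $\tr\chi,\, \widehat{\chi}$), which satisfy $\nabla_3$ equations, integration over $u\in[0,\epsilon]$ produces an $\epsilon^{1/2}$ factor in front of the source after Cauchy-Schwarz, allowing absorption of all top-order terms into the initial data $\mathcal{O}_0$ provided $\epsilon$ is sufficiently small. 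For the bad coefficients $\eta,\, \omegabar$, which satisfy $\nabla_4$ equations, integration in $\ubar\in[0,J]$ followed by a Gr\"onwall argument yields the desired bound in terms of $(\mathcal{O}_0, \mathcal{W}, \mathcal{F}, \mathcal{W}(S), \mathcal{F}(S))$. Top-order curvature terms such as $\nabla^2\rho$, appearing from commuting $\nabla^2$ with $\nabla_3\tr\chi$, are eliminated by substituting the null Hamiltonian constraint (\ref{eq:4}), which converts them into $-\nabla^2 K$ plus controlled connection bilinears; the Gauss equation on $S^2$, namely $[\nabla_a,\nabla_b]\xi_c = K(\gamma_{bc}\xi_a - \gamma_{ac}\xi_b)$ for a 1-form $\xi$, then reduces $\nabla^2 K$ to $\nabla K$ acting on connection coefficients, controlled via Corollary \ref{C1}.

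The main obstacle is the circular coupling between the top-order $\nabla^2$ of different connection coefficients: each elliptic reduction introduces another, and each transport commutation produces a top-order curvature source. The resolution is the strict hierarchical order described above, in which each coefficient is estimated only in terms of quantities already under control, and the $\epsilon^{1/2}$ smallness gained from $u$-integration absorbs the remaining top-order couplings into the left-hand side. The detailed commutator expansions and index algebra are long but routine, and we refer to the analogous computations in \cite{luk2012local} and \cite{AnAth}.
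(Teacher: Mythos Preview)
Your strategy has a genuine gap at the step where you commute $\nabla^2$ directly with the transport equations for $\omega$, $\omegabar$, and (via its $\nabla_3$ equation) $\tr\chi$. Each of these right-hand sides contains $\rho$, so the commuted equation produces $\nabla^2\rho$, and $\nabla^2\rho$ is \emph{not} controlled at this regularity level: only $\nabla\rho\in L^2(H,\Hbar)$ lies in $\mathcal{W}$. Your proposed fix --- substituting the null Hamiltonian constraint (\ref{eq:4}) to rewrite $\nabla^2\rho$ as $-\nabla^2 K$ plus connection bilinears --- merely trades one uncontrolled quantity for another. The identity $[\nabla_a,\nabla_b]\xi_c = K(\gamma_{bc}\xi_a-\gamma_{ac}\xi_b)$ expresses \emph{commutators} of covariant derivatives in terms of $K$, but it says nothing about the Hessian $\nabla^2 K$ itself; there is no mechanism by which the Gauss equation reduces $\nabla^2 K$ to $\nabla K$ plus lower-order terms. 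At this regularity $\nabla^2 K$ is exactly as bad as $\nabla^2\rho$ --- indeed the constraint makes them equal modulo terms you are still trying to estimate, so the substitution is circular.

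The paper avoids this obstruction by the Klainerman--Rodnianski renormalization: one introduces $\mu:=-\text{div}\,\eta-\rho$, $\bar\mu:=-\text{div}\,\etabar-\rho$, and (with auxiliary scalars $\omega^\dag,\omegabar^\dag$ solving $\nabla_3\omega^\dag=\tfrac12\sigma$, $\nabla_4\omegabar^\dag=\tfrac12\sigma$) the quantities $\xi:=\nabla\omega+{}^*\nabla\omega^\dag-\tfrac12\beta$ and $\bar\xi:=-\nabla\omegabar+{}^*\nabla\omegabar^\dag-\tfrac12\bar\beta$. The whole point of these combinations is that in the transport equations $\nabla_4\mu$, $\nabla_4\bar\xi$, $\nabla_3\bar\mu$, $\nabla_3\xi$ the top-order Weyl terms ($\text{div}\,\beta$, $\nabla\rho\pm{}^*\nabla\sigma$, etc.) \emph{cancel pointwise}. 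Consequently the transport equations for $\nabla\Xi$, $\nabla\bar\Xi$ contain only $\nabla\Psi$ with $\Psi\in\{\alpha,\beta,\bar\beta,\rho,\sigma\}$ --- never $\nabla^2\Psi$ --- and these \emph{are} in $\mathcal{W}$. Coupling this with the Hodge system (\ref{eq:hodge2})--(\ref{eq:hodge4}), which is where your Codazzi idea enters correctly, then closes: the elliptic estimate gives $\|\nabla^2\varphi\|_{L^2(S)}\lesssim \|\nabla(\Xi,\bar\Xi)\|_{L^2(S)}+\|\nabla\Psi\|_{L^2(S)}+\ldots$, and the transport inequality for $\|\nabla(\Xi,\bar\Xi)\|_{L^2(S)}$ feeds back only $\|\nabla^2\varphi\|_{L^2(S)}$ linearly, so Gr\"onwall applies. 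The renormalized quantities, not a direct $\nabla^2$ commutation, are what make the argument close.
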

\begin{proof}
Following \cite{klainerman2012formation}, we prove this lemma by means of constructing a transport-Hodge system.
The basic idea is the following. We construct a set of new entities $\Xi:=\{\nabla  \tr\chi,\mu,\bar{\xi}\}$ and $\bar{\Xi}:=\{\nabla  \tr\chibar,\bar{\mu},\xi\}$. We first obtain their transport equations. We proceed exactly the same way as \cite{klainerman2012formation}, only keeping track of the additional Yang-Mills curvature terms. Yang-Mills curvature terms are harmless in this context since they have one order higher regularity than the Weyl curvature. We define $\mu,\bar{\mu}$ and $\xi,\bar{\xi}$ as follows 
\begin{eqnarray}
\mu:=-\text{div} \eta-\rho,~\bar{\mu}:-\text{div} \etabar-\rho, ~\xi:=\nabla\omega+~^{*}\nabla\omega^{\dag}-\frac{1}{2}\beta,\\
\bar{\xi}:=-\nabla\omegabar+~^{*}\nabla\omegabar^{\dag}-\frac{1}{2}\bar{\beta},
\end{eqnarray}
where $\omega^{\dag}$ and $\omegabar^{\dag}$ are the auxiliary variables that satisfy the following boundary-valued equations
\begin{eqnarray}
\nabla_{3}\omega^{\dag}=\frac{1}{2}\sigma,~\omega^{\dag}=0~on~H_{0},~\nabla_{4}\omegabar^{\dag}=\frac{1}{2}\sigma,~\omegabar^{\dag}=0~on~  \Hbar_{0}.
\end{eqnarray}
By definition, we have the Hodge system 
\begin{eqnarray}
\label{eq:hodge2}
^{*}\mathcal{D}_{1}\langle\omegabar\rangle=\bar{\xi}+\frac{1}{2}\bar{\beta},~~^{*}\mathcal{D}_{1}\langle\omega\rangle=\xi+\frac{1}{2}\beta,\\
\label{eq:Hodge1}
\text{div}\eta=-\mu-\rho,~\text{curl} \eta=\chibarhat\wedge\widehat{\chi}+\sigma,\\
\text{div}\widehat{\chi}=\frac{1}{2}\nabla  \tr\chi-\frac{1}{2}(\eta-\etabar)\cdot(\widehat{\chi}-\frac{1}{2} \tr\chi\delta_{ab})-\beta+\frac{1}{2}\mathfrak{T}(e_{4},\cdot)\\
\label{eq:hodge4}
\text{div}\chibarhat=\frac{1}{2}\nabla  \tr\chibar-\frac{1}{2}(\etabar-\eta)\cdot(\chibarhat-\frac{1}{2} \tr\chibar\delta_{ab})-\bar{\beta}+\frac{1}{2}\mathfrak{T}(e_{3},\cdot),
\end{eqnarray}
where $^{*}\mathcal{D}_{1}\langle\omega\rangle:=\nabla\omega+~^{*}\nabla\omega^{\dag}$ and $^{*}\mathcal{D}_{1}\langle\omegabar\rangle:=-\nabla\omegabar+~^{*}\nabla\omegabar^{\dag}$.   $\Xi=\mu$ satisfies the following type of transport equation (schematically)
\begin{eqnarray}
\nonumber\nabla_{4}\mu=\\\nonumber
\underbrace{-\text{div}(-\chi\cdot(\eta-\etabar))}_{A}-\underbrace{\text{div}\beta}_{-ID}-\underbrace{\text{div}(\frac{1}{2}\alpha^{F}(\rho^{F}-\sigma^{F}))
-(\underbrace{-\text{div}\beta}_{ID}-\frac{3}{2} \tr\chi\rho\nonumber-\frac{1}{2}\chibarhat\cdot\alpha +\zeta\cdot\beta+2\etabar\cdot\beta}_{IA}\\\nonumber-\underbrace{\frac{1}{4}(\alpha^{F}\cdot\hnabla_{3}\alpha^{F}-\omegabar|\alpha^{F}|^{2}+4\eta\alpha^{F}\cdot(\rho^{F}-\sigma^{F})
-\rho^{F}\cdot\hnabla_{4}\rho^{F}+\sigma^{F}\cdot\hnabla_{4}\sigma^{F}-2\etabar\alpha^{F}\cdot(\rho^{F}+\sigma^{F}))}_{IB}\\\nonumber 
+\underbrace{(\beta+(\alpha^{F}\cdot(\rho^{F}-\sigma^{F})+\alpha^{F}))\eta+(\eta+\etabar)(-\chi\cdot(\eta-\etabar)-\beta-\frac{1}{2}\alpha^{F}\cdot(\rho^{F}-\sigma^{F}))}_{IC} -\chi\nabla\eta+\chi\etabar\eta
\end{eqnarray}
where all the terms involving Yang-Mills curvature can be controlled by means of null Yang-Mills equations. The most important point is to note that $\text{div}\beta$ terms cancel each other in a point-wise way. This is the purpose of constructing the new function $\mu$ (and similar others) so that the derivative of the Weyl curvature does not appear since that would obstruct closing the regularity argument. Now observe an extremely important point: term $A$ would contain $\text{div}(\chi)(\eta-\etabar)$ and we can need to estimate this term in $L^{2}(S)$ in order to estimate $||\mu||_{L^{2}(S)}$ using the transport inequality (proposition \ref{transport}). Notice the following calculations 
\begin{eqnarray}
||A||_{L^{2}(S)}=||-\text{div}(-\chi\cdot(\eta-\etabar))||_{L^{2}(S)}\leq ||\chi||_{L^{\infty}(S)}(||\nabla\eta||_{L^{2}(S)}+||\nabla\etabar||_{L^{2}(S)})\\\nonumber+||\nabla\chi||_{L^{4}}(||\eta||_{L^{4}(S)}+||\etabar||_{L^{4}(S)})
\end{eqnarray}
Now since $\widehat{\chi}$ and $ \tr\chi$ both satisfy $\nabla_{3}$ equation, we may estimate $||\nabla\chi||_{L^{4}(S)}$ solely by means of the initial data $\mathcal{O}_{0}$ using codimension-1 trace inequality 
\begin{eqnarray}
||\nabla\varphi||_{L^{4}(S_{u,\ubar})}\leq \\\nonumber C\left(||\nabla\varphi||_{L^{4}(S_{0,\ubar})}+||\nabla\varphi||^{1/2}_{L^{2}(  \Hbar)}||\hnabla_{3}\nabla\varphi||^{1/4}_{L^{2}(  \Hbar)}(||\nabla\varphi||_{L^{2}(  \Hbar)}+||\nabla\nabla\varphi||_{L^{2}(  \Hbar)})^{1/4}\right)
\end{eqnarray}
given $\mathcal{W},\mathcal{W}(S),\mathcal{F},\mathcal{F}(S),||\nabla^{2}\varphi||_{L^{2}(  \Hbar)}<\infty$. This is because, we gain $\epsilon^{\frac{1}{4}}$ from the term $||\nabla\varphi||^{\frac{1}{2}}_{L^{2}(  \Hbar)}$. From lemma (\ref{2}), we have the estimates for $||\eta||_{L^{4}(S)}$ in terms of $\mathcal{W},\mathcal{F},\mathcal{W}(S),\mathcal{F}(S)$ and for $||\etabar||_{L^{4}(S)}$ in terms of $\mathcal{O}_{0}$. Therefore after choosing $\epsilon$ sufficiently small, we have 
\begin{eqnarray}
||A||_{L^{2}(S)}\leq C(\mathcal{O}_{0})||\nabla\eta||_{L^{2}(S)}+C(\mathcal{O}_{0},\mathcal{W},\mathcal{F},\mathcal{W}(S),\mathcal{F}(S))
\end{eqnarray}

\noindent Now an application of transport inequality (proposition \ref{transport}) 
\begin{eqnarray}
||\mu||_{L^{2}(S_{u,\ubar})}\leq C(\mathcal{O}_{0})\left(||\mu||_{L^{2}(S_{u,0})}+\int_{0}^{\ubar}||\nabla_{4}\mu||_{L^{2}(S)}d\ubar^{''}\right)
\end{eqnarray}
yields 
\begin{eqnarray}
\label{eq:mu1}
||\mu||_{L^{2}(S_{u,\ubar})}\leq C(\mathcal{O}_{0},\mathcal{W}_{0},\mathcal{W},\mathcal{F},\mathcal{W}(S),\mathcal{F}(S))+C(\mathcal{O}_{0})\int_{0}^{\ubar}||\nabla\eta||_{L^{2}(S)}d\ubar^{'}\\\nonumber+C\int_{0}^{\ubar}(||IA,IB,IC||_{L^{2}(S)})d\ubar^{'}
\end{eqnarray}
and from the Hodge system (\ref{eq:Hodge1})
\begin{eqnarray}
||\nabla\eta||_{L^{2}(S)}\leq C(\mathcal{O}_{0})(||K||_{L^{2}(S)}+||\rho||_{L^{2}(S)}+||\sigma||_{L^{2}(S)})+||\mu||_{L^{2}(S)}.
\end{eqnarray}
Therefore we are left to estimate $||IA||_{L^{2}(S)},||IB||_{L^{2}(S)},$ and $||IC||_{L^{2}(S)}$. Using the null Yang-Mills  equations and elementary inequality such as Holder's inequality, we obtain
\begin{eqnarray}
||IA||_{L^{2}(S)}+||IB||_{L^{2}(S)}+||IC||_{L^{2}(S)}\leq\\\nonumber  C(\mathcal{O}_{0},\mathcal{W},\mathcal{F}(S))+C(\mathcal{O}_{0})(||\alpha||_{L^{2}(S)}+||\beta||_{L^{2}(S)}\nonumber+||\rho||_{L^{2}(S)})
\end{eqnarray}
substitution of which in (\ref{eq:mu1}), an application of Gr\"onwall, and integration in $\ubar$ yield
\begin{eqnarray}
||\mu||_{L^{2}(S)}\leq C(\mathcal{O}_{0},\mathcal{W},\mathcal{F},\mathcal{F}(S))
\end{eqnarray}
which in turn yields 
\begin{eqnarray}
||\nabla\eta||_{L^{2}(S)}\leq C(\mathcal{O}_{0},\mathcal{W},\mathcal{F},\mathcal{F}(S))
\end{eqnarray}
since $||\alpha||_{L^{2}(S)},||\beta||_{L^{2}(S)},||\rho||_{L^{2}(S)}\leq C(\mathcal{W}(S))$. In this process, we obtain $||\nabla\eta||_{L^{2}(S)}$ independent of $||\nabla^{2}\varphi||_{L^{2}(H,\bar{H)}}$ and therefore improve (\ref{eq:nablaeta}). 

\noindent Similarly, analysing the pair $(\bar{\xi},\langle\omegabar\rangle)$, we can estimate $||\nabla\omegabar||_{L^{2}(S)}$ by means of \\\nonumber $C(\mathcal{O}_{0},\mathcal{W},\mathcal{F},\mathcal{W}(S),\mathcal{F}(S))$ utilizing the estimate for $||\nabla\eta||_{L^{2}(S)}$ which is now independent of $||\nabla^{2}\varphi||_{L^{2}(H,\bar{H)}}$. This is the whole point of obtaining estimates in a hierarchical way i.e., start with $\eta$ and estimate $\omegabar$ by using the estimate for $\eta$ and then continue to do so for the remaining connection coefficients.   
An extremely important point is that $\bar{\alpha}$ does not appear due to the special structure of the null-Bianchi equations (recall $\bar{\alpha}$ can not be controlled on $H$). Now we want to estimate $||\nabla^{2}\eta||_{L^{2}(H)}$. Recall, we have constructed the entities $\mu,\bar{\mu},\xi,\bar{\xi}$ which adding $\nabla  \tr\chi$ and $\nabla  \tr\chibar$ constructs the set $\Xi:=\{\nabla  \tr\chi,\mu,\bar{\xi}\}$ and $\bar{\Xi}:=\{\nabla  \tr\chibar,\bar{\mu},\xi\}$. We obtain a set of following transport equations presented in a schematic way for $\nabla\Xi$ and $\nabla\bar{\Xi}$
\begin{eqnarray}
\label{eq:F1}
\nabla_{4}\nabla\Xi=\varphi\nabla^{2}\varphi+\nabla\varphi \nabla\varphi+\varphi\nabla\Psi+\Psi\nabla\varphi+\Phi^{F}\cdot\hnabla^{2}\Phi^{F}\\\nonumber +\hnabla\Phi^{F}\cdot\hnabla\Phi^{F}+\varphi\Phi^{F}\cdot\hnabla\Phi^{F}+\Phi^{F}\cdot\Phi^{F}\nabla\varphi\\\nonumber+\varphi\nabla\varphi+\varphi_{g}\varphi\nabla\varphi+\varphi\varphi\nabla\varphi_{g}+\varphi_{g}\nabla\Xi
+\varphi\nabla_{4}\Xi+\varphi\varphi_{g}\Xi:=\mathcal{F}_{1}
\end{eqnarray}
where $\Psi$ can consist of any of the `good' Weyl curvature components i.e., $(\alpha,\beta,\bar{\beta},\rho,\sigma)$ or more precisely it does not contain $\alpha$. Similarly, $\Phi^{F}$ can contain all the Yang-Mills curvature components $(\alpha^{F},\bar{\alpha}^{F},\rho^{F},\sigma^{F})$. However, there is no term involving $\hnabla^{2}\bar{\alpha}^{F}$ i.e., the topmost derivative operator does not act on $\bar{\alpha}^{F}$. This is once again a consequence of the special structure of the Yang-Mills equations. Of course, Yang-Mills equations do satisfy a null condition. In addition, a good $\varphi$ ($\varphi_{g}$) always appears multiplied with the top derivative of $\varphi$. Similarly, we obtain the following transport equation for $\bar{\Xi}$
\begin{eqnarray}
\label{eq:F2}
\nabla_{3}\nabla\bar{\Xi}=\varphi\nabla^{2}\varphi+\nabla\varphi \nabla\varphi+\varphi\nabla\Psi+\Psi\nabla\varphi+\Phi^{F}\cdot\hnabla^{2}\Phi^{F}\\\nonumber +\hnabla\Phi^{F}\cdot\hnabla\Phi^{F}+\varphi\Phi^{F}\cdot\hnabla\Phi^{F}+\varphi\nabla\varphi\\\nonumber+\Phi^{F}\cdot\Phi^{F}\nabla\varphi+\varphi_{g}\varphi\nabla\varphi+\varphi\varphi\nabla\varphi_{g}+\varphi_{g}\nabla\bar{\Xi}+\varphi\nabla_{3}\bar{\Xi}+\varphi\varphi_{g}\bar{\Xi}:=\mathcal{F}_{2}.
\end{eqnarray}
Similarly, $\Psi$ represents the Weyl curvature components belonging to the set $(\bar{\alpha},\beta,\bar{\beta},\rho,\sigma)$. $\Phi^{F}$ can contain all the Yang-Mills curvature components $(\alpha^{F},\bar{\alpha}^{F},\rho^{F},\sigma^{F})$. However, we do not have $\hnabla^{2}\alpha^{F}$ term. This is favorable to us since we do not have control of $\hnabla^{2}\alpha^{F}$ on $  \Hbar$. Now we may apply the direct transport inequalities (proposition \ref{transport}) to obtain estimates for $\nabla\Xi$ and $\nabla\bar{\Xi}$. Also notice $\nabla_{4}\Xi$ satisfies equation of the following type 
\begin{eqnarray}
\nabla_{4}\Xi=\varphi \nabla\varphi+\varphi \Psi+\Phi^{F}\hnabla\Phi^{F}+\varphi \Phi^{F}\Phi^{F}+\varphi_{g}\varphi\varphi+\varphi\varphi,
\end{eqnarray}
where $\Psi$ contains the Weyl curvature components $(\alpha,\beta,\bar{\beta},\rho,\sigma)$ i.e., the ones that can be controlled over $H$. Similarly, $\nabla_{3}\bar{\Xi}$ satisfies equation of the following type  
\begin{eqnarray}
\nabla_{3}\bar{\Xi}=\varphi \nabla\varphi+\nabla\varphi_{g}\varphi+\varphi \Psi+\Phi^{F}\hnabla\Phi^{F}+\varphi \Phi^{F}\Phi^{F}+\varphi_{g}\varphi\varphi+\varphi\varphi.
\end{eqnarray}
Once again $\Psi$ appearing in $\nabla_{3}\bar{\Xi}$ equation contains the Weyl curvature components that can be controlled over $  \Hbar$. Remarkably note that equations for $\nabla_{4}\Xi$ and $\nabla_{3}\bar{\Xi}$ do not contain derivatives of the Weyl curvature. As mentioned previously, this is vital to close the regularity. Also $L^{2}(S)$ and $\dot{H}^{1}(S)$ of $\varphi_{g}$ can be controlled only by means of the initial data $\mathcal{O}_{0}$. We utilize the transport inequalities (proposition \ref{transport}) to estimate $\nabla\Xi$ and $\nabla\bar{\Xi}$
\begin{eqnarray}
||\nabla\Xi||_{L^{2}(S_{u,\ubar})}\leq C(\mathcal{O}_{0})\left(||\nabla\Xi||_{L^{2}(S_{u,0})}+\int_{0}^{\ubar}||\nabla_{4}\nabla\Xi||_{L^{2}(S)}d\ubar^{''}\right),\\
||\nabla\bar{\Xi}||_{L^{2}(S_{u,\ubar})}\leq C\left(||\nabla\bar{\Xi}||_{L^{2}(S_{u,0})}+\int_{0}^{\ubar}||\nabla_{3}\nabla\bar{\Xi}||_{L^{2}(S)}d\ubar^{''}\right).
\end{eqnarray}
Therefore we will need to estimate $\mathcal{F}_{1}$ and $\mathcal{F}_{2}$ in $L^{2}(S)$. We first estimate different elements of $\mathcal{F}_{1}$ and $\mathcal{F}_{2}$ (\ref{eq:F1}-\ref{eq:F2}) using estimates derived in lemma \ref{1}-\ref{3}\\ 
\textbf{$\mathcal{F}_{1},\mathcal{F}_{2}$:} 
\begin{eqnarray}
||\varphi\nabla^{2}\varphi||_{L^{2}(S)}\leq ||\varphi||_{L^{\infty}(S}\nonumber||\nabla^{2}\varphi||_{L^{2}(S)}\leq C(\mathcal{O}_{0},\mathcal{W},\mathcal{F},\mathcal{W}(S),\mathcal{F}(S))||\nabla^{2}\varphi||_{L^{2}(S)},\\
||\nabla\varphi\nabla\varphi||_{L^{2}(S)}\leq ||\nabla\varphi||^{2}_{L^{4}(S)}\leq C \nonumber(||\nabla\varphi||^{\frac{1}{2}}_{L^{2}(S)}||\nabla^{2}\varphi||^{\frac{1}{2}}_{L^{2}(S)}+||\nabla\varphi||_{L^{2}})^{2}\leq C||\nabla^{2}\varphi||_{L^{2}(S)},\\
||\varphi \nabla\Psi||_{L^{2}(S)}\leq\nonumber ||\varphi||_{L^{\infty}(S)}||\nabla\Psi||_{L^{2}(S)}\leq C (||\varphi||_{L^{4}(S)}+||\nabla\varphi||_{L^{4}(S)})||\nabla\Psi||_{L^{2}(S)}\\\nonumber 
\leq C(||\nabla\varphi||^{1/2}_{L^{2}(S)}||\nabla^{2}\varphi||^{1/2}_{L^{2}(S)}+||\nabla\varphi||_{L^{2}(S)})||\nabla\Psi||_{L^{2}(S)}\\\nonumber\leq C(\mathcal{O}_{0},\mathcal{W},\mathcal{F})(||\nabla^{2}\varphi||_{L^{2}(S)}+||\nabla\Psi||^{2}_{L^{2}(S)}).
\end{eqnarray}
The remaining terms are estimated as follows
\begin{eqnarray}
||\Psi\nabla\varphi||_{L^{2}(S)}\leq||\Psi||_{L^{4}(S)}||\nabla\varphi||_{L^{4}(S)}\\\nonumber \lesssim ||\Psi||_{L^{4}(S)}(||\nabla\varphi||^{1/2}_{L^{2}(S)}||\nabla^{2}\varphi||^{1/2}_{L^{2}(S)}+||\nabla\varphi||_{L^{2}(S)})\\\nonumber 
\lesssim C(\mathcal{O}_{0},\mathcal{W},\mathcal{F})||\Psi||_{L^{4}(S)}||\nabla^{2}\varphi||^{1/2}_{L^{2}(S)},
||\Phi^{F}\cdot\hnabla^{2}\Phi^{F}||_{L^{2}(S)}\\\nonumber \leq ||\Phi^{F}||_{L^{\infty}(S)}||\hnabla^{2}\Phi^{F}||_{L^{2}(S)}\leq C(\mathcal{F}(S))||\hnabla^{2}\Phi^{F}||_{L^{2}(S)},\\\nonumber ~||\hnabla\Phi^{F}\cdot\hnabla\Phi^{F}||_{L^{2}(S)}\leq ||\hnabla\Phi^{F}||^{2}_{L^{4}(S)}\leq C(\mathcal{F}(S)),||\varphi\nabla\varphi||_{L^{2}(S)}\\\nonumber \leq ||\varphi||_{L^{4}}||\nabla\varphi||_{L^{4}(S)}\lesssim ||\nabla\varphi||_{L^{2}(S)}||\nabla^{2}\varphi||_{L^{2}(S)}\\\nonumber 
\lesssim ||\nabla^{2}\varphi||_{L^{2}(S)},
\end{eqnarray}
and
\begin{eqnarray}
||\varphi\Phi^{F}\cdot\hnabla\Phi^{F}||_{L^{2}(S)}\lesssim ||\varphi||_{L^{4}(S)}||\Phi^{F}||_{L^{\infty}(S)}||\hnabla\Phi^{F}||_{L^{4}(S)}\leq C(\mathcal{F}(S))\\\nonumber (||\varphi||^{1/2}_{L^{2}(S)}||\nabla\varphi||^{1/2}_{L^{2}(S)}+||\varphi||_{L^{2}(S)}),~||\varphi\nabla\varphi||_{L^{2}(S)}\\\nonumber \lesssim ||\varphi||_{L^{4}(S)}||\nabla\varphi||_{L^{4}(S)}\lesssim (||\nabla\varphi||^{1/2}_{L^{2}(S)}||\nabla^{2}\varphi||^{1/2}_{L^{2}(S)}+||\nabla\varphi||_{L^{2}(S)})\\\nonumber 
\lesssim C(\mathcal{O}_{0},\mathcal{W},\mathcal{F})+C(\mathcal{O}_{0},\mathcal{W},\mathcal{F})||\nabla^{2}\varphi||_{L^{2}(S)},
\end{eqnarray}
\begin{eqnarray}
||\varphi_{g}\varphi\nabla\varphi||_{L^{2}(S)}\leq ||\varphi_{g}||_{L^{\infty}(S)}||\varphi||_{L^{4}(S)}||\nabla\varphi||_{L^{4}(S)}\lesssim ||\nabla^{2}\varphi||_{L^{2}(S)},\\\nonumber 
||\varphi\varphi\nabla\varphi_{g}||_{L^{2}(S)}\lesssim ||\varphi||^{2}_{L^{\infty}(S)}||\nabla\varphi_{g}||_{L^{2}(S)}\leq C(\mathcal{O}_{0},\mathcal{W},\mathcal{F},\mathcal{F}(S)),\\\nonumber ||\varphi\nabla\Xi||_{L^{2}(S)}=||\nabla^{2}\varphi+\nabla\Psi||_{L^{2}(S)}\\\nonumber 
\leq ||\nabla^{2}\varphi||_{L^{2}(S)}+||\nabla\Psi||_{L^{2}(S)},~||\varphi_{g}\nabla\Xi||_{L^{2}(S)}\\\nonumber \leq||\varphi_{g}||_{L^{\infty}(S)}||\nabla\Xi||_{L^{2}(S)}\lesssim||\nabla^{2}\varphi||_{L^{2}(S)}+||\nabla\Psi||_{L^{2}(S)}.
\end{eqnarray}
The last type of terms are estimated as follows
\begin{eqnarray}
||\varphi\nabla_{4}\Xi||_{L^{2}(S)}=||\varphi(\varphi \nabla\varphi+\varphi \Psi+\Phi^{F}\hnabla\Phi^{F}+\varphi \Phi^{F}\Phi^{F}+\varphi_{g}\varphi\varphi+\varphi\varphi)||_{L^{2}(S)}\\\nonumber \lesssim ||\varphi||_{L^{\infty}}||\varphi||_{L^{4}(S)}||\nabla\varphi||_{L^{4}(S)}\\\nonumber 
+||\varphi||_{L^{\infty}(S)}||\varphi||_{L^{4}(S)}||\hnabla\Phi^{F}||_{L^{4}(S)}+||\varphi||^{2}_{L^{4}(S)}||\Phi^{F}||^{2}_{L^{\infty}(S)}\\\nonumber +||\varphi_{g}||_{L^{\infty}(S)}||\varphi||^{2}_{L^{4}(S)}||\varphi||_{L^{\infty}(S)}+||\varphi||_{L^{\infty}}||\varphi||_{L^{4}(S)}||\Psi||_{L^{4}(S)}\\\nonumber +||\varphi||_{L^{\infty}}||\varphi||^{2}_{L^{4}(S)}
\leq C(\mathcal{O}_{0},\mathcal{W},\mathcal{F},\mathcal{F}(S))+C(\mathcal{O}_{0},\mathcal{W}(S),\mathcal{F},\mathcal{F}(S))||\nabla^{2}\varphi||_{L^{2}(S)},\\\nonumber 
||\varphi\nabla_{3}\bar{\Xi}||_{L^{2}(S)}\lesssim C(\mathcal{O}_{0},\mathcal{W},\mathcal{F},\mathcal{F}(S))+C(\mathcal{O}_{0},\mathcal{W}(S),\mathcal{F},\mathcal{F}(S))||\nabla^{2}\varphi||_{L^{2}(S)},\\\nonumber
||\varphi\varphi_{g}\Xi||_{L^{2}(S)}\lesssim C||\nabla^{2}\varphi||_{L^{2}(S)},~||\varphi\varphi_{g}\bar{\Xi}||_{L^{2}(S)}\lesssim C||\nabla^{2}\varphi||_{L^{2}(S)},
\end{eqnarray}
where we extensively used the inequalities (\ref{eq:inequal1}-\ref{eq:gagliardo2}). Notice that we once again proceed in a hierarchical fashion i.e., start $(\mu,\eta)$ and then use that result to obtain estimates for $(\bar{\xi},\omegabar)$ and so on. This is one of the the main reasons why $\mathcal{F}_{1}$ and $\mathcal{F}_{2}$ are estimated by means of $C(\mathcal{O}_{0},\mathcal{W},\mathcal{F},\mathcal{W}(S),\mathcal{F}(S))$ and $||\nabla^{2}\varphi||_{L^{2}(S)}$ (linearly in the latter). Collecting all the terms together, we obtain the following two inequalities satisfied by $\nabla\Xi$ and $\nabla\bar{\Xi}$ 
\begin{eqnarray}
\label{eq:gronwal1}
||\nabla\Xi||_{L^{2}(S)}\leq C(\mathcal{O}_{0},\mathcal{W},\mathcal{F},\mathcal{W}(S),\mathcal{F}(S))(1+\int_{0}^{\ubar}||\nabla^{2}\varphi||_{L^{2}(S)}d\ubar^{'}),\\
\label{eq:gronwal2}
||\nabla\bar{\Xi}||_{L^{2}(S)}\leq C(\mathcal{O}_{0},\mathcal{W},\mathcal{F},\mathcal{W}(S),\mathcal{F}(S))(1+\int_{0}^{u}||\nabla^{2}\varphi||_{L^{2}(S)}du^{'})
\end{eqnarray}
Now we use the elliptic estimates resulting from the Hodge system. 
After an application of $\mathcal{D}^{*}$, the Hodge system (\ref{eq:hodge2}-\ref{eq:hodge4}) reduces to the following second order elliptic equation 
\begin{eqnarray}
\mathcal{D}^{*}\mathcal{D}\varphi=K\varphi+\nabla(\Xi/\bar{\Xi})+\nabla\Psi+\varphi\nabla\varphi+\nabla\varphi+\Phi^{F}\cdot\hnabla\Phi^{F}
\end{eqnarray}
which yields an estimate of the type 
\begin{eqnarray}
||\nabla^{2}\varphi||_{L^{2}(S)}\lesssim ||K\varphi||_{L^{2}(S)}+||\nabla (\Xi/\bar{\Xi})||_{L^{2}(S)}\nonumber+||\nabla\Psi||_{L^{2}(S)}+||\varphi\nabla\varphi||_{L^{2}(S)}\\\nonumber +||\nabla\varphi||_{L^{2}(S)}+||\Phi^{F}\cdot\hnabla\Phi^{F}||_{L^{2}(S)}\\
\label{eq:elliptic}
\lesssim C(\mathcal{O}_{0},\mathcal{W},\mathcal{F},\mathcal{W}(S),\mathcal{F}(S))+||\nabla (\Xi/\bar{\Xi})||_{L^{2}(S)}+||\nabla\Psi||_{L^{2}(S)}
\end{eqnarray}
substitution of which in the previous inequalities (\ref{eq:gronwal1}-\ref{eq:gronwal2}) and an application of Gr\"onwall's inequality yields 
\begin{eqnarray}
\label{eq:L2}
||\nabla\Xi||_{L^{2}(S)}\leq C(\mathcal{O}_{0},\mathcal{W},\mathcal{F},\mathcal{W}(S),\mathcal{F}(S)),~||\nabla\bar{\Xi}||_{L^{2}(S)}\leq C(\mathcal{O}_{0},\mathcal{W},\mathcal{F},\mathcal{W}(S),\mathcal{F}(S)).
\end{eqnarray}
Substitution of (\ref{eq:L2}) into the elliptic estimate (\ref{eq:elliptic}) yields 
\begin{eqnarray}
\label{eq:elliptic4}
||\nabla^{2}\varphi||_{L^{2}(S)}\lesssim C(\mathcal{O}_{0},\mathcal{W},\mathcal{F},\mathcal{W}(S),\mathcal{F}(S))+||\nabla\Psi||_{L^{2}(S)},
\end{eqnarray}
for $\varphi\in (\widehat{\chi},\chibarhat, \tr\chi, \tr\chibar,\eta,\etabar,\omega,\omegabar)$ and the most important point is that $\Psi$ in the elliptic estimate (\ref{eq:elliptic4}) does not contain $\alpha,\bar{\alpha}$. Therefore integrating over $H$ and $  \Hbar$, we obtain 
\begin{eqnarray}
||\nabla^{2}\varphi||_{L^{2}(H)},||\nabla^{2}\varphi||_{L^{2}(\Hbar)}\leq C(\mathcal{O}_{0},\mathcal{W},\mathcal{F},\mathcal{W}(S),\mathcal{F}(S)).
\end{eqnarray}
This concludes the proof of the lemma.
\end{proof}
\begin{corollary}
\label{C2}
\textit{$||\nabla\varphi_{g}||_{L^{4}(S)}\leq C(\mathcal{O}_{0}),~||\nabla\varphi_{b}||_{L^{4}(S)}\leq C(\mathcal{O}_{0},\mathcal{W},\mathcal{F},\mathcal{W}(S),\mathcal{F}(S))$.}
\end{corollary}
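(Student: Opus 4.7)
The plan is to apply the codimension-$1$ trace inequality in $L^{4}(S)$ (the last proposition of Section 5) to the object $\nabla\varphi$, choosing the integration direction so as to exploit $\epsilon$-smallness when possible. For the good coefficients $\varphi_{g}\in\{\widehat{\chi},\chibarhat,\tr\chi,\tr\chibar,\etabar,\omega\}$, which satisfy $\nabla_{3}$ transport equations, the natural choice is integration along $\Hbar$:
$$||\nabla\varphi_{g}||_{L^{4}(S_{u,\ubar})} \leq C\left(||\nabla\varphi_{g}||_{L^{4}(S_{0,\ubar})} + ||\nabla\varphi_{g}||_{L^{2}(\Hbar)}^{1/2}\,||\nabla_{3}\nabla\varphi_{g}||_{L^{2}(\Hbar)}^{1/4}\left(||\nabla\varphi_{g}||_{L^{2}(\Hbar)}+||\nabla^{2}\varphi_{g}||_{L^{2}(\Hbar)}\right)^{1/4}\right).$$
The initial-sphere term is bounded by $\mathcal{O}_{0}$. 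Since $u\in[0,\epsilon]$ and Lemma \ref{3} (sharpened by Lemma \ref{lemma5} so as to drop its dependence on $||\nabla^{2}\varphi||_{L^{2}(H,\Hbar)}$) gives $||\nabla\varphi_{g}||_{L^{2}(S)}\leq C\mathcal{O}_{0}$, the $\Hbar$-integration produces an $\epsilon^{1/2}$-smallness: $||\nabla\varphi_{g}||_{L^{2}(\Hbar)}\leq\epsilon^{1/2}C\mathcal{O}_{0}$. The term $||\nabla^{2}\varphi_{g}||_{L^{2}(\Hbar)}$ is supplied by Lemma \ref{lemma5}. To bound $||\nabla_{3}\nabla\varphi_{g}||_{L^{2}(\Hbar)}$, I would commute $\nabla$ with the $\nabla_{3}$ equation for $\varphi_{g}$; the resulting RHS has the schematic form $\nabla\Psi+\hnabla\Phi^{F}+\varphi\,\nabla\varphi+\text{l.o.t.}$, whose $L^{2}(\Hbar)$ norm is controlled by $\mathcal{W}$, $\mathcal{F}$, $\mathcal{W}(S)$, $\mathcal{F}(S)$ together with the estimates of Lemmas \ref{1}--\ref{lemma5}. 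The $\epsilon^{1/4}$ factor inherited from $||\nabla\varphi_{g}||_{L^{2}(\Hbar)}^{1/2}$ then absorbs all large-constant contributions, yielding $||\nabla\varphi_{g}||_{L^{4}(S)}\leq C\mathcal{O}_{0}$ after choosing $\epsilon$ sufficiently small.

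For $\varphi_{b}\in\{\eta,\omegabar\}$, which satisfy $\nabla_{4}$ equations, I would instead apply the analogous $H$-version of the trace inequality. No $\epsilon$-smallness is available now since the integration variable $\ubar$ ranges over $[0,J]$, but the allowed set of constants on the right-hand side is correspondingly larger: the norms $||\nabla\varphi_{b}||_{L^{2}(H)}$, $||\nabla^{2}\varphi_{b}||_{L^{2}(H)}$, and $||\nabla_{4}\nabla\varphi_{b}||_{L^{2}(H)}$ are all controlled by $\mathcal{O}_{0}$, $\mathcal{W}$, $\mathcal{F}$, $\mathcal{W}(S)$, $\mathcal{F}(S)$ through Lemmas \ref{3}, \ref{lemma5}, and a $\nabla$-commuted $\nabla_{4}$ equation, respectively. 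This gives the stated bound $||\nabla\varphi_{b}||_{L^{4}(S)}\leq C(\mathcal{O}_{0},\mathcal{W},\mathcal{F},\mathcal{W}(S),\mathcal{F}(S))$.

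The main obstacle is the precise control of the curvature-derivative sources $\nabla\Psi$ and $\hnabla\Phi^{F}$ arising from the $\nabla$-commutators: one must verify that only the \emph{allowed} top-order curvature components enter, namely $\bar{\alpha}$ must not appear at top order in the $\Hbar$-analysis used for $\varphi_{g}$, and $\alpha$ must not appear at top order in the $H$-analysis used for $\varphi_{b}$. This exclusion is the same null-structure feature already exploited in the equations (\ref{eq:F1})--(\ref{eq:F2}) of Lemma \ref{lemma5}, and it is precisely what makes the trace-inequality approach succeed; in contrast, a naive application of Gagliardo--Nirenberg (\ref{eq:gagliardo1}) together with the pointwise-in-$S$ elliptic estimate (\ref{eq:elliptic4}) fails because the latter leaves $||\nabla\Psi||_{L^{2}(S)}$ uncontrolled by the named norms.
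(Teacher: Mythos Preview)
Your proposal is correct and follows essentially the same route as the paper's proof: both apply the codimension-$1$ trace inequality to $\nabla\varphi$, integrating along $\Hbar$ for $\varphi_{g}$ to gain the $\epsilon^{1/4}$ factor from $||\nabla\varphi_{g}||_{L^{2}(\Hbar)}^{1/2}$, and along $H$ for $\varphi_{b}$, with $||\nabla_{3}\nabla\varphi_{g}||_{L^{2}(\Hbar)}$ and $||\nabla^{2}\varphi||_{L^{2}(H,\Hbar)}$ supplied by the commuted transport equations and Lemma~\ref{lemma5}. One minor point: for $\varphi_{g}\in\{\tr\chi,\widehat{\chi}\}$ the commuted $\nabla_{3}$-equation also produces a $\nabla^{2}\eta$ term (not just $\nabla\Psi$, $\hnabla\Phi^{F}$, $\varphi\nabla\varphi$), but this is again handled in $L^{2}(\Hbar)$ by Lemma~\ref{lemma5}, which you already invoke.
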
 
\begin{proof} The proof relies on the co-dimension 1 trace inequalities 
\begin{eqnarray}
||\nabla\varphi||_{L^{4}(S_{u,\ubar})}\leq C\left(||\nabla\varphi||_{L^{4}(S_{u,0})}+||\nabla\varphi||^{1/2}_{L^{2}(H)}||\nabla_{4}\nabla\varphi||^{1/4}_{L^{2}(H)}(||\nabla\varphi||_{L^{2}(H)}+||\nabla\nabla\varphi||_{L^{2}(H)})^{1/4}\right),\\
||\nabla\varphi||_{L^{4}(S_{u,\ubar})}\leq C\left(||\nabla\varphi||_{L^{4}(S_{0,\ubar})}+||\nabla\varphi||^{1/2}_{L^{2}(  \Hbar)}||\hnabla_{3}\nabla\varphi||^{1/4}_{L^{2}(  \Hbar)}(||\nabla\varphi||_{L^{2}(  \Hbar)}+||\nabla\nabla\varphi||_{L^{2}(  \Hbar)})^{1/4}\right),
\end{eqnarray}
the null evolution equations and lemma (4) and (5). For the good connection coefficients $\varphi_{g}$ i.e., the ones satisfying $\nabla_{3}$ equations, we easily observe the following using lemma (4)
\begin{eqnarray}
||\nabla\varphi_{g}||^{2}_{L^{2}(  \Hbar)}=\int_{0}^{u}||\nabla\varphi_{g}||^{2}_{L^{2}(S)}du^{'}\leq \epsilon C(\mathcal{O}_{0}).
\end{eqnarray}
In addition using the commuted null transport equation, $||\hnabla_{3}\nabla\varphi_{g}||_{L^{2}(  \Hbar}$ is controlled by $||\nabla^{2}\varphi_{g}||_{L^{2}(  \Hbar)}$ which in turn is controlled by $\mathcal{C}(\mathcal{O}_{0},\mathcal{W},\mathcal{F},\mathcal{W}(S),\mathcal{F}(S))$ by lemma (5). For example, if we look at the commuted $\nabla_{3}$ equation for $\widehat{\chi}$
\begin{eqnarray}
\nabla_{3}\nabla\widehat{\chi}\sim-\frac{1}{2}\nabla( \tr\chibar)\widehat{\chi}-\frac{1}{2} \tr\chibar\nabla\widehat{\chi}+\nabla^{2}\eta+2\nabla(\omegabar\widehat{\chi})-\frac{1}{2}\nabla( \tr\chi\chibarhat)\\\nonumber +\eta\nabla\eta+\nabla(\rho^{F}\cdot\rho^{F}\nonumber+\alpha^{F}\cdot \bar{\alpha}^{F}+\sigma^{F}\cdot\sigma^{F})\\\nonumber 
+(\Bar{\beta}+\Bar{\alpha}^{F}\cdot(\rho^{F}+\sigma^{F}))\widehat{\chi}+(\eta+\etabar)\nabla_{3}\widehat{\chi}-\Bar{\chi}\nabla\widehat{\chi}+\Bar{\chi}\eta\widehat{\chi},
\end{eqnarray}
we observe every term at the right-hand side can be estimated in $L^{2}(  \Hbar)$ by $C(\mathcal{O}_{0},\mathcal{W},\mathcal{F},\mathcal{W}(S),\mathcal{F}(S))$. Similar results hold for other good connection coefficients. Therefore we gain an overall factor of $\epsilon^{\frac{1}{4}}$ arising from the integration over $  \Hbar$ i.e., 
\begin{eqnarray}
||\nabla\varphi_{g}||_{L^{4}(S_{u,\ubar})}\leq C(\mathcal{O}_{0})+\epsilon^{\frac{1}{4}}\mathcal{C}(\mathcal{O}_{0},\mathcal{W},\mathcal{F},\mathcal{W}(S),\mathcal{F}(S))
\end{eqnarray}
which for sufficiently small $\epsilon$ yields 
\begin{eqnarray}
||\nabla\varphi_{g}||_{L^{4}(S_{u,\ubar})}\leq C(\mathcal{O}_{0}).
\end{eqnarray}
Now for the bad connection coefficients $\varphi_{g}$ that is the ones satisfying $\nabla_{4}$ equations, we would not have $||\nabla\varphi_{b}||_{L^{4}(S)}$ determined solely in terms of the initial data rather by\\ $C(\mathcal{O}_{0},\mathcal{W},\mathcal{F},\mathcal{W}(S),\mathcal{F}(S))$. This is because we do not gain a factor $\epsilon$ from the integral over $H$. Therefore putting everything together, using lemma (4) and (5) along with the $\nabla$ commuted $\nabla_{4}$ equations for $\varphi_{g}$, we obtain 
\begin{eqnarray}
||\nabla\varphi_{b}||_{L^{4}(S_{u,\ubar})}\leq C(\mathcal{O}_{0},\mathcal{W},\mathcal{F},\mathcal{W}(S),\mathcal{F}(S)).
\end{eqnarray}
This completes the proof. 
\end{proof}

\noindent In the spirit of the previous lemma, we will prove a similar estimate for mixed derivatives of connection coefficients that can not be estimated directly using their evolution equations. We need to set up a transport-div-curl system. More precisely, we need to utilize transport inequalities (proposition \ref{transport}) together with the elliptic estimates. We do so in the following lemma.\\
\begin{lemma}
\label{5}
\textit{Let $\varphi_{g}:=( \tr\chibar,\chibarhat,\etabar,\omega, \tr\chi,\widehat{\chi})$ and $\varphi_{b}:=(\eta,\omegabar)$ and $\mathcal{W},\mathcal{W}(S),\mathcal{F},\mathcal{F}(S)<\infty$. Then $||\nabla\nabla_{3}\eta||_{L^{2}(H)},||\nabla\nabla_{3}\omegabar||_{L^{2}(H)},~||\nabla\nabla_{4}\etabar||_{L^{2}(  \Hbar)},||\nabla\nabla_{4}\omega||_{L^{2}(  \Hbar)}\leq C(\mathcal{O}_{0},\mathcal{W},\mathcal{F},\mathcal{W}(S),\mathcal{F}(S))$}
\end{lemma}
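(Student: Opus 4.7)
The plan is to exploit the $(e_{3},e_{4})\leftrightarrow(e_{4},e_{3})$ symmetry of the null framework so that the four bounds reduce to the two representative cases $\nabla\nabla_{3}\eta\in L^{2}(H)$ and $\nabla\nabla_{3}\omegabar\in L^{2}(H)$; the mirror estimates for $\nabla\nabla_{4}\etabar$ and $\nabla\nabla_{4}\omega$ on $\Hbar$ follow by the same argument with the two null directions interchanged. The two cases require different mechanisms: $\nabla\nabla_{3}\eta$ admits a purely algebraic treatment via the identity $\eta+\etabar=2\nabla\ln\Omega$, while $\nabla\nabla_{3}\omegabar$ forces an honest commutation of the $\nabla_{4}\omegabar$ structure equation with $\nabla_{3}$, together with a transport--Hodge step in the spirit of Lemma \ref{lemma5}.

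For the first case I would write $\nabla_{3}\eta=-\nabla_{3}\etabar+2\nabla_{3}\nabla\ln\Omega$, substitute the $\nabla_{3}\etabar$ null structure equation, and evaluate $\nabla_{3}\nabla\ln\Omega$ from $\nabla_{3}\ln\Omega=-2\omegabar$ using the scalar commutator $[\nabla_{3},\nabla_{A}]f=(\eta+\etabar)_{A}\nabla_{3}f-\chibar_{AB}\nabla^{B}f$ recorded earlier, producing
\begin{equation*}
\nabla_{3}\eta=\chibar\cdot(\etabar-\eta)-\bar\beta-\tfrac{1}{2}\mathfrak{T}_{\cdot 3}-4\nabla\omegabar-4\omegabar(\eta+\etabar)-2\chibar\cdot\nabla\ln\Omega.
\end{equation*}
Applying $\nabla$ and taking $L^{2}(H)$ norms, every resulting term is already controlled by previous results: $\nabla\bar\beta\in L^{2}(H)$ is bounded by $\mathcal{W}$, $\nabla^{2}\omegabar\in L^{2}(H)$ is bounded by Lemma \ref{lemma5}, while the remaining pieces carry at most one angular derivative of a connection coefficient or Yang-Mills curvature component and are handled by Lemmas \ref{1}--\ref{3}, Corollaries \ref{C1}--\ref{C2}, and the norms $\mathcal{F},\mathcal{F}(S)$. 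Summation gives the desired bound.

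For the second case the scalar commutator $[\nabla_{3},\nabla_{4}]f=2\omegabar\nabla_{4}f-2\omega\nabla_{3}f+2(\eta-\etabar)\cdot\nabla f$ applied to $\omegabar$ produces a $\nabla_{4}$-transport equation for $\nabla_{3}\omegabar$ whose only curvature-derivative source is $\tfrac{1}{2}\nabla_{3}\rho$, which the $\nabla_{3}$-Bianchi identity rewrites as $-\tfrac{1}{2}\text{div}\bar\beta$ modulo controllable remainders. Commuting once more with $\nabla$ then leaves the genuinely top-order source $-\tfrac{1}{2}\nabla\text{div}\bar\beta$, a second angular derivative of Weyl curvature not contained in $\mathcal{W}$. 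To circumvent this I would apply $\nabla_{3}$ to the Hodge system ${}^{*}\mathcal{D}_{1}\langle\omegabar\rangle=\bar\xi+\tfrac{1}{2}\bar\beta$ of Lemma \ref{lemma5}, obtaining
\begin{equation*}
{}^{*}\mathcal{D}_{1}\langle\nabla_{3}\omegabar,\nabla_{3}\omegabar^{\dagger}\rangle=\nabla_{3}\bar\xi+\tfrac{1}{2}\nabla_{3}\bar\beta+[\nabla_{3},{}^{*}\mathcal{D}_{1}]\langle\omegabar\rangle,
\end{equation*}
and invoking the standard $L^{2}(S)$ elliptic estimate for this first-order Hodge operator, available thanks to the control on the Gauss curvature of $S_{u,\ubar}$ from Corollary \ref{C1}. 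This reduces the task to bounding $\|\nabla_{3}\bar\xi\|_{L^{2}(H)}$, $\|\nabla_{3}\bar\beta\|_{L^{2}(H)}$, and the commutator remainder; the first is manageable because the $\nabla_{4}$-Bianchi equation for $\bar\beta$ converts top-order angular derivatives of Weyl curvature into $\nabla\rho,\nabla\sigma\in\mathcal{W}$. The hard part will be $\|\nabla_{3}\bar\beta\|_{L^{2}(H)}$: the $\nabla_{3}$-Bianchi identity rewrites it as $\text{div}\bar\alpha$ plus controlled terms, but $\bar\alpha$ is only estimable on $\Hbar$, not on $H$, so the missing contribution must be absorbed by a Bianchi-pair integration by parts between $\nabla_{4}\bar\beta$ on $H$ and $\nabla_{3}\bar\alpha$ on $\Hbar$, in the same spirit as the $(\bar\alpha,\bar\beta)$ treatment used later in the paper. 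A Gr\"onwall step in $\ubar$ then closes the estimate, and the case $\nabla\nabla_{4}\omega\in L^{2}(\Hbar)$ follows by the symmetric argument on the $(\alpha,\beta)$ Bianchi pair.
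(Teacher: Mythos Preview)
Your treatment of $\nabla\nabla_{3}\eta$ is correct and takes a genuinely different, shorter route than the paper. The paper does not use the identity $\eta+\etabar=2\nabla\ln\Omega$; instead it commutes the transport equation $\nabla_{4}\mu=\dots$ (with $\mu=-\mathrm{div}\,\eta-\rho$) by $\nabla_{3}$, and the heart of that argument is a delicate \emph{pointwise} cancellation between two Yang--Mills source terms (denoted $IIC1$ and $IIC2$ in the proof) which together would otherwise produce $\hnabla^{2}\bar{\alpha}^{F}$, uncontrollable on $H$. Only after that cancellation can the transport inequality and Gr\"onwall be run, and then the $\nabla_{3}$--differentiated Hodge system $\mathrm{div}(\nabla_{3}\eta)=-\nabla_{3}\mu-\nabla_{3}\rho+[\mathrm{div},\nabla_{3}]\eta$ recovers $\nabla\nabla_{3}\eta$. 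Your algebraic reduction bypasses this machinery entirely: once $\nabla_{3}\eta$ is rewritten in terms of $\nabla_{3}\etabar$, $\nabla\omegabar$ and lower order pieces, a single angular differentiation and Lemma~\ref{lemma5} for $\|\nabla^{2}\omegabar\|_{L^{2}(H)}$ close the estimate. What your approach loses is the uniformity of method---the paper's transport--Hodge scheme is meant to apply to all four quantities simultaneously---but what it gains for $\eta$ specifically is that no Yang--Mills cancellation needs to be checked.

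For $\nabla\nabla_{3}\omegabar$ there is a gap. You correctly identify that applying $\nabla_{3}$ to the Hodge system ${}^{*}\mathcal{D}_{1}\langle\omegabar\rangle=\bar{\xi}+\tfrac{1}{2}\bar{\beta}$ forces the source $\tfrac{1}{2}\nabla_{3}\bar{\beta}$, and that the Bianchi identity $\nabla_{3}\bar{\beta}=-\mathrm{div}\,\bar{\alpha}+\dots$ then introduces $\|\nabla\bar{\alpha}\|_{L^{2}(H)}$, which is absent from $\mathcal{W}$. However, your proposed cure---``a Bianchi--pair integration by parts between $\nabla_{4}\bar{\beta}$ on $H$ and $\nabla_{3}\bar{\alpha}$ on $\Hbar$''---is not a mechanism that can be inserted here: the energy identities of Section~7 exchange a flux on $H$ for a flux on $\Hbar$ plus a spacetime bulk term, but they give no control on the \emph{pointwise} $L^{2}(S_{u,\ubar})$ quantity that feeds the elliptic estimate, and there is no Gr\"onwall--in--$\ubar$ step that can absorb a contribution of size $\|\nabla\bar{\alpha}\|_{L^{2}(H_{u})}$ with a coefficient independent of $\mathcal{W}$. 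The paper's own proof is cursory at exactly this point (it only says ``Proceeding in an exactly similar way we obtain\dots''), and the same $\mathrm{div}\,\bar{\alpha}$ term in fact reappears in the $\nabla_{4}\nabla_{3}\bar{\xi}$ transport source through $\nabla_{3}(\varphi\bar{\beta})$; so the paper does not spell out how this obstruction is handled either. Your proposal, however, does not resolve it.
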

\begin{proof} The proof follows in a similar fashion as that of \ref{lemma5}. Given the estimates obtained in the previous lemmas, we will only prove it for one connection coefficient $\eta$. The remaining estimates can be obtained in an exactly similar fashion. Let us recall the Hodge transport system for the pair $(\eta,\mu)$ (schematically)
\begin{eqnarray}
\text{div}\eta=-\mu-\rho,~\text{curl} \eta=\chibarhat\wedge\widehat{\chi}+\sigma
\end{eqnarray}
\begin{eqnarray}
\nabla_{4}\mu=-\text{div}(-\chi\cdot(\eta-\etabar))-\text{div}(\frac{1}{2}\alpha^{F}(\rho^{F}-\sigma^{F}))
-(-\frac{3}{2} \tr\chi\rho\nonumber-\frac{1}{2}\chibarhat\cdot\alpha +\zeta\cdot\beta+2\etabar\cdot\beta\\\nonumber-\frac{1}{2}(\alpha^{F}\cdot\hnabla_{3}\alpha^{F}-\omegabar|\alpha^{F}|^{2}+4\eta\alpha^{F}\cdot(\rho^{F}-\sigma^{F})
-\rho^{F}\cdot\hnabla_{4}\rho^{F}+\sigma^{F}\cdot\hnabla_{4}\sigma^{F}-2\etabar\alpha^{F}\cdot(\rho^{F}+\sigma^{F}))\\\nonumber 
+((\alpha^{F}\cdot(\rho^{F}-\sigma^{F})+\alpha^{F}))\eta+(\eta+\etabar)(-\chi\cdot(\eta-\etabar)-\frac{1}{2}\alpha^{F}\cdot(\rho^{F}-\sigma^{F}))-\etabar\beta-\chi\nabla\eta+\chi\etabar\eta.
\end{eqnarray}
If we commute the transport equation for $\mu$ with $\nabla_{3}$ we obtain an equation of the following type (we keep the potentially dangerous terms in exact form and write the harmless terms in a schematic way)
\begin{eqnarray}
\nonumber \nabla_{4}\nabla_{3}\mu\sim\chi\nabla\nabla_{3}\eta+\underbrace{\nabla\chi\nabla_{3}\eta}_{IIA}+\widehat{\text{div}}(\hnabla_{3}\alpha^{F})\cdot(\rho^{F}-\sigma^{F})-\underbrace{\frac{1}{2}\alpha^{F}\cdot(\hnabla\hnabla_{3}\rho^{F}-~^{*}\hnabla\hnabla_{3}\sigma^{F})}_{IIC1}\\\nonumber 
+\widehat{\text{div}}\alpha^{F}\cdot\hnabla_{3}(\rho^{F}-\sigma^{F})+\hnabla_{3}\alpha^{F}\cdot \widehat{\text{div}}(\rho^{F}-\sigma^{F})+\tr\chi\nabla_{3}\rho+\rho\nabla_{3} \tr\chi+\nabla_{3}\chibarhat\alpha+\chibarhat\nabla_{3}\alpha\\\nonumber 
+\underbrace{\beta\nabla_{3}\zeta}_{IIB}+\zeta\nabla_{3}\beta+\nabla_{3}\etabar\beta+\etabar\nabla_{3}\beta+\hnabla_{3}\alpha^{F}\cdot\hnabla_{3}\alpha^{F}-\frac{1}{2}\underbrace{\alpha^{F}\cdot\hnabla^{2}_{3}\alpha^{F}}_{IIC2}+\underbrace{\nabla_{3}\omegabar|\alpha^{F}|^{2}}+\omegabar\alpha^{F}\cdot\hnabla_{3}\alpha^{F}\\\nonumber 
+\underbrace{\nabla_{3}\eta \alpha^{F}\cdot(\rho^{F}}_{IID}-\sigma^{F})+\eta \hnabla_{3}(\alpha^{F})\cdot(\rho^{F}-\sigma^{F})+\eta\alpha^{F}\cdot\hnabla_{3}(\rho^{F}-\sigma^{F})+\hnabla_{3}\rho^{F}\cdot\hnabla_{4}\rho^{F}+\underbrace{\rho^{F}\cdot\hnabla_{3}\hnabla_{4}\rho^{F}}_{IIE}\\\nonumber 
+\hnabla_{3}\sigma^{F}\cdot\hnabla_{4}\sigma^{F}+\underbrace{\sigma^{F}\cdot\hnabla_{3}\hnabla_{4}\sigma^{F}}_{IIF}+\nabla_{3}\etabar \alpha^{F}\cdot(\rho^{F}+\sigma^{F})+\etabar \hnabla_{3}(\alpha^{F})\cdot(\rho^{F}+\sigma^{F})+\etabar\alpha^{F}\cdot\hnabla_{3}(\rho^{F}+\sigma^{F})\\\nonumber 
+\underbrace{\chi\eta\nabla_{3}\eta}_{IIG}+\chi\etabar\nabla_{3}\etabar+(|\eta|^{2}+|\etabar|^{2})\nabla_{3}\chi+\eta\etabar\nabla_{3}\chi+\chi\eta\nabla_{3}\etabar+\underbrace{\chi\etabar\nabla_{3}\eta}_{IIH}+\underbrace{\nabla_{3}\chi\nabla\eta}_{IIK}\\\nonumber 
+\omega\nabla_{3}\mu+\omegabar\nabla_{4}\mu+(\eta-\etabar)\nabla\mu.
\end{eqnarray}
Let us identify the terms that do need care. The term $IIC1$ is extremely dangerous since $\hnabla\hnabla_{3}\rho^{F}$ and $\hnabla\hnabla_{3}\sigma^{F}$ contain terms of the type $\hnabla^{2}\bar{\alpha}^{F}$ which can not be controlled on $H$. Now, the previous equation is a $\nabla_{4}$ transport equation for $\nabla_{3}\mu$ and therefore after using the transport inequality (proposition \ref{transport}), $|\hnabla^{2}\bar{\alpha}^{F}|^{2}$ gets integrated over $H$ which is absolutely not under control. However, we also do have the term $IIC2$ which contains terms that cancel the dangerous terms of $IIC1$ in a point-wise manner thereby allowing us to close the argument. This is an extremely important point to note about the special structure of the Einstein-Yang-Mills equations. Without this cancellation, we would not have an obstruction to a potential blow up of bad norms (that are not under control) in finite time. We first show this cancellation. Write down the expression for $\hnabla^{2}_{3}\alpha^{F}$ using the $\hnabla_{3}$ transport equation for $\alpha^{F}$ (once again we write down the most important term exactly and the remaining terms are written in an schematic way)
\begin{eqnarray}
\hnabla^{2}_{3}\alpha^{F}\sim \nabla_{3} \tr\chibar\alpha^{F}+ \tr\chibar\hnabla_{3}\alpha^{F}-\hnabla(-\widehat{\text{{div}}} \bar{\alpha}^{F}+ \tr\chibar\rho^{F}+(\eta-\etabar)\cdot\bar{\alpha}^{F})\\\nonumber +~^{*}\hnabla(-\widehat{\text{{curl}}} \bar{\alpha}^{F}- \tr\chibar\sigma^{F}\nonumber+(\eta-\etabar)\cdot~^{*}\bar{\alpha}^{F})\\\nonumber 
+\nabla_{3}\eta\sigma^{F}+\eta\nabla_{3}\sigma^{F}+\nabla_{3}\eta\rho^{F}+\eta\hnabla_{3}\rho^{F}+ 
\nabla_{3}\omegabar\alpha^{F}+\omegabar\hnabla_{3}\alpha^{F}\\\nonumber -(\nabla_{3}\widehat{\chi})\bar{\alpha}^{F}+\widehat{\chi}\hnabla_{3}\bar{\alpha}^{F}+(\bar{\beta}+\bar{\alpha}^{F}+\bar{\alpha}(\rho^{F}-\sigma^{F}))\\\nonumber 
(\rho^{F}+\sigma^{F})+(\eta+\etabar)\hnabla_{3}(\rho^{F}+\sigma^{F})
+\bar{\chi}\hnabla\rho^{F}+\bar{\chi}\eta(\rho^{F}+\sigma^{F}).
\end{eqnarray}

\begin{eqnarray}
IIC1+IIC2=-\frac{1}{2}\alpha^{F}\cdot(\hnabla\hnabla_{3}\rho^{F}-~^{*}\hnabla\hnabla_{3}\sigma^{F})-\frac{1}{2}\alpha^{F}\cdot\hnabla^{2}_{3}\alpha^{F}\\\nonumber 
=-\frac{1}{2}\alpha^{F}\cdot\hnabla(-\widehat{\text{div}} \bar{\alpha}^{F})+\frac{1}{2}\alpha^{F}\cdot \hnabla(-\widehat{\text{{curl}}} \bar{\alpha}^{F})-\frac{1}{2}\alpha^{F}\cdot\hnabla(\widehat{\text{div}}\bar{\alpha}^{F})+\frac{1}{2}\alpha^{F}\cdot~^{*}\hnabla\widehat{\text{curl}}\bar{\alpha}^{F}\\\nonumber +\widehat{\chi}\alpha^{F}\cdot\hnabla_{3}\bar{\alpha}^{F}
+\nabla_{3}\omegabar\alpha^{F}\cdot\alpha^{F}+III
=\widehat{\chi}\alpha^{F}\cdot\hnabla_{3}\bar{\alpha}^{F}+\nabla_{3}\omegabar\alpha^{F}\cdot \alpha^{F}+III,
\end{eqnarray}
where $III$ denotes the collection of terms $L^{2}(S)$ norm of which can be easily controlled by the available estimates. $\nabla_{3}\omegabar$ can be estimated by commuting $\nabla_{4}$ transport equation of $\omegabar$ with $hnabla_{3}$ and the direct use of transport inequalities (proposition \ref{transport}). Using the $\hnabla_{3}$ evolution equation for $\rho^{F}$, we can easily estimate the terms $\eta\hnabla_{3}\rho^{F}$ and therefore both $\eta\hnabla_{3}\rho^{F}$ and $ 
\nabla_{3}\omegabar\alpha^{F}$ are under control. Similar to $\nabla_{3}\omegabar$, there are other entities that arise which can not be reduced by means of transport equations. This of course includes $\nabla_{3}\eta$. Therefore let us obtain estimates for $\nabla_{3}\eta$ and the estimates for similar entities such as $\nabla_{3}\omegabar$ will follow in the exact same way. First commute the transport equation for $\eta$ with $\nabla_{3}$ to yield (schematically)
\begin{eqnarray}
\label{eq:mixed}
\nabla_{4}\nabla_{3}\eta\sim\nabla_{3}\chi(\eta-\etabar)+\chi(\nabla_{3}\eta-\nabla_{3}\etabar)+\nabla_{3}\beta+(\hnabla_{3}\alpha^{F})\cdot(\rho^{F}-\sigma^{F})\\\nonumber +\alpha^{F}\cdot(\hnabla_{3}\rho^{F}-\hnabla_{3}\sigma^{F})+\omega\nabla_{3}\eta
+\omegabar\nabla_{4}\eta+(\eta-\etabar)\nabla\eta+\sigma\eta+\Phi^{F}\cdot\Phi^{F}\eta.
\end{eqnarray}
Once again, we note that $\nabla_{4}\eta$ terms may be eliminated by means of the $\nabla_{4}$ transport equation for $\eta$. Similarly we have \begin{eqnarray}
\nabla_{3}\beta\sim  \tr\chibar\beta+\nabla\rho+~^{*}\nabla\sigma+2\omegabar\beta+\widehat{\chi}\cdot \bar{\beta}+(\eta\rho+~^{*}\eta\sigma)\nonumber+\frac{1}{2}(\hnabla(\rho^{F}\cdot \rho^{F}+\sigma^{F}\cdot\sigma^{F})\nonumber\\\nonumber+\bar{\chi}\alpha^{F}\cdot(\rho^{F}-\sigma^{F}) -\hnabla_{4}(\bar{\alpha}^{F}\cdot\rho^{F}-\bar{\alpha}^{F}\cdot\sigma^{F})+\omega\bar{\alpha}^{F}(\sigma^{F}-\rho^{F})
+2\etabar_{a}(\rho^{F}\cdot\rho^{F}+\alpha^{F}\cdot\bar{\alpha}^{F}+\sigma^{F}\cdot\sigma^{F})\\\nonumber 
+\etabar(\rho^{F}\cdot\rho^{F}+\sigma^{F}\cdot\sigma^{F})),
\end{eqnarray}
\begin{eqnarray}
\hnabla_{3}\alpha^{F}\sim  \tr\chibar\alpha^{F}-\hnabla\rho^{F}+~^{*}\hnabla\sigma^{F}-~^{*}\eta\sigma^{F}+\eta\rho^{F}+\omegabar\alpha^{F}-\widehat{\chi}\cdot \bar{\alpha}^{F},\\
\hnabla_{4}\bar{\alpha}^{F}\sim  \tr\chi\bar{\alpha}^{F}+\hnabla\rho^{F}-~^{*}\hnabla\sigma^{F}-~^{*}\etabar\sigma^{F}-\etabar\rho^{F}+\omega\bar{\alpha}^{F}-\chibarhat\cdot\alpha^{F},\\
\hnabla_{3}\rho^{F}\sim\widehat{\text{div}} \bar{\alpha}^{F}+ \tr\chibar\rho^{F}+(\eta-\etabar)\cdot\bar{\alpha}^{F},\\
\hnabla_{3}\sigma^{F}\sim\widehat{\text{curl}} \bar{\alpha}^{F}- \tr\chibar\sigma^{F}+(\eta-\etabar)\cdot~^{*}\bar{\alpha}^{F},
\end{eqnarray}
Utilizing these evolution equations, the definitions of $\mathcal{W},\mathcal{F}$ and $\mathcal{F}(S)$, the $L^{\infty}$ estimate of the connection coefficients, and lemma (2), we note the following
\begin{eqnarray}
\int_{0}^{\ubar}||\nabla_{3}\beta||_{L^{2}(S)}d\ubar^{'}\leq C(\mathcal{O}_{0},\mathcal{W},\mathcal{F},\mathcal{W}(S),\mathcal{F}(S)).
\end{eqnarray}
Going back to (\ref{eq:mixed}) and utilizing available transport equations, we observe 
\begin{eqnarray}
\int_{0}^{\ubar}||\nabla_{4}\nabla_{3}\eta||_{L^{2}(S)}d\ubar^{'}\leq C(\mathcal{O}_{0},\mathcal{W},\mathcal{F},\mathcal{W}(S),\mathcal{F}(S))+C(\mathcal{O}_{0})\int_{0}^{\ubar}||\nabla_{3}\eta||_{L^{2}(S)}d\ubar^{'}.
\end{eqnarray}
The transport inequality (proposition \ref{transport}) applied to (\ref{eq:mixed}) yields
\begin{eqnarray}
||\nabla_{3}\eta||_{L^{2}(S)}\leq C(\mathcal{O}_{0},\mathcal{W},\mathcal{F},\mathcal{W}(S),\mathcal{F}(S))+C(\mathcal{O}_{0})\int_{0}^{\ubar}||\nabla_{3}\eta||_{L^{2}(S)}d\ubar^{'}
\end{eqnarray}
which through Gr\"onwall's inequality leads to 
\begin{eqnarray}
\label{eq:etaL2}
||\nabla_{3}\eta||_{L^{2}(S)}\leq C(\mathcal{O}_{0},\mathcal{W},\mathcal{F},\mathcal{W}(S),\mathcal{F}(S))e^{C(\mathcal{O}_{0})\ubar}\leq C(\mathcal{O}_{0},\mathcal{W},\mathcal{F},\mathcal{W}(S),\mathcal{F}(S)).
\end{eqnarray}
This inequality of course yields the estimate 
\begin{eqnarray}
\label{eq:middle}
||\nabla_{3}\eta||_{L^{2}(H)}\leq C(\mathcal{O}_{0},\mathcal{W},\mathcal{F},\mathcal{W}(S),\mathcal{F}(S)).
\end{eqnarray}
This later estimate (\ref{eq:middle}) will be used in proving the next lemma. In an exactly similar way, we obtain 
\begin{eqnarray}
||\nabla_{3}\omegabar||_{L^{2}(S)}\leq  C(\mathcal{O}_{0},\mathcal{W},\mathcal{F},\mathcal{W}(S),\mathcal{F}(S)),\\
||\nabla_{4}\etabar||_{L^{2}(S)}\leq  C(\mathcal{O}_{0},\mathcal{W},\mathcal{F},\mathcal{W}(S),\mathcal{F}(S)),\\
\label{eq:end}
||\nabla_{4}\omega||_{L^{2}(S)}\leq  C(\mathcal{O}_{0},\mathcal{W},\mathcal{F},\mathcal{W}(S),\mathcal{F}(S)).
\end{eqnarray}
These estimates will be extremely useful in the future.
Now we may estimate $\hnabla_{3}\bar{\alpha}^{F}$ through the codimension-1 trace inequality (\ref{eq:need1})
\begin{eqnarray}
||\hnabla_{3}\bar{\alpha}^{F}||_{L^{4}(S)}\leq C(\mathcal{O}_{0},\mathcal{F}_{0},\mathcal{F}).
\end{eqnarray}
Now we estimate $L^{2}(S)$ norm of the underlined entities in the expression of $\nabla_{4}\nabla_{3}\mu$ as follows 
\begin{eqnarray}
||IIA||_{L^{2}(S)}=||\nabla\chi\nabla_{3}\eta||_{L^{2}(S)}\leq ||\nabla\chi||_{L^{4}(S)}||\nabla_{3}\eta||_{L^{4}(S)}\leq C(\mathcal{O}_{0})||\nabla\nabla_{3}\eta||_{L^{2}(S)},\\
||IIB||_{L^{2}(S)}=||\beta \nabla_{3}\zeta||_{L^{2}(S)}=||\beta||_{L^{4}(S)}||-\nabla\omegabar-\frac{1}{2}\bar{\chi}\cdot(\eta+\zeta)+\omegabar(\zeta-\eta)\\\nonumber -\frac{1}{2}\bar{\beta}+\frac{1}{2}~^{*}\sigma^{F}\bar{\alpha}^{F}||_{L^{4}(S)}
\leq C(\mathcal{O}_{0},\mathcal{W},\mathcal{F},\mathcal{W}(S),\mathcal{F}(S)),\\
||IID||_{L^{2}(S)}=||\nabla_{3}\eta \alpha^{F}\cdot(\rho^{F}-\sigma^{F})||_{L^{2}(S)}\leq ||\nabla_{3}\eta||_{L^{2}(S)}||\alpha^{F}||_{L^{\infty}(S)}\\\nonumber (||\rho^{F}||_{L^{\infty}(S)}+||\sigma^{F}||_{L^{\infty}(S)})\\\nonumber 
\leq C(\mathcal{O}_{0},\mathcal{W},\mathcal{F},\mathcal{W}(S),\mathcal{F}(S)),\\
||IIE||_{L^{2}(S)}\leq C(\mathcal{O}_{0},\mathcal{W},\mathcal{F},\mathcal{W}(S),\mathcal{F}(S))(||\hnabla^{2}\rho^{F}||_{L^{2}(S)}+||\hnabla^{2}\sigma^{F}||_{L^{2}(S)})\\\nonumber +C(\mathcal{O}_{0},\mathcal{W},\mathcal{F},\mathcal{W}(S),\mathcal{F}(S)),\\
||IIF||_{L^{2}(S)}\leq C(\mathcal{O}_{0},\mathcal{W},\mathcal{F},\mathcal{W}(S),\mathcal{F}(S))(||\hnabla^{2}\rho^{F}||_{L^{2}(S)}+||\hnabla^{2}\sigma^{F}||_{L^{2}(S)})\\\nonumber +C(\mathcal{O}_{0},\mathcal{W},\mathcal{F},\mathcal{W}(S),\mathcal{F}(S)),\\
||IIG,IIH||_{L^{2}(S)}\leq C(\mathcal{O}_{0},\mathcal{W},\mathcal{F},\mathcal{W}(S),\mathcal{F}(S)),\\\nonumber 
||IIK||_{L^{2}(S)}\leq C(\mathcal{O}_{0},\mathcal{W},\mathcal{F},\mathcal{W}(S),\mathcal{F}(S))||\nabla\eta||^{2}_{L^{4}(S)}\leq C(\mathcal{O}_{0},\mathcal{W},\mathcal{F},\mathcal{W}(S),\mathcal{F}(S)). 
\end{eqnarray}
Putting together all the estimates and use of the transport inequality (proposition \ref{transport}) 
\begin{eqnarray}
||\nabla_{3}\mu||_{L^{2}(S)}\leq C\left(||\nabla_{3}\mu||_{L^{2}(S_{u,0})}+\int_{0}^{\ubar}||\nabla_{4}\nabla_{3}\mu||_{L^{2}(S)}d\ubar^{'}\right)
\end{eqnarray}
yields 
\begin{eqnarray}
\label{eq:muestimate}
||\nabla_{3}\mu||_{L^{2}(S)}\leq C(\mathcal{O}_{0},\mathcal{W}_{0},\mathcal{F}_{0},\mathcal{W},\mathcal{F},\mathcal{W}(S),\mathcal{F}(S))\\\nonumber +C(\mathcal{O}_{0},\mathcal{W},\mathcal{F},\mathcal{W}(S),\mathcal{F}(S))\int_{0}^{\ubar}||\nabla\nabla_{3}\eta||_{L^{2}(S)}d\ubar^{'}.
\end{eqnarray}
Now we go back to the definition of the Hodge system (\ref{eq:hodge2}-\ref{eq:hodge4}) to obtain 
\begin{eqnarray}
\label{eq:hodgenew}
\text{div}(\nabla_{3}\eta)=-\nabla_{3}\mu-\nabla_{3}\rho+[\text{div},\nabla_{3}]\eta,~\text{curl}(\nabla_{3}\eta)=\nabla_{3}\widehat{\chi}\wedge\chibarhat+\widehat{\chi}\wedge\chibarhat+\nabla_{3}\sigma.
\end{eqnarray}
This Hodge system yields the following elliptic estimate since $K$ is under control 
\begin{eqnarray}
\label{eq:transport1}
||\nabla\nabla_{3}\eta||_{L^{2}(S)}\leq C( ||\nabla_{3}\mu||_{L^{2}(S)}+||\nabla\Psi||_{L^{2}(S)}+C(\mathcal{O}_{0},\mathcal{W},\mathcal{F},\mathcal{W}(S),\mathcal{F}(S))
\end{eqnarray}
where we have used the $\nabla_{3}$ transport equations for the entities present in the right-hand side of the Hodge system (\ref{eq:hodgenew}). Substitution of (\ref{eq:transport1}) into the estimate (\ref{eq:muestimate}) yields 
\begin{eqnarray}
||\nabla_{3}\mu||_{L^{2}(S)}\leq C(\mathcal{O}_{0},\mathcal{W},\mathcal{F},\mathcal{W}(S),\mathcal{F}(S))+C(\mathcal{O}_{0})\int_{0}^{\ubar}||\nabla_{3}\mu||_{L^{2}(S)}d\ubar^{'}\\\nonumber 
+C(\mathcal{O}_{0},\mathcal{W},\mathcal{F},\mathcal{W}(S),\mathcal{F}(S))\int_{0}^{\ubar}||\nabla\Psi||_{L^{2}(S)}d\ubar^{'})
\end{eqnarray}
and since $\Psi$ is not $\bar{\alpha}$ or $\alpha$, we can write 
\begin{eqnarray}
||\nabla_{3}\mu||_{L^{2}(S)}\leq C(\mathcal{O}_{0},\mathcal{W},\mathcal{F},\mathcal{W}(S),\mathcal{F}(S))+C(\mathcal{O}_{0})\int_{0}^{\ubar}||\nabla_{3}\mu||_{L^{2}(S)}d\ubar^{'}
\end{eqnarray}
and therefore an application of Gr\"onwall's inequality yields 
\begin{eqnarray}
||\nabla_{3}\mu||_{L^{2}(S)}\leq C(\mathcal{O}_{0},\mathcal{W},\mathcal{F},\mathcal{W}(S),\mathcal{F}(S)) e^{C(\mathcal{O}_{0}\ubar)}\leq C(\mathcal{O}_{0},\mathcal{W},\mathcal{F},\mathcal{W}(S),\mathcal{F}(S))
\end{eqnarray}
since $\ubar\leq J$. Therefore after integrating the elliptic estimate (\ref{eq:hodgenew}), we have 
\begin{eqnarray}
||\nabla\nabla_{3}\eta||_{L^{2}(H)}\leq C(\mathcal{O}_{0},\mathcal{W},\mathcal{F},\mathcal{W}(S),\mathcal{F}(S))
\end{eqnarray}
Proceeding in an exactly similar way we obtain 
\begin{eqnarray}
||\nabla\nabla_{3}\omegabar||_{L^{2}(H)},~||\nabla\nabla_{4}\etabar||_{L^{2}(  \Hbar)},||\nabla\nabla_{4}\omega||_{L^{2}(  \Hbar)}\leq C(\mathcal{O}_{0},\mathcal{W},\mathcal{F},\mathcal{W}(S),\mathcal{F}(S)).
\end{eqnarray}
The estimates for the rest of the connection coefficients can be estimated directly through their transport equations since the remaining connection coefficients satisfy both $\nabla_{4}$ and $\nabla_{3}$ transport equations.
\end{proof}
In the following lemma we prove the $L^{4}(S)$ estimates of the difficult derivatives (i.e., can not be estimated directly from null transport equations)  of the connection coefficients.\\ 
\begin{lemma} 
\label{6}
Let $\mathcal{W},\mathcal{F}\mathcal{W}(S),\mathcal{F}(S)<\infty$, then the following estimate holds 
\begin{eqnarray}
||\nabla_{3}\eta||_{L^{4}(S)}\leq C(\mathcal{O}_{0},\mathcal{W},\mathcal{F}\mathcal{W}(S),\mathcal{F}(S)),~||\nabla_{3}\omegabar||_{L^{4}(S)}\leq C(\mathcal{O}_{0},\mathcal{W},\mathcal{F}\mathcal{W}(S),\mathcal{F}(S)),\\
||\nabla_{4}\etabar||_{L^{4}(S)}\leq C(\mathcal{O}_{0},\mathcal{W},\mathcal{F}\mathcal{W}(S),\mathcal{F}(S)),~||\nabla_{4}\omega||_{L^{4}(S)}\leq C(\mathcal{O}_{0},\mathcal{W},\mathcal{F}\mathcal{W}(S),\mathcal{F}(S))
\end{eqnarray}
\end{lemma}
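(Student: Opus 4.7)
The plan is to deduce all four $L^{4}(S)$ bounds from the codimension-$1$ trace inequality used in Corollary \ref{C2}, applied on the appropriate null hypersurface. Since $\eta$ and $\omegabar$ verify $\nabla_{4}$-transport equations, commuting with $\nabla_{3}$ produces $\nabla_{4}$-equations for $\nabla_{3}\eta$ and $\nabla_{3}\omegabar$, so for these two quantities the natural trace will be along $H_{u}$; symmetrically, $\nabla_{4}\etabar$ and $\nabla_{4}\omega$ will be handled along $\Hbar_{\ubar}$. Focussing first on $\mathcal{G}=\nabla_{3}\eta$, the trace inequality gives
\begin{equation*}
\|\mathcal{G}\|_{L^{4}(S_{u,\ubar})}\leq C\Big(\|\mathcal{G}\|_{L^{4}(S_{u,0})}+\|\mathcal{G}\|_{L^{2}(H)}^{1/2}\|\nabla_{4}\mathcal{G}\|_{L^{2}(H)}^{1/4}(\|\mathcal{G}\|_{L^{2}(H)}+\|\nabla\mathcal{G}\|_{L^{2}(H)})^{1/4}\Big),
\end{equation*}
so it suffices to control four pieces: one initial term and three $L^{2}(H)$ norms.

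Three of the four pieces are already in hand. The initial $L^{4}(S_{u,0})$ term is bounded by $C\mathcal{O}_{0}$: definition (\ref{eq:connectioninitialnorm}) includes $\|\nabla_{3}\eta\|_{L^{2}(S)}$ on initial spheres, and Proposition \ref{sobolev} together with one more angular derivative extracted from the commuted initial transport equations upgrades this to $L^{4}$. The bound on $\|\mathcal{G}\|_{L^{2}(H)}$ follows by integrating (\ref{eq:etaL2}) of Lemma \ref{5} in $\ubar$, and $\|\nabla\nabla_{3}\eta\|_{L^{2}(H)}$ is precisely the transport--Hodge estimate proved at the end of Lemma \ref{5}. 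Only $\|\nabla_{4}\nabla_{3}\eta\|_{L^{2}(H)}$ remains.

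The main step is to estimate this last piece via the commuted evolution equation (\ref{eq:mixed}), whose schematic right-hand side is
\begin{equation*}
\nabla_{3}\chi\cdot(\eta-\etabar)+\chi\cdot(\nabla_{3}\eta-\nabla_{3}\etabar)+\nabla_{3}\beta+\hnabla_{3}\alpha^{F}\cdot(\rho^{F},\sigma^{F})+\alpha^{F}\cdot\hnabla_{3}(\rho^{F},\sigma^{F})+\mathrm{l.o.t.}
\end{equation*}
Each factor will be $L^{2}(H)$-controlled by $C(\mathcal{O}_{0},\mathcal{W},\mathcal{F},\mathcal{W}(S),\mathcal{F}(S))$: the connection pieces $\nabla_{3}\chi,\chi,\nabla_{3}\eta,\nabla_{3}\etabar$ are furnished by Lemmas \ref{1}--\ref{5}; $\nabla_{3}\beta$ will be replaced by its null Bianchi expression, so that the top part $\nabla\rho+\Hodge{\nabla}\sigma$ is absorbed by $\mathcal{W}$ and the Yang-Mills contribution $\nabla_{3}\mathfrak{T}$ is handled through (\ref{eq:YM1})--(\ref{eq:YM2}), $\mathcal{F},\mathcal{F}(S)$ and Proposition \ref{sobolev}; finally $\hnabla_{3}(\rho^{F},\sigma^{F})$ is substituted from the null Yang-Mills equations, exchanging it for $\widehat{\text{div}}\bar{\alpha}^{F}$-type terms dominated by $\mathcal{F}$.

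The hard part will be the same phenomenon that drove Lemma \ref{5}: $\hnabla_{3}$-derivatives of $\rho^{F},\sigma^{F}$ (symmetrically $\hnabla_{4}$-derivatives along $\Hbar$) generate $\bar{\alpha}^{F}$- (resp.\ $\alpha^{F}$-)derivatives whose natural control lives on the opposite cone. Here, however, the situation is strictly milder than in Lemma \ref{5}: because we only need $L^{2}(H)$ of $\nabla_{4}\nabla_{3}\eta$ rather than of a second-order mixed derivative, a single use of the null Yang-Mills transport equations suffices to trade every offending derivative for a combination that is harmless once $\mathcal{F},\mathcal{F}(S)$ are finite, and the delicate pointwise cancellation between $\alpha^{F}\cdot\hnabla\hnabla_{3}(\rho^{F}-\sigma^{F})$ and $\alpha^{F}\cdot\hnabla_{3}^{2}\alpha^{F}$ exploited there is not required. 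Inserting $\|\nabla_{4}\nabla_{3}\eta\|_{L^{2}(H)}\leq C(\mathcal{O}_{0},\mathcal{W},\mathcal{F},\mathcal{W}(S),\mathcal{F}(S))$ into the trace inequality yields the desired bound for $\|\nabla_{3}\eta\|_{L^{4}(S)}$. The three remaining estimates, for $\nabla_{3}\omegabar$ on $H_{u}$ and for $\nabla_{4}\etabar,\nabla_{4}\omega$ on $\Hbar_{\ubar}$, proceed by exactly the same scheme after commuting the relevant $\nabla_{4}$ and $\nabla_{3}$ equations of Lemma \ref{5}.
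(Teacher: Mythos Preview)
Your proposal is correct and follows essentially the same approach as the paper: both apply the codimension-$1$ trace inequality along $H$ for $\nabla_{3}\eta,\nabla_{3}\omegabar$ (and along $\Hbar$ for $\nabla_{4}\etabar,\nabla_{4}\omega$), observe that all ingredients except $\|\nabla_{4}\nabla_{3}\eta\|_{L^{2}(H)}$ are supplied by Lemma~\ref{5}, and then control this last piece by the $\nabla_{3}$-commuted $\nabla_{4}$-transport equation (\ref{eq:mixed}), replacing $\nabla_{3}\beta$ and $\hnabla_{3}(\rho^{F},\sigma^{F})$ via the null Bianchi and null Yang--Mills equations. Your remark that the situation here is strictly milder than in Lemma~\ref{5} (no need for the second-order cancellation) is also in line with the paper's treatment.
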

\begin{proof} In order to prove these estimates, we use the co-dimension 1 trace inequalities for any field $\varphi$ (be it a section of the gauge bundle or tangent bundle or mixed)
\begin{eqnarray}
||\varphi||_{L^{4}(S)}\leq C\left(||\varphi||_{L^{4}(\mathcal{\bar{S}}^{'})}+||\varphi||^{1/2}_{L^{2}(  \Hbar)}||\hnabla_{3}\varphi||^{1/4}_{L^{2}(  \Hbar)}(||\varphi||_{L^{2}(  \Hbar)}\nonumber+||\hnabla\varphi||_{L^{2}(  \Hbar)})^{1/4}\right),\\
||\varphi||_{L^{4}(S)}\leq C\left(||\varphi||_{L^{4}(\mathcal{\bar{S}}^{'})}+||\varphi||^{1/2}_{L^{2}(H)}||\hnabla_{4}\varphi||^{1/4}_{L^{2}(H)}(||\varphi||_{L^{2}(H)}\nonumber+||\hnabla\varphi||_{L^{2}(H)})^{1/4}\right).
\end{eqnarray}
We prove one of the connection coefficients. The rest of the connection coefficients can be handled exactly similar way. Let's consider $\varphi=\nabla_{3}\eta$ and write
\begin{eqnarray}
\label{eq:L4eta}
||\nabla_{3}\eta||_{L^{4}(S)}\leq C\left(||\nabla_{3}\eta||_{L^{4}(\mathcal{\bar{S}}^{'})}+||\nabla_{3}\eta||^{1/2}_{L^{2}(H)}||\nabla_{4}\nabla_{3}\eta||^{1/4}_{L^{2}(H)}(||\nabla_{3}\eta||_{L^{2}(H)}+||\hnabla\nabla_{3}\eta||_{L^{2}(H)})^{1/4}\right).
\end{eqnarray}
Note that every term except $\nabla_{4}\nabla_{3}\eta$ on the right-hand side is estimated. In order to estimate this term we can differentiate the $\nabla_{4}$ transport equation for $\eta$ with respect to $e_{3}$. Such an operation yields (schematically) 
\begin{eqnarray}
\nabla_{4}\nabla_{3}\eta=-\nabla_{3}\chi(\eta-\etabar)-\chi(\nabla_{3}\eta-\nabla_{3}\etabar)-\nabla_{3}\beta-(\hnabla_{3}\alpha^{F})(\rho^{F}+\sigma^{F})\\\nonumber-\alpha^{F}(\hnabla_{3}\rho^{F}+\hnabla_{3}\sigma^{F}) 
+\omega\nabla_{3}\eta+\omegabar\nabla_{4}\eta+(\eta-\etabar)\nabla\eta+\sigma\eta+(\rho^{F}\cdot\rho^{F}+\sigma^{F}\cdot\sigma^{F})\eta.
\end{eqnarray}
Using the estimates from the previous lemma (6.2-6.6) and the transport equations we obtain 
\begin{eqnarray}
||\nabla_{4}\nabla_{3}\eta||_{L^{2}(S)}\leq \mathcal{C}(\mathcal{O}_{0},\mathcal{W},\mathcal{F},\mathcal{W}(S),\mathcal{F}(S))+||\nabla\Psi||_{L^{2}(S)}. 
\end{eqnarray}
Here $\Psi$ does not contain $\bar{\alpha}$ and the Yang-Mills curvature components are estimated by means of $\mathcal{F}$ and $\mathcal{F}(S)$ since they enjoy one order higher regularity than the Weyl curvature. Therefore, we obtain 
\begin{eqnarray}
||\nabla_{4}\nabla_{3}\eta||_{L^{2}(H)}\leq \mathcal{C}(\mathcal{O}_{0},\mathcal{W},\mathcal{F},\mathcal{W}(S),\mathcal{F}(S)).
\end{eqnarray}
Now if we plug this estimate into (\ref{eq:L4eta}), we obtain 
\begin{eqnarray}
||\nabla_{3}\eta||_{L^{4}(S)}\leq \mathcal{C}(\mathcal{O}_{0},\mathcal{W},\mathcal{F},\mathcal{W}(S),\mathcal{F}(S)).
\end{eqnarray}
Proceeding exactly a similar way, we obtain the remaining estimates 
\begin{eqnarray}
||\nabla_{3}\omegabar||_{L^{4}(S)}\leq C(\mathcal{O}_{0},\mathcal{W},\mathcal{F}\mathcal{W}(S),\mathcal{F}(S)),
||\nabla_{4}\etabar||_{L^{4}(S)}\leq C(\mathcal{O}_{0},\mathcal{W},\mathcal{F}\mathcal{W}(S),\mathcal{F}(S)),\\\nonumber 
||\nabla_{4}\omega||_{L^{4}(S)}\leq C(\mathcal{O}_{0},\mathcal{W},\mathcal{F}\mathcal{W}(S),\mathcal{F}(S)).
\end{eqnarray}
This concludes the proof of this lemma.
\end{proof}
\begin{lemma}
\label{7}
Let $\varphi$ be any connection coefficients belonging to the set ($ \tr\chi, \tr\chibar,\widehat{\chi},\chibarhat,\eta,\etabar,\omega,\omegabar$), then the following estimates hold
\begin{eqnarray}
||\nabla_{3}\varphi||_{L^{4}(S)}\leq C(\mathcal{O}_{0},\mathcal{W},\mathcal{F},\mathcal{W}(S),\mathcal{F}(S)),~||\nabla_{4}\varphi||_{L^{4}(S)}\leq C(\mathcal{O}_{0},\mathcal{W},\mathcal{F},\mathcal{W}(S),\mathcal{F}(S))
\end{eqnarray}
\end{lemma}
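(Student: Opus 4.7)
The plan is to treat separately the four derivatives already handled in Lemma \ref{6} (namely $\nabla_3\eta$, $\nabla_3\omegabar$, $\nabla_4\etabar$, $\nabla_4\omega$) and the remaining cases, which are precisely those where the connection coefficient $\varphi$ actually satisfies a transport equation in the relevant direction. More concretely, the coefficients $\tr\chi,\tr\chibar,\widehat{\chi},\chibarhat$ satisfy both a $\nabla_3$ and a $\nabla_4$ structure equation (see \eqref{eq:connection1} and the ones following it), while $\eta$ only has a $\nabla_4$-equation and $\etabar,\omega$ only $\nabla_3$-equations (with the analogous statement for $\omegabar$). Thus for every pair $(\nabla_a\varphi)$ in the statement, either the derivative has already been estimated in Lemma \ref{6}, or it is algebraically given by the right-hand side of a transport equation whose terms we will bound directly in $L^4(S)$.

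First I would dispose of the eight derivatives $\nabla_3\eta,\nabla_3\omegabar,\nabla_4\etabar,\nabla_4\omega$ by simply invoking Lemma \ref{6}. For the remaining sixteen cases I would solve the corresponding transport equation algebraically. For example, for $\nabla_4\widehat{\chi}$ we read off
\[
\nabla_4\widehat{\chi}=-\tr\chi\,\widehat{\chi}-2\omega\,\widehat{\chi}-\alpha,
\]
and estimate each term in $L^4(S)$ using $\|\tr\chi\|_{L^\infty(S)},\|\omega\|_{L^\infty(S)}\le C\mathcal{O}_0$ (Lemma \ref{1}), $\|\widehat{\chi}\|_{L^4(S)}\le C\mathcal{O}_0$ (Lemma \ref{2}), and $\|\alpha\|_{L^4(S)}\le \mathcal{W}(S)$ (definition of $\mathcal{W}(S)$). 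Analogously, for $\nabla_3\widehat{\chi}$ one has
\[
\nabla_3\widehat{\chi}=-\tfrac{1}{2}\tr\chibar\,\widehat{\chi}+\nabla\hat{\otimes}\eta+2\omegabar\,\widehat{\chi}-\tfrac{1}{2}\tr\chi\,\chibarhat+\eta\hat{\otimes}\eta+\hat{\mathfrak{T}}_{ab},
\]
where $\|\nabla\eta\|_{L^4(S)}$ is dominated by $C(\mathcal{O}_0,\mathcal{W},\mathcal{F},\mathcal{W}(S),\mathcal{F}(S))$ by Corollary \ref{C2}, the product $\|\eta\hat{\otimes}\eta\|_{L^4(S)}\le \|\eta\|_{L^\infty(S)}\|\eta\|_{L^4(S)}$ is controlled by Lemmas \ref{1}--\ref{2}, and the Yang-Mills stress tensor $\hat{\mathfrak{T}}_{ab}$ is quadratic in the Yang-Mills curvature components that are uniformly bounded by $\mathcal{F}(S)$ after applying the Sobolev embedding \eqref{eq:sobolev4}. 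The same procedure applied to the $\nabla_3$-- and $\nabla_4$--equations for $\tr\chi,\tr\chibar$ and $\chibarhat$ produces analogous bounds, with the only nontrivial algebraic terms being $|\eta|^2,|\etabar|^2$ and derivatives $\nabla\eta,\nabla\etabar$ appearing in the equations for $\nabla_3\tr\chi$ and $\nabla_4\tr\chibar$; both are already controlled through Corollary \ref{C2} and Lemma \ref{2}.

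The only mildly delicate step will be handling the angular-derivative terms of the torsions $\eta,\etabar$ that appear in the equations for $\nabla_3\tr\chi,\nabla_3\widehat{\chi},\nabla_4\tr\chibar,\nabla_4\chibarhat$, because $\|\nabla\eta\|_{L^4(S)}$ itself has a bound that depends on the full family $(\mathcal{O}_0,\mathcal{W},\mathcal{F},\mathcal{W}(S),\mathcal{F}(S))$ rather than the initial data alone. However this causes no conceptual difficulty: every factor is either in $L^\infty(S)$ or in $L^4(S)$, and the Sobolev embeddings in Proposition \ref{sobolev} together with the Gagliardo--Nirenberg inequality \eqref{eq:gagliardo1} let us dispose of the $L^4(S)$ factors using at most one angular derivative in $L^2(S)$, which was controlled in Lemma \ref{3}. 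Collecting these point-by-point bounds yields
\[
\|\nabla_3\varphi\|_{L^4(S)}+\|\nabla_4\varphi\|_{L^4(S)}\le C(\mathcal{O}_0,\mathcal{W},\mathcal{F},\mathcal{W}(S),\mathcal{F}(S))
\]
for every $\varphi\in\{\tr\chi,\tr\chibar,\widehat{\chi},\chibarhat,\eta,\etabar,\omega,\omegabar\}$, with no Gr\"onwall argument needed since the estimates are purely algebraic in the transport equations and the quantities appearing on the right-hand side have all been controlled in earlier lemmas.
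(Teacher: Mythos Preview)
Your approach is correct and matches the paper's own proof essentially line for line: invoke Lemma~\ref{6} for the four derivatives $\nabla_3\eta,\nabla_3\omegabar,\nabla_4\etabar,\nabla_4\omega$ that lack a transport equation in the relevant direction, and for all remaining cases read off the right-hand side of the available null structure equation and bound each term in $L^4(S)$ using Lemmas~\ref{1}--\ref{3}, Corollary~\ref{C2}, and the definitions of $\mathcal{W}(S),\mathcal{F}(S)$. The paper's proof is only a two-sentence sketch of exactly this argument; your write-up simply fills in the details (note a harmless miscount: there are four derivatives handled by Lemma~\ref{6} and twelve remaining, not eight and sixteen).
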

\begin{proof} The proof is a straightforward consequence of the previous lemma (7) and the null evolution equations. The connection coefficients that satisfy the $\nabla_{4}$ and $\nabla_{3}$ null transport equations, we can directly estimate their $||\nabla_{4}\varphi||_{L^{4}(S)}$ since the terms on the right-hand side of such equations satisfy $L^{4}(S)$ estimate. The connection coefficients $(\eta,\omegabar,\etabar,\omega)$) that do satisfy only one of the $\nabla_{4},\nabla_{3}$ transport equations, the previous lemma yields the result. 
\end{proof}
\begin{lemma}
\label{L4alphabar}
$\mathcal{W}(S)\leq C(\mathcal{O}_{0},\mathcal{W}_{0},\mathcal{W},\mathcal{F}),~\mathcal{F}(S)\leq C(\mathcal{O}_{0},\mathcal{F}_{0},\mathcal{W},\mathcal{F})$ \textit{and also} $||\bar{\alpha}||_{L^{4}(S)}\leq C(\mathcal{O}_{0},\mathcal{W},\mathcal{F}), ||\hnabla\bar{\alpha}^{F}||_{L^{4}(S)}\leq C(\mathcal{O}_{0},\mathcal{F})$.
\end{lemma}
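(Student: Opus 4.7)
The plan is to pass from the $L^2$ control on the null hypersurfaces $H$ and $\Hbar$ encoded in the curvature norms $\mathcal{W}$ and $\mathcal{F}$ to $L^4(S_{u,\ubar})$ control on the topological $2$-spheres via the codimension-$1$ trace inequalities stated earlier in the paper. For each Weyl curvature component in the set $\{\alpha,\beta,\betabar,\rho,\sigma\}$, which is controlled on the outgoing cones $H_u$, I will apply the $H$-trace inequality
\begin{equation*}
\|\Psi\|_{L^4(S_{u,\ubar})}\lesssim \|\Psi\|_{L^4(S_{u,0})} + \|\Psi\|_{L^2(H)}^{1/2}\|\nabla_4 \Psi\|_{L^2(H)}^{1/4}\bigl(\|\Psi\|_{L^2(H)}+\|\nabla \Psi\|_{L^2(H)}\bigr)^{1/4},
\end{equation*}
while for $\alphabar$, which is only controlled on the incoming cones $\Hbar_{\ubar}$, I will use the analogous $\Hbar$-trace inequality with $\nabla_3$ in place of $\nabla_4$. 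The initial term $\|\Psi\|_{L^4(S_{u,0})}$ (or $\|\Psi\|_{L^4(S_{0,\ubar})}$) is bounded by $\mathcal{W}_0$; the $L^2$ and $\nabla L^2$ norms on $H$ (or $\Hbar$) are by definition in $\mathcal{W}$.

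The only input the trace inequality requires that is not manifestly part of $\mathcal{W}$ is $\|\nabla_4 \Psi\|_{L^2(H)}$ for $\Psi\in\{\beta,\betabar,\rho,\sigma\}$ (the corresponding quantity for $\alpha$ is built into $\mathcal{W}$ already). For these I substitute the null Bianchi equations \eqref{eq:bianchi1}--\eqref{eq:bianchi2}, which express $\nabla_4 \beta, \nabla_4 \rho, \nabla_4 \sigma, \nabla_4 \betabar$ as angular derivatives of curvature (controlled by $\mathcal{W}$), Yang-Mills source terms of the schematic form $\hnabla(\alpha^F\cdot(\rho^F,\sigma^F))$ or $\alpha^F\cdot\hnabla_4 \alpha^F$ (controlled by $\mathcal{F}$), and lower-order nonlinearities in the connection coefficients and curvature components whose $L^4(S)$ norms are bounded by $C(\mathcal{O}_0,\mathcal{W},\mathcal{F},\mathcal{W}(S),\mathcal{F}(S))$ via lemmas \ref{1}--\ref{7} and the Sobolev embedding. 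For $\alphabar$, the analogous derivative $\nabla_3\alphabar$ on $\Hbar$ is directly part of $\mathcal{W}$, so the isolated bound $\|\alphabar\|_{L^4(S)}\leq C(\mathcal{O}_0,\mathcal{W},\mathcal{F})$ follows at once from the $\Hbar$-trace inequality without any detour through $\mathcal{W}(S)$ or $\mathcal{F}(S)$.

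An entirely parallel argument handles $\mathcal{F}(S)$: for $\Phi\in\{\alpha^F,\rho^F,\sigma^F\}$ apply the $H$-trace inequality to $\hnabla^I\Phi$ for $I=0,1$, and for $\alphabar^F$ apply the $\Hbar$-trace inequality. All the needed $L^2(H)$ or $L^2(\Hbar)$ norms $\|\hnabla^I\Phi\|_{L^2}$, $\|\hnabla\hnabla_4\alpha^F\|_{L^2(H)}$, and $\|\hnabla\hnabla_3 \alphabar^F\|_{L^2(\Hbar)}$ are built into $\mathcal{F}$, and the $\hnabla_4, \hnabla_3$ derivatives of the remaining Yang-Mills components follow from the null Yang-Mills equations \eqref{eq:YM1}--\eqref{eq:YM2}, which exchange a frame derivative for an angular derivative plus connection-coefficient nonlinearities that are already controlled. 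In particular, $\|\hnabla\alphabar^F\|_{L^4(S)}\leq C(\mathcal{O}_0,\mathcal{F})$ uses the $\Hbar$-trace inequality with $\hnabla_3\hnabla\alphabar^F$ on the right, a term explicitly included in $\mathcal{F}$.

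The main obstacle is bookkeeping: one must verify carefully that for each curvature component, the $\nabla_4$ (or $\nabla_3$) derivative demanded by the trace inequality, when rewritten via Bianchi or Yang-Mills, only calls on norms already available in $\mathcal{W}$, $\mathcal{F}$, or the $L^4(S)/L^\infty(S)$ estimates on connections from lemmas \ref{1}--\ref{7}; in particular one must ensure that on $H$ no term of the form $\hnabla^2\alphabar^F$ or $\nabla\alphabar$ is generated (since $\alphabar,\alphabar^F$ are not controlled on $H$), and symmetrically on $\Hbar$. The special null structure of the Bianchi and Yang-Mills equations (already exploited in lemma \ref{lemma5}) is precisely what rules out these bad combinations, so no further gauge-invariant cancellation beyond what is already in the paper is required, and the constants depend only on the data indicated in the statement.
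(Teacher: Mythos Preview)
Your strategy diverges from the paper's at a crucial point, and the divergence creates a circularity you do not close. For the Weyl components $\beta,\betabar,\rho,\sigma$ you propose to use the $H$-trace inequality and supply the missing factor $\|\nabla_4\Psi\|_{L^2(H)}$ via the $\nabla_4$-Bianchi equations. But those equations contain the bad connection coefficient $\eta$ (for instance the term $\eta\cdot\alpha$ in $\nabla_4\beta$), and at this stage of the paper the only available bound on $\|\eta\|_{L^\infty(S)}$ is $C(\mathcal{O}_0,\mathcal{W},\mathcal{F},\mathcal{W}(S),\mathcal{F}(S))$. You acknowledge this yourself when you write that the lower-order nonlinearities are bounded by $C(\mathcal{O}_0,\mathcal{W},\mathcal{F},\mathcal{W}(S),\mathcal{F}(S))$, yet your final paragraph asserts that the constants depend only on the quantities in the statement. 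Nothing in your outline explains how the $\mathcal{W}(S),\mathcal{F}(S)$ dependence is removed; on $H$ there is no smallness gain in $\ubar$ to absorb it, and the dependence enters through Gr\"onwall-produced constants that need not be sublinear in $\mathcal{W}(S),\mathcal{F}(S)$. The same issue recurs for $\hnabla(\rho^F,\sigma^F)$ via the $H$-trace, since the commuted Yang-Mills equations bring in $\nabla(\eta-\etabar)$.

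The paper avoids this by running the trace inequality on $\Hbar$ rather than on $H$ for all of $\beta,\betabar,\rho,\sigma$ and $\hnabla(\rho^F,\sigma^F)$: it first uses the $\nabla_3$ (resp.\ $\hnabla_3$) transport equations to obtain uniform $L^2(S)$ bounds depending only on initial data, so that $\|\Psi\|_{L^2(\Hbar)}$ carries a factor $\epsilon^{1/2}$. The $\Hbar$-trace inequality then yields $\|\Psi\|_{L^4(S)}\le C(\mathcal{O}_0,\mathcal{W}_0)+\epsilon^{1/4}C(\mathcal{O}_0,\mathcal{W},\mathcal{F},\mathcal{W}(S),\mathcal{F}(S))$, and the $\epsilon^{1/4}$ kills the residual $\mathcal{W}(S),\mathcal{F}(S)$ dependence. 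Your treatment of $\alpha,\alphabar,\alpha^F,\alphabar^F,\hnabla\alpha^F,\hnabla\alphabar^F$ matches the paper's and is fine, since for those components the relevant $\nabla_4$- or $\nabla_3$-derivative is already part of $\mathcal{W}$ or $\mathcal{F}$; the gap is specifically in the middle components where you chose the wrong cone.
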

\begin{proof} Recall the definitions of $\mathcal{W}(S)$ and $\mathcal{F}(S)$
\begin{eqnarray}
\mathcal{W}(S):=\sup_{u,\ubar}||(\alpha,\beta,\bar{\beta},\rho,\sigma)||_{L^{4}(S)},~
\mathcal{F}(S):=\sum_{I=0}^{1}\sup_{u,\ubar}||\hnabla^{I}(\alpha^{F},\rho^{F},\sigma^{F})||_{L^{4}(S)}.
\end{eqnarray}
First, we prove the estimate for $\mathcal{W}(S)$. Recall the codimension-1 trace inequalities (remember $\hnabla$ acts as a usual covariant derivative on fields that are not sections of gauge bundle)
\begin{eqnarray}
\label{eq:codim1}
||\Psi||_{L^{4}(S_{u,\ubar})}\leq C\left(||\Psi||_{L^{4}(S_{u,0})}+||\Psi||^{1/2}_{L^{2}(H)}||\hnabla_{4}\Psi||^{1/4}_{L^{2}(H)}(||\Psi||_{L^{2}(H)}+||\hnabla\Psi||_{L^{2}(H)})^{1/4}\right),\\
\label{eq:codim2}
||\Psi||_{L^{4}(S_{u,\ubar})}\leq C\left(||\Psi||_{L^{4}(S_{0,\ubar})}+||\Psi||^{1/2}_{L^{2}(  \Hbar)}||\hnabla_{3}\Psi||^{1/4}_{L^{2}(  \Hbar)}(||\Psi||_{L^{2}(  \Hbar)}+||\hnabla\Psi||_{L^{2}(  \Hbar)})^{1/4}\right),
\end{eqnarray}
where the constants may depend on the initial data $\mathcal{O}_{0}$. We shall observe that $||(\beta,\bar{\beta},\rho,\sigma)||_{L^{4}(S)}$ and $||\hnabla^{I}(\rho^{F},\sigma^{F})||_{L^{4}(S)}$ can be completely determined by the initial data $\mathcal{O}_{0}$. However, this would not hold true for $\alpha$ and $\alpha^{F}$. However, this would not matter since we may use the trace inequalities (\ref{eq:codim1}-\ref{eq:codim2}).
We start with the Weyl curvature components. A direct application of (\ref{eq:codim1}-\ref{eq:codim2}) applied to $\alpha$ and $\bar{\alpha}$ yields 
\begin{eqnarray}
||\alpha||_{L^{4}(S)}\leq C(\mathcal{O}_{0},\mathcal{W}_{0},\mathcal{W}),~||\bar{\alpha}||_{L^{4}(S)}\leq C(\mathcal{O}_{0},\mathcal{W}_{0},\mathcal{W})
\end{eqnarray}
since $||\nabla_{4}\alpha||_{L^{2}(H)}$ and $||\nabla_{3}\bar{\alpha}||_{L^{2}(  \Hbar)}$ are dominated by $\mathcal{W}$. Now since $((\beta,\bar{\beta},\rho,\sigma))$ satisfy $\nabla_{3}$ equations, we may write 
\begin{eqnarray}
||\beta||_{L^{2}}\leq C\left(||\beta||_{L^{2}(S^{'})}\nonumber+\int_{0}^{u}||\nabla_{3}\beta||_{L^{2}}du^{'}\right)\\
\sim C\left(||\beta||_{L^{2}(S^{'})}+\int_{0}^{u}||- \tr\chibar\beta+D\rho\nonumber+~^{*}D\sigma+2\omegabar\beta+2\widehat{\chi}\cdot \bar{\beta}\right.\\\nonumber\left.+3(\eta\rho+~^{*}\eta\sigma)\nonumber+\frac{1}{2}(\hnabla(|\rho^{F}|^{2}+|\sigma^{F}|^{2})\nonumber-\bar{\chi}(\alpha^{F}\cdot\rho^{F}+\alpha^{F}\cdot\sigma^{F})-\hnabla_{4}(\bar{\alpha}^{F}\cdot\rho^{F}+\bar{\alpha}^{F}\cdot\sigma^{F})\right.\\\nonumber 
\left.+\omega(\bar{\alpha}^{F}\cdot\rho^{F}+\bar{\alpha}^{F}\cdot\sigma^{F})+2\etabar(|\rho^{F}|^{2}+\alpha^{F}\cdot \Bar{\alpha}^{F}+|\sigma^{F}|^{2})+\etabar(|\rho^{F}|^{2}+|\sigma^{F}|^{2})||_{L^{2}(S)}du^{'}\right).
\end{eqnarray}
Notice that all of these terms are estimated by $\mathcal{W}(S),\mathcal{W},\mathcal{F}$ and $\mathcal{F}(S)$ with a factor of $\epsilon^{1/2}$ or $\epsilon$ in front. Therefore 
\begin{eqnarray}
||\beta||_{L^{2}}\leq C(\mathcal{O}_{0},\mathcal{W}_{0}).
\end{eqnarray}
after choosing a sufficiently small $\epsilon$. 
Now we repeat exact similar procedure for the remaining $\bar{\beta},\rho,$ and $\sigma$ to yield
\begin{eqnarray}
||\bar{\beta},\rho,\sigma||_{L^{2}(S)}\leq C(\mathcal{O}_{0},\mathcal{W}_{0}).
\end{eqnarray}
Now we want to use the trace inequality 
\begin{eqnarray}
||\Psi||_{L^{4}(S_{u,\ubar})}\leq C\left(||\Psi||_{L^{4}(S_{0,\ubar})}+||\Psi||^{1/2}_{L^{2}(  \Hbar)}||\nabla_{3}\Psi||^{1/4}_{L^{2}(  \Hbar)}(||\Psi||_{L^{2}(  \Hbar)}+||\nabla\Psi||_{L^{2}(  \Hbar)})^{1/4}\right)
\end{eqnarray}
for $\Psi=(\beta,\bar{\beta},\rho,\sigma)$.  Observe the following
\begin{eqnarray}
||\nabla\Psi||^{2}_{L^{2}(  \Hbar)}=\int_{0}^{u}\int_{S}|\nabla\Psi|^{2}\mu_{\gamma}du\leq C\mathcal{W}^{2}
\end{eqnarray}
and obtain
\begin{eqnarray}
||\Psi||_{L^{4}(S)}\leq C\left(||\Psi||_{L^{4}(S^{'})}+||\Psi||^{1/2}_{L^{2}(  \Hbar)}||\nabla_{3}\Psi||^{1/4}_{L^{2}(  \Hbar)}\mathcal{W}^{1/2}\right).
\end{eqnarray}
Now 
\begin{eqnarray}
||\nabla_{3}\Psi||_{L^{2}(  \Hbar)}\leq C(\mathcal{O}_{0},\mathcal{W},\mathcal{F}),
\end{eqnarray}
and 
\begin{eqnarray}
||\Psi||^{2}_{L^{2}(  \Hbar)}=\int_{0}^{u}\int_{S}|\Psi|^{2}\mu_{\gamma}du^{'}\leq \epsilon C(O_{0},\mathcal{W},\mathcal{W}(S),\mathcal{F},\mathcal{F}(S)).
\end{eqnarray}
due to the uniform $L^{2}(S)$ estimates for $\Psi=(\beta,\bar{\beta},\rho,\sigma)$. 
Putting everything together
\begin{eqnarray}
||(\beta,\bar{\beta},\rho,\sigma||_{L^{4}(S)}\leq C(\mathcal{O}_{0},\mathcal{W}_{0})+\epsilon^{\frac{1}{4}} C(O_{0},\mathcal{W},\mathcal{W}(S),\mathcal{F},\mathcal{F}(S))\leq C(\mathcal{O}_{0},\mathcal{W}_{0})
\end{eqnarray}
 due to the smallness of $\epsilon$.

\noindent Now we estimate $\mathcal{F}(S)$ in terms of $\mathcal{F}$. A direct application of the trace inequalities (\ref{eq:codim1}-\ref{eq:codim2}) yields 
\begin{eqnarray}
||\alpha^{F}||_{L^{4}(S)}\leq C(\mathcal{O}_{0},\mathcal{F}_{0},\mathcal{F}),~||\bar{\alpha}^{F}||_{L^{4}(S)}\leq C(\mathcal{O}_{0},\mathcal{F}_{0},\mathcal{F}).
\end{eqnarray}
Now recall the $\hnabla_{3}$ equation satisfied by $\rho^{F}$ and commute it with $\hnabla$ to obtain 
\begin{eqnarray}
\label{eq:temp}
\hnabla_{3}\hnabla\rho^{F}=-\hnabla\hat{div} \bar{\alpha}^{F}+\nabla( \tr\chibar)\rho^{F}+ \tr\chibar\hnabla\rho^{F}+\nabla(\eta-\etabar)\bar{\alpha}^{F}+(\eta-\etabar)\hnabla\bar{\alpha}^{F}\\\nonumber 
+(\bar{\beta}+\bar{\alpha}^{F}(\rho^{F}-\sigma^{F})+\bar{\alpha}^{F})\rho^{F}+(\eta+\etabar)\hnabla_{3}\rho^{F}-\bar{\chi}\hnabla\rho^{F}+\bar{\chi}\eta\rho^{F}.
\end{eqnarray}
Now an application of the transport inequality (proposition \ref{transport}) yields 
\begin{eqnarray}
||\hnabla\rho^{F}||_{L^{2}(S_{u,\ubar})}\leq C(||\hnabla\rho^{F}||_{L^{2}(S_{0,\ubar})}+\int_{0}^{u}||\hnabla_{3}\hnabla\rho^{F}||_{L^{2}(S)}du^{'}).
\end{eqnarray}
Now notice each term on the right-hand side of (\ref{eq:temp}) can be estimated in $L^{2}$ by definition of $\mathcal{W}$, $\mathcal{F}$ and utilizing the estimates for the connections (lemma )
\begin{eqnarray}
||\hnabla\rho^{F}||_{L^{2}(S_{u,\ubar})}\leq C||\hnabla\rho^{F}||_{L^{2}(S_{0,\ubar})}+\epsilon C(\mathcal{O}_{0},\mathcal{W}(S),\mathcal{W},\mathcal{F},\mathcal{F}(S))
\end{eqnarray}
Since $u\leq \epsilon$. We may therefore estimate $||\hnabla\rho^{F}||_{L^{2}(S)}$ by means of the initial data i.e., 
\begin{eqnarray}
\label{eq:adhoc1}
||\hnabla\rho^{F}||_{L^{2}(S_{u,\ubar})}\leq C(\mathcal{O}_{0},\mathcal{F}_{0}).
\end{eqnarray}
We obtain a similar estimate for $\sigma^{F}$ since it verifies a $\hnabla_{3}$ equation
\begin{eqnarray}
\label{eq:adhoc2}
||\hnabla\sigma^{F}||_{L^{2}(S_{u,\ubar})}\leq C(\mathcal{O}_{0},\mathcal{F}_{0}).
\end{eqnarray}
Of course, using the integration inequality, we also have 
\begin{eqnarray}
||\rho^{F},\sigma^{F}||_{L^{2}(S)}\leq C(\mathcal{O}_{0},\mathcal{F}_{0}).
\end{eqnarray}
Now using the trace inequality we obtain 
\begin{eqnarray}
||\hnabla(\rho^{F},\sigma^{F})||_{L^{4}(S_{u,\ubar})}\leq C\left(||\hnabla(\rho^{F},\sigma^{F})||_{L^{4}(S_{0,\ubar})}+||\hnabla(\rho^{F},\sigma^{F})||^{1/2}_{L^{2}(  \Hbar)}||\hnabla_{3}\hnabla(\rho^{F},\sigma^{F})||^{1/4}_{L^{2}(  \Hbar)}\right.\\\nonumber
\left.(||\hnabla(\rho^{F},\sigma^{F})||_{L^{2}(  \Hbar)}\nonumber+||\hnabla^{2}(\rho^{F},\sigma^{F})||_{L^{2}(  \Hbar)})^{1/4}\right)
\end{eqnarray}
where each term on the right-hand side is estimated in terms of $\mathcal{F},\mathcal{F}(S),\mathcal{W},$ and $\mathcal{W}(S)$ and in addition we gain a factor of $\epsilon^{\frac{1}{2}}$ from $||\hnabla(\rho^{F},\sigma^{F})||^{1/2}_{L^{2}(  \Hbar)}$ due to the estimates (\ref{eq:adhoc1}-\ref{eq:adhoc2}) i.e., 
\begin{eqnarray}
||\hnabla(\rho^{F},\sigma^{F})||_{L^{4}(S_{u,\ubar})}\leq C(\mathcal{O}_{0},\mathcal{F}_{0})+\epsilon^{\frac{1}{4}}\mathcal{C}(O_{0},\mathcal{F}_{0},\mathcal{W},\mathcal{F},\mathcal{F}(S),\mathcal{W}(S))
\end{eqnarray}
and therefore 
\begin{eqnarray}
||\hnabla(\rho^{F},\sigma^{F})||_{L^{4}(S_{u,\ubar})}\leq C(\mathcal{O}_{0},\mathcal{F}_{0})
\end{eqnarray}
after choosing sufficiently small $\epsilon$. However, for $\alpha^{F}$ we can not make use of the $\hnabla_{3}$ equation since it is not controlled on $  \Hbar$. Instead, we can make use of the $\hnabla_{4}$ trace inequality (\ref{eq:codim1})
\begin{eqnarray}
||\hnabla\alpha^{F}||_{L^{4}(S_{u,\ubar})}\nonumber\leq \\\nonumber C\left(||\hnabla\alpha^{F}||_{L^{4}(S_{u,0})}+||\hnabla\alpha^{F}||^{1/2}_{L^{2}(H)}||\hnabla_{4}\hnabla\alpha^{F}||^{1/4}_{L^{2}(H)}(||\hnabla\alpha^{F}||_{L^{2}(H)}\nonumber+||\hnabla\hnabla\alpha^{F}||_{L^{2}(H)})^{1/4}\right)
\end{eqnarray}
where every term on the right-hand side is under control 
\begin{eqnarray}
||\hnabla\alpha^{F}||_{L^{4}(S_{u,\ubar})}\leq C(\mathcal{O}_{0},\mathcal{W}_{0},\mathcal{F},\mathcal{F}_{0}).
\end{eqnarray}
An exact similar argument for $\bar{\alpha}^{F}$ yields 
\begin{eqnarray}
||\hnabla\bar{\alpha}^{F}||_{L^{4}(S_{u,\ubar})}\leq C(\mathcal{O}_{0},\mathcal{W}_{0},\mathcal{F},\mathcal{F}_{0}).
\end{eqnarray}
This concludes the proof of the lemma.
\end{proof}
Notice that we also have 
\begin{eqnarray}
\label{eq:need1}
||\hnabla_{4}\alpha^{F}||_{L^{4}(S_{u,\ubar})}\leq C(\mathcal{O}_{0},\mathcal{W}_{0},\mathcal{F},\mathcal{F}_{0}),~
||\hnabla_{3}\bar{\alpha}^{F}||_{L^{4}(S_{u,\ubar})}\leq C(\mathcal{O}_{0},\mathcal{W}_{0},\mathcal{F},\mathcal{F}_{0})
\end{eqnarray}
through a direct use of the trace inequalities (\ref{eq:codim1}-\ref{eq:codim2}).

\section{Energy estimates for the Weyl and Yang-Mills curvature components}
\noindent In this section we estimate the energy associated with the Weyl and Yang-Mills curvature components. This would complete the proof of the main theorem. Once the connection coefficients are estimated, it is straightforward to estimate the curvatures only we need to keep track of bad components ($\bar{\alpha}$ and $\bar{\alpha}^{F}$) since they controlled only on an incoming null hypersurface $  \Hbar$. However as we have mentioned previously, the connection coefficients that are associated with these bad terms in the curvature estimate are controlled solely by the initial data $\mathcal{O}_{0}$. Therefore, we can safely utilize the Gr\"onwall's inequality to close the energy argument. There are several ways to obtain the energy estimate for the curvatures. One of the ways is to utilize the stress-energy tensors and associated divergence identities. Even though we do not have a canonical stress-energy tensor for gravity due to the equivalence principle, we can still use the Bel-Robinson tensor. Yang-Mills field, on the other hand, is equipped with its own canonical stress-energy tensor. Despite the fact that the utilization of the stress-energy tensor provides us with a direct physical insight into energy propagation (and possible concentration), we will not use it in the current context. Instead, we take a direct approach (the approach that exists in the traditional PDE literature). First, we write down the integration identities that are useful in the current context.\\
For a gauge invariant object $f: M\to\mathbb{R}$, the following integration by parts identities hold
\begin{eqnarray}
\label{eq:IBP1}
\int_{D_{u},\ubar}\nabla_{3}f=\int_{H_{0}}f-\int_{H_{0}(0,\ubar)}f+\int_{\mathcal{D}_{u,\ubar}}f(2\omegabar- \tr\chibar),\\
\label{eq:IBP2}
\int_{D_{u},\ubar}\nabla_{4}f=\int_{  \Hbar_{\ubar}(0,u)}f-\int_{  \Hbar_{0}(0,u)}f+\int_{\mathcal{D}_{u,\ubar}}f(2\omega- \tr\chi).
\end{eqnarray}
See proposition (\ref{integration}) for a proof. 
We start the estimates with components of the Weyl curvature. Notice that we need to identify a suitable collection of the curvature components whose energy estimates generate a cancellation of principal terms. This should of course be possible since we have already established the manifestly hyperbolic characteristics of the Yang-Mills sourced Bianchi equations.\\
\subsection{Energy Estimates for Weyl curvature}
\begin{lemma}
\label{71}
The null components of the Weyl curvature satisfy the following $L^{2}$ energy estimates
\begin{eqnarray}
\int_{H}|\alpha|^{2}+2\int_{  \Hbar_{\ubar}}|\beta|^{2}\leq C(\mathcal{W}_{0})+\epsilon C(\mathcal{O}_{0},\mathcal{W},\mathcal{F})\\
\int_{H_{u}}|\beta|^{2}+\int_{  \Hbar_{\ubar}}|\rho|^{2}+\int_{  \Hbar_{\ubar}}|\sigma|^{2}\leq C(\mathcal{W}_{0})+\epsilon C(\mathcal{O}_{0},\mathcal{W},\mathcal{F}),\\
\int_{  \Hbar_{u}}|\bar{\beta}|^{2}+\int_{H_{u}}|\rho|^{2}+\int_{H_{u}}|\sigma|^{2}\leq C(\mathcal{W}_{0})+\epsilon C(\mathcal{O}_{0},\mathcal{W},\mathcal{F}),\\
2\int_{H_{u}}|\bar{\beta}|^{2}+\int_{  \Hbar_{\ubar}}|\bar{\alpha}|^{2}\leq C(\mathcal{W}_{0})+\epsilon C(\mathcal{O}_{0},\mathcal{W},\mathcal{F})+\epsilon^{\frac{1}{2}}C(\mathcal{O}_{0},\mathcal{W},\mathcal{F})\\\nonumber +C(\mathcal{O}_{0})\int_{0}^{\ubar}||\bar{\alpha}||^{2}_{L^{2}(  \Hbar)}
\end{eqnarray}
\end{lemma}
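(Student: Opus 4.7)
The plan is to derive the four energy identities in parallel by exploiting the Bianchi pair structure of the null Bianchi equations (\ref{eq:bianchi1})--(\ref{eq:bianchi2}). Specifically, I would organize the Weyl components into the pairs $(\alpha,\beta)$, $(\beta,\rho,\sigma)$, $(\bar\beta,\rho,\sigma)$, and $(\bar\alpha,\bar\beta)$, and for each pair contract each member with its own $\nabla_3$ or $\nabla_4$ evolution equation, then add the resulting scalar identities with the multiplicities indicated in the statement. The factor $2$ appearing in front of $|\beta|^2$ and $|\bar\beta|^2$ is precisely what is required so that after integration by parts on $S_{u,\ubar}$ the principal Hodge terms $\langle\alpha,\nabla\hat\otimes\beta\rangle + 2\langle\beta,\text{div}\alpha\rangle$ and $\langle\bar\alpha,\nabla\hat\otimes\bar\beta\rangle + 2\langle\bar\beta,\text{div}\bar\alpha\rangle$ (and their analogues with $\rho$, $\sigma$) collapse to at worst lower-order terms. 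Integrating the resulting identities over $\mathcal{D}_{u,\ubar}$ and applying the integration lemma (Lemma \ref{integration}, identities (\ref{eq:IBP1})--(\ref{eq:IBP2})) converts the $\int\nabla_3(|\cdot|^2)$ and $\int\nabla_4(|\cdot|^2)$ into the stated boundary fluxes on $H_u$, $\Hbar_\ubar$, $H_0$, $\Hbar_0$, with the data fluxes absorbed into $\mathcal{W}_0$.

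For the first three pairs every curvature factor is integrated at least once in the $u$-direction over $[0,\epsilon]$, so the bulk remainder terms automatically carry a smallness factor of $\epsilon$. These bulk remainders are schematically of the form $\int_{\mathcal{D}_{u,\ubar}}\varphi\cdot\Psi\cdot\Psi'$ and $\int_{\mathcal{D}_{u,\ubar}}\Psi\cdot(\text{source from }\mathfrak{T})$; each factor of $\varphi$ is estimated in $L^\infty$ via Lemmas \ref{1}--\ref{7}, each pair of curvature components via Cauchy--Schwarz and the definition of $\mathcal{W}$, so the net contribution is controlled by $\epsilon\, C(\mathcal{O}_0,\mathcal{W},\mathcal{F})$. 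The fourth pair is qualitatively different since $\bar\alpha$ is only controlled on $\Hbar$ and the bulk integral $\int_{\mathcal{D}_{u,\ubar}}|\bar\alpha|^2 = \int_0^\ubar \|\bar\alpha\|_{L^2(\Hbar_{\ubar'})}^2\,d\ubar'$ gains no $\epsilon$ factor; here the key observation is that the connection coefficients multiplying $|\bar\alpha|^2$ in the error terms all belong to the good class $\varphi_g$, whose $L^\infty$ norm is bounded purely by $C\mathcal{O}_0$ (Lemma \ref{1}), allowing one to close the estimate by a Grönwall argument in $\ubar$, which is exactly the extra integral appearing on the right-hand side of the fourth inequality.

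The main obstacle will be treating the Yang-Mills source terms on the right-hand sides of (\ref{eq:bianchi1})--(\ref{eq:bianchi2}), in particular the genuinely problematic $D_4 R_{AB}$ and $D_3 R_{AB}$ contributions to the equations for $\alpha$ and $\bar\alpha$, since after substituting $R_{\mu\nu}=\mathfrak{T}_{\mu\nu}$ and differentiating the explicit null decomposition of $\mathfrak{T}$ these become contractions involving transversal derivatives $\hnabla_4\alpha^F$, $\hnabla_3\bar\alpha^F$, and even mixed second derivatives such as $\hnabla_4\hnabla\alpha^F$, $\hnabla_3\hnabla\bar\alpha^F$. The strategy is two-fold: wherever a transversal derivative of $\alpha^F$ or $\bar\alpha^F$ lands on the wrong null hypersurface (e.g.\ $\hnabla_3\bar\alpha^F$ integrated over $H$), I use the null Yang-Mills equations (\ref{eq:YM1})--(\ref{eq:YM2}) to rewrite it as a horizontal quantity, and whenever a pure Hodge derivative remains I integrate by parts on $S_{u,\ubar}$ to move the derivative off the Yang-Mills factor onto the Weyl factor (producing harmless $\nabla\Psi$ times $\Phi^F$ type expressions). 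The norm $\mathcal{F}$ has been deliberately defined in (\ref{eq:yangnorm}) to include exactly the mixed two-derivative quantities that survive this redistribution, so every residual integrand is bounded in $L^2(\mathcal{D}_{u,\ubar})$ by $C(\mathcal{O}_0)(\mathcal{W}+\mathcal{F})^2$ with the requisite $\epsilon^{1/2}$ or $\epsilon$ smallness factor extracted from the transverse integration, yielding the bounds as stated.
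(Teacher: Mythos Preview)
Your overall strategy matches the paper's: the same Bianchi pairing, the same use of the integration lemma (\ref{eq:IBP1})--(\ref{eq:IBP2}), the same cancellation of principal Hodge terms, and the same observation that the coefficients multiplying $|\bar\alpha|^2$ in the bulk are $\omega,\tr\chi\in\{\varphi_g\}$ so that Gr\"onwall in $\ubar$ closes the fourth inequality.

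Two points where your description deviates from what actually happens. First, at this $L^2$ level no second-order Yang--Mills quantities such as $\hnabla_4\hnabla\alpha^F$ or $\hnabla_3\hnabla\bar\alpha^F$ arise: the Ricci source terms in (\ref{eq:bianchi1})--(\ref{eq:bianchi2}) are single derivatives $D_\mu R_{\nu\lambda}$ of $\mathfrak{T}$, hence only first derivatives of $\alpha^F,\bar\alpha^F,\rho^F,\sigma^F$ appear, and no angular integration by parts onto the Weyl factor is needed---the paper estimates every source term directly. Second, your sentence ``wherever a transversal derivative of $\alpha^F$ or $\bar\alpha^F$ lands on the wrong null hypersurface\ldots I use the null Yang--Mills equations to rewrite it'' is not quite right for $\hnabla_3\bar\alpha^F$ and $\hnabla_4\alpha^F$: there is no Yang--Mills evolution equation for these, which is exactly why the paper includes them directly in the definition of $\mathcal{F}$ (\ref{eq:yangnorm}). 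The paper explicitly flags this when it says ``we encounter a term $\hnabla_3\bar\alpha^F$ which does not arise from a null evolution equation and this is precisely why we needed to add this term in the definition of Yang--Mills curvature energy.''

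One simplification the paper exploits that you do not mention: the source contributions proportional to $\gamma_{ab}$ in the $\nabla_3\alpha$ and $\nabla_4\bar\alpha$ equations, namely $\frac12(D_3R_{44}-D_4R_{43})\gamma$ and $\frac12(D_4R_{33}-D_3R_{34})\gamma$, are annihilated upon contraction with $\alpha$ and $\bar\alpha$ by trace-freeness. This removes what would otherwise look like a dangerous interaction of $\bar\alpha$ with the stress-energy at top order.
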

\begin{proof} The proof relies on the direct use of the integration by parts inequalities (\ref{eq:IBP1}-\ref{eq:IBP2}). First consider $(\alpha,\beta)$. Direct use of (\ref{eq:IBP1}-\ref{eq:IBP2}) yields 
\begin{eqnarray}
2\int_{\mathcal{D}_{u,\ubar}}\left(\langle\alpha,\nabla_{3}\alpha\rangle\nonumber+2\langle\beta,\nabla_{4}\beta\rangle\right)\\\nonumber 
=\int_{H_{u}}|\alpha|^{2}+2\int_{  \Hbar_{\ubar}}|\beta|^{2}-\int_{H_{0}(0,\ubar)}|\alpha|^{2}-2\int_{  \Hbar_{0}(u,0)}|\beta|^{2}+\int_{\mathcal{D}_{u,\ubar}}|\alpha|^{2}(2\omegabar-\frac{1}{2} \tr\chibar)\\\nonumber 
+\int_{\mathcal{D}_{u,\ubar}}2|\beta|^{2}(2\omega-\frac{1}{2} \tr\chi)
\end{eqnarray}
and therefore 
\begin{eqnarray}
\int_{H_{u}}|\alpha|^{2}+2\int_{  \Hbar_{\ubar}}|\beta|^{2}=\int_{H_{0}(0,\ubar)}|\alpha|^{2}+\int_{  \Hbar_{0}(u,0)}2|\beta|^{2}\\\nonumber\underbrace{-\int_{\mathcal{D}_{u,\ubar}}|\alpha|^{2}(2\omegabar-\frac{1}{2} \tr\chibar)
-\int_{\mathcal{D}_{u,\ubar}}2|\beta|^{2}(2\omega-\frac{1}{2} \tr\chi)}_{ER1}\\\nonumber 
-2\underbrace{\int_{\mathcal{D}_{u,\ubar}}\left(\langle\alpha,\nabla_{3}\alpha\rangle\nonumber+\langle\beta,\nabla_{4}\beta\rangle\right)}_{ER2}
\end{eqnarray}
We have to estimate the error terms $ER_{1}$ and $ER_{2}$. Since $\alpha$ is innocuous in the current context (recall we are trying to prove a semi-global existence result where the size of the domain transverse to $H$ is small), $ER_{1}$ and $ER_{2}$ can be easily estimated in terms of the initial data and smallness of $u$. Nevertheless, we write down the estimates explicitly. Let us recall the definition of integration over the spacetime slab $\mathcal{D}_{u,\ubar}$
\begin{eqnarray}
\int_{\mathcal{D}_{u,\ubar}}:=\int_{0}^{u}\int_{0}^{\ubar}\int_{S_{u,\ubar}}\Omega^{2}\mu_{\gamma}d\ubar du
\end{eqnarray}
and therefore in the error term $ER_{1}$, we can estimate the connection coefficients in $\sup_{u,\ubar}L^{\infty}(S_{u,\ubar})$ and estimate $\alpha,\beta$ by their $L^{2}$ norm on $H$. In doing so we gain a factor of $u$ ($\leq \epsilon$). Since $||\Omega||_{L^{\infty}(\mathcal{S)}}\lesssim 1$, we have 
\begin{eqnarray}
|ER_{1}|\leq
\sup_{u,\ubar}(||\omegabar||_{L^{\infty}(S_{u,\ubar})}+|| \tr\chibar||_{L^{\infty}(S_{u,\ubar})})\int_{0}^{u}||\alpha||^{2}_{L^{2}(H)}du\\\nonumber +\sup_{u,\ubar}(||\omega||_{L^{\infty}(S_{u,\ubar})}+|| \tr\chi||_{L^{\infty}(S_{u,\ubar})})\int_{0}^{u}||\beta||^{2}_{L^{2}(H)}du\\\nonumber 
\leq \epsilon C(\mathcal{O}_{0},\mathcal{W},\mathcal{F}).
\end{eqnarray}
Now we have to estimate the error term $ER_{2}$. We do so using the Yang-Mills sourced Bianchi equations
\begin{eqnarray}
ER_{2}=-2\int_{\mathcal{D}_{u,\ubar}}\left(\langle\alpha,\nabla_{3}\alpha\rangle\nonumber+2\langle\beta,\nabla_{4}\beta\rangle\right)\\\nonumber 
=-2\int_{\mathcal{D}_{u,\ubar}}\left(\underbrace{\langle\alpha,\nabla\hat{\otimes}\beta}_{PI1}-\frac{1}{2} \tr\chibar\alpha\nonumber+4\omegabar\alpha-3(\widehat{\chi}\rho+~^{*}\widehat{\chi}\sigma)+(\zeta+4\eta)\hat{\otimes}\beta\right.\\\nonumber 
\left.+\frac{1}{2}(D_{3}R_{44}-D_{4}R_{43})\gamma\rangle+2\underbrace{\langle\beta,div\alpha}_{PI2}-2 \tr\chi\beta-2\omega\beta^{W}+\eta\cdot\alpha-\frac{1}{2}(D_{b}R_{44}-D_{4}R_{4b})\rangle\right)\\\nonumber 
\sim\int_{\mathcal{D}_{u,\ubar}}\left(\langle\alpha,-\frac{1}{2} \tr\chibar\alpha\nonumber+4\omegabar\alpha-3(\widehat{\chi}\rho+~^{*}\widehat{\chi}\sigma)+(\zeta+4\eta)\hat{\otimes}\beta\rangle\right.\\\nonumber 
\left.+\langle\beta,-2 \tr\chi\beta-2\omega\beta+\eta\cdot\alpha-\frac{1}{2}(D_{b}R_{44}-D_{4}R_{4b})\rangle\right)+\underbrace{\int_{\mathcal{D}_{u,\ubar}}\langle\alpha,(\eta+\etabar)\beta\rangle}_{IBP1},
\end{eqnarray}
where we note that the principal terms $PI1$ and $PI2$ cancel each other producing the extra term ($IBP1$) due to the integration by parts procedure. In addition due to $\gamma-$traceless property of $\alpha$, $\langle\alpha,(D_{3}R_{44}-D_{4}R_{43})\gamma\rangle=0$. In order to estimate $ER_{2}$, we need to use the expression for $D_{b}R_{44}-D_{4}R_{4b}$. Since there is no delicate cancellation here, we will write this term schematically 
\begin{eqnarray}
D_{b}R_{44}-D_{4}R_{4b}\sim \langle\alpha^{F},\hnabla_{b}\alpha^{F}\rangle-\chi_{bc}\mathfrak{T}_{c4}+\eta_{b}\mathfrak{T}_{44}-2\omega \mathfrak{T}_{4b}\\\nonumber -(\hnabla_{4}\alpha^{F}_{b}\cdot(\rho^{F}\nonumber+\sigma^{F})-\alpha^{F}_{b}\cdot(\hnabla_{4}\rho^{F}+\hnabla_{4}\sigma^{F})).
\end{eqnarray}
Now note an extremely important fact: since we do not have a $\hnabla_{4}$ equation for $\alpha^{F}$, we can only estimate this term using the definition of $\mathcal{F}$ (this is one of the reasons we need to include $\hnabla_{4}\alpha^{F}$ term in the definition of $\mathcal{F}$). Noting $\mathfrak{T}_{44}\sim |\alpha^{F}|^{2}, \mathfrak{T}_{4b}\sim \alpha^{F}(\rho^{F}+\sigma^{F})$ we obtain
\begin{eqnarray}
|\int_{\mathcal{D}_{u,\ubar}}\langle\beta,D_{b}R_{44}-D_{4}R_{4b}\rangle|\leq C(\mathcal{O}_{0},\mathcal{W},\mathcal{F})||\beta||_{L^{2}(\mathcal{D})}||\hnabla_{4}\alpha^{F}||_{L^{2}(D)}\leq \epsilon C(\mathcal{O}_{0},\mathcal{W},\mathcal{F}).
\end{eqnarray}
Using the estimates of the connection coefficients, we may estimate the remaining terms of $ER_{2}$. Finally we obtain 
\begin{eqnarray}
|ER_{2}|\leq \epsilon C(\mathcal{O}_{0},\mathcal{W},\mathcal{F}).
\end{eqnarray}
The most important point here is that we gain a factor of $\epsilon$ from the bulk integral (i.e., the integral over $D$). Therefore, even though some of the connection coefficients (such as $\eta$ and $\omegabar$) are dependent on $\mathcal{W}$ and $\mathcal{F}$, they always appear as multiplied by a factor of $\epsilon$ and therefore under control. We have the first estimate 
\begin{eqnarray}
\int_{H_{u}}|\alpha|^{2}+\int_{  \Hbar_{\ubar}}|\beta|^{2}=\int_{H_{0}(0,\ubar)}|\alpha|^{2}+\int_{  \Hbar_{0}(u,0)}|\beta|^{2}+\epsilon C(\mathcal{O}_{0},\mathcal{W},\mathcal{F})
\end{eqnarray}
or 
\begin{eqnarray}
\int_{H_{u}}|\alpha|^{2}+\int_{  \Hbar_{\ubar}}|\beta|^{2}\leq C(\mathcal{W}_{0})+\epsilon C(\mathcal{O}_{0},\mathcal{W},\mathcal{F}).
\end{eqnarray}
We proceed in an exactly similar manner in order to estimate the remaining curvature components. Next we identify the triple $(\beta,\rho,\sigma)$ and write down the integral equations 
\begin{eqnarray}
2\int_{\mathcal{D}_{u,\ubar}}\left(\langle\beta,\nabla_{3}\beta\rangle+\langle\sigma,\nabla_{4}\sigma\rangle+\langle\rho,\nabla_{4}\rho\rangle\right)=\int_{H_{u}}|\beta|^{2}+\int_{  \Hbar_{\ubar}}|\rho|^{2}+\int_{  \Hbar_{\ubar}}|\sigma|^{2}\\\nonumber 
-\int_{H_{0}}|\beta|^{2}-\int_{  \Hbar_{0}}|\rho|^{2}-\int_{  \Hbar_{0}}|\sigma|^{2}+\int_{\mathcal{D}_{u,\ubar}}|\beta|^{2}(2\omegabar-\frac{1}{2} \tr\chibar)
+\int_{\mathcal{D}_{u,\ubar}}|\rho|^{2}(2\omega-\frac{1}{2} \tr\chi)\\\nonumber 
+\int_{\mathcal{D}_{u,\ubar}}|\sigma|^{2}(2\omega-\frac{1}{2} \tr\chi),
\end{eqnarray}
that is, 
\begin{eqnarray}
\int_{H_{u}}|\beta|^{2}+\int_{  \Hbar_{\ubar}}|\rho|^{2}+\int_{  \Hbar_{\ubar}}|\sigma|^{2}=\int_{H_{0}}|\beta|^{2}+\int_{  \Hbar_{0}}|\rho|^{2}+\int_{  \Hbar_{0}}|\sigma|^{2}\\\nonumber 
+\underbrace{\int_{\mathcal{D}_{u,\ubar}}|\beta|^{2}(2\omegabar-\frac{1}{2} \tr\chibar)
+\int_{\mathcal{D}_{u,\ubar}}|\rho|^{2}(2\omega-\frac{1}{2} \tr\chi)+\int_{\mathcal{D}_{u,\ubar}}|\sigma|^{2}(2\omega-\frac{1}{2} \tr\chi)}_{ER3}\\\nonumber 
-2\underbrace{\int_{\mathcal{D}_{u,\ubar}}\left(\langle\beta,\nabla_{3}\beta\rangle+\langle\sigma,\nabla_{4}\sigma\rangle+\langle\rho,\nabla_{4}\rho\rangle\right)}_{ER4}.
\end{eqnarray}
Once again, using the estimates of the connection coefficients from lemma (2), we estimate the error term $ER_{3}$
\begin{eqnarray}
|ER_{3}|\leq \epsilon C(\mathcal{O}_{0},\mathcal{W},\mathcal{F}).
\end{eqnarray}
In order to estimate $ER_{4}$, we utilize the null Bianchi equations (\ref{eq:bianchi1}-\ref{eq:bianchi2}). Explicit calculations yield 
\begin{eqnarray}
ER_{4}=2\int_{\mathcal{D}_{u,\ubar}}\left(\langle\beta,\nabla_{3}\beta\rangle+\langle\sigma,\nabla_{4}\sigma\rangle+\langle\rho,\nabla_{4}\rho\rangle\right)\\\nonumber 
=2\int_{\mathcal{D}_{u,\ubar}}\left(\langle\beta,- \tr\chibar\beta+\nabla\rho+~^{*}\nabla\sigma+2\omegabar\beta+2\widehat{\chi}\cdot \bar{\beta}+3(\eta\rho+~^{*}\eta\sigma)\nonumber+\frac{1}{2}(D_{b}R_{34}-D_{4}R_{3b})\rangle\right)\\\nonumber 
+2\int_{\mathcal{D}_{u,\ubar}}\left(\langle\sigma,-\frac{3}{2} \tr\chi \sigma-div~^{*}\beta+\frac{1}{2}\chibarhat\cdot~^{*}\alpha-\zeta\cdot~^{*}\beta-2\etabar\cdot~^{*}\beta\nonumber-\frac{1}{4}(D_{\mu}R_{4\nu}-D_{\nu}R_{4\mu})\epsilon^{\mu\nu}~_{34}\rangle\right)\\\nonumber 
+2\int_{\mathcal{D}_{u,\ubar}}\left(\langle\rho,-\frac{3}{2} \tr\chi\rho+div\beta-\frac{1}{2}\chibarhat\cdot\alpha+\zeta\cdot\beta+2\etabar\cdot\beta-\frac{1}{4}(D_{3}R_{44}-D_{4}R_{34})\rangle\right)\\\nonumber 
\sim\int_{\mathcal{D}_{u,\ubar}}\left(\langle\beta,- \tr\chibar\beta+2\omegabar\beta+2\widehat{\chi}\cdot \bar{\beta}+3(\eta\rho+~^{*}\eta\sigma)\nonumber+\frac{1}{2}(D_{b}R_{34}-D_{4}R_{3b})\rangle\right)\\\nonumber 
+\int_{\mathcal{D}_{u,\ubar}}\left(\langle\sigma,-\frac{3}{2} \tr\chi \sigma+\frac{1}{2}\chibarhat\cdot~^{*}\alpha-\zeta\cdot~^{*}\beta-2\etabar\cdot~^{*}\beta\nonumber-\frac{1}{4}(D_{\mu}R_{4\nu}-D_{\nu}R_{4\mu})\epsilon^{\mu\nu}~_{34}\rangle\right)\\\nonumber 
+\int_{\mathcal{D}_{u,\ubar}}\left(\langle\rho,-\frac{3}{2} \tr\chi\rho-\frac{1}{2}\chibarhat\cdot\alpha+\zeta\cdot\beta+2\etabar\cdot\beta-\frac{1}{4}(D_{3}R_{44}-D_{4}R_{34})\rangle\right)\\\nonumber +\int_{\mathcal{D}_{u,\ubar}}(\langle(\eta+\etabar)\sigma,~^{*}\beta\rangle-\langle(\eta+\etabar)\rho,\beta\rangle),
\end{eqnarray}
where the principal terms cancel point-wise through the integration by parts procedure. In order to estimate $ER_{4}$, we need the expressions for the source terms involving Yang-Mills curvature. Using the Einstein's equations $R_{\mu\nu}=\mathfrak{T}_{\mu\nu}$, schematically 
\begin{eqnarray}
D_{b}R_{34}-D_{4}R_{3b}\sim (\langle\hnabla\rho^{F},\rho^{F}\rangle+\langle\hnabla\sigma^{F},\sigma^{F}\rangle)\nonumber-\bar{\chi}\alpha^{F}(\sigma^{F}+\rho^{F})-\hnabla_{4}\bar{\alpha}^{F}\cdot(\rho^{F}+\sigma^{F})\\\nonumber -\bar{\alpha}^{F}\cdot(\hnabla_{4}\rho^{F}+\hnabla_{4}\sigma^{F})\\\nonumber 
+2\omega\bar{\alpha}^{F}(\rho^{F}+\sigma^{F})+2\etabar(\rho^{F}\cdot\rho^{F}+\alpha^{F}\cdot\bar{\alpha^{F}}+\sigma^{F}\cdot\sigma^{F})+\etabar(\rho^{F}\cdot\rho^{F}+\sigma^{F}\cdot\sigma^{F}),\\
(D_{\mu}R_{4\nu}-D_{\nu}R_{4\mu})\epsilon^{\mu\nu}~_{34}\sim \hnabla(\alpha^{F}\cdot\rho^{F}-\alpha^{F}\cdot\sigma^{F})-\chi(\rho^{F}\cdot\rho^{F}+\alpha^{F}\cdot\bar{\alpha^{F}}+\sigma^{F}\cdot\sigma^{F})\\\nonumber+(\eta-\etabar)\alpha^{F}\cdot(\rho^{F}+\sigma^{F}),\\\nonumber
D_{3}R_{44}-D_{4}R_{34}\sim\langle\alpha^{F},\hnabla_{3}\alpha^{F}\rangle+\omegabar|\alpha^{F}|^{2}+\eta_{A}(\alpha^{F}\cdot\sigma^{F}+\alpha^{F}\cdot\rho^{F})+\langle\rho^{F},\hnabla_{4}\rho^{F}\rangle\nonumber\\\nonumber+\langle\sigma^{F},\hnabla_{4}\sigma^{F}\rangle +\etabar(\rho^{F}\bar{\alpha}^{F}+\rho^{F}\cdot\sigma^{F})
\end{eqnarray}
where we may use the null evolution equations for the Yang-Mills curvature components
\begin{eqnarray}
\hnabla_{4}\bar{\alpha}^{F}+\frac{1}{2} \tr\chi\bar{\alpha}^{F}=-\hnabla\rho^{F}-~^{*}\hnabla\sigma^{F}-2~^{*}\etabar\sigma^{F}-2\etabar\rho^{F}+2\omega\bar{\alpha}^{F}-\chibarhat\cdot\alpha^{F}\\
\hnabla_{3}\alpha^{F}+\frac{1}{2} \tr\chibar\alpha^{F}=-\hnabla\rho^{F}+~^{*}\hnabla\sigma^{F}-2~^{*}\eta\sigma^{F}+2\eta\rho^{F}+2\omegabar\alpha^{F}-\widehat{\chi}\cdot \bar{\alpha}^{F}\\
\hnabla_{4}\rho^{F}=-\hat{div} \alpha^{F}- \tr\chi\rho^{F}-(\eta-\etabar)\cdot\alpha^{F}\\
\hnabla_{4}\sigma^{F}=-\hat{curl} \alpha^{F}- \tr\chi \sigma^{F}+(\eta-\etabar)\cdot ~^{*}\alpha^{F}\\
\hnabla_{3}\rho^{F}=-\hat{div} \bar{\alpha}^{F}+ \tr\chibar\rho^{F}+(\eta-\etabar)\cdot\bar{\alpha}^{F}\\
\label{eq:YM2}
\hnabla_{3}\sigma^{F}=-\hat{curl} \bar{\alpha}^{F}- \tr\chibar\sigma^{F}+(\eta-\etabar)\cdot~^{*}\bar{\alpha}^{F},
\end{eqnarray}

Note that even though $\bar{\alpha}^{F}$ appears in the source terms, we can estimate its point-wise norm on $S$ by means of $||\hnabla\bar{\alpha}^{F}||_{L^{4}(S)}$ through Sobolev embedding. We only write down the estimates involving the potentially problematic source terms since the remaining terms are harmless 
\begin{eqnarray}
|\int_{\mathcal{D}_{u,\ubar}}\langle\Psi,\bar{\alpha}^{F}\hnabla\Phi^{F}\rangle|\leq \int_{0}^{u}||\Psi||_{L^{2}(H)}||\hnabla\Phi^{F}||_{L^{2}(H)}\sup_{\ubar}||\nabla\bar{\alpha}^{F}||_{L^{4}(S)}du^{'}\leq \epsilon C(\mathcal{O}_{0},\mathcal{W},\mathcal{F}),
\end{eqnarray}
where $\Psi:=(\beta,\rho,\sigma)$ and $\Phi^{F}:=(\alpha^{F},\rho^{F},\sigma^{F})$. Therefore the error term $ER_{4}$ is estimated as follows 
\begin{eqnarray}
|ER_{4}|\leq \epsilon C(\mathcal{O}_{0},\mathcal{W},\mathcal{F}).
\end{eqnarray}
Putting everything together, we obtain 
\begin{eqnarray}
\int_{H_{u}}|\beta|^{2}+\int_{  \Hbar_{\ubar}}|\rho|^{2}+\int_{  \Hbar_{\ubar}}|\sigma|^{2}=\int_{H_{0}}|\beta|^{2}+\int_{  \Hbar_{0}}|\rho|^{2}+\int_{  \Hbar_{0}}|\sigma|^{2}+\epsilon C(\mathcal{O}_{0},\mathcal{W},\mathcal{F})\\\nonumber \leq C(\mathcal{W}_{0})+\epsilon C(\mathcal{O}_{0},\mathcal{W},\mathcal{F}).
\end{eqnarray}
Now we collect the triple $(\bar{\beta},\rho,\sigma)$ and write down the associated integral equation
\begin{eqnarray}
\int_{  \Hbar_{\ubar}}|\bar{\beta}|^{2}+\int_{H_{u}}|\rho|^{2}+\int_{H_{u}}|\sigma|^{2}=\int_{  \Hbar_{0}}|\bar{\beta}|^{2}+\int_{H_{0}}|\rho|^{2}+\int_{H_{0}}|\sigma|^{2}\\\nonumber 
+\underbrace{\int_{\mathcal{D}_{u,\ubar}}|\bar{\beta}|^{2}(2\omega-\frac{1}{2} \tr\chi)
+\int_{\mathcal{D}_{u,\ubar}}|\rho|^{2}(2\omegabar-\frac{1}{2} \tr\chibar)+\int_{\mathcal{D}_{u,\ubar}}|\sigma|^{2}(2\omegabar-\frac{1}{2} \tr\chibar)}_{ER5}\\\nonumber 
-2\underbrace{\int_{\mathcal{D}_{u,\ubar}}\left(\langle\bar{\beta},\nabla_{4}\bar{\beta}\rangle+\langle\sigma,\nabla_{3}\sigma\rangle+\langle\rho,\nabla_{3}\rho\rangle\right)}_{ER6}.
\end{eqnarray}
Using the estimates for the connections (lemma), $ER_{5}$ is under control
\begin{eqnarray}
|ER_{5}|\leq \epsilon C(\mathcal{O}_{0},\mathcal{W},\mathcal{F}).
\end{eqnarray}
In order to estimate $ER_{6}$, we utilize the null evolution equations
\begin{eqnarray}
ER_{6}=2\int_{\mathcal{D}_{u,\ubar}}\left(\langle\bar{\beta},\nabla_{4}\bar{\beta}\rangle+\langle\sigma,\nabla_{3}\sigma\rangle+\langle\rho,\nabla_{3}\rho\rangle\right)\\\nonumber 
=2\int_{\mathcal{D}_{u,\ubar}}\left(\langle\bar{\beta},- \tr\chi\bar{\beta}-\nabla\rho+~^{*}\nabla\sigma+2\omega\bar{\beta}+2\chibarhat\cdot\beta\nonumber-3(\etabar\rho-~^{*}\etabar\sigma)-\frac{1}{2}(D_{b}R_{43}-D_{3}R_{4b})\rangle\right)\\\nonumber 
+2\int_{{D_{u},\ubar}}\left(\langle\sigma,-\frac{3}{2} \tr\chibar\sigma-div~^{*}\bar{\beta}+\frac{1}{2}\widehat{\chi}\cdot~^{*}\bar{\alpha}-\zeta\cdot~^{*}\bar{\beta}-2\eta\cdot~^{*}\bar{\beta}\nonumber+\frac{1}{4}(D_{\mu}R_{3\nu}-D_{\nu}R_{3\mu})\epsilon^{\mu\nu}~_{34}\rangle\right)\\\nonumber 
+2\int_{\mathcal{D}_{u,\ubar}}\left(\langle\rho,-\frac{3}{2} \tr\chibar\rho-div\bar{\beta}-\frac{1}{2}\widehat{\chi}\cdot\bar{\alpha}+\zeta\cdot\bar{\beta}-2\eta\cdot\bar{\beta}+\frac{1}{4}(D_{3}R_{34}-D_{4}R_{33})\rangle\right)\\\nonumber 
\sim\int_{\mathcal{D}_{u,\ubar}}\left(\langle\bar{\beta},- \tr\chi\bar{\beta}+2\omega\bar{\beta}+2\chibarhat\cdot\beta\nonumber-3(\etabar\rho-~^{*}\etabar\sigma)-\frac{1}{2}(D_{b}R_{43}-D_{3}R_{4b})\rangle\right)\\\nonumber 
+\int_{{D_{u},\ubar}}\left(\langle\sigma,-\frac{3}{2} \tr\chibar\sigma+\frac{1}{2}\widehat{\chi}\cdot~^{*}\bar{\alpha}-\zeta\cdot~^{*}\bar{\beta}-2\eta\cdot~^{*}\bar{\beta}\nonumber+\frac{1}{4}(D_{\mu}R_{3\nu}-D_{\nu}R_{3\mu})\epsilon^{\mu\nu}~_{34}\rangle\right)\\\nonumber 
+\int_{\mathcal{D}_{u,\ubar}}\left(\langle\rho,-\frac{3}{2} \tr\chibar\rho-\frac{1}{2}\widehat{\chi}\cdot\bar{\alpha}+\zeta\cdot\bar{\beta}-2\eta\cdot\bar{\beta}+\frac{1}{4}(D_{3}R_{34}-D_{4}R_{33})\rangle\right)+\int_{\mathcal{D}_{u,\ubar}}\langle(\eta+\etabar)\rho,\bar{\beta}\rangle\\\nonumber 
+\int_{\mathcal{D}_{u,\ubar}}\langle(\eta+\etabar)\sigma,~^{*}\bar{\beta}\rangle.
\end{eqnarray}
Once again, we need the source terms explicitly in terms of the Yang-Mills curvature components
\begin{eqnarray}
D_{b}R_{43}-D_{3}R_{4b}\sim\langle\rho^{F},\hnabla\rho^{F}\rangle+\langle\sigma^{F},\hnabla\sigma^{F}\rangle-\bar{\chi}\alpha^{F}(\rho^{F}+\sigma^{F})+\chi\bar{\alpha}^{F}(\rho^{F}+\sigma^{F}),\\
(D_{\mu}R_{3\nu}-D_{\nu}R_{3\mu})\epsilon^{\mu\nu}~_{34}\sim \hnabla(\bar{\alpha}^{F}\rho^{F}-\bar{\alpha}^{F}\sigma^{F})-\bar{\chi}(\rho^{F}\rho^{F}+\bar{\alpha}^{F}\alpha^{F}+\sigma^{F}\sigma^{F})\\\nonumber 
+(\eta+\etabar)\bar{\alpha}^{F}(\rho^{F}+\sigma^{F}),\\
D_{3}R_{43}-D_{4}R_{33}\sim 2\langle\rho^{F},-\hat{div} \bar{\alpha}^{F}+ \tr\chibar\rho^{F}+(\eta-\etabar)\bar{\alpha}^{F}\rangle+2\langle\sigma_{F},-\hat{curl} \bar{\alpha}^{F}- \tr\chibar\sigma^{F}\\\nonumber +(\eta-\etabar)~^{*}\bar{\alpha}^{F}\rangle\\\nonumber 
-2\eta(\rho^{F}\bar{\alpha}^{F}+\rho^{F}\sigma^{F})-\langle \bar{\alpha}^{F},-\frac{1}{2} \tr\chi\bar{\alpha}^{F}-\hnabla\rho^{F}-~^{*}\hnabla\sigma^{F}-2~^{*}\etabar\cdot\sigma^{F}-2\etabar\cdot\rho^{F}\\\nonumber+2\omega\bar{\alpha}^{F}-\chibarhat\cdot\alpha^{F}\rangle+2\omega|\bar{\alpha}^{F}|^{2}-4\etabar_{a}(\bar{\alpha}^{F}\cdot\rho^{F}-\bar{\alpha}^{F}\cdot\sigma^{F}).
\end{eqnarray}
Using these explicit expressions, we will obtain estimates of the following types 
\begin{eqnarray}
|\int_{\mathcal{D}_{u,\ubar}}\langle\Psi, \hnabla\bar{\alpha}^{F}\Phi^{F}\rangle|\leq |\int_{0}^{u}\int_{0}^{\ubar}||\Psi||_{L^{2}(S)}||\hnabla\bar{\alpha}^{F}||_{L^{4}(S)}||\Phi^{F}||_{L^{4}(S)}d\ubar^{'}du^{'}|\leq \epsilon C(\mathcal{O}_{0},\mathcal{W},\mathcal{F}),\\
|\int_{D}\langle\Psi,\bar{\alpha}^{F}\hnabla\Phi^{F}\rangle|\leq |\int_{0}^{u}\int_{0}^{\ubar}||\Psi||_{L^{2}(S)}||\hnabla\Phi^{F}||_{L^{4}(S)}||\bar{\alpha}^{F}||_{L^{4}(S)}d\ubar^{'}du^{'}|\leq \epsilon C(\mathcal{O}_{0},\mathcal{W},\mathcal{F}),\\
|\int_{D}\langle\Psi,\bar{\alpha}\varphi\rangle|\leq |\int_{0}^{u}\int_{0}^{\ubar}||\Psi||_{L^{2}(S)}||\bar{\alpha}||_{L^{2}(S)}||\varphi||_{L^{\infty}(S)}d\ubar^{'}du|\leq \epsilon C(\mathcal{O}_{0},\mathcal{W},\mathcal{F})
\end{eqnarray}
where $\Psi:=(\bar{\beta},\rho,\sigma)$ and $\Phi^{F}:=(\alpha^{F},\rho^{F},\sigma^{F})$ and $\varphi$ denotes any connection coefficients. Collecting all the terms, we obtain 
\begin{eqnarray}
|ER_{6}|\leq \epsilon C(\mathcal{O}_{0},\mathcal{W},\mathcal{F})
\end{eqnarray}
and therefore 
\begin{eqnarray}
\int_{  \Hbar_{\ubar}}|\bar{\beta}|^{2}+\int_{H_{u}}|\rho|^{2}+\int_{H_{u}}|\sigma|^{2}=\int_{  \Hbar_{0}}|\bar{\beta}|^{2}+\int_{H_{0}}|\rho|^{2}+\int_{H_{0}}|\sigma|^{2}+\epsilon C(\mathcal{O}_{0},\mathcal{W},\mathcal{F})\\\nonumber \leq C(\mathcal{W}_{0})+\epsilon C(\mathcal{O}_{0},\mathcal{W},\mathcal{F}).
\end{eqnarray}
Now we move onto the last estimate. This one is the most delicate one and crucially depends on the special structure of the Einstein-Yang-Mills equations. Collecting the pair $(\bar{\alpha},\bar{\beta})$ and applying the integration identities, we obtain 
\begin{eqnarray}
2\int_{H_{u}}|\bar{\beta}|^{2}+\int_{  \Hbar_{\ubar}}|\bar{\alpha}|^{2}=2\int_{H_{0}}|\bar{\beta}|^{2}+\int_{  \Hbar_{0}}|\bar{\alpha}|^{2}+\underbrace{\int_{\mathcal{D}_{u,\ubar}}2|\bar{\beta}|^{2}(2\omegabar-\frac{1}{2} \tr\chibar)+\int_{\mathcal{D}_{u,\ubar}}|\bar{\alpha}|^{2}(2\omega-\frac{1}{2} \tr\chi)}_{ER7}\\\nonumber 
-2\underbrace{\int_{\mathcal{D}_{u,\ubar}}\left(2\langle\bar{\beta},\nabla_{3}\bar{\beta}\rangle+\langle\bar{\alpha},\nabla_{4}\bar{\alpha}\rangle\right)}_{ER8}.
\end{eqnarray}
Now notice the term $\int_{\mathcal{D}_{u,\ubar}}|\bar{\alpha}|^{2}(2\omega-\frac{1}{2} \tr\chi)$ in $ER_{7}$. Technically we can estimate $\bar{\alpha}$ in the $L^{4}(S)$ norm since in this current settings, we do have $||\bar{\alpha}||_{L^{4}(S)}$ under control (lemma 
\ref{L4alphabar}). However, we want to avoid using such an estimate since this issue will reappear at the level of higher order regularity as well (and in that case we will not have $||\nabla\bar{\alpha}||_{L^{4}(S)}$ under control). Therefore we want to emphasize the special structure of this term. The connection coefficients $\omega$ and $tr \chi$ that are multiplying $|\bar{\alpha}|^{2}$ in the bulk integral of $ER_{7}$ satisfy $\nabla_{3}$ transport equation and as a consequence, we could estimate these solely in terms of the initial data $\mathcal{O}_{0}$. In other words, we have 
\begin{eqnarray}
|\int_{\mathcal{D}_{u,\ubar}}|\bar{\alpha}|^{2}(2\omega-\frac{1}{2} \tr\chi)|\leq C(\mathcal{O}_{0})\int_{0}^{u}\int_{S}|\bar{\alpha}|^{2}\Omega\mu_{\gamma}du\leq C(\mathcal{O}_{0})||\bar{\alpha}||^{2}_{L^{2}(  \Hbar)}
\end{eqnarray}
or 
\begin{eqnarray}
|ER_{7}|\leq \epsilon C(\mathcal{O}_{0},\mathcal{W},\mathcal{F})+C(\mathcal{O}_{0})\int_{0}^{\ubar}||\bar{\alpha}||^{2}_{L^{2}(  \Hbar)}d\ubar^{'}
\end{eqnarray}
and therefore we can safely use Gr\"onwall's inequality to estimate $2\int_{H_{u}}|\bar{\beta}|^{2}+\int_{  \Hbar_{\ubar}}|\bar{\alpha}|^{2}$ provided we can control the bulk term $ER_{8}$. We do so now utilizing the null evolution equations
\begin{eqnarray}
ER_{8}\nonumber=2\int_{\mathcal{D}_{u,\ubar}}\left(2\langle\bar{\beta},\nabla_{3}\bar{\beta}\rangle+\langle\bar{\alpha},\nabla_{4}\bar{\alpha}\rangle\right)\\\nonumber 
=2\int_{\mathcal{D}_{u,\ubar}}\left(2\langle\bar{\beta},-2 \tr\chibar\bar{\beta}-div\bar{\alpha}-2\omegabar\omegabar+\etabar\cdot\bar{\alpha}+\frac{1}{2}(D_{b}R_{33}-D_{3}R_{3b})\rangle\right)\\\nonumber 
+2\int_{\mathcal{D}_{u,\ubar}}\left(\langle\bar{\alpha},-\frac{1}{2} \tr\chi\bar{\alpha}-\nabla\hat{\otimes}\bar{\beta}+4\omega\bar{\alpha}-3(\chibarhat\rho-~^{*}\chibarhat\sigma)+(\zeta-4\etabar)\hat{\otimes}\bar{\beta}\nonumber+\frac{1}{2}(D_{4}R_{33}-D_{3}R_{34})\gamma\rangle\right)\\\nonumber 
\sim 2\int_{\mathcal{D}_{u,\ubar}}\left(\langle\bar{\beta},-2 \tr\chibar\bar{\beta}-2\omegabar\omegabar+\etabar\cdot\bar{\alpha}+\frac{1}{2}(D_{b}R_{33}-D_{3}R_{3b})\rangle\right)\\\nonumber 
+2\int_{\mathcal{D}_{u,\ubar}}\left(\langle\underbrace{\bar{\alpha},-\frac{1}{2} \tr\chi\bar{\alpha}+4\omega\bar{\alpha}}_{A}-3(\chibarhat\rho-~^{*}\chibarhat\sigma)+(\zeta-4\etabar)\hat{\otimes}\bar{\beta}\nonumber+\frac{1}{2}(D_{4}R_{33}-D_{3}R_{34})\gamma\rangle\right)\\\nonumber 
+2\int_{D}\langle(\eta+\etabar)\bar{\beta},\bar{\alpha}\rangle)
\end{eqnarray}
where notice that the principal terms once again cancel each other in a point-wise manner after integration by parts procedure. An extremely important point to note here is that due to $\gamma-$trace-less property of $\bar{\alpha}$, the term $\langle\bar{\alpha},(D_{4}R_{33}-D_{3}R_{34})\gamma\rangle$ vanishes. This character persists in the higher order as well. Once again, we need to explicitly evaluate the source term to estimate this bulk integral
\begin{eqnarray}
D_{b}R_{33}-D_{3}R_{3b}\sim \langle\bar{\alpha}^{F},\hnabla\bar{\alpha}^{F}\rangle-\bar{\chi} \bar{\alpha}^{F}\cdot(\rho^{F}+\sigma^{F})+\etabar|\bar{\alpha}^{F}|^{2}-2\omegabar \bar{\alpha}^{F}\cdot(\rho^{F}+\sigma^{F})\\\nonumber-\hnabla_{3}(\bar{\alpha}^{F}\cdot\rho^{F}+\bar{\alpha}^{F}\cdot\sigma^{F}).
\end{eqnarray}
We note that we encounter a term $\hnabla_{3}\bar{\alpha}^{F}$ which does not arise from a null evolution equation and this is precisely why we needed to add this term in the definition of Yang-Mills curvature energy. Observe that in the most dangerous term $A$, the connection coefficients $ \tr\chi$ and $\omega$ satisfy $\nabla_{3}$ evolution equations and therefore using lemma \ref{1}, they are controlled by means of initial data $\mathcal{O}_{0}$ only (even though we could have utilized the $L^{4}(S)$ bound of $\bar{\alpha}$ at this level, we do not do so since in higher order we would not have the same privilege). The estimates for the bulk integral involves 
\begin{eqnarray}
|\int_{D}\langle\Psi,\varphi \bar{\alpha}\rangle|\leq \epsilon^{\frac{1}{2}}C(\mathcal{O}_{0},\mathcal{W},\mathcal{F}),~|\int_{D}\langle\Psi,\varphi\Psi\rangle|\leq \epsilon C(\mathcal{O}_{0},\mathcal{W},\mathcal{F}),\\\nonumber 
|\int_{D}\langle\Psi,\bar{\alpha}^{F}\hnabla\bar{\alpha}^{F}\rangle|\leq \epsilon C(\mathcal{F}_{0},\mathcal{W},\mathcal{F}),~|\int_{D}\langle\Psi,\bar{\alpha}^{F}\hnabla\bar{\alpha}^{F}\rangle|\leq \epsilon C(\mathcal{F}_{0},\mathcal{W},\mathcal{F}),\\
|\int_{D}\langle\Psi, \bar{\alpha}^{F}\hnabla\Phi^{F}\rangle|\leq \epsilon C(\mathcal{F}_{0},\mathcal{W},\mathcal{F}),~
|\int_{D}\langle\Psi,\hnabla_{3}\bar{\alpha}^{F}\rho^{F}\rangle|\leq \epsilon^{\frac{1}{2}}C(\mathcal{F}_{0},\mathcal{W},\mathcal{F}),\\
|\int_{D}\langle\bar{\alpha},-\frac{1}{2} \tr\chi \bar{\alpha}+4\omega \bar{\alpha}\rangle|\leq C(\mathcal{O}_{0})\int_{0}^{\ubar}||\bar{\alpha}||^{2}_{L^{2}(  \Hbar)}d\ubar^{'},
\end{eqnarray}
where $\Psi:=(\bar{\beta},\rho,\sigma),~\Phi^{F}:=(\rho^{F},\sigma^{F})$ and we have utilized $L^{4}(S)$ estimate of $\hnabla\bar{\alpha}^{F}$ and $\Phi^{F}$ (lemma \ref{L4alphabar}) together with the definitions of $\mathcal{W}$ and $\mathcal{F}$. Putting everything together, we obtain 
\begin{eqnarray}
\int_{H_{u}}|\bar{\beta}|^{2}+\int_{  \Hbar_{\ubar}}|\bar{\alpha}|^{2}=\int_{H_{0}}|\bar{\beta}|^{2}+\int_{  \Hbar_{0}}|\bar{\alpha}|^{2}+C(\mathcal{O}_{0})||\bar{\alpha}||^{2}_{L^{2}(  \Hbar)}\nonumber+\epsilon C(\mathcal{O}_{0},\mathcal{W},\mathcal{F})+\epsilon^{\frac{1}{2}}C(\mathcal{O}_{0},\mathcal{F}_{0},\mathcal{W},\mathcal{F})\\\nonumber 
\leq C(\mathcal{O}_{0},\mathcal{W}_{0},\mathcal{F}_{0})+C(\mathcal{O}_{0})\int_{0}^{\ubar}||\bar{\alpha}||^{2}_{L^{2}(  \Hbar)}.
\end{eqnarray}
This concludes the proof of the lemma.
\end{proof}

\noindent Now we need to obtain estimates for the higher-order energy associated with the Weyl curvature. Once we have completed the estimates for the Weyl curvature, we will move on to estimating the Yang-Mills curvature.\\  
\begin{lemma}
\label{72}
The horizontal derivatives of the null components of the Weyl curvature satisfy the following $L^{2}$ energy estimates
\begin{eqnarray}
\int_{H_{u}}|\nabla\alpha|^{2}+\int_{  \Hbar_{\ubar}}2|\nabla\beta|^{2}\leq C(\mathcal{W}_{0})+\epsilon C(\mathcal{O}_{0},\mathcal{W},\mathcal{F}),\\
\int_{H_{u}}|\nabla\beta|^{2}+\int_{  \Hbar_{\ubar}}|\nabla\rho|^{2}+\int_{  \Hbar_{\ubar}}|\nabla\sigma|^{2}\leq C(\mathcal{W}_{0})+\epsilon C(\mathcal{O}_{0},\mathcal{W},\mathcal{F}),\\
\int_{  \Hbar_{u}}|\nabla\bar{\beta}|^{2}+\int_{H_{u}}|\nabla\rho|^{2}+\int_{H_{u}}|\nabla\sigma|^{2}\leq C(\mathcal{W}_{0})+\epsilon C(\mathcal{O}_{0},\mathcal{W},\mathcal{F}),\\
\int_{H_{u}}2|\nabla\bar{\beta}|^{2}+\int_{  \Hbar_{\ubar}}|\nabla\bar{\alpha}|^{2}\leq C(\mathcal{W}_{0})+\epsilon^{\frac{1}{2}} C(\mathcal{O}_{0},\mathcal{W},\mathcal{F})+\epsilon C(\mathcal{O}_{0},\mathcal{W},\mathcal{F})\\\nonumber +C(\mathcal{O}_{0})\int_{0}^{\ubar}||\nabla\bar{\alpha}||^{2}_{L^{2}(  \Hbar)}d\ubar^{'}.
\end{eqnarray}
\end{lemma}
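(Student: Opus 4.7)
The plan is to adapt the $L^2$ energy argument of Lemma \ref{71} to the $\nabla$--commuted null Bianchi equations, exploiting the same principal--term cancellations in each Bianchi pair. First I commute $\nabla$ with each of the null Bianchi equations (\ref{eq:bianchi1}-\ref{eq:bianchi2}). Using Lemma \ref{commutation} (applied to horizontal gauge--invariant tensors, so that $\hnabla$ acts as $\nabla$), the commutators $[\nabla_4,\nabla]$ and $[\nabla_3,\nabla]$ produce correction terms of the schematic form $\chi\nabla\Psi + (\beta + \alpha^F\cdot\Phi^F + \alpha^F)\Psi + (\eta+\etabar)\nabla_4\Psi$ (and the bar analog), each of which is either algebraic in $\nabla\Psi$ with coefficients controlled by Lemma \ref{1} and Corollary \ref{C2}, or lower--order in curvature and reinserted through the null Bianchi equations themselves.

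Next, for each pair $(\alpha,\beta)$, $(\beta,\rho,\sigma)$, $(\bar\beta,\rho,\sigma)$, $(\bar\beta,\bar\alpha)$ identified in Lemma \ref{71}, I apply the integration identities (\ref{eq:IBP1}-\ref{eq:IBP2}) to $|\nabla\Psi_i|^2$ over $\mathcal{D}_{u,\ubar}$. Exactly as in the base case, the top--order angular derivatives $\nabla(\nabla\hot\beta)$ versus $\nabla\,\text{div}\,\alpha$ (and bar analogs) cancel pointwise after integration by parts on $S_{u,\ubar}$, up to harmless $(\eta+\etabar)$--commutator contributions. The Yang--Mills source terms, after the extra $\nabla$, produce $\hnabla^2\alpha^F$, $\hnabla^2\bar\alpha^F$, $\hnabla^2(\rho^F,\sigma^F)$, $\hnabla_4\hnabla\alpha^F$, and $\hnabla_3\hnabla\bar\alpha^F$; these are precisely the quantities bounded in the definition (\ref{eq:yangnorm}) of $\mathcal{F}$, so each associated bulk integral is controlled by $\epsilon\,C(\mathcal{O}_0,\mathcal{W},\mathcal{F})$ via Cauchy--Schwarz, the connection estimates of Lemmas \ref{1}--\ref{7}, and the $L^4(S)$ bounds of Lemma \ref{L4alphabar}. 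The angular derivatives of the trace--part $(D_3 R_{44}-D_4 R_{43})\gamma$ and $(D_4 R_{33}-D_3 R_{34})\gamma$ drop out against the trace--free $\alpha,\bar\alpha$ just as in Lemma \ref{71}. For the first three pairs the $u$--integration supplies the $\epsilon$--smallness, and the estimates close in terms of $\mathcal{W}_0$ without Gr\"onwall.

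For the last pair $(\bar\beta,\bar\alpha)$ the structural observation is the same as in Lemma \ref{71}: the coefficient $(2\omega-\frac{1}{2}\tr\chi)$ multiplying $|\nabla\bar\alpha|^2$ in the bulk error is built from good connection coefficients, hence is bounded by $C(\mathcal{O}_0)$ alone. The terms arising when $\nabla_4$--derivatives in the commuted Bianchi equation for $\bar\alpha$ land on $\bar\beta$ or on the Yang--Mills sources couple $\nabla\bar\alpha$ only to good connections or to quantities controlled by $\mathcal{F}$, producing an inequality of the form
$$
\int_{\underline{H}_{\ubar}}|\nabla\bar\alpha|^2 \leq C(\mathcal{W}_0) + \epsilon^{1/2}C(\mathcal{O}_0,\mathcal{W},\mathcal{F}) + C(\mathcal{O}_0)\int_0^{\ubar}\|\nabla\bar\alpha\|_{L^2(\underline{H}_{\ubar'})}^2\,d\ubar',
$$
to which Gr\"onwall applies. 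I expect the main obstacle to be the bookkeeping of the commuted Yang--Mills source in the $(\bar\beta,\bar\alpha)$ equation: when $\nabla$ hits $D_b R_{33}-D_3 R_{3b}$ one encounters $\hnabla_3\hnabla\bar\alpha^F$, which does not admit a direct transport equation; the one--order--up analogue of the cancellation between terms of the type $\alpha^F\cdot\hnabla_3\alpha^F$ seen in Lemma \ref{5} must be arranged so that the surviving contributions are absorbed into $\mathcal{F}$, which is precisely why $\hnabla_3\hnabla\bar\alpha^F$ and $\hnabla_4\hnabla\alpha^F$ were included in (\ref{eq:yangnorm}) from the outset.
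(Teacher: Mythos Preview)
Your proposal is correct and follows essentially the same route as the paper: commute each Bianchi pair with $\nabla$, apply the integration identities (\ref{eq:IBP1})--(\ref{eq:IBP2}), cancel the principal terms by integration by parts on $S_{u,\ubar}$, use the $\gamma$--trace-freeness of $\alpha,\bar\alpha$ to kill the $\nabla(D_3R_{44}-D_4R_{43})\gamma$ and $\nabla(D_4R_{33}-D_3R_{34})\gamma$ contributions, and for the $(\bar\beta,\bar\alpha)$ pair exploit that only good connection coefficients $(\omega,\tr\chi)$ multiply $|\nabla\bar\alpha|^2$ so that Gr\"onwall closes. Your concern about $\hnabla_3\hnabla\bar\alpha^F$ is handled exactly as you suspect---no further cancellation is needed since this term sits directly in the norm $\mathcal{F}$ of (\ref{eq:yangnorm}), and the paper simply estimates the corresponding bulk integral by $\epsilon^{1/2}C(\mathcal{O}_0,\mathcal{W},\mathcal{F})$ via Cauchy--Schwarz.
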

\begin{proof} The proof relies on the fact that the structure of the Yang-Mills sourced null Bianchi equation is preserved after commuting with horizontal derivatives. We proceed in an exactly similar manner. First we consider the pair $(\alpha,\beta)$
\begin{eqnarray}
\int_{H_{u}}|\nabla\alpha|^{2}+\int_{  \Hbar_{\ubar}}2|\nabla\beta|^{2}=\int_{H_{0}}|\nabla\alpha|^{2}+\int_{  \Hbar_{0}}2|\nabla\beta|^{2}\\\nonumber+\underbrace{\int_{\mathcal{D}_{u,\ubar}}|\nabla\alpha|^{2}(2\omegabar-\frac{1}{2} \tr\chibar)+\int_{\mathcal{D}_{u,\ubar}}2|\nabla\beta|^{2}(2\omega-\frac{1}{2} \tr\chi)}_{ER9}-2\underbrace{\int_{\mathcal{D}_{u,\ubar}}\left(\langle\nabla\alpha,\nabla_{3}\nabla\alpha\rangle+2\langle\nabla\beta,\nabla_{4}\nabla\beta\rangle\right)}_{ER10}.
\end{eqnarray}
Similar to the previous lemma \ref{71}, $ER9$ is controlled by using the estimates of the connection coefficients from lemma \ref{1}
\begin{eqnarray}
|ER9|\leq \epsilon C(\mathcal{O}_{0},\mathcal{W},\mathcal{F}).
\end{eqnarray}
In order to estimate $ER10$, we use the commuted null evolution equations
\begin{eqnarray}
ER10=2\int_{\mathcal{D}_{u,\ubar}}\left(\langle\nabla\alpha,\nabla_{3}\nabla\alpha\rangle+2\langle\nabla\beta,\nabla_{4}\nabla\beta\rangle\right)\\\nonumber 
=2\int_{\mathcal{D}_{u,\ubar}}\left(\langle\nabla\alpha,-\frac{1}{2}\nabla( \tr\chibar\alpha)+\nabla\hat{\otimes}\nabla\beta\nonumber+4\nabla(\omegabar\alpha)-3\nabla(\widehat{\chi}\rho+~^{*}\widehat{\chi}\sigma)+\nabla((\zeta+4\eta)\hat{\otimes}\beta)\right.\\\nonumber 
\left.+\frac{1}{2}\nabla(D_{3}R_{44}-D_{4}R_{43})\gamma+[\nabla_{3},\nabla]\alpha+[\nabla,\nabla]\hat{\otimes}\beta\rangle\right.\\\nonumber 
\left.+2\int_{\mathcal{D}_{u,\ubar}}2\langle\nabla\beta,-2\nabla( \tr\chi\beta)+div\nabla\alpha-2\nabla(\omega\beta)\nonumber+\nabla(\eta\cdot\alpha) 
-\frac{1}{2}\nabla(D_{b}R_{44}-D_{4}R_{4b})\right.\\\nonumber
\left.+[\nabla_{4},\nabla]\beta+[\nabla,div]\alpha\right)\\\nonumber 
\sim \int_{\mathcal{D}_{u,\ubar}}\left(\langle\nabla\alpha,-\frac{1}{2}\nabla( \tr\chibar\alpha)\nonumber+4\nabla(\omegabar\alpha)-3\nabla(\widehat{\chi}\rho+~^{*}\widehat{\chi}\sigma)+\nabla((\zeta+4\eta)\hat{\otimes}\beta)\right.\\\nonumber 
\left.+[\nabla_{3},\nabla]\alpha+[\nabla,\nabla]\hat{\otimes}\beta\rangle\right.\\\nonumber 
\left.+\int_{\mathcal{D}_{u,\ubar}}\langle\nabla\beta,-2\nabla( \tr\chi\beta)-2\nabla(\omega\beta)\nonumber+\nabla(\eta\cdot\alpha) 
-\frac{1}{2}\nabla(D_{b}R_{44}-D_{4}R_{4b})\right.\\\nonumber
\left.+[\nabla_{4},\nabla]\beta+[\nabla,div]\alpha\right)+\int_{D}\langle(\eta+\etabar\nabla\beta,\nabla\alpha\rangle.
\end{eqnarray}
Where once again $\langle\nabla\alpha,\nabla(D_{3}R_{44}-D_{4}R_{43})\gamma$ vanishes due to the $\gamma-$trace-less property of $\alpha$ and since $\gamma$ is metric. This error term may be estimated by means of the following estimates for the individual terms 
\begin{eqnarray}
|\int_{\mathcal{D}_{u,\ubar}}\langle\nabla\alpha,[\nabla_{3},\nabla]\alpha\rangle|
\sim|\int_{\mathcal{D}_{u,\ubar}}\langle\nabla\alpha, \bar{\beta}\alpha+\bar{\alpha}^{F}\alpha+\bar{\alpha}^{F}(\rho^{F}+\sigma^{F})\alpha\nonumber+(\eta+\etabar)\nabla_{3}\alpha-\bar{\chi}\nabla\alpha+\bar{\chi}\eta\alpha\rangle|\\\nonumber 
\leq \epsilon C(\mathcal{O}_{0},\mathcal{W},\mathcal{F})+\epsilon^{1/2}C(\mathcal{O}_{0},\mathcal{W},\mathcal{F}),\\
 |\int_{\mathcal{D}_{u,\ubar}}\langle\nabla\alpha,\nabla( \tr\chibar\alpha)\rangle|=|\int_{\mathcal{D}_{u,\ubar}}\langle\nabla\alpha,\nabla( \tr\chibar)\alpha\nonumber+tr{\bar{\chi}}\nabla\alpha\rangle|
 \\\nonumber \leq \int_{0}^{u}\sup_{\ubar}||\nabla  \tr\chibar||_{L^{4}(S_{u,\ubar})}\int_{0}^{\ubar}||\nabla\alpha||_{L^{2}(S_{u,\ubar)}}||\alpha||_{L^{4}(S_{u,\ubar})}\\\nonumber+\int_{0}^{u}(\sup_{\ubar}||\nabla  \tr\chibar||_{L^{4}(S)}||\alpha||_{L^{4}(S)})||\nabla\alpha||^{2}_{L^{2}(H)}du^{'}\leq \epsilon C(\mathcal{O}_{0},\mathcal{W}),\\\nonumber
|\int_{\mathcal{D}_{u,\ubar}}\langle\nabla\alpha,[\nabla,\nabla]\beta\rangle|\sim |\int_{\mathcal{D}_{u,\ubar}}\langle\nabla\alpha,K\beta\rangle|
\\\nonumber \leq \int_{0}^{u}\sup_{\ubar}\nonumber||K||_{L^{4}(S_{u,\ubar})}||\beta||_{L^{4}(S_{u,\ubar})}\int_{0}^{\ubar}||\nabla\alpha||_{L^{2}(S_{u,\ubar})}\leq \epsilon C(\mathcal{O}_{0},\mathcal{W}),
\end{eqnarray}
where we have estimated the Gauss curvature of $S$ in $L^{4}(S)$ through the null Hamiltonian constraint $K=\frac{1}{2}\chibarhat\widehat{\chi}-\frac{1}{4} \tr\chi  \tr\chibar-\rho+|\rho^{F}|^{2}+|\sigma^{F}|^{2}$
and the estimates from corollary (1) i.e., $||K||_{L^{4}(S)}\leq C(\mathcal{O}_{0},\mathcal{W}_{0},\mathcal{F}_{0},\mathcal{W},\mathcal{F})$. Moving on with the other terms 
\begin{eqnarray}
|\int_{\mathcal{D}_{u,\ubar}}\langle\nabla\beta,[\nabla_{4},\nabla]\beta\rangle|\sim|\nonumber \int_{\mathcal{D}_{u,\ubar}}\langle\nabla\beta,\beta\beta+\alpha^{F}\beta+\alpha^{F}(\rho^{F}-\sigma^{F})\beta+(\eta+\etabar)\nabla_{4}\beta-\chi\nabla\beta+\chi\etabar\beta\rangle|\\\nonumber 
 \leq \int_{0}^{u}\int_{0}^{\ubar}(||\nabla\beta||_{L^{2}(S)}||\beta||^{2}_{L^{4}(S)}+||\nabla\beta||_{L^{2}(S)}||\alpha^{F}||_{L^{4}(S)}||\beta||_{L^{4}(S)}+||\alpha^{F}||_{L^{2}(S)}||\rho^{F}||_{L^{4}(S)}||\beta||_{L^{4}(S)}\\\nonumber 
 +\int_{0}^{u}(||\nabla\beta||^{2}_{L^{2}(H)}||\chi||_{L^{\infty}}+||\chi||_{L^{\infty}}||\etabar||_{L^{\infty}}||\beta||_{L^{2}}||\nabla\beta||_{L^{2}})\\\nonumber +\int_{0}^{u}\int_{0}^{\ubar}||\eta+\etabar||_{L^{\infty}}||\nabla\beta||_{L^{2}}||\nabla_{4}\beta||_{L^{2}}
\end{eqnarray}
Now use the equation of motion for $\beta$
 \begin{eqnarray}
 \nabla_{4}\beta+2 \tr\chi\beta=div\alpha-2\omega\beta+\eta\cdot\alpha-\frac{1}{2}(D_{b}R_{44}-D_{4}R_{4b})
 \end{eqnarray}
 and 
 \begin{eqnarray}
 (D_{b}R_{44}-D_{4}R_{4b})\sim\langle\alpha^{F},\hnabla_{b}\alpha^{F}\rangle-\chi_{bc}\mathfrak{T}_{c4}+\eta_{b}\mathfrak{T}_{44}\nonumber-2\omega \mathfrak{T}_{4b}-\hnabla_{4}(\alpha^{F}_{b}\cdot\rho^{F}+\alpha^{F}_{b}\cdot\sigma^{F})
 \end{eqnarray}
 This is why we need $\hnabla_{4}\alpha^{F}$ in the derivative estimate. $\hnabla_{4}\rho^{F}$ and $\hnabla_{4}\sigma^{F}$ may be controlled by the respective  null evolution equations
 \begin{eqnarray}
\hnabla_{4}\rho^{F}=-\hat{div} \alpha^{F}- \tr\chi\rho^{F}-(\eta-\etabar)\cdot\alpha^{F},\\
\hnabla_{4}\sigma^{F}=-\hat{curl} \alpha^{F}- \tr\chi \sigma^{F}+(\eta-\etabar)\cdot ~^{*}\alpha^{F}.
 \end{eqnarray}
 Now recall $||\varphi||_{L^{\infty}(S)}\leq C||\nabla\varphi||_{L^{4}(S)}$ and therefore
 \begin{eqnarray}
 \int_{0}^{u}\int_{0}^{\ubar}||\nabla\beta||_{L^{2}}||\hnabla_{4}\alpha^{F}\rho^{F}||_{L^{2}}\leq C\int_{0}^{u}\int_{0}^{\ubar}||\nabla\beta||_{L^{2}}||\hnabla_{4}\alpha^{F}||_{L^{2}}||\nabla\rho^{F}||_{L^{4}}\\\nonumber 
 \leq \epsilon C(\mathcal{O}_{0},\mathcal{W},\mathcal{F}),\\
 \int_{0}^{u}\int_{0}^{\ubar}||\nabla\beta||_{L^{2}}||\alpha^{F}\hnabla\alpha^{F}||_{L^{2}}\leq \int_{0}^{u}\int_{0}^{\ubar}||\nabla\beta||_{L^{2}}||\alpha^{F}||_{L^{4}}||\hnabla\alpha^{F}||_{L^{4}}\\\nonumber 
 \leq \epsilon C(\mathcal{O}_{0},\mathcal{W},\mathcal{F}).
 \end{eqnarray}
 Therefore collecting all the terms together 
 \begin{eqnarray}
 \int_{\mathcal{D}_{u,\ubar}}\langle\nabla\beta,[\nabla_{4},\nabla]\beta\rangle\leq \epsilon C(\mathcal{O}_{0},\mathcal{W},\mathcal{F})
 \end{eqnarray}
which indicates that this is a good term since we gain a factor of $\epsilon$. Among other non-trivial terms, we have the following
\begin{eqnarray}
\int_{\mathcal{D}_{u,\ubar}}\langle\nabla\beta,\nabla(D_{b}R_{44}-D_{4}R_{4b})\rangle\\\nonumber \sim\int_{\mathcal{D}_{u,\ubar}}\langle\nabla\beta,\nabla(\langle\alpha^{F},\hnabla_{b}\alpha^{F}\rangle-\chi_{bc}\mathfrak{T}_{c4}+\eta_{b}\mathfrak{T}_{44}-2\omega \mathfrak{T}_{4b}-\hnabla_{4}(\alpha^{F}_{b}\cdot\rho^{F}-\alpha^{F}_{b}\cdot\sigma^{F}))\rangle.
 \end{eqnarray}
 Here we will encounter several terms of which the potentially problematic terms are estimated as follows
 \begin{eqnarray}
 |\int_{\mathcal{D}_{u,\ubar}}\langle\nabla\beta,\alpha^{F}\hnabla^{2}\alpha^{F}\rangle|\leq \int_{\mathcal{D}_{u,\ubar}}||\nabla\beta||_{L^{2}}||\alpha^{F}||_{L^{\infty}}||\hnabla^{2}\alpha^{F}||_{L^{2}}\\\nonumber 
 \leq C\int_{\mathcal{D}_{u,\ubar}}||\nabla\beta||_{L^{2}}||\hnabla\alpha^{F}||_{L^{4}}||\hnabla^{2}\alpha^{F}||_{L^{2}}\leq \epsilon C(O_{0},\mathcal{W},\mathcal{F})
 \end{eqnarray}

 and 
 \begin{eqnarray}
  |\int_{\mathcal{D}_{u,\ubar}}\langle\nabla\beta,\hnabla\alpha^{F}\hnabla\alpha^{F}\rangle|\leq \int_{\mathcal{D}_{u,\ubar}}||\nabla\beta||_{L^{2}(S)}||\hnabla\alpha^{F}||_{L^{4}(S)}||\hnabla\alpha^{F}||_{L^{4}(S)}\nonumber\leq \epsilon C(\mathcal{O}_{0},\mathcal{W},\mathcal{F}).
 \end{eqnarray}
 Therefore these are essentially good estimates since we gain a factor of $\epsilon$. Using the $L^{4}(S)$ estimates of the derivatives of the connection coefficients (corollary 2) and putting everything together, we have 
 \begin{eqnarray}
 |ER10|\leq \epsilon C(\mathcal{O}_{0},\mathcal{W},\mathcal{F})
 \end{eqnarray}
 and therefore 
 \begin{eqnarray}
 \int_{H_{u}}|\nabla\alpha|^{2}+\int_{  \Hbar_{\ubar}}2|\nabla\beta|^{2}=\int_{H_{0}}|\nabla\alpha|^{2}+\int_{  \Hbar_{0}}2|\nabla\beta|^{2}+\epsilon C(\mathcal{O}_{0},\mathcal{W},\mathcal{F})
 \end{eqnarray}
yielding 
\begin{eqnarray}
\int_{H_{u}}|\nabla\alpha|^{2}+\int_{  \Hbar_{\ubar}}2|\nabla\beta|^{2}\leq C(\mathcal{W}_{0})+\epsilon C(\mathcal{O}_{0},\mathcal{W},\mathcal{F}).
\end{eqnarray}
Now consider the triple $(\beta,\sigma,\rho)$ and write using (\ref{eq:IBP1}-\ref{eq:IBP2})
\begin{eqnarray}
\int_{H_{u}}|\nabla\beta|^{2}+\int_{  \Hbar_{\ubar}}|\nabla\rho|^{2}+\int_{  \Hbar_{\ubar}}|\nabla\sigma|^{2}= 
\int_{H_{0}}|\nabla\beta|^{2}+\int_{  \Hbar_{0}}|\nabla\rho|^{2}+\int_{  \Hbar_{0}}|\nabla\sigma|^{2}\\\nonumber+\underbrace{\int_{\mathcal{D}_{u,\ubar}}|\nabla\beta|^{2}(2\omegabar-\frac{1}{2} \tr\chibar)
+\int_{\mathcal{D}_{u,\ubar}}|\nabla\rho|^{2}(2\omega-\frac{1}{2} \tr\chi)
+\int_{\mathcal{D}_{u,\ubar}}|\nabla\sigma|^{2}(2\omega-\frac{1}{2} \tr\chi)}_{ER11}\\\nonumber -2\underbrace{\int_{\mathcal{D}_{u,\ubar}}\left(\langle\nabla\beta,\nabla_{3}\nabla\beta\rangle+\langle\nabla\sigma,\nabla_{4}\nabla\sigma\rangle+\langle\nabla\rho,\nabla_{4}\nabla\rho\rangle\right)}_{ER12}.
\end{eqnarray}
Once again, utilizing the estimates for the connection coefficients from lemma \ref{1}, we obtain 
\begin{eqnarray}
|ER11||\leq \epsilon C(\mathcal{O}_{0},\mathcal{W},\mathcal{F}).
\end{eqnarray}
Now we estimate $ER12$ using the null evolution equations
\begin{eqnarray}
-ER_{12}=2\int_{\mathcal{D}_{u,\ubar}}\left(\langle\nabla\beta,\nabla_{3}\nabla\beta\rangle+\langle\nabla\sigma,\nabla_{4}\nabla\sigma\rangle+\langle\nabla\rho,\nabla_{4}\nabla\rho\rangle\right)\\\nonumber 
=\int_{\mathcal{D}_{u,\ubar}}\left(\langle\nabla\beta,-\nabla( \tr\chibar\beta)+\nabla\nabla\rho\nonumber+~^{*}\nabla\nabla\sigma+2\nabla(\omegabar\beta)+2\nabla(\widehat{\chi}\cdot \bar{\beta})+3\nabla(\eta\rho+~^{*}\eta\sigma)\right.\\\nonumber
 \left.+\frac{1}{2}\nabla(D_{b}R_{34}-D_{4}R_{3b})+[\nabla_{3},\nabla]\beta+[\nabla,\nabla]\rho+[^{*}\nabla,\nabla]\sigma\rangle\right)\\\nonumber 
+\int_{\mathcal{D}_{u,\ubar}}\left(\langle\nabla\sigma,-\frac{3}{2}\nabla( \tr\chi\sigma)-div~^{*}\nabla\beta+\frac{1}{2}\nabla(\chibarhat\cdot~^{*}\alpha)-\nabla(\zeta\cdot~^{*}\beta)-2\nabla(\etabar\cdot~^{*}\beta)\right.\\\nonumber\left.-\frac{1}{4}\nabla(D_{\mu}R_{4\nu}-D_{\nu}R_{4\mu})\epsilon^{\mu\nu}~_{34}+[\nabla_{4},\nabla]\sigma-[\nabla,div]~^{*}\beta\rangle\right)\\\nonumber 
+\int_{\mathcal{D}_{u,\ubar}}\left(\langle\nabla\rho,-\frac{3}{2}\nabla( \tr\chi\rho)+div\nabla\beta-\frac{1}{2}\nabla(\chibarhat\cdot\alpha)+\nabla(\zeta\cdot\beta)\nonumber+2\nabla(\etabar\cdot\beta)\right.\\\nonumber 
\left.-\frac{1}{4}\nabla(D_{3}R_{44}-D_{4}R_{34})+[\nabla_{4}\nabla]\rho+[\nabla,div]\beta\rangle\right)\\\nonumber 
\sim\int_{\mathcal{D}_{u,\ubar}}\left(\langle\nabla\beta,-\nabla( \tr\chibar\beta)\nonumber+2\nabla(\omegabar\beta)+2\nabla(\widehat{\chi}\cdot \bar{\beta})+3\nabla(\eta\rho+~^{*}\eta\sigma)\right.\\\nonumber
\left.+\frac{1}{2}\nabla(D_{b}R_{34}-D_{4}R_{3b})+[\nabla_{3},\nabla]\beta+[\nabla,\nabla]\rho+[^{*}\nabla,\nabla]\sigma\rangle\right)\\\nonumber 
+ \int_{\mathcal{D}_{u,\ubar}}\left(\langle\nabla\sigma,-\frac{3}{2}\nabla( \tr\chi\sigma)+\frac{1}{2}\nabla(\chibarhat\cdot~^{*}\alpha)-\nabla(\zeta\cdot~^{*}\beta)-2\nabla(\etabar\cdot~^{*}\beta)\right.\\\nonumber\left.-\frac{1}{4}\nabla(D_{\mu}R_{4\nu}-D_{\nu}R_{4\mu})\epsilon^{\mu\nu}~_{34}+[\nabla_{4},\nabla]\sigma-[\nabla,div]~^{*}\beta\rangle\right)\\\nonumber 
+\int_{\mathcal{D}_{u,\ubar}}\left(\langle\nabla\rho,-\frac{3}{2}\nabla( \tr\chi\rho)-\frac{1}{2}\nabla(\chibarhat\cdot\alpha)+\nabla(\zeta\cdot\beta)\nonumber+2\nabla(\etabar\cdot\beta)\right.\\\nonumber 
\left.-\frac{1}{4}\nabla(D_{3}R_{44}-D_{4}R_{34})+[\nabla_{4}\nabla]\rho+[\nabla,div]\beta\rangle\right)+\int_{\mathcal{D}_{u,\ubar}}\langle(\eta+\etabar)\nabla\rho,\nabla\beta\rangle+\int_{\mathcal{D}_{u,\ubar}}\langle(\eta+\etabar)\nabla\sigma,\nabla\beta).
\end{eqnarray}
Once again we control the potentially problematic  terms that may arise from the coupled sector
\begin{eqnarray}
\int_{\mathcal{D}_{u,\ubar}}\langle\nabla\beta,\nabla(D_{b}R_{34}-D_{4}R_{3b})\rangle\\\nonumber 
\sim\int_{\mathcal{D}_{u,\ubar}}\langle\nabla\beta,\hnabla(\hnabla_{b}(|\rho^{F}|^{2}+|\sigma^{F}|^{2})\nonumber-\bar{\chi}_{ba}\mathfrak{T}_{a4}-\hnabla_{4}(\bar{\alpha}^{F}\cdot\rho^{F}+\bar{\alpha}^{F}\cdot\sigma^{F})+2\omega\mathfrak{T}_{3b}\\\nonumber
+2\etabar_{a}\mathfrak{T}_{ab}+\etabar_{b}\mathfrak{T}_{34})\rangle
\end{eqnarray}
Collect the possible dangerous terms and estimate them by means of elementary inequalities and the available lemmas \ref{1}-\ref{L4alphabar}
 \begin{eqnarray}
|\int_{\mathcal{D}_{u,\ubar}}\langle\nabla\beta,\bar{\alpha}^{F}\hnabla\hnabla_{4}\rho^{F}\rangle|\leq \nonumber\int_{\mathcal{D}_{u,\ubar}}\nonumber||\nabla\beta||_{L^{2}}||\bar{\alpha}^{F}||_{L^{\infty}}||\hnabla^{2}\alpha^{F}||_{L^{2}}\leq \epsilon C(\mathcal{O}_{0},\mathcal{W},\mathcal{F})\\\nonumber
|\int_{\mathcal{D}_{u,\ubar}}\langle\nabla\beta,\rho^{F}\hnabla^{2}\rho^{F}\rangle|\leq \epsilon C(\mathcal{O}_{0},\mathcal{W},\mathcal{F}) \\
|\int_{\mathcal{D}_{u,\ubar}}\langle\nabla\beta,\hnabla\bar{\alpha}^{F}\hnabla_{4}\rho^{F}\rangle|\leq \int_{\mathcal{D}_{u,\ubar}}||\nabla\beta||_{L^{2}}\nonumber||\hnabla\Bar{\alpha}^{F}||_{L^{4}}||\hnabla_{4}\rho^{F}||_{L^{4}}\leq \epsilon C(\mathcal{O}_{0},\mathcal{W},\mathcal{F})\\\nonumber
|\int_{\mathcal{D}_{u,\ubar}}\langle\nabla\beta,\bar{\alpha}^{F}\hnabla^{2}\alpha^{F}\rangle|\leq \epsilon C(\mathcal{O}_{0},\mathcal{W},\mathcal{F}),\\\nonumber
|\int_{\mathcal{D}_{u,\ubar}}\langle\nabla\sigma,\nabla(D_{\mu}R_{4\nu}-D_{\nu}R_{4\mu})\epsilon^{\mu\nu}~_{34}\rangle|\\\nonumber 
\sim |\int_{\mathcal{D}_{u,\ubar}}\langle\nabla\sigma, \nabla(\hnabla(\alpha^{F}\cdot\rho^{F}-\alpha^{F}\cdot\sigma^{F})-\chi(\rho^{F}\rho^{F}+\alpha^{F}\bar{\alpha}^{F}+\sigma^{F}\sigma^{F}+(\eta-\etabar)\alpha^{F}(\sigma^{F}+\rho^{F}))\rangle|\\\nonumber \leq \epsilon C(\mathcal{O}_{0},\mathcal{W},\mathcal{F}),\\
\int_{\mathcal{D}_{u,\ubar}}\langle\nabla\rho,\nabla(D_{3}R_{44}-D_{4}R_{34})\rangle
\sim\int_{\mathcal{D}_{u,\ubar}}\langle\nabla\rho,\hnabla(\alpha^{F}( \tr\chibar\alpha^{F}-\hnabla\rho^{F}+~^{*}\hnabla\sigma^{F}\nonumber-2~^{*}\eta\sigma^{F}+2\eta \rho^{F}\\\nonumber+2\omegabar\alpha^{F}-\widehat{\chi}\bar{\alpha}^{F})\\\nonumber 
 -4\omegabar|\alpha^{F}|^{2}-4\eta\alpha^{F}(\rho^{F}-\sigma^{F})+\rho^{F}\hnabla_{4}\rho^{F}+2\sigma^{F}\hnabla_{4}\sigma^{F}-2\etabar\alpha^{F}(\rho^{F}+\sigma^{F}))\rangle\\\nonumber 
 \leq \epsilon C(\mathcal{O}_{0},\mathcal{W},\mathcal{F})
 \end{eqnarray}
 where $\nabla\rho$ and the $\hnabla^{2}(\alpha^{F},\rho^{F},\sigma^{F})$ have to be bounded in $L^{2}(H)$, the remaining algebraic terms can be bounded in $L^{\infty}(S)$ and then apply Sobolev embedding for the $L^{\infty}(S)$ factor i.e., $||\alpha^{F},\rho^{F},\sigma^{F}||_{L^{\infty}(S)}\leq C(\mathcal{O}_{0})||\hnabla(\alpha^{F},\rho^{F},\sigma^{F})||_{L^{4}(S)}$.
 There will be term of type $|\int_{\mathcal{D}_{u,\ubar}}\nabla\Psi\bar{\alpha}^{F}\hnabla^{2}\bar{\alpha}^{F}|$ which may be a cause of concern. However, since we have $L^{4}(S)$ estimate for $\hnabla\bar{\alpha}^{F}$, we can control it in $L^{\infty}(S)$ through Sobolev embedding. Therefore we obtain
 \begin{eqnarray}
 |\int_{\mathcal{D}_{u,\ubar}}\nabla\Psi\bar{\alpha}^{F}\hnabla^{2}\bar{\alpha}^{F}|\nonumber=|\int_{0}^{u}\int_{0}^{\ubar}\int_{S}\nabla\Psi\bar{\alpha}^{F}\hnabla^{2}\bar{\alpha}^{F}|\\\nonumber 
 \leq \sup_{u,\ubar}||\bar{\alpha}^{F}||_{L^{\infty}}||\nabla\Psi||_{L^{2}(\mathcal{D}_{u,\ubar})}||\hnabla^{2}\bar{\alpha}^{F}||_{L^{2}(\mathcal{D}_{u,\ubar})}\leq \epsilon C(\mathcal{O}_{0},\mathcal{W},\mathcal{F})
 \end{eqnarray}
 where $\Psi$ is the Weyl curvature component belonging to the set $(\alpha,\beta,\rho,\sigma)$. In other words, as long as we do not have bad terms like $\nabla\bar{\alpha}\bar{\alpha}^{F}\hnabla^{2}\bar{\alpha}^{F}$, we will always have a gain of the factor $\epsilon$. Thus 
 \begin{eqnarray}
 \int_{\mathcal{D}_{u,\ubar}}\langle\nabla\sigma,\nabla(D_{\mu}R_{3\nu}-D_{\nu}R_{3\mu})\epsilon^{\mu\nu}~_{34})\rangle\leq \epsilon C(\mathcal{O}_{0},\mathcal{W},\mathcal{F}),\\
 \int_{\mathcal{D}_{u,\ubar}}\langle\nabla\rho,\nabla(D_{3}R_{44}-D_{4}R_{33})\rangle\leq \epsilon C(\mathcal{O}_{0},\mathcal{W},\mathcal{F}),
 \end{eqnarray}
 and 
 \begin{eqnarray}
 \int_{\mathcal{D}_{u,\ubar}}\langle\nabla\bar{\beta},\nabla(D_{b}R_{43}-D_{3}R_{4b})\rangle\leq \epsilon C(\mathcal{O}_{0},\mathcal{W},\mathcal{F}).
 \end{eqnarray}
 Putting all the terms together, we obtain 
\begin{eqnarray}
|ER12|\leq \epsilon C(\mathcal{O}_{0},\mathcal{W}_{0},\mathcal{F}_{0},\mathcal{W},\mathcal{F})
\end{eqnarray} 
and therefore 
\begin{eqnarray}
\int_{H_{u}}|\nabla\beta|^{2}+\int_{  \Hbar_{\ubar}}|\nabla\rho|^{2}+\int_{  \Hbar_{\ubar}}|\nabla\sigma|^{2}= 
\int_{H_{0}}|\nabla\beta|^{2}+\int_{  \Hbar_{0}}|\nabla\rho|^{2}+\int_{  \Hbar_{0}}|\nabla\sigma|^{2}\\\nonumber 
+\epsilon C(\mathcal{O}_{0},\mathcal{W}_{0},\mathcal{F}_{0},\mathcal{W},\mathcal{F})\leq C(\mathcal{W}_{0})+\epsilon C(\mathcal{O}_{0},\mathcal{W},\mathcal{F})
\end{eqnarray}
Proceeding in an exactly similar way, we obtain 
\begin{eqnarray}
\int_{  \Hbar_{u}}|\nabla\bar{\beta}|^{2}+\int_{H_{u}}|\nabla\rho|^{2}+\int_{H_{u}}|\nabla\sigma|^{2}\leq C(\mathcal{W}_{0})+\epsilon C(\mathcal{O}_{0},\mathcal{W},\mathcal{F}).
\end{eqnarray}
Now we focus on the most dangerous term which is the term arising from the remaining pair $(\bar{\alpha},\bar{\beta})$
\begin{eqnarray}
2\int_{H_{u}}|\nabla\bar{\beta}|^{2}+\int_{  \Hbar_{\ubar}}|\nabla\bar{\alpha}|^{2}=2\int_{H_{0}}|\nabla\bar{\beta}|^{2}\nonumber+\int_{  \Hbar_{0}}|\nabla\bar{\alpha}|^{2}\\\nonumber +\underbrace{\int_{\mathcal{D}_{u,\ubar}}2|\nabla\bar{\beta}|^{2}(2\omegabar-\frac{1}{2} \tr\chibar)+\int_{\mathcal{D}_{u,\ubar}}|\nabla\bar{\alpha}|^{2}(2\omega-\frac{1}{2} \tr\chi)}_{ER13}\\\nonumber 
-\underbrace{2\int_{\mathcal{D}_{u,\ubar}}\left(2\langle\nabla\bar{\beta},\nabla_{3}\nabla\bar{\beta}\rangle+\langle\nabla\bar{\alpha},\nabla_{4}\nabla\bar{\alpha}\rangle\right)}_{ER14}.
\end{eqnarray}
Now note the most important fact about the structure of the Yang-Mills sourced null Bianchi equations. In the term $ER_{13}$ we encounter the term $\int_{\mathcal{D}_{u,\ubar}}|\nabla\bar{\alpha}|^{2}(2\omega-\frac{1}{2} \tr\chi)$ which could potentially be dangerous since we can not touch $|\nabla\bar{\alpha}|^{2}$ term. But luckily, the connection coefficients $(\omega, \tr\chi)$ that multiply $|\nabla\bar{\alpha}|^{2}$ satisfy $\nabla_{3}$ evolution equation and therefore are solely estimated by means of the initial data $\mathcal{O}_{0}$. This crucial fact will allow us to make use of the Gr\"onwall's inequality. $ER13$ is estimated as 
\begin{eqnarray}
|ER13|\leq \epsilon C(\mathcal{O}_{0},\mathcal{W},\mathcal{F})+C(\mathcal{O}_{0})\int_{0}^{\ubar}||\nabla\bar{\alpha}||^{2}_{L^{2}(  \Hbar)}d\ubar^{'}. 
\end{eqnarray}
For the term $ER14$, we need caution since $\nabla\bar{\alpha}$ is involved. We explicitly evaluate $ER14$ using the commuted evolution equations 
\begin{eqnarray}
2\int_{\mathcal{D}_{u,\ubar}}\langle\nabla\bar{\beta},\nabla_{3}\nabla\bar{\beta}\rangle+\int_{\mathcal{D}_{u,\ubar}}\langle\nabla\bar{\alpha},\nabla_{4}\nabla\bar{\alpha}\rangle\\\nonumber
 =\int_{\mathcal{D}_{u,\ubar}}2\langle\nabla\bar{\beta},-2\nabla( \tr\chibar\bar{\beta})-div\nabla\bar{\alpha}-2\nabla(\omegabar\bar{\beta})+\nabla(\etabar\cdot\bar{\alpha})+\frac{1}{2}\nabla(D_{b}R_{33}\nonumber-D_{3}R_{3b})\\
+[\nabla_{3},\nabla]\bar{\beta}-[\nabla,div]\bar{\alpha}\rangle\\\nonumber 
+\int_{\mathcal{D}_{u,\ubar}}\langle\nabla\bar{\alpha},-\frac{1}{2}\nabla( \tr\chi\bar{\alpha})-\nabla\hat{\otimes}\nabla\bar{\beta}+4\nabla(\omega\bar{\alpha})\nonumber-3\nabla(\chibarhat\rho-~^{*}\chibarhat\sigma)+\nabla((\zeta-4\etabar)\hat{\otimes}\bar{\beta})\\\nonumber
+\frac{1}{2}\nabla(D_{4}R_{33}-D_{3}R_{34})\gamma+[\nabla_{4},\nabla]\bar{\alpha}+[\nabla,\nabla]\hat{\otimes}\bar{\beta}\rangle\\\nonumber 
\sim\int_{\mathcal{D}_{u,\ubar}}\langle\nabla\bar{\beta},-2\nabla( \tr\chibar\bar{\beta})-2\nabla(\omegabar\bar{\beta})+\nabla(\etabar\cdot\bar{\alpha})+\frac{1}{2}\nabla(D_{b}R_{33}\nonumber-D_{3}R_{3b})\\
+(\bar{\beta}+\bar{\alpha}^{F}(\sigma^{F}+\rho^{F}))\bar{\beta}+(\eta+\etabar)\nabla_{3}\bar{\beta}-\bar{\chi}\nabla\bar{\beta}+\chi\etabar\bar{\beta}-K\bar{\alpha}\rangle\\\nonumber 
+\int_{\mathcal{D}_{u,\ubar}}\langle\nabla\bar{\alpha},-\frac{1}{2}\nabla( \tr\chi\bar{\alpha})+4\nabla(\omega\bar{\alpha})\nonumber-3\nabla(\chibarhat\rho-~^{*}\chibarhat\sigma)+\nabla((\zeta-4\etabar)\hat{\otimes}\bar{\beta})\\\nonumber
+\frac{1}{2}\nabla(D_{4}R_{33}-D_{3}R_{34})\gamma+(\beta+\alpha^{F}(\rho^{F}+\sigma^{F}))\bar{\alpha}+(\eta+\etabar)\nabla_{4}\bar{\alpha}-\chi\nabla\bar{\alpha}+\chi\etabar\bar{\alpha}+K\bar{\beta}\rangle\\\nonumber 
+\int_{\mathcal{D}_{u,\ubar}}\langle (\eta+\etabar)\nabla\bar{\beta},\nabla\bar{\alpha}\rangle,
\end{eqnarray}
where once again the principal terms cancel point-wise after an integration by parts procedure. Apart from the cancellation of the principal terms, note that the $\gamma-$trace-less property of $\bar{\alpha}$ removes the term $\langle\nabla\bar{\alpha},\nabla(D_{4}R_{33}-D_{3}R_{34})\gamma\rangle$. This essentially indicates the fact that $\bar{\alpha}$ does not interact with the source stress-energy tensor at the level of topmost derivative. In order to estimate the remaining terms, we need to explicitly write down the Yang-Mills source term
\begin{eqnarray}
\nabla(D_{b}R_{33}-D_{3}R_{3b})\sim\nabla(\langle\bar{\alpha}^{F},\hnabla\bar{\alpha}^{F}\rangle-\bar{\chi} \bar{\alpha}^{F}\cdot(\rho^{F}\nonumber+\sigma^{F})+\etabar|\bar{\alpha}^{F}|^{2}-2\omegabar \bar{\alpha}^{F}\cdot(\rho^{F}+\sigma^{F})\\\nonumber-\hnabla_{3}(\bar{\alpha}^{F}\cdot\rho^{F}+\bar{\alpha}^{F}\cdot\sigma^{F}))\\\nonumber 
\sim\langle\hnabla\bar{\alpha}^{F},\hnabla\bar{\alpha}^{F}\rangle+\langle\bar{\alpha}^{F},\hnabla\hnabla\bar{\alpha}^{F}\rangle-\nabla(\bar{\chi}\bar{\alpha}^{F}(\rho^{F}+\sigma^{F}))+\etabar\langle\hnabla\bar{\alpha}^{F},\bar{\alpha}^{F}\rangle-\nabla(\omegabar\bar{\alpha}^{F}(\rho^{F}+\sigma^{F}))\\\nonumber-\hnabla\hnabla_{3}\bar{\alpha}^{F}(\rho^{F}+\sigma^{F}) -\hnabla_{3}\bar{\alpha}^{F}(\hnabla\rho^{F}+\hnabla\sigma^{F}),
\end{eqnarray}
where we notice that the potentially dangerous terms $\hnabla\hnabla_{3}\bar{\alpha}^{F}$ and $\hnabla_{3}\bar{\alpha}^{F}$ appear. However, these terms can be controlled by $\mathcal{F}$. We have $||\nabla_{3}\alpha^{F}||_{L^{4}(S)}$ estimated in terms of $\mathcal{F}$ by means of the co-dimension 1 trace inequality (\ref{eq:need1}) and $||\hnabla\hnabla_{3}\bar{\alpha}^{F}||_{L^{2}(  \Hbar)}$ is estimated in terms of  $\mathcal{F}$. We first write down how to control these two terms \begin{eqnarray}
|\int_{\mathcal{D}_{u,\ubar}}\langle\nabla\bar{\beta},\hnabla\hnabla_{3}\bar{\alpha}^{F}(\rho^{F}+\sigma^{F})\rangle|\leq ||\nabla\bar{\beta}||_{L^{2}(\mathcal{D}_{u,\ubar})}||\nabla\hnabla_{3}\bar{\alpha}^{F}(\rho^{F}+\sigma^{F})||_{L^{2}(\mathcal{D}_{u,\ubar})}\\\nonumber 
\leq \epsilon^{\frac{1}{2}}\sup_{u}||\nabla\bar{\beta}||_{L^{2}(H)}\sup_{u,\ubar}(||\hnabla\rho^{F}||_{L^{4}(S)}+||\hnabla\sigma^{F}||_{L^{4}(S)})\sup_{\ubar}||\hnabla\hnabla_{3}\bar{\alpha}^{F}||_{L^{2}(  \Hbar)}\leq \epsilon^{\frac{1}{2}}C(\mathcal{O}_{0},\mathcal{W},\mathcal{F}),\\
|\int_{\mathcal{D}_{u,\ubar}}\langle\nabla\bar{\beta},\hnabla\bar{\alpha}^{F}\hnabla(\rho^{F},\sigma^{F})|\leq \epsilon^{\frac{1}{2}}\sup_{u}||\nabla\bar{\beta}||_{L^{2}(H)}\sup_{u,\ubar}||\hnabla_{3}\bar{\alpha}^{F}||_{L^{4}(S)}\sup_{u,\ubar}||\hnabla(\rho^{F},\sigma^{F})||_{L^{4}(S)}\\\nonumber \leq\nonumber \epsilon^{\frac{1}{2}}C(\mathcal{O}_{0},\mathcal{W},\mathcal{F})
\end{eqnarray}
where we have utilized the fact that $\nabla\bar{\beta}$ is controlled on $H$ (in addition to $  \Hbar$). In addition, we also have terms of the following types
\begin{eqnarray}
|\int_{\mathcal{D}_{u,\ubar}}\langle\nabla\bar{\beta},\hnabla^{2}\bar{\alpha}^{F}\bar{\alpha}^{F}\rangle|\leq \epsilon^{\frac{1}{2}}\sup_{u}||\nabla\bar{\beta}||_{L^{2}(H)}\sup_{u,\ubar}||\hnabla\alpha||_{L^{4}(S)}\sup_{\ubar}||\hnabla^{2}\bar{\alpha}^{F}||_{L^{2}(  \Hbar)}\\\nonumber \leq \epsilon^{\frac{1}{2}}C(\mathcal{O}_{0},\mathcal{W},\mathcal{F}),\\
|\int_{\mathcal{D}_{u,\ubar}}\langle\nabla\bar{\beta},\hnabla\bar{\alpha}^{F}\hnabla\bar{\alpha}^{F}\rangle|\leq \epsilon   \sup_{u}||\nabla\bar{\beta}||_{L^{2}(H)} \sup_{u,\ubar}||\hnabla\bar{\alpha}^{F}||^{2}_{L^{4}(S)}\\\nonumber \leq \epsilon C(\mathcal{O}_{0},\mathcal{W},\mathcal{F}),\\
|\int_{\mathcal{D}_{u,\ubar}}\langle\nabla\bar{\beta},\hnabla\bar{\alpha}^{F}\varphi\rangle|\leq \epsilon \sup_{u}||\nabla\bar{\beta}||_{L^{2}(H)}\sup_{u,\ubar}||\hnabla\bar{\alpha}^{F}||_{L^{4}(S)}\sup_{u,\ubar}||\varphi||_{L^{\infty}(S)}\\\nonumber \leq \epsilon C(\mathcal{O}_{0},\mathcal{W},\mathcal{F}),\\
|\int_{\mathcal{D}_{u,\ubar}}\langle\nabla\bar{\beta},\hnabla_{3}\bar{\alpha}^{F}\varphi\rangle|\leq \epsilon C(\mathcal{O}_{0},\mathcal{W},\mathcal{F}),\\
|\int_{\mathcal{D}_{u,\ubar}}\langle\nabla\bar{\alpha},\alpha^{F}(\rho^{F},\sigma^{F})\bar{\alpha}\rangle|\leq \epsilon^{\frac{1}{2}}||\nabla\bar{\alpha}||_{L^{2}(  \Hbar)}\sup_{u,\ubar}||\alpha^{F}||_{L^{4}(S)}||\hnabla(\rho^{F},\sigma^{F})||_{L^{4}(S)}||\bar{\alpha}||_{L^{4}(S)}\leq \\\nonumber \epsilon^{\frac{1}{2}}C(\mathcal{O}_{0},\mathcal{W},\mathcal{F})
\end{eqnarray}
Therefore the coupled Yang-Mills-Einstein terms are under control. Now we control the remaining terms using the estimates on the connection coefficients (lemma \ref{1}-\ref{7}) as follows
\begin{eqnarray}
|\int_{\mathcal{D}_{u,\ubar}}\langle\nabla\bar{\beta},\bar{\beta}\nabla\varphi\rangle|\leq\epsilon \sup_{u}||\nabla\bar{\beta}||_{L^{2}(H)}\sup_{u,\ubar}||\bar{\beta}||_{L^{4}(S)}||\nabla\varphi||_{L^{4}(S)}\leq \epsilon C(\mathcal{O}_{0},\mathcal{W},\mathcal{F}),\\
|\int_{\mathcal{D}_{u,\ubar}}\langle\nabla\bar{\beta},\nabla\bar{\beta}\varphi\rangle|\leq \epsilon \sup_{u}||\nabla\bar{\beta}||^{2}_{L^{2}(H)}\sup_{u,\ubar}||\varphi||_{L^{\infty}(S)}\leq \epsilon C(\mathcal{O}_{0},\mathcal{W},\mathcal{F}),\\
|\int_{\mathcal{D}_{u,\ubar}}\langle\nabla\bar{\beta},\nabla\bar{\alpha}\varphi\rangle|\leq \epsilon^{\frac{1}{2}} \sup_{u}||\nabla\bar{\beta}||_{L^{2}(H)}\sup_{\ubar}||\nabla\bar{\alpha}||_{L^{2}(  \Hbar)}\sup_{u,\ubar}||\varphi||_{L^{\infty}(S)}\leq \epsilon^{\frac{1}{2}} C(\mathcal{O}_{0},\mathcal{W},\mathcal{F}),\\
|\int_{\mathcal{D}_{u,\ubar}}\langle\nabla\bar{\beta},K\bar{\alpha}\rangle|\leq \epsilon \sup_{u}||\nabla\bar{\beta}||_{L^{2}(H)} \sup_{u,\ubar}||K||_{L^{4}(S)}||\bar{\alpha}||_{L^{4}(S)}\leq \epsilon C(\mathcal{O}_{0},\mathcal{W},\mathcal{F}).
\end{eqnarray}
Now we estimate the most dangerous terms 
\begin{eqnarray}
|\int_{\mathcal{D}_{u,\ubar}}\langle\nabla\bar{\alpha},\nabla\bar{\alpha} \tr\chi\rangle|\leq \sup_{u,\ubar}|| \tr\chi||_{L^{\infty}(S)}\int_{0}^{\ubar}||\nabla\bar{\alpha}||^{2}_{L^{2}(  \Hbar)}d\ubar^{'}\leq C(\mathcal{O}_{0})\int_{0}^{\ubar}||\nabla\bar{\alpha}||^{2}_{L^{2}(  \Hbar)}d\ubar^{'},\\
|\int_{\mathcal{D}_{u,\ubar}}\langle\nabla\bar{\alpha},\nabla\bar{\alpha}\omega\rangle|\leq \sup_{u,\ubar}||\omega||_{L^{\infty}(S)}\int_{0}^{\ubar}||\nabla\bar{\alpha}||^{2}_{L^{2}(  \Hbar)}d\ubar^{'}\leq C(\mathcal{O}_{0})\int_{0}^{\ubar}||\nabla\bar{\alpha}||^{2}_{L^{2}(  \Hbar)}d\ubar^{'},
\end{eqnarray}
where we notice that the connection coefficients that appear multiplied with the top derivative of $\bar{\alpha}$ satisfy $\nabla_{3}$ equation and therefore solely determined by the initial data $\mathcal{O}_{0}$. This allows us to use the Gr\"onwall's inequality to complete the energy estimate. The remaining terms are harmless 
\begin{eqnarray}
|\int_{\mathcal{D}_{u,\ubar}}\langle\nabla\bar{\alpha},\nabla\Psi\varphi\rangle|\leq \epsilon^{\frac{1}{2}}C(\mathcal{O}_{0},\mathcal{W},\mathcal{F}),~|\int_{\mathcal{D}_{u,\ubar}}\langle\nabla\bar{\alpha},\nabla\varphi \Psi\rangle|\leq \epsilon^{\frac{1}{2}}C(\mathcal{O}_{0},\mathcal{W},\mathcal{F}),\\
|\int_{\mathcal{D}_{u,\ubar}}\langle(\eta+\etabar)\nabla\bar{\beta},\nabla{\bar{\alpha}}\rangle|\leq \epsilon^{\frac{1}{2}}C(\mathcal{O}_{0},\mathcal{W},\mathcal{F}),
\end{eqnarray}
where $\Psi\in(\beta,\bar{\beta},\rho,\sigma,\alpha)$ and $\varphi$ is any connection coefficients belonging to the set $(\widehat{\chi},\chibarhat,  \tr\chi,  \tr\chibar,\etabar,\omega,\eta,\omegabar)$. Collecting all the terms, we may now estimate $ER14$ as follows 
\begin{eqnarray}
|ER14|\leq \epsilon^{\frac{1}{2}} C(\mathcal{O}_{0},\mathcal{W},\mathcal{F})+\epsilon C(\mathcal{O}_{0},\mathcal{W},\mathcal{F})+C(\mathcal{O}_{0})\int_{0}^{\ubar}||\nabla\bar{\alpha}||^{2}_{L^{2}(  \Hbar)}d\ubar^{'}.
\end{eqnarray}
Therefore collecting all the terms we obtain 
\begin{eqnarray}
2\int_{H_{u}}|\nabla\bar{\beta}|^{2}+\int_{  \Hbar_{\ubar}}|\nabla\bar{\alpha}|^{2}\leq 2\int_{H_{0}}|\nabla\bar{\beta}|^{2}\nonumber+\int_{  \Hbar_{0}}|\nabla\bar{\alpha}|^{2}\\\nonumber+\epsilon^{\frac{1}{2}} C(\mathcal{O}_{0},\mathcal{W},\mathcal{F})+\epsilon C(\mathcal{O}_{0},\mathcal{W},\mathcal{F})+C(\mathcal{O}_{0})\int_{0}^{\ubar}||\nabla\bar{\alpha}||^{2}_{L^{2}(  \Hbar)}d\ubar^{'}\\\nonumber 
\leq C(\mathcal{F}_{0})+\epsilon^{\frac{1}{2}} C(\mathcal{O}_{0},\mathcal{W},\mathcal{F})+\epsilon C(\mathcal{O}_{0},\mathcal{W},\mathcal{F})+C(\mathcal{O}_{0})\int_{0}^{\ubar}||\nabla\bar{\alpha}||^{2}_{L^{2}(  \Hbar)}d\ubar^{'}
\end{eqnarray}
This completes the proof of the lemma.
\end{proof}
\subsection{Energy estimates for the Yang-Mills curvature}
\noindent Now that we have completed the energy estimates for the Weyl curvature, we move on to estimating the energy associated with the Yang-Mills curvature. For this purpose, we could use the canonical stress-energy tensor associated with the Yang-Mills theory. However, we will follow the direct integration by parts procedure using the null Yang-Mills equations. In the Yang-Mills case, we will not do the energy estimates separately for each order but only for the topmost derivatives. If the estimates at the top order close, so do the lower orders since the hyperbolic structure of the null Yang-Mills equations is preserved after commuting with derivatives.\\
\begin{lemma} 
\label{73}
The horizontal derivatives of the null Yang-Mills curvature satisfy the following $L^{2}$ energy estimates 
\begin{eqnarray}
\int_{H_{u}}|\hnabla^{I}\alpha^{F}|^{2}+\int_{  \Hbar_{\ubar}}|\hnabla^{I}\rho^{F}|^{2}+\int_{  \Hbar_{\ubar}}|\hnabla^{I}\sigma^{F}|^{2}\leq C(\mathcal{F}_{0})+\epsilon C(\mathcal{O}_{0},\mathcal{W},\mathcal{F}),\\
\int_{  \Hbar_{\ubar}}|\hnabla^{I}\bar{\alpha}^{F}|^{2}+\int_{H_{u}}|\hnabla^{I}\rho^{F}|^{2}+\int_{H_{u}}|\hnabla^{I}\sigma^{F}|^{2}\\\nonumber 
\leq C(\mathcal{W}_{0})+ \epsilon^{\frac{1}{2}}C(\mathcal{O}_{0},\mathcal{W},\mathcal{F})+\epsilon C(\mathcal{O}_{0},\mathcal{W},\mathcal{F})+C(\mathcal{O}_{0})\int_{0}^{\ubar}||\hnabla\bar{\alpha}^{F}||^{2}_{L^{2}(  \Hbar)}d\ubar^{'}
\end{eqnarray}
for $0\leq I\leq 2$, $I\in \mathbb{Z}$.
\end{lemma}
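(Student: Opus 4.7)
The proof proceeds in close parallel with Lemmas \ref{71} and \ref{72}, leveraging the manifestly hyperbolic structure of the null Yang-Mills equations established in Proposition \ref{hyperbolic}. The starting point is the integration identity of Lemma \ref{integration} applied to the gauge-invariant scalars $f_I^{\Phi} := |\hnabla^{I}\Phi|^{2}_{\gamma,\delta}$ for $\Phi \in \{\alpha^F,\bar\alpha^F,\rho^F,\sigma^F\}$. Since $\hnabla$ is compatible with both the fiber metrics $\gamma$ and $\delta$ and $f_I^{\Phi}$ is gauge-invariant, one has $\nabla_4 f_I^{\Phi} = 2\langle \hnabla^{I}\Phi, \hnabla_4\hnabla^{I}\Phi\rangle$ and similarly for $\nabla_3$, so no bundle-connection terms appear in the boundary identities.

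For the triple $(\alpha^F,\rho^F,\sigma^F)$, I would consider the combination $\int_{\mathcal{D}_{u,\ubar}}(\nabla_3 f_I^{\alpha^F} + \nabla_4 f_I^{\rho^F} + \nabla_4 f_I^{\sigma^F})$. Using the commutation Lemma \ref{commutation} to pass $\hnabla^I$ through $\hnabla_3,\hnabla_4$, the $\hnabla_3\hnabla^I\alpha^F$ and $\hnabla_4\hnabla^I(\rho^F,\sigma^F)$ equations produce the principal terms $-\hnabla\hnabla^I\rho^F,\;\Hodge{\hnabla}\hnabla^I\sigma^F$ and $-\widehat{\mathrm{div}}\hnabla^I\alpha^F,\;-\widehat{\mathrm{curl}}\hnabla^I\alpha^F$ respectively, which cancel pairwise after integration by parts on $S_{u,\ubar}$ exactly as in Proposition \ref{hyperbolic}. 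The remaining bulk terms involve products of the form $\varphi\cdot\hnabla^{J_1}\Phi^{F}\cdot\hnabla^{J_2}\Phi^{F}$, $\varphi\cdot\Psi\cdot\hnabla^{J}\Phi^F$ and curvature-commutator contributions from Lemma \ref{commutation} (containing $\beta,\alpha^F\cdot(\rho^F,\sigma^F),\alpha^F$). Each of these is controlled in $L^2(\mathcal{D}_{u,\ubar})$ using the connection estimates of Lemmas \ref{1}--\ref{7}, the $L^4(S)$ curvature estimates of Lemma \ref{L4alphabar}, and the definitions of $\mathcal{W},\mathcal{F}$; because every such bulk integral includes an integration over $u\in[0,\epsilon]$, a smallness factor $\epsilon$ (or at worst $\epsilon^{1/2}$ via co-dimension one trace) is gained, giving the first estimate with right-hand side $C(\mathcal{F}_0)+\epsilon\, C(\mathcal{O}_0,\mathcal{W},\mathcal{F})$.

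For the second triple $(\bar\alpha^F,\rho^F,\sigma^F)$, one plays the symmetric game: integrate $\nabla_4 f_I^{\bar\alpha^F}+\nabla_3 f_I^{\rho^F}+\nabla_3 f_I^{\sigma^F}$. The principal terms again cancel thanks to the $\hnabla/\widehat{\mathrm{div}}/\widehat{\mathrm{curl}}$ structure of the null Yang-Mills system in Proposition \ref{hyperbolic}. The crucial and only subtle point is the boundary contribution $\int_{\mathcal{D}_{u,\ubar}} |\hnabla^{I}\bar\alpha^{F}|^{2}(2\omega - \tfrac{1}{2}\tr\chi)$ arising from Lemma \ref{integration}: here $\bar\alpha^F$ is integrated on $\Hbar_{\ubar}$ and does not enjoy the $\epsilon$ gain. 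However, and this is the whole point of the null structure, the factors $\omega$ and $\tr\chi$ multiplying $|\hnabla^I\bar\alpha^F|^2$ are \emph{good} connection coefficients satisfying $\nabla_3$ transport equations, and by Lemma \ref{1} they are bounded in $L^{\infty}(S)$ solely by $\mathcal{O}_0$. This reduces the dangerous contribution to
\begin{equation*}
C(\mathcal{O}_{0})\int_{0}^{\ubar}\|\hnabla^{I}\bar\alpha^{F}\|^{2}_{L^{2}(\Hbar_{\ubar^{\prime}})}\,d\ubar^{\prime},
\end{equation*}
on which Gr\"onwall's inequality may be applied to close the estimate.

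The main obstacle, inherited from the commutation structure exposed in Lemma \ref{commutation}, is handling the top-order commutator terms for $I=2$: in particular, terms of the schematic form $\hnabla^{J_1}\bar\alpha^F\cdot\hnabla^{J_2}\Phi^F\cdot\hnabla^{J_3}(\varphi)$ where one derivative may land on $\bar\alpha^F$. These are absorbed by writing $\|\hnabla\bar\alpha^F\|_{L^4(S)}\le C(\mathcal{O}_0,\mathcal{F})$ from Lemma \ref{L4alphabar} and placing two factors in $L^2(S_{u,\ubar})$ along with Sobolev embedding on $S_{u,\ubar}$ from Proposition \ref{sobolev}; the remaining connection terms such as $\hnabla_3\bar\alpha^F$ are controlled by the co-dimension one trace bound \eqref{eq:need1}. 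Once every bulk contribution is shown to be either $\epsilon^{1/2}C(\mathcal{O}_0,\mathcal{W},\mathcal{F})$ or of the Gr\"onwall form above, Gr\"onwall's inequality in $\ubar$ yields the second stated estimate and completes the proof.
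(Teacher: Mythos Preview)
Your proposal is correct and follows essentially the same approach as the paper: both apply the integration-by-parts identities of Lemma \ref{integration} to the two Yang-Mills triples $(\alpha^F,\rho^F,\sigma^F)$ and $(\bar\alpha^F,\rho^F,\sigma^F)$, use the Hodge structure of the null Yang-Mills system to cancel principal terms after integration by parts on $S_{u,\ubar}$, control the commutator and lower-order bulk terms via the $L^4(S)$ bounds of Lemma \ref{L4alphabar} together with the connection estimates of Lemmas \ref{1}--\ref{7}, and exploit the key structural fact that the coefficients $\omega,\tr\chi$ multiplying $|\hnabla^I\bar\alpha^F|^2$ are good connection coefficients bounded by $\mathcal{O}_0$ alone so that Gr\"onwall in $\ubar$ closes the $\bar\alpha^F$ estimate. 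The only minor over-citation is the appeal to \eqref{eq:need1} for $\hnabla_3\bar\alpha^F$, which the paper does not actually need in this lemma (it enters only in later lemmas), but this does not affect correctness.
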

\begin{proof} In order to prove this estimate, we use the integration identities (\ref{eq:IBP1}-\ref{eq:IBP2}) once again. First, we identify the pairs $(\alpha^{F},\rho^{F},\sigma^{F})$ and apply the identities (\ref{eq:IBP1}-\ref{eq:IBP2}) to obtain \begin{eqnarray}
\int_{H_{u}}|\hnabla^{I}\alpha^{F}|^{2}+\int_{  \Hbar_{\ubar}}|\hnabla^{I}\rho^{F}|^{2}+\int_{  \Hbar_{\ubar}}|\hnabla^{I}\sigma^{F}|^{2}=\int_{H_{0}}|\hnabla^{I}\alpha^{F}|^{2}+\int_{  \Hbar_{0}}|\hnabla^{I}\rho^{F}|^{2}+\int_{  \Hbar_{0}}|\hnabla^{I}\sigma^{F}|^{2}\\\nonumber 
+\underbrace{\int_{\mathcal{D}_{u,\ubar}}|\hnabla^{I}\alpha^{F}|^{2}(2\omegabar-\frac{1}{2} \tr\chibar)+\int_{\mathcal{D}_{u,\ubar}}|\hnabla^{I}\rho^{F}|^{2}(2\omega-\frac{1}{2} \tr\chi)+\int_{\mathcal{D}_{u,\ubar}}|\hnabla^{I}\sigma^{F}|^{2}(2\omega-\frac{1}{2} \tr\chi)}_{ER15}
\\\nonumber -\underbrace{2\int_{D_{u},\ubar}\left(\langle\hnabla^{I}\alpha^{F},\hnabla_{3}\hnabla^{I}\alpha^{F}\rangle+\langle\hnabla^{I}\rho^{F},\hnabla_{4}\hnabla^{I}\rho^{F}\rangle+\langle\hnabla^{I}\sigma^{F},\hnabla_{4}\hnabla^{I}\sigma^{F}\rangle\right)}_{ER16}.
\end{eqnarray}
We can directly estimate $ER15$ by using the estimates of the connection coefficients (lemma \ref{1}) and the definition of $\mathcal{F}$ since all the Yang-Mills curvature coefficients can be estimated on $H$ by
\begin{eqnarray}
|ER15|\leq \epsilon C(\mathcal{O}_{0},\mathcal{W},\mathcal{F}).
\end{eqnarray}
In order to estimate $ER16$, we use the null Yang-Mills equations 
\begin{eqnarray}
ER16=-2\int_{D_{u},\ubar}\left(\langle\hnabla^{I}\alpha^{F},\hnabla_{3}\hnabla^{I}\alpha^{F}\rangle+\langle\hnabla^{I}\rho^{F},\hnabla_{4}\hnabla^{I}\rho^{F}\rangle+\langle\hnabla^{I}\sigma^{F},\hnabla_{4}\hnabla^{I}\sigma^{F}\rangle\right)\\\nonumber 
=\int_{\mathcal{D}_{u,\ubar}}\left(\langle\nabla^{I}\alpha^{F},-\frac{1}{2}\hnabla^{I}( \tr\chibar\alpha^{F})-\hnabla\hnabla^{I}\rho^{F}+~^{*}\hnabla\hnabla^{I}\sigma^{F}-2\hnabla^{I}(~^{*}\etabar\sigma^{F})+2\hnabla^{I}(\etabar\rho^{F})\right.\\\nonumber\left.+2\hnabla^{I}(\omegabar\alpha^{F})-\hnabla^{I}(\widehat{\chi}\cdot\bar{\alpha}^{F})+[\hnabla_{3},\hnabla^{I}]\alpha^{F}+[\hnabla,\hnabla^{I}]\rho^{F}-[^{*}\hnabla,\hnabla^{I}]\sigma^{F}\rangle\right.\\\nonumber 
\left.+\langle\hnabla^{I}\rho^{F},-\hat{div}\hnabla^{I}\alpha^{F}-\hnabla^{I}( \tr\chi\rho^{F})-\hnabla^{I}((\eta-\etabar)\cdot\alpha^{F})+[\hnabla_{4},\hnabla^{I}]\rho^{F}+[\hat{div},\hnabla^{I}]\alpha^{F}\rangle\right.\\\nonumber 
\left.+\langle\hnabla^{I}\sigma^{F},-\hat{curl}\hnabla^{I}\alpha^{F}-\hnabla^{I}( \tr\chi\sigma^{F})+\hnabla^{I}((\eta-\etabar)\cdot~^{*}\alpha^{F})+[\hnabla_{4},\hnabla^{I}]\sigma^{F}-[\hat{curl},\hnabla^{I}]\alpha^{F}\rangle\right)\\\nonumber 
\sim\int_{\mathcal{D}_{u,\ubar}}\left(\langle\nabla^{I}\alpha^{F},-\frac{1}{2}\hnabla^{I}( \tr\chibar\alpha^{F})-2\hnabla^{I}(~^{*}\etabar\sigma^{F})+2\hnabla^{I}(\etabar\rho^{F})\right.\\\nonumber\left.+2\hnabla^{I}(\omegabar\alpha^{F})-\hnabla^{I}(\widehat{\chi}\cdot\bar{\alpha}^{F})+[\hnabla_{3},\hnabla^{I}]\alpha^{F}+[\hnabla,\hnabla^{I}]\rho^{F}-[^{*}\hnabla,\hnabla^{I}]\sigma^{F}\rangle\right.\\\nonumber 
\left.+\langle\hnabla^{I}\rho^{F},-\hnabla^{I}( \tr\chi\rho^{F})-\hnabla^{I}((\eta-\etabar)\cdot\alpha^{F})+[\hnabla_{4},\hnabla^{I}]\rho^{F}+[\hat{div},\hnabla^{I}]\alpha^{F}\rangle\right.\\\nonumber 
\left.+\langle\hnabla^{I}\sigma^{F},-\hnabla^{I}( \tr\chi\sigma^{F})+\hnabla^{I}((\eta-\etabar)\cdot~^{*}\alpha^{F})+[\hnabla_{4},\hnabla^{I}]\sigma^{F}-[\hat{curl},\hnabla^{I}]\alpha^{F}\rangle\right)\\\nonumber
+\int_{\mathcal{D}_{u,\ubar}}\langle\hnabla^{I}\alpha^{F},((\eta+\etabar)(\hnabla^{I}\rho^{F}+\hnabla^{I}\sigma^{F})\rangle.
\end{eqnarray}
We will sketch how we handle each term. There will be terms of the following type that are estimated easily
\begin{eqnarray}
\int_{\mathcal{D}_{u,\ubar}}\langle\hnabla^{I}\Phi^{F},\varphi\hnabla^{I}\Phi^{F}\rangle\leq \int_{\mathcal{D}_{u,\ubar}}||\hnabla^{I}\Phi^{F}||_{L^{2}(S)}||\varphi||_{L^{\infty}(S)}||\hnabla^{I}\Phi^{F}||_{L^{2}(S)}
\leq \epsilon C(\mathcal{O}_{0},\mathcal{W},\mathcal{F}),\\
\int_{\mathcal{D}_{u,\ubar}}\langle\hnabla^{I}\Phi^{F},\hnabla^{I}\varphi \Phi^{F}\rangle\leq \int_{0}^{u}||\hnabla^{I}\Phi^{F}||_{L^{2}(H)}||\hnabla^{I}\varphi||_{L^{2}(H)}||\Phi^{F}||_{L^{\infty}(S)}
\leq \epsilon C(\mathcal{O}_{0},\mathcal{W},\mathcal{F}),\\\nonumber
\int_{\mathcal{D}_{u,\ubar}}\langle\hnabla^{I}\Phi^{F},\hnabla\varphi \hnabla\Psi\rangle\leq \int_{\mathcal{D}_{u,\ubar}}||\hnabla^{I}\Phi^{F}||_{L^{2}(S)}||\hnabla\varphi||_{L^{4}(S)}||\hnabla\Phi^{F}||_{L^{4}(S)}\leq 
\epsilon C(\mathcal{O}_{0},\mathcal{W},\mathcal{F}),
\end{eqnarray}
where $\Phi^{F}\in (\alpha^{F},\rho^{F},\sigma^{F})$ and therefore are controllable on $H$. For the connection coefficients, we use the lemma (\ref{1}-\ref{7}). Now we need to control the commutator terms 
\begin{eqnarray}
[\hnabla_{3},\hnabla^{2}]\alpha^{F}=[\hnabla_{3},\hnabla]\hnabla\alpha^{F}+\hnabla[\hnabla_{3},\hnabla]\alpha^{F}
\end{eqnarray}
Now 
\begin{eqnarray}
[\hnabla_{3},\hnabla]\Phi^{F}\sim \bar{\beta}\Phi^{F}+\bar{\alpha}^{F}\Phi^{F}+\bar{\alpha}^{F}(\rho^{F}+\sigma^{F})\Phi^{F}\nonumber+(\eta+\etabar)\nabla_{3}\Phi^{F}-\bar{\chi}\nabla\Phi^{F}+\bar{\chi}\eta\Phi^{F},
\end{eqnarray}
and therefore 
\begin{eqnarray}
[\hnabla_{3},\hnabla]\hnabla\alpha^{F}\sim \bar{\beta}\hnabla\alpha^{F}+\bar{\alpha}^{F}\hnabla\alpha^{F}+\bar{\alpha}^{F}(\rho^{F}-\sigma^{F})\hnabla\alpha^{F}+(\eta+\etabar)\hnabla_{3}\hnabla\alpha^{F}\\\nonumber-\bar{\chi}\nabla\hnabla\alpha^{F}+\bar{\chi}\eta\hnabla\alpha^{F}\\\nonumber 
\sim \bar{\beta}\hnabla\alpha^{F}+\bar{\alpha}^{F}\hnabla\alpha^{F}+\bar{\alpha}^{F}(\rho^{F}-\sigma^{F})\hnabla\alpha^{F}+(\eta+\etabar)\hnabla\hnabla_{3}\alpha^{F}\\\nonumber-\bar{\chi}\nabla\hnabla\alpha^{F}+\bar{\chi}\eta\hnabla\alpha^{F}+(\eta+\etabar)[\hnabla_{3},\hnabla]\alpha^{F},
\end{eqnarray}
where $\hnabla\hnabla_{3}\alpha^{F}$ is evaluated by means of the null Yang-Mills equations. Now 
\begin{eqnarray}
\int_{\mathcal{D}_{u,\ubar}}\langle\hnabla^{I}\Phi^{F},[\hnabla_{3},\hnabla^{I}]\Phi^{F}\rangle
\end{eqnarray}
would contain the following terms 
\begin{eqnarray}
|\int_{\mathcal{D}_{u,\ubar}}\nonumber\langle\hnabla^{I}\Phi^{F},\bar{\alpha}^{F}\hnabla\Phi^{F}\rangle|\leq \int_{u,\ubar}||\hnabla^{I}\Phi^{F}||_{L^{2}(S)}||\bar{\alpha}^{F}||_{L^{4}(S)}||\hnabla\Phi^{F}||_{L^{4}(S)}
\leq \epsilon C(\mathcal{O}_{0},\mathcal{F}),\\
|\int_{\mathcal{D}_{u,\ubar}}\langle\hnabla^{I}\Phi^{F},\bar{\alpha}^{F}(\rho^{F}+\sigma^{F})\hnabla\alpha^{F}\rangle|\leq \int_{u,\ubar}||\hnabla^{I}\Phi^{F}||_{L^{2}(S)}\nonumber||\bar{\alpha}^{F}||_{L^{4}(S)}||\hnabla\Phi^{F}||_{L^{4}(S)}||\rho^{F},\sigma^{F})||_{L^{\infty}(S)}\\\nonumber 
\leq \int_{u,\ubar}||\hnabla^{I}\Phi^{F}||_{L^{2}(S)}\nonumber||\bar{\alpha}^{F}||_{L^{4}(S)}||\hnabla\Phi^{F}||_{L^{4}(S)}||\hnabla(\rho^{F},\sigma^{F})||_{L^{4}(S)}\leq \epsilon C(\mathcal{O}_{0},\mathcal{F}),\\
|\int_{\mathcal{D}_{u,\ubar}}\langle\hnabla^{I}\Phi^{F},\varphi\varphi\hnabla\bar{\alpha}^{F}\rangle|\leq \int_{u,\ubar}||\hnabla^{I}\Phi^{F}||_{L^{2}(S)}\nonumber||\varphi||_{L^{\infty}(S)}||\varphi||_{L^{4}(S)}||\hnabla\bar{\alpha}^{F}||_{L^{4}(S)}\nonumber\leq \epsilon C(\mathcal{O}_{0},\mathcal{W},\mathcal{F}),\\
|\int_{\mathcal{D}_{u,\ubar}}\langle\hnabla^{I}\Phi^{F},\varphi\hnabla^{I}\Phi^{F}\rangle|\leq  \int_{u,\ubar}||\hnabla^{I}\Phi^{F}||^{2}_{L^{2}(S)}||\varphi||_{L^{\infty}(S)}\leq \epsilon C(\mathcal{O}_{0},\mathcal{W},\mathcal{F})
,\\
|\int_{\mathcal{D}_{u,\ubar}}\hnabla^{I}\Phi^{F}\bar{\beta}\hnabla\Phi^{F}|\nonumber\leq \int_{u,\ubar}||\hnabla^{I}\Phi^{F}||_{L^{2}(S)}||\bar{\beta}||_{L^{4}(S)}||\hnabla\Phi^{F}||_{L^{4}(S)}\leq \epsilon C(\mathcal{O}_{0},\mathcal{W},\mathcal{F})
\end{eqnarray}
Now note the following commutation relation 
\begin{eqnarray}
[\hnabla_{4},\hnabla^{2}](\rho^{F},\sigma^{F})=[\hnabla_{4},\hnabla]\hnabla(\rho^{F},\sigma^{F})+\hnabla[\hnabla_{4},\hnabla](\rho^{F},\sigma^{F})
\end{eqnarray}
Now 
\begin{eqnarray}
[\hnabla_{4},\hnabla](\rho^{F},\sigma^{F})\sim \beta(\rho^{F},\sigma^{F})+\alpha^{F}(\rho^{F},\sigma^{F})+\alpha^{F}(\rho^{F}+\sigma^{F})(\rho^{F},\sigma^{F})\nonumber+(\eta+\etabar)\nabla_{4}(\rho^{F},\sigma^{F})\\\nonumber-\chi\nabla(\rho^{F},\sigma^{F})+\chi\etabar(\rho^{F},\sigma^{F}),
\end{eqnarray}
and therefore 
\begin{eqnarray}
[\hnabla_{4},\hnabla]\hnabla(\rho^{F},\sigma^{F})\sim \beta\hnabla(\rho^{F},\sigma^{F})+\alpha^{F}\hnabla(\rho^{F},\sigma^{F})\nonumber+\alpha^{F}(\rho^{F}+\sigma^{F})\hnabla(\rho^{F},\sigma^{F})\\\nonumber+(\eta+\etabar)\hnabla_{4}\hnabla(\rho^{F},\sigma^{F}) -\chi\nabla\hnabla(\rho^{F},\sigma^{F})+\bar{\chi}\eta\hnabla(\rho^{F},\sigma^{F})\\\nonumber 
\sim \beta\hnabla(\rho^{F},\sigma^{F})+\alpha^{F}\hnabla(\rho^{F},\sigma^{F})+{\alpha}^{F}(\rho^{F}-\sigma^{F})\hnabla(\rho^{F},\sigma^{F})+(\eta+\etabar)\hnabla\hnabla_{4}(\rho^{F},\sigma^{F})-\bar{\chi}\nabla\hnabla(\rho^{F},\sigma^{F})\\\nonumber +\bar{\chi}\eta\hnabla(\rho^{F},\sigma^{F})+(\eta+\etabar)\left(\beta+\alpha^{F}(\rho^{F}+\sigma^{F}))(\rho^{F},\sigma^{F})+(\eta+\etabar)\hnabla_{4}(\rho^{F},\sigma^{F})\right.\\\nonumber\left.-\chi\hnabla(\rho^{F},\sigma^{F})+\chi\etabar(\rho^{F},\sigma^{F})\right),
\end{eqnarray}
where $\hnabla\hnabla_{4}(\rho^{F},\sigma^{F})$ can be further simplified by means of the null Yang-Mills equations. The resulting terms are easy to estimate 
\begin{eqnarray}
\int_{\mathcal{D}_{u,\ubar}}\langle\hnabla^{I}\Phi^{F},\Phi^{F}\hnabla\Phi^{F}\rangle\leq\nonumber \int_{u,\ubar}||\hnabla^{I}\Phi^{F}||_{L^{2}(S)}||\Phi^{F}||_{L^{4}(S)}||\hnabla\Phi^{F}||_{L^{4}(S)}
\leq \epsilon C(\mathcal{O}_{0},\mathcal{F}),\\
\int_{\mathcal{D}_{u,\ubar}}\langle\hnabla^{I}\Phi^{F},\Phi^{F}(\rho^{F}+\sigma^{F})\hnabla\Phi^{F}\rangle\leq \int_{u,\ubar}||\hnabla^{I}\Phi^{F}||_{L^{2}(S)}\nonumber||\Phi^{F}||_{L^{4}(S)}||\hnabla\Phi^{F}||_{L^{4}(S)}||(\rho^{F},\sigma^{F})||_{L^{\infty}(S)}\\\nonumber 
\leq \int_{u,\ubar}||\hnabla^{I}\Phi^{F}||_{L^{2}(S)}\nonumber||\Phi^{F}||_{L^{4}(S)}||\hnabla\Phi^{F}||_{L^{4}(S)}||\hnabla(\rho^{F},\sigma^{F})||_{L^{4}(S)}\leq \epsilon C(\mathcal{O}_{0},\mathcal{F}),\\
\int_{\mathcal{D}_{u,\ubar}}\langle\hnabla^{I}\Phi^{F},\varphi\varphi\hnabla\Phi^{F}\rangle\leq \int_{u,\ubar}||\hnabla^{I}\Phi^{F}||_{L^{2}(S)}\nonumber||\varphi||_{L^{\infty}(S)}||\varphi||_{L^{4}(S)}||\hnabla\Phi^{F}||_{L^{4}(S)}\leq \epsilon C(\mathcal{O}_{0},\mathcal{W},\mathcal{F}),\\
\int_{\mathcal{D}_{u,\ubar}}\langle\hnabla^{I}\Phi^{F},\varphi\hnabla^{I}\Phi^{F}\rangle\leq\nonumber \epsilon C(\mathcal{O}_{0},\mathcal{W},\mathcal{F}),
\end{eqnarray}
where once again $\Phi^{F}\in (\alpha^{F},\rho^{F},\sigma^{F})$ and $\varphi$ is any connection coefficients belonging to the set $(\widehat{\chi},\chibarhat,  \tr\chi,  \tr\chibar,\etabar,\omega,\eta,\omegabar)$.
Now note the following 
\begin{eqnarray}
[\hnabla,\hnabla^{2}]\Phi^{F}=\hnabla[\hnabla,\hnabla]\Phi^{F}+[\hnabla,\hnabla]\hnabla\Phi^{F}\sim\hnabla((K+\sigma^{F})\Phi^{F})+(K+\sigma^{F})\hnabla\Psi
\end{eqnarray}
and therefore 
\begin{eqnarray}
|\int_{\mathcal{D}_{u,\ubar}}\langle\hnabla^{I}\Phi^{F},[\hnabla,\hnabla^{2}]\Phi^{F}\rangle|\sim|\int_{\mathcal{D}_{u,\ubar}}\langle\hnabla^{I}\Phi^{F},\hnabla((K+\sigma^{F})\Phi^{F})+(K+\sigma^{F})\hnabla\Phi^{F}\rangle|\\\nonumber
\leq \int_{u,\ubar}||\hnabla^{I}\Phi^{F}||_{L^{2}(S)}\left(||\nabla K||_{L^{2}(S)}||\Phi^{F}||_{L^{\infty}(S)}+||K||_{L^{4}(S)}||\hnabla\Phi^{F}||_{L^{4}(S)}+||\hnabla\sigma^{F}||_{L^{4}(S)}||\Phi^{F}||_{L^{4}(S)}\right.\\\nonumber 
\left.+||\sigma^{F}||_{L^{4}(S)}||\hnabla\Phi^{F}||_{L^{4}(S)}\right)
\leq\epsilon C(\mathcal{O}_{0},\mathcal{W},\mathcal{F})
\end{eqnarray}
where $||\nabla K||_{L^{2}(S)},~||K||_{L^{4}(S)}$ are controlled by $\mathcal{W}$ and $\mathcal{F}$ by the virtue of the null Hamiltonian constraint (\ref{eq:4}, corollary 1). Collecting all the terms together, we obtain 
\begin{eqnarray}
|ER16|\leq \epsilon C(\mathcal{O}_{0},\mathcal{W},\mathcal{F})
\end{eqnarray}
and therefore 
\begin{eqnarray}
\int_{H_{u}}|\hnabla^{I}\alpha^{F}|^{2}+\int_{  \Hbar_{\ubar}}|\hnabla^{I}\rho^{F}|^{2}+\int_{  \Hbar_{\ubar}}|\hnabla^{I}\sigma^{F}|^{2}=\int_{H_{0}}|\hnabla^{I}\alpha^{F}|^{2}+\int_{  \Hbar_{0}}|\hnabla^{I}\rho^{F}|^{2}+\int_{  \Hbar_{0}}|\hnabla^{I}\sigma^{F}|^{2}\\\nonumber 
+\epsilon C(\mathcal{O}_{0},\mathcal{W},\mathcal{F})\leq C(\mathcal{W}_{0})+\epsilon C(\mathcal{O}_{0},\mathcal{W},\mathcal{F}).
\end{eqnarray}
Now in order to prove the second estimate of the lemma, we collect the triple $(\bar{\alpha},\rho^{F},\sigma^{F})$ and apply the integration identities (\ref{eq:IBP1}-\ref{eq:IBP2})
\begin{eqnarray}
\label{eq:energyfinal}
\int_{  \Hbar_{\ubar}}|\hnabla^{I}\bar{\alpha}^{F}|^{2}+\int_{H_{u}}|\hnabla^{I}\rho^{F}|^{2}+\int_{H_{u}}|\hnabla^{I}\sigma^{F}|^{2}=\int_{  \Hbar_{0}}|\hnabla^{I}\bar{\alpha}^{F}|^{2}+\int_{H_{0}}|\hnabla^{I}\rho^{F}|^{2}+\int_{H_{0}}|\hnabla^{I}\sigma^{F}|^{2}\\\nonumber
+\underbrace{\int_{\mathcal{D}_{u,\ubar}}|\hnabla^{I}\bar{\alpha}^{F}|^{2}(2\omega-\frac{1}{2} \tr\chi)+\int_{\mathcal{D}_{u,\ubar}}|\hnabla^{I}\rho^{F}|^{2}(2\omegabar-\frac{1}{2} \tr\chibar)+\int_{\mathcal{D}_{u,\ubar}}|\hnabla^{I}\sigma^{F}|^{2}(2\omegabar-\frac{1}{2} \tr\chibar)}_{ER17}\\\nonumber-2\underbrace{\int_{D_{u},\ubar}\left(\langle\hnabla^{I}\bar{\alpha}^{F},\hnabla_{4}\hnabla^{I}\bar{\alpha}^{F}\rangle+\langle\hnabla^{I}\rho^{F},\hnabla_{3}\hnabla^{I}\rho^{F}\rangle+\langle\hnabla^{I}\sigma^{F},\hnabla_{3}\hnabla^{I}\sigma^{F}\rangle\right)}_{ER18}.
\end{eqnarray}
Once again notice that the term involving $|\hnabla^{I}\bar{\alpha}^{F}|^{2}$ contains connection coefficients $(\omega,  \tr\chi)$ which satisfy $\nabla_{3}$ evolution equations and therefore are estimated solely by means of the initial data allowing us to use Gr\"onwall's inequality. $ER17$ is estimated a follows 
\begin{eqnarray}
|ER17|\leq \epsilon C(\mathcal{O}_{0},\mathcal{W},\mathcal{F})+C(\mathcal{O}_{0})\int_{0}^{\ubar}||\hnabla^{I}\bar{\alpha}^{F}||^{2}_{L^{2}(\bar{H)}}d\ubar^{'}.
\end{eqnarray}
Once again $ER18$ is controlled by means of commuted null Yang-Mills equations 
\begin{eqnarray}
\int_{D_{u},\ubar}\left(\langle\hnabla^{I}\bar{\alpha}^{F},\hnabla_{4}\hnabla^{I}\bar{\alpha}^{F}\rangle+\langle\hnabla^{I}\rho^{F},\hnabla_{3}\hnabla^{I}\rho^{F}\rangle+\langle\hnabla^{I}\sigma^{F},\hnabla_{3}\hnabla^{I}\sigma^{F}\rangle\right)\\\nonumber 
\sim\int_{\mathcal{D}_{u,\ubar}}\left(\langle\hnabla^{I}\bar{\alpha}^{F},-\frac{1}{2}\hnabla^{I}( \tr\chi\bar{\alpha}^{F})-2\hnabla^{I}(~^{*}\etabar\cdot\sigma^{F})-2\hnabla^{I}(\etabar\cdot\rho^{F})+2\hnabla^{I}(\omega\bar{\alpha}^{F})-\hnabla^{I}(\chibarhat\cdot\alpha^{F})\rangle\right.\\\nonumber 
\left.+\langle\hnabla^{I}\rho^{F},\hnabla^{I}( \tr\chibar\rho^{F})+\hnabla^{I}((\eta-\etabar)\cdot\bar{\alpha}^{F})\rangle+\langle\hnabla^{I}\sigma^{F},-\hnabla^{I}( \tr\chibar\sigma^{F})+\hnabla^{I}((\eta-\etabar)\cdot~^{*}\bar{\alpha}^{F})\rangle\right)\\
+\int_{\mathcal{D}_{u,\ubar}}\langle\hnabla^{I}\bar{\alpha}^{F},[\hnabla_{4},\hnabla^{I}]\bar{\alpha}^{F}\rangle+\int_{\mathcal{D}_{u,\ubar}}\langle\hnabla^{I}\bar{\alpha}^{F},[\hnabla,\hnabla^{I}](\rho^{F},\sigma^{F})\rangle
+\int_{\mathcal{D}_{u,\ubar}}\langle\hnabla^{I}\rho^{F},[\hnabla_{3},\hnabla^{I}]\rho^{F}\rangle\\\nonumber+\int_{\mathcal{D}_{u,\ubar}}\langle\hnabla^{I}\rho^{F},[\hnabla,\hnabla^{I}]\bar{\alpha}^{F}\rangle
+\int_{\mathcal{D}_{u,\ubar}}\langle\hnabla^{I}\sigma^{F},[\hnabla,\hnabla^{I}]\bar{\alpha}^{F}\rangle +\int_{\mathcal{D}_{u,\ubar}}\langle\hnabla^{I}\alpha^{F},((\eta+\etabar)(\hnabla^{I}\rho^{F}+\hnabla^{I}\sigma^{F})\rangle
\end{eqnarray}
The terms that appear with the top order derivative of $\bar{\alpha}^{F}$ are $ \tr\chi,\omega,$ and $\chibarhat$ which are determined solely by $\mathcal{O}_{0}$. We use the following commutation relation 
\begin{eqnarray}
[\hnabla_{4},\hnabla]\bar{\alpha}^{F}\sim \beta\bar{\alpha}^{F}+\alpha^{F}\bar{\alpha}^{F}+\alpha^{F}(\rho^{F}-\sigma^{F})\bar{\alpha}^{F}\nonumber+(\eta+\etabar)\nabla_{4}\bar{\alpha}^{F}-\bar\chi\nabla\bar{\alpha}^{F}+\chi\etabar\bar{\alpha}^{F}
\end{eqnarray}
to evaluate $[\hnabla_{4},\hnabla^{2}]\bar{\alpha}^{F}$
\begin{eqnarray}
[\hnabla_{4},\hnabla^{2}]\bar{\alpha}^{F}=[\hnabla_{4},\hnabla]\hnabla\bar{\alpha}^{F}+\hnabla[\hnabla_{4},\hnabla]\bar{\alpha}^{F}.
\end{eqnarray}
These will contain terms of the type 
\begin{eqnarray}
\int_{\mathcal{D}_{u,\ubar}}\langle\hnabla^{I}\bar{\alpha}^{F},\hnabla^{I}\varphi \bar{\alpha}^{F}\rangle\leq ||\hnabla^{I}\bar{\alpha}^{F}||_{L^{2}(\mathcal{D}_{u,\ubar})}\sup_{u,\ubar}||\hnabla\bar{\alpha}^{F}||_{L^{4}(S)}\nonumber||\hnabla\varphi||_{L^{2}(\mathcal{D}_{u,\ubar})}\leq \epsilon^{1/2}C(\mathcal{O}_{0},\mathcal{F}),\\
\int_{\mathcal{D}_{u,\ubar}}\langle\hnabla^{I}\bar{\alpha}^{F},\hnabla\varphi \hnabla\bar{\alpha}^{F}\rangle\leq \int_{\mathcal{D}_{u,\ubar}}||\hnabla^{I}\bar{\alpha}^{F}||_{L^{2}(S)}\nonumber||\hnabla\varphi||_{L^{4}(S)}||\hnabla\bar{\alpha}^{F}||_{L^{4}(S)}\leq \epsilon^{1/2}C(\mathcal{O}_{0},\mathcal{F})\\\nonumber 
\int_{\mathcal{D}_{u,\ubar}}\langle\hnabla^{I}\alpha^{F},\nabla K \bar{\alpha}^{F}\rangle\leq \epsilon C(\mathcal{O}_{0},\mathcal{F}),~
\int_{\mathcal{D}_{u,\ubar}}\langle\hnabla^{I}\alpha^{F},K \hnabla\bar{\alpha}^{F}\rangle\leq \epsilon C(\mathcal{O}_{0},\mathcal{F})\\
\int_{\mathcal{D}_{u,\ubar}}\langle\hnabla^{I}\bar{\alpha}^{F},K \hnabla\bar{\alpha}^{F}\rangle\leq \epsilon C(\mathcal{O}_{0},\mathcal{F}),~\nonumber
\int_{\mathcal{D}_{u,\ubar}}\langle\hnabla^{I}\alpha^{F},\nabla K \bar{\alpha}^{F}\rangle\leq \epsilon^{\frac{1}{2}} C(\mathcal{O}_{0},\mathcal{F}),\\\nonumber
\int_{\mathcal{D}_{u,\ubar}}\langle\hnabla^{I}\alpha^{F},\varphi \hnabla\bar{\alpha}^{F}\rangle\leq \epsilon C(\mathcal{O}_{0},\mathcal{F}),~\nonumber
\int_{\mathcal{D}_{u,\ubar}}\langle\hnabla^{I}\bar{\alpha}^{F},\nabla\varphi \bar{\alpha}^{F}\rangle\leq \epsilon^{1/2} C(\mathcal{O}_{0},\mathcal{F})\\\nonumber
\int_{\mathcal{D}_{u,\ubar}}\langle\hnabla^{I}\bar{\alpha}^{F},(\omega, \tr\chi) \hnabla^{I}\bar{\alpha}^{F}\rangle\leq C(\mathcal{O}_{0})\int_{0}^{\ubar}||\hnabla^{I}\bar{\alpha}^{F}||_{L^{2}(  \Hbar)},
\end{eqnarray}
Notice that the last term is controllable by Gr\"onwall's inequality and connection coefficients in the last integral do not contain $\eta$ and $\omegabar$. Therefore, collecting all the terms, we obtain 
\begin{eqnarray}
|ER18|\leq \epsilon^{\frac{1}{2}}C(\mathcal{O}_{0},\mathcal{W},\mathcal{F})+\epsilon C(\mathcal{O}_{0},\mathcal{W},\mathcal{F})+C(\mathcal{O}_{0})\int_{0}^{\ubar}||\hnabla\bar{\alpha}^{F}||^{2}_{L^{2}(  \Hbar)}d\ubar^{'}.
\end{eqnarray}
Substituting this estimate in the main energy identity (\ref{eq:energyfinal}) we obtain 
\begin{eqnarray}
\int_{  \Hbar_{\ubar}}|\hnabla^{I}\bar{\alpha}^{F}|^{2}+\int_{H_{u}}|\hnabla^{I}\rho^{F}|^{2}\nonumber+\int_{H_{u}}|\hnabla^{I}\sigma^{F}|^{2}=\int_{  \Hbar_{0}}|\hnabla^{I}\bar{\alpha}^{F}|^{2}+\int_{H_{0}}|\hnabla^{I}\rho^{F}|^{2}+\int_{H_{0}}|\hnabla^{I}\sigma^{F}|^{2}\\\nonumber 
+\epsilon^{\frac{1}{2}}C(\mathcal{O}_{0},\mathcal{W},\mathcal{F})+\epsilon C(\mathcal{O}_{0},\mathcal{W},\mathcal{F})+C(\mathcal{O}_{0})\int_{0}^{\ubar}||\hnabla\bar{\alpha}^{F}||^{2}_{L^{2}(  \Hbar)}d\ubar^{'}\\\nonumber 
\leq C(\mathcal{F}_{0})+\epsilon^{\frac{1}{2}}C(\mathcal{O}_{0},\mathcal{W},\mathcal{F})+\epsilon C(\mathcal{O}_{0},\mathcal{W},\mathcal{F})+C(\mathcal{O}_{0})\int_{0}^{\ubar}||\hnabla\bar{\alpha}^{F}||^{2}_{L^{2}(  \Hbar)}d\ubar^{'}
\end{eqnarray}
This completes the proof of the lemma.
\end{proof}
\subsection{Energy estimates for $\nabla_{4}\alpha,\nabla_{3}\alphabar$}
\noindent We need to estimate the remaining terms of the curvature energy. \\
\begin{lemma}
\label{74}
\textit{The null derivatives of the null components of the Weyl curvature satisfy the following $L^{2}$ energy estimates
\begin{eqnarray}
\int_{H_{u}}|\nabla_{4}\alpha|^{2}\leq C(\mathcal{O}_{0},\mathcal{W}_{0},\mathcal{F}_{0})+\epsilon C(\mathcal{O}_{0}\mathcal{W},\mathcal{F}),\\
\int_{  \Hbar_{\ubar}}|\nabla_{3} \bar{\alpha}|^{2}\leq C(\mathcal{O}_{0},\mathcal{W}_{0},\mathcal{F}_{0})+\epsilon C(\mathcal{O}_{0}\mathcal{W},\mathcal{F})+
\epsilon^{\frac{1}{2}} C(\mathcal{O}_{0},\mathcal{W},\mathcal{F})\\\nonumber +C(\mathcal{O}_{0})\int_{0}^{\ubar}||\nabla_{3}\bar{\alpha}||^{2}_{L^{2}(  \Hbar)}d\ubar^{'}
\end{eqnarray}}
\end{lemma}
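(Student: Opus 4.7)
\noindent \textbf{Proof proposal for Lemma \ref{74}.} The plan is to employ the Bianchi-pair energy method used in the proofs of Lemmas \ref{71}--\ref{72}, but now applied to a pair formed after commuting the null Bianchi equations with a transversal derivative. Since $\alpha$ and $\bar\alpha$ satisfy only one of the two null transport equations, I cannot estimate $\nabla_4\alpha$ (respectively $\nabla_3\bar\alpha$) directly from a transport identity; instead I will build a hyperbolic system for the pairs $(\nabla_4\alpha,\nabla_4\beta)$ and $(\nabla_3\bar\alpha,\nabla_3\bar\beta)$ and use the integration-by-parts identities \eqref{eq:IBP1}--\eqref{eq:IBP2}.

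First I commute the $\nabla_3$-equation for $\alpha$ with $\nabla_4$, converting $\nabla_4\nabla_3\alpha$ into $\nabla_3\nabla_4\alpha$ modulo a commutator that is schematically of the form $(\eta+\etabar)\nabla_4\alpha+\chi\nabla\alpha+(\beta+\alpha^F(\rho^F+\sigma^F))\alpha+\ldots$, which is harmless by Lemma \ref{1} and the curvature norm. This produces
\[
\nabla_3(\nabla_4\alpha)+\frac{1}{2}\tr\chibar\,\nabla_4\alpha=(\nabla\hat{\otimes}\nabla_4\beta)+\text{l.o.t.}+\nabla_4(\text{Yang--Mills source}).
\]
Similarly, commuting the $\nabla_4$-equation for $\beta$ with $\nabla_4$ yields
\[
\nabla_4(\nabla_4\beta)+2\tr\chi\,\nabla_4\beta=\text{div}(\nabla_4\alpha)+\text{l.o.t.}+\nabla_4(\text{Yang--Mills source}).
\]
The Yang--Mills source terms produce $\hnabla_4\hnabla_4\alpha^F$, $\hnabla_4\hnabla\alpha^F$ and contractions thereof; these are precisely the quantities included in the definition \eqref{eq:yangnorm} of $\mathcal{F}$, and can therefore be estimated in $L^2(\mathcal{D}_{u,\ubar})$ with a gain of $\epsilon^{1/2}$ from the $u$-integration. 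Applying \eqref{eq:IBP2} to $|\nabla_4\alpha|^2$ and \eqref{eq:IBP1} to $2|\nabla_4\beta|^2$, the principal-order terms $\langle\nabla_4\alpha,\nabla\hat{\otimes}\nabla_4\beta\rangle$ and $2\langle\nabla_4\beta,\text{div}\nabla_4\alpha\rangle$ cancel pointwise after integration by parts over $S_{u,\ubar}$, leaving only error terms of the form analyzed in Lemma \ref{72}, bounded by $\epsilon\,C(\mathcal{O}_0,\mathcal{W},\mathcal{F})$. This yields the first estimate.

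For the second estimate I proceed in precisely the same manner with the pair $(\nabla_3\bar\alpha,\nabla_3\bar\beta)$, commuting the $\nabla_4$-equation for $\bar\alpha$ and the $\nabla_3$-equation for $\bar\beta$ with $\nabla_3$ and using the same principal cancellation. The Yang--Mills sources now generate $\hnabla_3\hnabla_3\bar\alpha^F$ and $\hnabla_3\hnabla\bar\alpha^F$ which are again available in $\mathcal{F}$. As in Lemma \ref{71}, the crucial point is the appearance of the bulk term $\int_{\mathcal{D}_{u,\ubar}}|\nabla_3\bar\alpha|^2(2\omega-\tfrac{1}{2}\tr\chi)$ inherited from the identity \eqref{eq:IBP2}, together with the analogous term coming from $-\tfrac{1}{2}\tr\chi\,\nabla_3\bar\alpha$ in the commuted equation; since $\omega$ and $\tr\chi$ satisfy $\nabla_3$-transport equations they are controlled, by Lemma \ref{1}, solely in terms of $\mathcal{O}_0$, so these terms are harmless up to an application of Gr\"onwall's inequality in $\ubar$.

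The main obstacle will be controlling the source contributions $\nabla_3(D_4 R_{AB})$ and $\nabla_4(D_3 R_{AB})$, which when expanded via Einstein's equations produce terms like $\hnabla_3\hnabla_3\bar\alpha^F\cdot(\rho^F+\sigma^F)$ and $\hnabla_3\bar\alpha^F\cdot\hnabla_3(\rho^F,\sigma^F)$. The first is square-integrable on $\bar H_{\ubar}$ by definition of $\mathcal{F}$ and pairs with $\nabla_3\bar\alpha$ with a gain of $\epsilon^{1/2}$ from bounding the multiplying Yang--Mills factor in $L^\infty(S)$ via the Sobolev embedding of Proposition \ref{sobolev} and the pointwise estimates of Lemma \ref{L4alphabar}. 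The second is handled using the codimension-one trace estimate \eqref{eq:need1} controlling $\|\hnabla_3\bar\alpha^F\|_{L^4(S)}$ together with the $L^4(S)$ bound on $\hnabla(\rho^F,\sigma^F)$ from $\mathcal{F}(S)\leq C(\mathcal{O}_0,\mathcal{F}_0,\mathcal{W},\mathcal{F})$. The absence of any term of the form $\nabla_3\bar\alpha\cdot\bar\alpha^F\cdot\hnabla_3^2\bar\alpha^F$ on the right-hand side, which would be uncontrollable, is a manifestation of the same trace-free cancellation (via the factor $\gamma_{ab}$) that was exploited in Lemmas \ref{71}--\ref{72}, and this is the structural reason the argument closes.
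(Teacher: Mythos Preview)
Your approach is exactly the one the paper takes: commute the relevant Bianchi equations with $\nabla_4$ (resp.\ $\nabla_3$) to form the pairs $(\nabla_4\alpha,\nabla_4\beta)$ and $(\nabla_3\bar\alpha,\nabla_3\bar\beta)$, apply the integration identities \eqref{eq:IBP1}--\eqref{eq:IBP2}, cancel the principal terms by integration by parts on $S_{u,\ubar}$, use the trace-free property of $\alpha,\bar\alpha$ to kill the $\gamma_{ab}$ source, and close by Gr\"onwall using $\|\omega\|_{L^\infty},\|\tr\chi\|_{L^\infty}\le C(\mathcal{O}_0)$.

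There is, however, one point you pass over that the paper isolates as ``extremely important'' and which is a genuine gap in your sketch. The commutator relevant for the $\bar\alpha$-pair is $[\nabla_4,\nabla_3]\bar\alpha$, whose schematic form is
\[
[\nabla_3,\nabla_4]\bar\alpha \;\sim\; \sigma\bar\alpha+(\rho^F\rho^F+\sigma^F\sigma^F)\bar\alpha+\omega\nabla_3\bar\alpha+\omegabar\nabla_4\bar\alpha+(\eta-\etabar)\nabla\bar\alpha,
\]
not the $[\nabla_4,\nabla]$-type commutator you wrote. The dangerous contribution is $\langle\nabla_3\bar\alpha,(\eta-\etabar)\nabla\bar\alpha\rangle$: neither $\nabla_3\bar\alpha$ nor $\nabla\bar\alpha$ is controllable on $H$, so there is no $\epsilon$-gain from the bulk integral, and the multiplying coefficient must be bounded in $L^\infty$ by a constant depending \emph{only} on initial data for the Gr\"onwall argument to close with the right dependence. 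Lemma~\ref{1} gives only $\|\eta\|_{L^\infty}\le C(\mathcal{O}_0,\mathcal{W},\mathcal{F},\ldots)$ since $\eta\in\varphi_b$, which is insufficient. The paper fills this gap by observing that the \emph{difference} $\eta-\etabar=2\zeta$ satisfies its own $\nabla_3$-transport equation, and then runs the transport inequality (Proposition~\ref{transport}) together with a trace estimate to obtain the improved bound $\|\eta-\etabar\|_{L^\infty(S)}\le C(\mathcal{O}_0)$. You should make this step explicit; without it the term $C(\mathcal{O}_0,\mathcal{W},\mathcal{F})\int_0^{\ubar}\|\nabla_3\bar\alpha\|_{L^2(\Hbar)}\,d\ubar'$ survives and the bootstrap does not close.

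A minor correction: the source terms you name as obstacles, $\nabla_3(D_4R_{AB})$ and $\nabla_4(D_3R_{AB})$, do not arise. Equation~\eqref{eq:bianchi2} for $\bar\alpha$ carries $-D_3R_{AB}$, so commutation with $\nabla_3$ yields $\nabla_3(D_3R_{AB})$; analogously $\nabla_4(D_4R_{AB})$ appears on the $\alpha$ side. The $\hnabla_3^2\bar\alpha^F$ contribution you correctly anticipate actually enters through the $\bar\beta$-equation source $\nabla_3(D_bR_{33}-D_3R_{3b})$, paired with $\nabla_3\bar\beta$ (which is controllable on $H$), not with $\nabla_3\bar\alpha$.
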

\begin{proof} We proceed exactly the same way as the previous case only we commute the null derivatives with the evolution equations of the Weyl curvature components. First we identify the set of pairs $(\alpha,\beta)$  and $(\bar{\alpha},\bar{\beta})$ and apply the integration identities (\ref{eq:IBP1}-\ref{eq:IBP2})
\begin{eqnarray}
\int_{H_{u}}|\nabla_{4}\alpha|^{2}+2\int_{  \Hbar_{\ubar}}|\nabla_{4}\beta|^{2}=\int_{H_{0}}|\nabla_{4}\alpha|^{2}+2\int_{  \Hbar_{0}}|\nabla_{4}\beta|^{2}\\\nonumber+\underbrace{\int_{\mathcal{D}_{u,\ubar}}|\nabla_{4}\alpha|^{2}(2\omegabar-\frac{1}{2} \tr\chibar)+\int_{\mathcal{D}_{u,\ubar}}2|\nabla_{4}\beta|^{2}(2\omega-\frac{1}{2} \tr\chi)}_{ER19}\\\nonumber -2\underbrace{\int_{\mathcal{D}_{u,\ubar}}\left(\langle\nabla_{4}\alpha,\nabla_{3}\nabla_{4}\alpha\rangle+2\langle\nabla_{4}\beta,\nabla_{4}\nabla_{4}\beta\rangle\right)}_{ER20}
\end{eqnarray}
Now notice that $\nabla_{4}\beta$ contains the term $\nabla\alpha$ which can not be controlled on $  \Hbar$ and therefore we needed to include $||\nabla_{4}\beta||_{L^{2}(  \Hbar_{0}})$ term in the initial data for the Weyl curvature. Fortunately, $\nabla_{4}\beta$ appears in the error term $ER19$ within the bulk integral multiplied by $\omega$ and $ \tr\chi$ and therefore harmless. We first estimate the error terms 
\begin{eqnarray}
|ER19|=|\int_{\mathcal{D}_{u,\ubar}}|\nabla_{4}\alpha|^{2}(2\omegabar-\frac{1}{2} \tr\chibar)+\int_{\mathcal{D}_{u,\ubar}}2|\nabla_{4}\beta|^{2}(2\omega-\frac{1}{2} \tr\chi)|\\\nonumber \leq \epsilon C(\mathcal{O}_{0},\mathcal{W},\mathcal{F})+C(\mathcal{O}_{0})\int_{0}^{\ubar}||\nabla_{4}\beta||^{2}_{L^{2}(  \Hbar)}d\ubar^{'}
\end{eqnarray}
where notice the connection coefficients multiplying $|\nabla_{4}\beta|^{2}$ in the bulk integral is solely determined by the initial data $\mathcal{O}_{0}$. This allows us to utilize Gr\"onwall's estimate. Now for the term $ER20$, we need to use the commuted evolution equations 
\begin{eqnarray}
ER20=-2\int_{\mathcal{D}_{u,\ubar}}\left(\langle\nabla_{4}\alpha,\nabla_{3}\nabla_{4}\alpha\rangle+2\langle\nabla_{4}\beta,\nabla_{4}\nabla_{4}\beta\rangle\right)\\\nonumber 
=-2\int_{\mathcal{D}_{u,\ubar}}\left(\langle\nabla_{4}\alpha,\nabla\hat{\otimes} \nabla_{4}\beta-\frac{1}{2}\nabla_{4}( \tr\chibar\alpha)+4\nabla_{4}(\omegabar\alpha)-3\nabla_{4}(\widehat{\chi}\rho+~^{*}\chi\sigma)+\nabla_{4}((\zeta+4\eta)\hat{\otimes}\beta)\right.\\\nonumber\left. +\frac{1}{2}\nabla_{4}(D_{3}R_{44}-D_{4}R_{33})\gamma+[\nabla_{3},\nabla_{4}]\alpha+[\nabla_{4},\nabla]\hat{\otimes}\beta\rangle+2\langle\nabla_{4}\beta,div(\nabla_{4}\alpha)-2\nabla_{4}( \tr\chi\beta)-2\nabla_{4}(\omega\beta)\right.\\\nonumber 
\left.+\nabla_{4}(\eta\alpha)-\frac{1}{2}\nabla_{4}(D_{b}R_{44}-D_{4}R_{4b})+[\nabla_{4},div]\alpha\rangle\right)\\\nonumber 
\sim\int_{\mathcal{D}_{u,\ubar}}\left(\langle\nabla_{4}\alpha,-\frac{1}{2}\nabla_{4}( \tr\chibar\alpha)+4\nabla_{4}(\omegabar\alpha)-3\nabla_{4}(\widehat{\chi}\rho+~^{*}\chi\sigma)+\nabla_{4}((\eta+\etabar)\hat{\otimes}\beta)\right.\\\nonumber\left. +\frac{1}{2}\nabla_{4}(D_{3}R_{44}-D_{4}R_{33})\gamma+[\nabla_{3},\nabla_{4}]\alpha+[\nabla_{4},\nabla]\hat{\otimes}\beta\rangle+\langle\nabla_{4}\beta,-2\nabla_{4}( \tr\chi\beta)-2\nabla_{4}(\omega\beta)\right.\\\nonumber 
\left.+\nabla_{4}(\eta\alpha)-\frac{1}{2}\nabla_{4}(D_{b}R_{44}-D_{4}R_{4b})+[\nabla_{4},div]\alpha\rangle\right)+\int_{\mathcal{D}_{u,\ubar}}\langle(\eta+\etabar)\nabla_{4}\beta,\nabla_{4}\alpha\rangle.
\end{eqnarray}
Now write down the commutators explicitly 
\begin{eqnarray}
[\nabla_{3},\nabla_{4}]\alpha\sim \sigma \alpha+(\rho^{F}\rho^{F}+\sigma^{F}\sigma^{F})\alpha+\omega\nabla_{3}\alpha+\omegabar\nabla_{4}\alpha+(\eta-\etabar)\nabla\alpha,\\\nonumber 
[\nabla_{4},\nabla]\beta\sim (\beta+\alpha^{F}(\rho^{F}+\sigma^{F}))\beta+(\eta+\etabar)\nabla_{4}\beta-\chi\nabla\beta+\chi\etabar\beta,\\\nonumber 
[\nabla_{4},\nabla]\alpha\sim (\beta+\alpha^{F}(\rho^{F}+\sigma^{F}))\alpha+(\eta+\etabar)\nabla_{4}\alpha-\chi\nabla\beta+\chi\etabar\alpha
\end{eqnarray}
and use the following null evolution equation for $\beta$
\begin{eqnarray}
\nabla_{4}\beta_{a}+2 \tr\chi\beta_{a}=(div\alpha)_{a}-2\omega\beta_{a}+(\eta\cdot\alpha)_{a}-\frac{1}{2}(\mathcal{D}_{a}R_{44}-\mathcal{D}_{4}R_{4a}),
\end{eqnarray}
where the Yang-Mills source term may be evaluated as follows 
\begin{eqnarray}
D_{b}R_{44}-D_{4}R_{4b}\sim \langle\alpha^{F},\hnabla_{b}\alpha^{F}\rangle-\chi_{bc}\mathfrak{T}_{c4}+\eta_{b}\mathfrak{T}_{44}-2\omega \mathfrak{T}_{4b}-(\hnabla_{4}\alpha^{F}_{b}\cdot(\rho^{F}\nonumber+\sigma^{F})\\\nonumber -\alpha^{F}_{b}\cdot(\hnabla_{4}\rho^{F}+\hnabla_{4}\sigma^{F})).
\end{eqnarray}
In addition, we will use the null evolution equations for the connection coefficients whenever they are available. In the error term $ER20$, we note that $\nabla_{4}\etabar$ appears. We can use the estimate from lemma \ref{7} to control this term (this was the whole point of proving lemma \ref{lemma5}-\ref{7}).
Note $\langle\nabla_{4}\alpha,\nabla_{4}(D_{3}R_{44}-D_{4}R_{33})\gamma\rangle=0$ (since $\nabla_{4}$ commutes with $\gamma$). Now we may estimate each term separately 
\begin{eqnarray}
|\int_{\mathcal{D}_{u,\ubar}}\langle\nabla_{4}\alpha,\nabla_{4}\varphi\alpha\rangle| \leq \epsilon \sup_{u}||\nabla_{4}\alpha||_{L^{2}(H)}\sup_{u,\ubar}||\nabla_{4}\varphi||_{L^{4}(S)}||\alpha||_{L^{4}(S)}\leq \epsilon C(\mathcal{O}_{0},\mathcal{W},\mathcal{F}),\\\nonumber
|\int_{\mathcal{D}_{u,\ubar}}\langle\nabla_{4}\alpha,\varphi\nabla_{4}\alpha\rangle|\leq \epsilon \sup_{u}||\nabla_{4}\alpha||^{2}_{L^{2}(H)}\sup_{u,\ubar}||\varphi||_{L^{\infty}(S)}\leq \epsilon C(\mathcal{O},\mathcal{W},\mathcal{F}),\\\nonumber
|\int_{\mathcal{D}_{u,\ubar}}\langle\nabla_{4}\alpha\varphi \nabla_{4}\Psi\rangle|\leq \epsilon \sup_{u}||\nabla_{4}\alpha||_{L^{2}(H)}\sup_{u,\ubar}||\varphi||_{L^{\infty}(S)}||\nabla_{4}\Psi||_{L^{2}(S)}\leq \epsilon C(\mathcal{O}_{0},\mathcal{W},\mathcal{F}),\\\nonumber
|\int_{\mathcal{D}_{u,\ubar}}\langle\nabla_{4}\alpha \nabla_{4}\varphi\Psi\rangle|\leq \epsilon \sup_{u}||\nabla_{4}\alpha||_{L^{2}(H)}\sup_{u,\ubar}||\nabla_{4}\varphi||_{L^{4}(S)}||\Psi||_{L^{4}(S)}\leq \epsilon C(\mathcal{O}_{0},\mathcal{W},\mathcal{F}),\\\nonumber
|\int_{\mathcal{D}_{u,\ubar}}\langle\nabla_{4}\alpha\varphi \nabla\Psi\rangle|\leq \epsilon \sup_{u}||\nabla_{4}\alpha||_{L^{2}(H)}\sup_{u,\ubar}||\varphi||_{L^{\infty}(S)}||\nabla\Psi||_{L^{2}(S)}\leq \epsilon C(\mathcal{O}_{0},\mathcal{W},\mathcal{F}),\\\nonumber
|\int_{\mathcal{D}_{u,\ubar}}\langle\nabla\alpha,\varphi \Phi^{F}\hnabla_{4}\alpha^{F}\rangle|\leq \epsilon \sup_{u}||\nabla\alpha||_{L^{2}(H)}\sup_{u}||\hnabla_{4}\alpha^{F}||_{L^{2}(H)}\sup_{u,\ubar}||\varphi||_{L^{\infty}(S)}||\hnabla\Phi^{F}||_{L^{4}(S)}\\\nonumber \leq \epsilon C(\mathcal{O}_{0},\mathcal{W},\mathcal{F}),\\\nonumber
|\int_{\mathcal{D}_{u,\ubar}}\langle\nabla\alpha,\varphi \alpha^{F}\hnabla_{4}\Phi^{F}\rangle|\leq \epsilon \sup_{u}||\nabla\alpha||_{L^{2}(H)}\sup_{u,\ubar}||\varphi||_{L^{\infty(S)}}||\nabla\alpha^{F}||_{L^{4}(S)}||\hnabla\Phi^{F}||_{L^{2}(S)}\\\nonumber \leq \epsilon C(\mathcal{O}_{0},\mathcal{W},\mathcal{F})\\\nonumber 
|\int_{\mathcal{D}_{u,\ubar}}\langle\nabla\alpha,\varphi\nabla_{4}\alpha\rangle|\leq \epsilon \sup_{u}||\nabla\alpha||_{L^{2}(H)}||\nabla_{4}\alpha||_{L^{2}(H)}\sup_{u,\ubar}||\varphi||_{L^{\infty}(S)}\leq \epsilon C(\mathcal{O}_{0},\mathcal{W},\mathcal{F}),\\\nonumber
|\int_{\mathcal{D}_{u,\ubar}}\langle\nabla_{4}\alpha,\varphi\Phi^{F}\hnabla_{4}\alpha^{F}\rangle|\leq \epsilon \sup_{u}||\nabla_{4}\alpha||_{L^{2}(H)}\sup_{u}||\hnabla_{4}\alpha^{F}||_{L^{2}(H)}\sup_{u,\ubar}||\varphi||_{L^{\infty}(S)}||\hnabla\Phi^{F}||_{L^{4}(S)}\\\nonumber \leq \epsilon C(\mathcal{O}_{0},\mathcal{W},\mathcal{F}),\\\nonumber
|\int_{\mathcal{D}_{u,\ubar}}\langle\nabla_{4}\alpha,\varphi \alpha^{F}\hnabla_{4}\Phi^{F}\rangle|\leq \epsilon \sup_{u}||\nabla_{4}\alpha||_{L^{2}(H)}\sup_{u,\ubar}||\varphi||_{L^{\infty(S)}}||\nabla\alpha^{F}||_{L^{4}(S)}||\hnabla\Phi^{F}||_{L^{2}(S)}\\\nonumber \leq \epsilon C(\mathcal{O}_{0},\mathcal{W},\mathcal{F}),
\end{eqnarray}
where $\Psi:(\alpha,\beta,\rho,\sigma)$ and $\Phi^{F}:=(\alpha^{F},\rho^{F},\sigma^{F})$. We have also used the $||\nabla_{4}\varphi||_{L^{4}(S)}$ estimate from corollary (2). Collecting all the terms, we obtain 
\begin{eqnarray}
|ER20|\leq \epsilon C(\mathcal{O}_{0},\mathcal{W},\mathcal{F})
\end{eqnarray}
and therefore 
\begin{eqnarray}
\int_{H_{u}}|\nabla_{4}\alpha|^{2}+2\int_{  \Hbar_{\ubar}}|\nabla_{4}\beta|^{2}\leq\int_{H_{0}}|\nabla_{4}\alpha|^{2}+2\int_{  \Hbar_{0}}|\nabla_{4}\beta|^{2}\\\nonumber+\epsilon C(\mathcal{O}_{0},\mathcal{W},\mathcal{F})+C(\mathcal{O}_{0})\int_{0}^{\ubar}||\nabla_{4}\beta||^{2}_{L^{2}(  \Hbar)}d\ubar^{'}\leq C(\mathcal{O}_{0},\mathcal{W}_{0},\mathcal{F}_{0})+\epsilon C(\mathcal{O}_{0},\mathcal{W},\mathcal{F})
\end{eqnarray}
by means of Gr\"onwall's inequality. Now we need to prove the second part of the lemma. Once again we identify the pair $(\bar{\alpha},\bar{\beta})$ and apply the integration identities (\ref{eq:IBP1}-\ref{eq:IBP2}) to yield 
\begin{eqnarray}
2\int_{H_{u}}|\nabla_{3}\bar{\beta}|^{2}+\int_{  \Hbar_{\ubar}}|\nabla_{3}\bar{\alpha}|^{2}=2\int_{H_{0}}|\nabla_{3}\bar{\beta}|^{2}\nonumber+\int_{  \Hbar_{0}}|\nabla_{3}\bar{\alpha}|^{2}\\\nonumber +\underbrace{\int_{\mathcal{D}_{u,\ubar}}2|\nabla_{3}\bar{\beta}|^{2}(2\omegabar-\frac{1}{2} \tr\chibar)+\int_{\mathcal{D}_{u,\ubar}}|\nabla_{3}\bar{\alpha}|^{2}(2\omega-\frac{1}{2} \tr\chi)}_{ER21}\\\nonumber 
-2\underbrace{\int_{\mathcal{D}_{u,\ubar}}\left(2\langle\nabla_{3}\bar{\beta},\nabla_{3}\nabla_{3}\bar{\beta}\rangle+\langle\nabla_{3}\bar{\alpha},\nabla_{4}\nabla_{3}\bar{\alpha}\rangle\right)}_{ER22}.
\end{eqnarray}
We first control $ER21$. Notice that $\omegabar$ does not satisfy a $\nabla_{3}$ equation and as a result $||\omegabar||_{L^{\infty}(S)}\leq C(\mathcal{O}_{0},\mathcal{W},\mathcal{F})$. Luckily $|\nabla_{3}\bar{\beta}|^{2}$ is controlled on $H$ and therefore the first term of $ER21$ can be estimated as $\int_{0}^{u}C(\mathcal{O}_{0},\mathcal{W},\mathcal{F})||\nabla_{3}\bar{\beta}||^{2}_{L^{2}(H)}du^{'}$. For the second term, once again $|\nabla_{3}\bar{\alpha}|^{2}$ is multiplied by $\omega$ and $ \tr\chi$ which are solely determined by the norm of the initial data $\mathcal{O}_{0}$. Therefore, we have 
\begin{eqnarray}
|ER21|\leq C(\mathcal{O}_{0},\mathcal{W},\mathcal{F})\int_{0}^{u}||\nabla_{3}\bar{\beta}||^{2}_{L^{2}(H)}du^{'}+C(\mathcal{O}_{0})\int_{0}^{\ubar}||\nabla_{3}\bar{\alpha}||^{2}_{L^{2}(  \Hbar)}d\ubar^{'}\\\nonumber 
\leq \epsilon C(\mathcal{O}_{0},\mathcal{W},\mathcal{F})\sup_{u}||\nabla_{3}\bar{\beta}||^{2}_{L^{2}(H)}+C(\mathcal{O}_{0})\int_{0}^{\ubar}||\nabla_{3}\bar{\alpha}||^{2}_{L^{2}(  \Hbar)}d\ubar^{'}
\end{eqnarray}
Utilizing the null Bianchi equations, $ER22$ may be written as follows 
\begin{eqnarray}
ER22=-2\int_{\mathcal{D}_{u,\ubar}}\left(2\langle\nabla_{3}\bar{\beta},\nabla_{3}\nabla_{3}\bar{\beta}\rangle\nonumber+\langle\nabla_{3}\bar{\alpha},\nabla_{4}\nabla_{3}\bar{\alpha}\rangle\right)\\\nonumber 
=-2\int_{\mathcal{D}_{u,\ubar}}\left(2\langle\nabla_{3}\bar{\beta},-(div\nabla_{3}\bar{\alpha})-2\nabla_{3}( \tr\chibar\bar{\beta})-2\nabla_{3}(\omegabar\bar{\beta})+\nabla_{3}(\etabar\cdot\bar{\alpha})\right.\\\nonumber 
\left.+\frac{1}{2}\nabla_{3}(\mathcal{D}_{a}R_{33}-\mathcal{D}_{3}R_{3a})+[div,\nabla_{3}]\bar{\alpha}\rangle\right.\\\nonumber 
\left.+\langle\nabla_{3}\bar{\alpha},-(\nabla\hat{\otimes}\nabla_{3}\bar{\beta})-\frac{1}{2}\nabla_{3}( \tr\chi\bar{\alpha})+4\nabla_{3}(\omega\bar{\alpha})-3\nabla_{3}(\chibarhat\rho-~^{*}\chibarhat\sigma)+\nabla_{3}((\zeta-4\etabar)\hat{\otimes}\bar{\beta})\right.\\\nonumber 
\left.+\frac{1}{2}\nabla_{3}(\mathcal{D}_{4}R_{33}-\mathcal{D}_{3}R_{34})\gamma+[\nabla_{4},\nabla_{3}]\bar{\alpha}+[\nabla,\nabla_{3}]\hat{\otimes} \bar{\beta} \rangle\right)\\\nonumber 
\sim \int_{\mathcal{D}_{u,\ubar}}\left(\langle\nabla_{3}\bar{\beta},-2\nabla_{3}( \tr\chibar\bar{\beta})-2\nabla_{3}(\omegabar\bar{\beta})+\nabla_{3}(\etabar\cdot\bar{\alpha})+\frac{1}{2}\nabla_{3}(\mathcal{D}_{a}R_{33}-\mathcal{D}_{3}R_{3a})+[div,\nabla_{3}]\bar{\alpha}\rangle\right.\\\nonumber 
\left.+\langle\nabla_{3}\bar{\alpha},-\frac{1}{2}\nabla_{3}( \tr\chi\bar{\alpha})+4\nabla_{3}(\omega\bar{\alpha})-3\nabla_{3}(\chibarhat\rho-~^{*}\chibarhat\sigma)+\nabla_{3}((\zeta-4\etabar)\hat{\otimes}\bar{\beta})\right.\\\nonumber 
\left.+[\nabla_{4},\nabla_{3}]\bar{\alpha}+[\nabla,\nabla_{3}]\hat{\otimes} \bar{\beta} \rangle+\langle(\eta+\etabar)\nabla_{3}\bar{\beta},\nabla_{3}\bar{\alpha}\rangle\right),
\end{eqnarray}
where $\langle\nabla_{3}\bar{\alpha},\frac{1}{2}\nabla_{3}(D_{4}R_{33}-D_{3}R_{44})\gamma\rangle=0$ due to $\gamma-$trace-less property of $\bar{\alpha}$. Now we want to estimate each term separately. First, note the commutators property 
\begin{eqnarray}
[\nabla_{3},\nabla_{4}]\bar{\alpha}\sim \sigma \bar{\alpha}+(\rho^{F}\rho^{F}+\sigma^{F}\sigma^{F})\bar{\alpha}+\omega\nabla_{3}\bar{\alpha}+\omegabar\nabla_{4}\bar{\alpha}+(\eta-\etabar)\nabla\bar{\alpha},\\\nonumber 
[\nabla_{3},\nabla]\bar{\beta}\sim (\bar{\beta}+\bar{\alpha}^{F}(\rho^{F}+\sigma^{F}))\bar{\beta}+(\eta+\etabar)\nabla_{3}\bar{\beta}-\bar{\chi}\nabla\bar{\beta}+\bar{\chi}\eta\bar{\beta},\\\nonumber 
[\nabla_{3},\nabla]\bar{\alpha}\sim (\bar{\beta}+\bar{\alpha}^{F}(\rho^{F}+\sigma^{F}))\bar{\alpha}+(\eta+\etabar)\nabla_{4}\bar{\alpha}-\bar{\chi}\nabla\bar{\alpha}+\bar{\chi}\eta\bar{\alpha}.
\end{eqnarray}
Also, recall the null Bianchi equations that we shall make use of 
\begin{eqnarray}
\nabla_{3}\bar{\beta}_{a}+2 \tr\chibar\bar{\beta}_{a}=-(div\bar{\alpha})_{a}-2\omegabar\bar{\beta}_{a}+(\etabar\cdot\bar{\alpha})_{a}+\frac{1}{2}(\mathcal{D}_{a}R_{33}-\mathcal{D}_{3}R_{3a})\\
\nabla_{4}\bar{\alpha}_{ab}+\frac{1}{2} \tr\chi\bar{\alpha}=-(\nabla\hat{\otimes}\bar{\beta})_{ab}+4\omega\bar{\alpha}_{ab}-3(\chibarhat_{ab}\rho-~^{*}\chibarhat_{ab}\sigma)+((\zeta-4\etabar)\hat{\otimes}\bar{\beta})_{ab}\\\nonumber +\frac{1}{2}(\mathcal{D}_{4}R_{33}-\mathcal{D}_{3}R_{34})\gamma_{ab}+\frac{1}{2}\left(D_a R_{3b} + D_b R_{3a}\right),
\end{eqnarray}
and 
\begin{eqnarray}
D_{b}R_{33}-D_{3}R_{3b}\sim \langle\bar{\alpha}^{F},\hnabla\bar{\alpha}^{F}\rangle-\bar{\chi} \bar{\alpha}^{F}\cdot(\rho^{F}+\sigma^{F})+\etabar|\bar{\alpha}^{F}|^{2}-2\omegabar \bar{\alpha}^{F}\cdot(\rho^{F}+\sigma^{F})\\\nonumber-\hnabla_{3}(\bar{\alpha}^{F}\cdot\rho^{F}+\bar{\alpha}^{F}\cdot\sigma^{F}).
\end{eqnarray}

Note an extremely important point that $||\eta-\etabar||_{L^{\infty}}$ is completely determined by $C(\mathcal{O}_{0})$ since $\eta-\etabar$ satisfies a $\nabla_{3}$ equation
\begin{eqnarray}
\nabla_{3}(\eta-\etabar)\sim-\nabla\omegabar-\bar{\chi}\cdot(\eta+\zeta)+\omegabar(\zeta-\eta)-\frac{1}{2}\bar{\beta}+\frac{1}{2}\sigma^{F}\bar{\alpha}^{F},
\end{eqnarray}
where $\zeta=\frac{1}{2}(\eta-\etabar)$
After commuting with $\nabla$, a direct application of the transport inequality (proposition \ref{transport}), and lemma \ref{1}-\ref{lemma5} yields 
\begin{eqnarray}
||\nabla(\eta-\etabar)||_{L^{2}(S)}\leq C(\mathcal{O}_{0}).
\end{eqnarray}
Application of the trace inequality yields 
\begin{eqnarray}
||\nabla(\eta-\etabar)||_{L^{4}(S)}\leq C\left(||\nabla(\eta-\etabar)||_{L^{4}(S_{0,\ubar})}\nonumber+||\nabla(\eta-\etabar)||^{1/2}_{L^{2}(  \Hbar)}||\hnabla_{3}\nabla(\eta-\etabar)||^{1/4}_{L^{2}(  \Hbar)}(\nabla||\eta-\etabar||_{L^{2}(  \Hbar)}\right.\\\nonumber 
\left.+||\nabla^{2}(\eta-\etabar)||_{L^{2}(  \Hbar)})^{1/4}\right)\\\nonumber 
\leq C(\mathcal{O}_{0})+\epsilon^{\frac{1}{2}}C(\mathcal{O}_{0},\mathcal{W},\mathcal{F},\mathcal{W}(S),\mathcal{F}(S))\\\nonumber 
\leq C(\mathcal{O}_{0})
\end{eqnarray}
after choosing sufficiently small $\epsilon$.
Now use the Sobolev inequality (\ref{eq:sobolev4}) to yield\footnote{So effectively in this setting, we have an improvement $||\eta||_{L^{\infty}(S)}\leq C(\mathcal{O}_{0})$ too.} 
\begin{eqnarray}
||\eta-\etabar||_{L^{\infty}(S)}\leq C(\mathcal{O}_{0}).
\end{eqnarray}
This is extremely important to handle the term $\int_{D_{u,\ubar}}\langle\nabla_{3}\bar{\alpha},(\eta-\etabar)\nabla\bar{\alpha}\rangle$ that arises as a result of commutation. The remaining terms can be estimated similarly to those of the previous cases. We need only take care of the terms where $\nabla_{3}\bar{\alpha}$ appears quadratic and keep track of the associated coefficients.
We estimate each term using lemma \ref{1}-\ref{7} as follows 
\begin{eqnarray}
|\int_{\mathcal{D}_{u,\ubar}}\langle\nabla_{3}\bar{\beta},(\nabla_{3}\varphi)\bar{\beta}|\leq \epsilon\sup_{u}||\nabla_{3}\bar{\beta}||^{2}_{L^{2}(H)}du^{'}+\epsilon C(\mathcal{O}_{0},\mathcal{W},\mathcal{F}),\\
|\int_{\mathcal{D}_{u,\ubar}}\langle\nabla_{3}\bar{\beta},\varphi\nabla_{3}\bar{\beta}\rangle|\leq  \int_{0}^{u}||\nabla_{3}\bar{\beta}||^{2}_{L^{2}(H)}\sup_{u,\ubar}||\varphi||_{L^{\infty}(S)}\\\nonumber \leq \epsilon C(\mathcal{O}_{0},\mathcal{W},\mathcal{F})\sup_{u}||\nabla_{3}\bar{\beta}||^{2}_{L^{2}(H)},\\
|\int_{\mathcal{D}_{u,\ubar}}\langle\nabla_{3}\bar{\beta},\nabla_{3}\varphi \bar{\alpha}\rangle|\leq \epsilon\sup_{u}||\nabla_{3}\bar{\beta}||^{2}_{L^{2}(H)}+\epsilon C(\mathcal{O}_{0},\mathcal{W},\mathcal{F}),\\
|\int_{\mathcal{D}_{u,\ubar}}\langle\nabla_{3}\bar{\beta},\varphi \nabla_{3}\bar{\alpha}\rangle|\leq \epsilon^{\frac{1}{2}}C(\mathcal{O}_{0},\mathcal{W},\mathcal{F})\sup_{u}||\nabla_{3}\bar{\beta}||_{L^{2}(H)},\\
|\int_{\mathcal{D}_{u,\ubar}}\langle\nabla_{3}\bar{\beta}\hnabla^{2}_{3}\bar{\alpha}^{F}\Phi^{F}\rangle|\leq ||\nabla_{3}\bar{\beta}||_{L^{2}(\mathcal{D}_{u,\ubar})}||\hnabla^{2}_{3}\bar{\alpha}^{F}||_{L^{2}(\mathcal{D}_{u,\ubar})}\sup_{u,\ubar}||\Phi^{F}||_{L^{\infty}(S)}\\\nonumber \leq \epsilon^{\frac{1}{2}}C(\mathcal{F}_{0},\mathcal{F})\sup_{u}||\nabla_{3}\bar{\beta}||_{L^{2}(H)},\\
|\int_{\mathcal{D}_{u,\ubar}}\langle\nabla_{3}\bar{\beta}\hnabla_{3}\hnabla\bar{\alpha}^{F}\Phi^{F}\rangle|\leq ||\nabla_{3}\bar{\beta}||_{L^{2}(\mathcal{D}_{u,\ubar})}||\hnabla_{3}\hnabla\bar{\alpha}^{F}||_{L^{2}(\mathcal{D}_{u,\ubar})}\sup_{u,\ubar}||\Phi^{F}||_{L^{\infty}(S)}\\\nonumber\leq \epsilon^{\frac{1}{2}}C(\mathcal{F}_{0},\mathcal{F})\sup_{u}||\nabla_{3}\bar{\beta}||_{L^{2}(H)},\\
|\int_{\mathcal{D}_{u,\ubar}}\langle\nabla_{3}\bar{\beta},\varphi\hnabla_{3}\bar{\alpha}^{F}\Phi^{F}\rangle|\leq \epsilon \sup_{u}||\nabla_{3}\nonumber\bar{\beta}||_{L^{2}(H)}\sup_{u,\ubar}||\varphi||_{L^{\infty}(S)}||\hnabla_{3}\bar{\alpha}^{F}||_{L^{4}(S)}||\Phi^{F}||_{L^{4}(S)}\\\nonumber \leq \epsilon C(\mathcal{O}_{0},\mathcal{W},\mathcal{F})\sup_{u}||\nabla_{3}\bar{\beta}||_{L^{2}(H)}\\\nonumber 
|\int_{\mathcal{D}_{u,\ubar}}\langle\nabla_{3}\bar{\beta},\hnabla_{3}\bar{\alpha}^{F}\hnabla_{3}\Phi^{F}\rangle|\leq \epsilon \sup_{u}||\nabla_{3}\nonumber\bar{\beta}||_{L^{2}(H)}\sup_{u,\ubar}||\hnabla_{3}\bar{\alpha}^{F}||_{L^{4}(S)}||\hnabla_{3}\Phi^{F}||_{L^{4}(S)}\\\nonumber \leq \epsilon C(\mathcal{F})\sup_{u}||\nabla_{3}\nonumber\bar{\beta}||_{L^{2}(H)},\\
|\int_{\mathcal{D}_{u,\ubar}}\langle\nabla_{3}\bar{\alpha}, \tr\chi\nabla_{3}\bar{\alpha}\nonumber+\omega\nabla_{3}\bar{\alpha})\rangle|\leq C(\mathcal{O}_{0})\int_{0}^{\ubar}||\nabla_{3}\bar{\alpha}||^{2}_{  \Hbar}d\ubar^{'},\\
|\int_{\mathcal{D}_{u,\ubar}}\langle\nabla_{3}\bar{\alpha}\varphi\nabla_{3}\Psi\rangle\leq \epsilon^{\frac{1}{2}}\sup_{\ubar}||\nabla_{3}\bar{\alpha}||_{L^{2}(  \Hbar)}\sup_{u}||\nabla\Psi||_{L^{2}(H)}\sup_{u,\ubar}||\varphi||_{L^{\infty(S)}}\\\nonumber \leq \epsilon^{\frac{1}{2}}C(\mathcal{O}_{0},\mathcal{W},\mathcal{F}),\\
|\int_{\mathcal{D}_{u,\ubar}}\langle\nabla_{3}\bar{\alpha},\omega\nabla_{3}\bar{\alpha}\rangle|\leq C(\mathcal{O}_{0}\int_{0}^{\ubar}||\nabla_{3}\bar{\alpha}||^{2}_{L^{2}(  \Hbar)}d\ubar^{'},\\
|\int_{\mathcal{D}_{u,\ubar}}\langle\nabla_{3}\bar{\alpha},(\eta-\etabar)\nabla\bar{\alpha}\rangle|\leq C(\mathcal{O}_{0},\mathcal{W}_{0},\mathcal{F}_{0})\int_{0}^{\ubar}||\nabla_{3}\bar{\alpha}||_{L^{2}(  \Hbar)}d\ubar^{'},\\
|\int_{\mathcal{D}_{u,\ubar}}\langle\nabla_{3}\bar{\beta},\varphi\nabla_{4}\bar{\alpha}\rangle|\leq \epsilon \sup_{u}||\nabla_{3}\bar{\beta}||_{L^{2}(H)}||\nabla_{4}\bar{\alpha}||_{L^{2}(H)}\sup_{u,\ubar}||\varphi||_{L^{\infty}(S)}\\\nonumber \leq \epsilon C(\mathcal{O}_{0},\mathcal{W},\mathcal{F})\sup_{u}||\nabla_{3}\bar{\beta}||_{L^{2}(H)},\\
|\int_{\mathcal{D}_{u,\ubar}}\langle\nabla_{3}\bar{\beta},\varphi\nabla\bar{\alpha}\rangle|\leq \epsilon^{\frac{1}{2}} \sup_{u}||\nabla_{3}\bar{\beta}||_{L^{2}(H)}\sup_{\ubar}||\nabla\bar{\alpha}||_{L^{2}(  \Hbar)}\sup_{u,\ubar}||\varphi||_{L^{\infty}(S)}\\\nonumber \leq \epsilon^{\frac{1}{2}} C(\mathcal{O}_{0},\mathcal{W},\mathcal{F})\sup_{u}||\nabla_{3}\bar{\beta}||_{L^{2}(H)},\\
|\int_{\mathcal{D}_{u,\ubar}}\langle(\eta+\etabar)\nabla_{3}\bar{\beta},\nabla_{3}\bar{\alpha}\rangle|\leq \epsilon^{\frac{1}{2}} C(\mathcal{O}_{0},\mathcal{W},\mathcal{F})\sup_{u}||\nabla_{3}\bar{\beta}||_{L^{2}(H)},
\end{eqnarray}
where we have utilized the available Bianchi equations and null Yang-Mills equations (e.g., we have $\nabla_{4}\bar{\alpha}$ which does not contain derivatives of $\bar{\alpha}$ thereby allowing us to control it on $H$ since we have $L^{4}(S)$ of $\bar{\alpha}$ under control). Once again, we observe that the connection coefficients $\omega$ and $ \tr\chi$ multiplying the most dangerous term $\langle\nabla_{3}\bar{\alpha},\nabla_{3}\bar{\alpha}\rangle$ are determined completely by their initial value. Collecting all the terms, we obtain 
\begin{eqnarray}
2\int_{H_{u}}|\nabla_{3}\bar{\beta}|^{2}+\int_{  \Hbar_{\ubar}}|\nabla_{3}\bar{\alpha}|^{2}\leq 2\int_{H_{0}}|\nabla_{3}\bar{\beta}|^{2}\nonumber+\int_{  \Hbar_{0}}|\nabla_{3}\bar{\alpha}|^{2}+\epsilon C(\mathcal{O}_{0},\mathcal{W},\mathcal{F})\sup_{u}||\nabla_{3}\bar{\beta}||^{2}_{L^{2}(H)}\\\nonumber 
+\epsilon^{\frac{1}{2}} C(\mathcal{O}_{0},\mathcal{W},\mathcal{F})\sup_{u}||\nabla_{3}\bar{\beta}||_{L^{2}(H)}+C(\mathcal{O}_{0})\int_{0}^{\ubar}||\nabla_{3}\bar{\alpha}||^{2}_{L^{2}(  \Hbar)}d\ubar^{'}+C(\mathcal{O}_{0},\mathcal{W}_{0},\mathcal{F}_{0})\int_{0}^{\ubar}||\nabla_{3}\bar{\alpha}||_{L^{2}(  \Hbar)}d\ubar^{'}
\end{eqnarray}
which upon utilizing Gr\"onwall's inequality and smallness of $\epsilon$ yields 
\begin{eqnarray}
2\int_{H_{u}}|\nabla_{3}\bar{\beta}|^{2}+\int_{  \Hbar_{\ubar}}|\nabla_{3}\bar{\alpha}|^{2}\leq C(\mathcal{O}_{0},\mathcal{W}_{0},\mathcal{F}_{0}).
\end{eqnarray}
This concludes the proof of the lemma. 
\end{proof}
\subsection{Energy estimates for $\hnabla_{4}\alpha^{F},\hnabla^{2}_{4}\alpha^{F},\hnabla_{4}\hnabla\alpha^{F},\hnabla_{3}\alphabar^{F},\hnabla^{2}_{3}\alphabar^{F},\hnabla_{3}\hnabla\alphabar^{F}$}
\noindent Now we are left with estimating the similar estimates for the Yang-Mills curvature components. Since the proof goes exactly a similar way as the Weyl curvature case, we will only sketch the proof. \\
\begin{lemma}
\label{75}
The null derivatives of the null components of the Weyl curvature satisfy the following $L^{2}$ energy estimates
\begin{eqnarray}
\int_{H_{u}}|\hnabla^{I}_{4}\alpha^{F}|^{2}\leq C(\mathcal{O}_{0},\mathcal{W}_{0},\mathcal{F}_{0})+\epsilon^{\frac{1}{2}}C(\mathcal{O}_{0},\mathcal{W},\mathcal{F})+\epsilon C(\mathcal{O}_{0},\mathcal{W},\mathcal{F}),\\
\int_{  \Hbar_{\ubar}}|\hnabla^{I}_{3} \bar{\alpha}^{F}|^{2}\leq C(\mathcal{O}_{0},\mathcal{W}_{0},\mathcal{F}_{0})+\epsilon^{\frac{1}{2}}C(\mathcal{O}_{0},\mathcal{W},\mathcal{F})+\epsilon C(\mathcal{O}_{0},\mathcal{W},\mathcal{F})\\\nonumber +C(\mathcal{O}_{0})\int_{0}^{\ubar}||\hnabla^{I}_{3}\bar{\alpha}^{F}||_{L^{2}(  \Hbar)}d\ubar,\\
\int_{H_{u}}|\hnabla_{4}\hnabla\alpha^{F}|^{2}\leq C(\mathcal{O}_{0},\mathcal{W}_{0},\mathcal{F}_{0})+\epsilon^{\frac{1}{2}}C(\mathcal{O}_{0},\mathcal{W},\mathcal{F})+\epsilon C(\mathcal{O}_{0},\mathcal{W},\mathcal{F}),\\
\int_{  \Hbar_{\ubar}}|\hnabla_{3}\hnabla \bar{\alpha}^{F}|^{2}\leq C(\mathcal{O}_{0},\mathcal{W}_{0},\mathcal{F}_{0})+\epsilon^{\frac{1}{2}}C(\mathcal{O}_{0},\mathcal{W},\mathcal{F})+\epsilon C(\mathcal{O}_{0},\mathcal{W},\mathcal{F})\\\nonumber +C(\mathcal{O}_{0})\int_{0}^{\ubar}||\hnabla_{3}\hnabla\bar{\alpha}^{F}||_{L^{2}(  \Hbar)}d\ubar
\end{eqnarray}
for $1\leq I\leq 2, I\in\mathbb{Z}$.
\end{lemma}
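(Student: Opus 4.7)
The proof plan mirrors Lemmas~\ref{73} and~\ref{74} at one additional order of null differentiation, again exploiting the symmetric-hyperbolic structure of the commuted null Yang-Mills system (Proposition~\ref{hyperbolic}). For the first estimate I would identify, for each $I\in\{1,2\}$, the triple $(\hnabla^{I}_{4}\alpha^{F},\hnabla^{I}_{4}\rho^{F},\hnabla^{I}_{4}\sigma^{F})$ and apply the integration-by-parts identities~(\ref{eq:IBP1}-\ref{eq:IBP2}) to obtain
\begin{align*}
\int_{H_{u}}|\hnabla^{I}_{4}\alpha^{F}|^{2}
&+\int_{\Hbar_{\ubar}}\bigl(|\hnabla^{I}_{4}\rho^{F}|^{2}+|\hnabla^{I}_{4}\sigma^{F}|^{2}\bigr)\\
&=(\text{data on }H_{0},\Hbar_{0})+ER_{\text{bulk}}+ER_{\text{hyp}},
\end{align*}
where $ER_{\text{bulk}}$ collects the $|\hnabla^{I}_{4}\Phi^{F}|^{2}(2\omegabar-\tfrac{1}{2}\tr\chibar)$-type contributions (harmless since integration in $u$ yields a factor $\epsilon$), and $ER_{\text{hyp}}$ comes from the $\hnabla^{I}_{4}$-commuted null Yang-Mills evolution equations. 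Lemma~\ref{commutation}, together with the $\hnabla_{4}$--$\hnabla$ commutator identity, reduces $ER_{\text{hyp}}$, after integrating by parts on $S_{u,\ubar}$, to cubic expressions in $(\varphi,\Phi^{F},\Psi)$ and their lower-order derivatives that have already been controlled in Lemmas~\ref{1}--\ref{7} and~\ref{L4alphabar}. The symmetric case $\int_{\Hbar_{\ubar}}|\hnabla^{I}_{3}\bar{\alpha}^{F}|^{2}$ is obtained by pairing with $\hnabla^{I}_{3}\rho^{F}$ and $\hnabla^{I}_{3}\sigma^{F}$; the coercive bulk term $|\hnabla^{I}_{3}\bar{\alpha}^{F}|^{2}(2\omega-\tfrac{1}{2}\tr\chi)$ is retained for a Gr\"onwall iteration in $\ubar$, the crucial saving being that $\omega$ and $\tr\chi$ satisfy $\nabla_{3}$ transport equations and are therefore bounded solely by $\mathcal{O}_{0}$ (Lemma~\ref{1}).

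For the mixed-derivative estimates $\int_{H_{u}}|\hnabla_{4}\hnabla\alpha^{F}|^{2}$ and $\int_{\Hbar_{\ubar}}|\hnabla_{3}\hnabla\bar{\alpha}^{F}|^{2}$, I would first commute one angular derivative $\hnabla$ with the $\hnabla_{4}$ equation for $\alpha^{F}$ (respectively the $\hnabla_{3}$ equation for $\bar{\alpha}^{F}$), then pair with $\hnabla_{4}\hnabla\rho^{F},\hnabla_{4}\hnabla\sigma^{F}$ (respectively $\hnabla_{3}\hnabla\rho^{F},\hnabla_{3}\hnabla\sigma^{F}$) and apply~(\ref{eq:IBP1}-\ref{eq:IBP2}). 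The principal cross terms $\langle\hnabla_{4}\hnabla\alpha^{F},\hnabla\hnabla_{4}\rho^{F}-\Hodge{\hnabla}\hnabla_{4}\sigma^{F}\rangle$ cancel with the divergence terms coming from the $\rho^{F},\sigma^{F}$ equations after a further integration by parts on $S_{u,\ubar}$, producing only Gauss-curvature corrections of the form $K\cdot\Phi^{F}$ which are estimated through Corollary~\ref{C1}. The remaining bulk quantities are controlled by the point-wise/$L^{4}(S)$/$L^{2}$ estimates of the connection coefficients from Lemmas~\ref{1}--\ref{7}, of the Weyl components via $\mathcal{W}$, and of $\hnabla\bar{\alpha}^{F}$ via Lemma~\ref{L4alphabar}.

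The main obstacle is the $\bar{\alpha}^{F}$ side. The commutators $[\hnabla_{3},\hnabla^{I}]\bar{\alpha}^{F}$ and $[\hnabla_{3},\hnabla]\hnabla\bar{\alpha}^{F}$ generate terms of type $\bar{\chi}\,\hnabla\bar{\alpha}^{F}$, $\bar{\chi}\eta\bar{\alpha}^{F}$, $(\eta+\etabar)\hnabla_{3}\hnabla\bar{\alpha}^{F}$, and the source-induced pieces in the commuted equations for $\rho^{F},\sigma^{F}$ produce $\hnabla^{2}\bar{\alpha}^{F}$-type quantities that are formally not controlled on $H$. As in Lemma~\ref{5}, the remedy is a pointwise cancellation between the $\hnabla^{2}_{3}\bar{\alpha}^{F}$ arising from reusing the $\alpha^{F}$ evolution equation and the $\hnabla\hnabla_{3}\rho^{F}$, $\hnabla\hnabla_{3}\sigma^{F}$ coming from the $\rho^{F},\sigma^{F}$ evolution equations; what survives is absorbed by the already-established bound $\|\hnabla^{2}\bar{\alpha}^{F}\|_{L^{2}(\Hbar)}\lesssim\mathcal{F}$ and the trace estimate $\|\hnabla\bar{\alpha}^{F}\|_{L^{4}(S)}\leq C(\mathcal{O}_{0},\mathcal{W}_{0},\mathcal{F},\mathcal{F}_{0})$ from Lemma~\ref{L4alphabar}. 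The final step is to verify, term by term, that the coefficients multiplying $|\hnabla^{I}_{3}\bar{\alpha}^{F}|^{2}$ and $|\hnabla_{3}\hnabla\bar{\alpha}^{F}|^{2}$ in every dangerous bulk integral are indeed reducible to $\nabla_{3}$-transported quantities controllable by $\mathcal{O}_{0}$ alone, which enables Gr\"onwall in $\ubar$ to close the argument and yield the claimed bounds with the expected $\epsilon$- and $\epsilon^{1/2}$-smallness on the coupling error terms.
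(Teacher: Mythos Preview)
Your proposal is correct and follows essentially the same approach as the paper: commute $\hnabla_{4}^{I}$ (resp.\ $\hnabla_{3}^{I}$) through the null Yang--Mills system, pair with the Bianchi partners $(\rho^{F},\sigma^{F})$, apply the integration identities~(\ref{eq:IBP1}--\ref{eq:IBP2}), cancel principal terms, and close via Gr\"onwall using that the coefficients of the top-order $\bar{\alpha}^{F}$ quadratics are $\omega,\tr\chi$, hence controlled by $\mathcal{O}_{0}$ alone.

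One remark on where you locate the difficulty. The paper does not invoke the pointwise $\hnabla^{2}\bar{\alpha}^{F}$ cancellation mechanism of Lemma~\ref{5} here; that cancellation was specific to the $\nabla_{3}\mu$ equation. In Lemma~\ref{75} the genuine obstacle is that when you commute $\hnabla_{4}^{2}$ (resp.\ $\hnabla_{3}^{2}$) through the $\alpha^{F}$ (resp.\ $\bar{\alpha}^{F}$) equation, you produce derivatives of connection coefficients that have no direct transport equation: $\nabla_{4}\etabar$, $\nabla_{4}\omega$, $\nabla_{4}\nabla\etabar$ on the $\alpha^{F}$ side and $\nabla_{3}\eta$, $\nabla_{3}\omegabar$, $\nabla\nabla_{3}\eta$ on the $\bar{\alpha}^{F}$ side. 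These are exactly the quantities for which Lemmas~\ref{5}--\ref{7} were proved, and the paper controls them via $\|\nabla_{4}\nabla\etabar\|_{L^{2}(\Hbar)}$, $\|\nabla\nabla_{3}\eta\|_{L^{2}(H)}$ (Lemma~\ref{5}) together with the $L^{4}(S)$ bounds of Lemma~\ref{6}--\ref{7}. You cite these lemmas, so the logic is intact; just be aware that this, rather than a cancellation, is where the work sits.
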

\begin{proof} We only sketch the proof for $I=2$ since that is the most non-trivial case for the Yang-Mills fields. Once again we identify the triple $(\alpha^{F},\rho^{F},\sigma^{F})$ and apply the integration identities (\ref{eq:IBP1}-\ref{eq:IBP2})
\begin{eqnarray}
\int_{H_{u}}|\hnabla^{I}_{4}\alpha^{F}|^{2}+\int_{  \Hbar_{\ubar}}|\hnabla^{I}_{4}\rho^{F}|^{2}\nonumber+\int_{  \Hbar_{\ubar}}|\hnabla^{I}_{4}\sigma^{F}|^{2}=\int_{H_{0}}|\hnabla^{I}_{4}\alpha^{F}|^{2}+\int_{  \Hbar_{0}}|\hnabla^{I}_{4}\rho^{F}|^{2}+\int_{  \Hbar_{0}}|\hnabla^{I}_{4}\sigma^{F}|^{2}\\\nonumber
+\underbrace{\int_{\mathcal{D}_{u,\ubar}}|\hnabla^{I}_{4}\alpha^{F}|^{2}(2\omegabar-\frac{1}{2} \tr\chibar)+\int_{\mathcal{D}_{u,\ubar}}|\hnabla^{I}_{4}\rho^{F}|^{2}(2\omega-\frac{1}{2} \tr\chi)+\int_{\mathcal{D}_{u,\ubar}}|\hnabla^{I}_{4}\sigma^{F}|^{2}(2\omega-\frac{1}{2} \tr\chi)}_{ER23}
\\\nonumber -\underbrace{2\int_{D_{u},\ubar}\left(\langle\hnabla^{I}_{4}\alpha^{F},\hnabla_{3}\hnabla^{I}_{4}\alpha^{F}\rangle+\langle\hnabla^{I}_{4}\rho^{F},\hnabla_{4}\hnabla^{I}_{4}\rho^{F}\rangle+\langle\hnabla^{I}_{4}\sigma^{F},\hnabla_{4}\hnabla^{I}_{4}\sigma^{F}\rangle\right)}_{ER24}.
\end{eqnarray}
Using the estimates on the connection coefficients, the first error term $ER_{23}$ is estimated easily 
\begin{eqnarray}
|ER23|=|\int_{\mathcal{D}_{u,\ubar}}|\hnabla^{I}_{4}\alpha^{F}|^{2}(2\omegabar-\nonumber\frac{1}{2} \tr\chibar)+\int_{\mathcal{D}_{u,\ubar}}|\hnabla^{I}_{4}\rho^{F}|^{2}(2\omega-\frac{1}{2} \tr\chi)+\int_{\mathcal{D}_{u,\ubar}}|\hnabla^{I}_{4}\sigma^{F}|^{2}(2\omega-\frac{1}{2} \tr\chi)|\\\nonumber \leq \epsilon C(\mathcal{O}_{0},\mathcal{W},\mathcal{F}) \sup_{u}||\hnabla^{I}_{4}\alpha^{F}||^{2}_{L^{2}(H)}+C(\mathcal{O}_{0})||\hnabla^{I}_{4}\rho^{F}||^{2}_{L^{2}(  \Hbar)}+C(\mathcal{O}_{0})||\hnabla^{I}_{4}\sigma^{F}||^{2}_{L^{2}(  \Hbar)}\\\nonumber 
\leq \epsilon C(\mathcal{O}_{0},\mathcal{W},\mathcal{F})+C(\mathcal{O}_{0})||\hnabla^{I}_{4}\rho^{F}||^{2}_{L^{2}(  \Hbar)}+C(\mathcal{O}_{0})||\hnabla^{I}_{4}\sigma^{F}||^{2}_{L^{2}(  \Hbar)}
\end{eqnarray}
In order to estimate $ER24$, we utilize the null Yang-Mills equations and proceed in an exactly similar way as the previous case
\begin{eqnarray}
\int_{\mathcal{D}_{u,\ubar}}\left(\langle\nabla^{I}_{4}\alpha^{F},\nonumber-\frac{1}{2}\hnabla^{I}_{4}( \tr\chibar\alpha^{F})-\hnabla\hnabla^{I}_{4}\rho^{F}+~^{*}\hnabla\hnabla^{I}_{4}\sigma^{F}-2\hnabla^{I}_{4}(~^{*}\etabar\sigma^{F})+2\hnabla^{I}_{4}(\etabar\rho^{F})\right.\\\nonumber\left.+2\hnabla^{I}_{4}(\omegabar\alpha^{F})-\hnabla^{I}_{4}(\widehat{\chi}\cdot\bar{\alpha}^{F})+[\hnabla_{3},\hnabla^{I}_{4}]\alpha^{F}+[\hnabla,\hnabla^{I}_{4}]\rho^{F}-[^{*}\hnabla,\hnabla^{I}_{4}]\sigma^{F}\rangle\right.\\\nonumber 
\left.+\langle\hnabla^{I}_{4}\rho^{F},-\hat{div}\hnabla^{I}_{4}\alpha^{F}-\hnabla^{I}_{4}( \tr\chi\rho^{F})-\hnabla^{I}_{4}((\eta-\etabar)\cdot\alpha^{F})+[\hnabla_{4},\hnabla^{I}_{4}]\rho^{F}+[\hat{div},\hnabla^{I}_{4}]\alpha^{F}\rangle\right.\\\nonumber 
\left.+\langle\hnabla^{I}_{4}\sigma^{F},-\hat{curl}\hnabla^{I}_{4}\alpha^{F}-\hnabla^{I}_{4}( \tr\chi\sigma^{F})+\hnabla^{I}_{4}((\eta-\etabar)\cdot~^{*}\alpha^{F})+[\hnabla_{4},\hnabla^{I}_{4}]\sigma^{F}-[\hat{curl},\hnabla^{I}_{4}]\alpha^{F}\rangle\right)\\\nonumber 
\sim\int_{\mathcal{D}_{u,\ubar}}\left(\langle\nabla^{I}_{4}\alpha^{F},-\frac{1}{2}\hnabla^{I}_{4}( \tr\chibar\alpha^{F})-2\hnabla^{I}_{4}(~^{*}\etabar\sigma^{F})+2\hnabla^{I}_{4}(\etabar\rho^{F})\right.\\\nonumber\left.+2\hnabla^{I}_{4}(\omegabar\alpha^{F})-\hnabla^{I}_{4}(\widehat{\chi}\cdot\bar{\alpha}^{F})+[\hnabla_{3},\hnabla^{I}_{4}]\alpha^{F}+[\hnabla,\hnabla^{I}_{4}]\rho^{F}-[^{*}\hnabla,\hnabla^{I}_{4}]\sigma^{F}\rangle\right.\\\nonumber 
\left.+\langle\hnabla^{I}_{4}\rho^{F},-\hnabla^{I}_{4}( \tr\chi\rho^{F})-\hnabla^{I}_{4}((\eta-\etabar)\cdot\alpha^{F})+[\hnabla_{4},\hnabla^{I}_{4}]\rho^{F}+[\hat{div},\hnabla^{I}_{4}]\alpha^{F}\rangle\right.\\\nonumber 
\left.+\langle\hnabla^{I}_{4}\sigma^{F},-\hnabla^{I}_{4}( \tr\chi\sigma^{F})+\hnabla^{I}_{4}((\eta-\etabar)\cdot~^{*}\alpha^{F})+[\hnabla_{4},\hnabla^{I}_{4}]\sigma^{F}-[\hat{curl},\hnabla^{I}_{4}]\alpha^{F}\rangle\right)\\\nonumber
+\int_{\mathcal{D}_{u,\ubar}}\langle\hnabla^{I}_{4}\alpha^{F},((\eta+\etabar)(\hnabla^{I}_{4}\rho^{F}+\hnabla^{I}_{4}\sigma^{F})\rangle.
\end{eqnarray}
Now Consider the equation for $\alpha^{F}$ after commuting $\hnabla_{4}$ once 
\begin{eqnarray}
\hnabla_{3}\hnabla_{4}\alpha^{F}=-\frac{1}{2}(\nabla_{4}( \tr\chibar)\alpha^{F}\nonumber+ \tr\chibar\hnabla_{4}\alpha^{F})-\hnabla\hnabla_{4}\rho^{F}+~^{*}\hnabla\hnabla_{4}\sigma^{F}-2~^{*}\nabla_{4}\eta\sigma^{F}\\\nonumber -2\eta\hnabla_{4}\sigma^{F}
+2\nabla_{4}\eta\rho^{F}+2\eta\hnabla_{4}\rho^{F}+2\hnabla_{4}\omegabar\alpha^{F}+2\omegabar\hnabla_{4}\alpha^{F}\nonumber-\hnabla_{4}\widehat{\chi}\cdot \bar{\alpha}^{F}-\widehat{\chi}\cdot \hnabla_{4}\bar{\alpha}^{F}\\\nonumber 
=-\frac{1}{2}(-\frac{1}{2} \tr\chi  \tr\chibar+2\omega  \tr\chibar+\underbrace{2div\etabar}_{II}+2|\etabar|^{2}_{\gamma}+2\rho-\widehat{\chi}\cdot\chibarhat)\alpha^{F}-\frac{1}{2} \tr\chibar\hnabla_{4}\alpha^{F}\\\nonumber 
-\underbrace{\hnabla\hnabla_{4}\rho^{F}+~^{*}\hnabla\hnabla_{4}\sigma^{F}}_{I}-2~^{*}(-\chi\cdot(\eta-\etabar)-\beta-\frac{1}{2}\mathfrak{T}(\cdot,e_{4}))\sigma^{F}\\\nonumber 
-2\eta(-\hat{curl} \alpha^{F}- \tr\chi \sigma^{F}+(\eta-\etabar)\cdot ~^{*}\alpha^{F})+2(-\chi\cdot(\eta-\etabar)-\beta^{W}-\frac{1}{2}\mathfrak{T}(\cdot,e_{4}))\rho^{F}\\\nonumber 
+2\eta (-\hat{div} \alpha^{F}- \tr\chi\rho^{F}-(\eta-\etabar)\cdot\alpha^{F})\\\nonumber
+2(2\omega\omegabar+\frac{3}{4}|\eta-\etabar|^{2}-\frac{1}{4}(\eta-\etabar)\cdot(\eta+\etabar)-\frac{1}{8}|\eta+\etabar|^{2}\nonumber+\frac{1}{2}\rho+\frac{1}{4}\mathfrak{T}_{43})\alpha^{F}\\\nonumber 
+2\omegabar\hnabla_{4}\alpha^{F}-(- \tr\chi \widehat{\chi}-2\omega\widehat{\chi}-\alpha)\\\nonumber 
-\widehat{\chi}\cdot(-\frac{1}{2} \tr\chi\bar{\alpha}^{F}-\hnabla\rho^{F}-~^{*}\hnabla\sigma^{F}-2~^{*}\etabar\cdot\sigma^{F}-2\etabar\cdot\rho^{F}\nonumber+2\omega\bar{\alpha}^{F}-\chibarhat\cdot\alpha^{F})\\\nonumber 
+[\hnabla_{3},\hnabla_{4}]\alpha^{F}-[\hnabla_{4},\hnabla]\rho^{F}+[\hnabla_{4},~^{*}\hnabla]\sigma^{F}.
\end{eqnarray}
Now notice the structure of the previous equation. the principal terms denoted by $I$ are preserved after we take another $\hnabla_{4}$ derivative and therefore are cancelled after integrating by parts. Since $\nabla^{I}_{4}\alpha^{F}$ is controlled over $H$, we can always gain a factor of $\epsilon$. The most problematic terms in the above expression after taking another $\hnabla_{4}$ derivative are 
\begin{eqnarray}
\nabla_{4}\etabar,\nabla_{4}\omega,\nabla_{4}\nabla\etabar.
\end{eqnarray}
We control $\nabla_{4}\etabar$ and $\nabla_{4}\omega$ by their $L^{4}(S)$ norms. $\nabla_{4}\nabla\etabar$ can be controlled by its $||\nabla_{4}\nabla\etabar||_{L^{2}(  \Hbar)}$ norm. Since $\nabla_{4}\nabla\etabar$ is already of top order it appears with innocuous terms that may be estimated as follows 
\begin{eqnarray}
|\int_{\mathcal{D}_{u,\ubar}}\nabla^{I}_{4}\alpha^{F}\nabla_{4}\nabla\etabar\alpha^{F}|\leq \epsilon^{\frac{1}{2}}\sup_{u}||\nabla^{I}_{4}\alpha^{F}||_{L^{2}(H)}\sup_{\ubar}||\nabla_{4}\nabla\etabar||_{L^{2}(  \Hbar)}\sup_{u,\ubar}||\alpha^{F}||_{L^{4}(S)}\leq \epsilon^{\frac{1}{2}}C(\mathcal{O}_{0},\mathcal{W},\mathcal{F}),\\
|\int_{\mathcal{D}_{u,\ubar}}\nabla^{I}_{4}\alpha^{F}\nabla_{4}\varphi\hnabla_{4}\alpha^{F}|\leq \epsilon \sup_{u}||\nabla^{I}_{4}\alpha^{F}||_{L^{2}(H)}\sup_{u,\ubar}\nonumber||\hnabla_{4}\alpha^{F}||_{L^{4}(S)}||\nabla_{4}\varphi||_{L^{4}(S)}\leq \epsilon C(\mathcal{O}_{0},\mathcal{W},\mathcal{F}),
\end{eqnarray}
where we have utilized the lemma \ref{7} to control the $\nabla_{4}\varphi$ in $L^{4}(S)$. Similar terms arise from the equations for $\hnabla^{2}_{4}(\rho^{F},\sigma^{F})$ where most problematic terms are of the above type. Since these estimates involve $\alpha^{F}$, we will always gain a factor of $\epsilon$. By collecting all the terms, we may obtain 
\begin{eqnarray}
\int_{H_{u}}|\hnabla^{I}_{4}\alpha^{F}|^{2}+\int_{  \Hbar_{\ubar}}|\hnabla^{I}_{4}\rho^{F}|^{2}+\int_{  \Hbar_{\ubar}}|\hnabla^{I}_{4}\sigma^{F}|^{2}\leq \int_{H_{0}}|\hnabla^{I}_{4}\alpha^{F}|^{2}\nonumber+\int_{  \Hbar_{0}}|\hnabla^{I}_{4}\rho^{F}|^{2}+\int_{  \Hbar_{0}}|\hnabla^{I}_{4}\sigma^{F}|^{2}\\\nonumber 
+\epsilon^{\frac{1}{2}}C(\mathcal{O}_{0},\mathcal{W},\mathcal{F})+\epsilon C(\mathcal{O}_{0},\mathcal{W},\mathcal{F})+C(\mathcal{O}_{0})||\hnabla^{I}_{4}\rho^{F}||^{2}_{L^{2}(  \Hbar)}+C(\mathcal{O}_{0})||\hnabla^{I}_{4}\sigma^{F}||^{2}_{L^{2}(  \Hbar)}
\end{eqnarray}
which yields 
\begin{eqnarray}
\int_{H_{u}}|\hnabla^{I}_{4}\alpha^{F}|^{2}\leq C(\mathcal{O}_{0},\mathcal{W}_{0},\mathcal{F}_{0})+\epsilon^{\frac{1}{2}}C(\mathcal{O}_{0},\mathcal{W},\mathcal{F})+\epsilon C(\mathcal{O}_{0},\mathcal{W},\mathcal{F})
\end{eqnarray}
through Gr\"onwall estimate and smallness of $\epsilon$. Exact same calculations but commuting the e.o.m with $\hnabla$ in the second time yield the estimate for $\int_{H_{u}}|\hnabla_{4}\hnabla\alpha^{F}|^{2}$. This completes the first part of the lemma. The second part is proved in a similar way. Now we have to pay attention to the terms that are associated with the top derivatives of $\bar{\alpha}^{F}$. Application of the integration identities for the triple $(\bar{\alpha}^{F},\rho^{F},\sigma^{F})$ yields 
\begin{eqnarray}
\int_{  \Hbar_{\ubar}}|\hnabla^{I}_{3}\bar{\alpha}^{F}|^{2}+\int_{H_{u}}|\hnabla^{I}_{3}\rho^{F}|^{2}+\int_{H_{u}}|\hnabla^{I}_{3}\sigma^{F}|^{2}=\nonumber\int_{  \Hbar_{0}}|\hnabla^{I}_{3}\bar{\alpha}^{F}|^{2}+\int_{H_{0}}|\hnabla^{I}_{3}\rho^{F}|^{2}+\int_{H_{0}}|\hnabla^{I}_{3}\sigma^{F}|^{2}\\\nonumber 
+\underbrace{\int_{\mathcal{D}_{u,\ubar}}|\hnabla^{I}_{3}\bar{\alpha}^{F}|^{2}(2\omega-\frac{1}{2} \tr\chi)+\int_{\mathcal{D}_{u,\ubar}}|\hnabla^{I}_{3}\rho^{F}|^{2}(2\omegabar-\frac{1}{2} \tr\chibar)}_{ER25}\\\nonumber-2\underbrace{\int_{D_{u},\ubar}\left(\langle\hnabla^{I}_{3}\bar{\alpha}^{F},\hnabla_{4}\hnabla^{I}_{3}\bar{\alpha}^{F}\rangle+\langle\hnabla^{I}_{3}\rho^{F},\hnabla_{3}\hnabla^{I}_{3}\rho^{F}\rangle+\langle\hnabla^{I}_{3}\sigma^{F},\hnabla_{3}\hnabla^{I}_{3}\sigma^{F}\rangle\right)}_{ER26}.
\end{eqnarray}
As usual notice that the connection coefficients multiplying $|\hnabla^{I}_{3}\bar{\alpha}^{F}|^{2}$ satisfy $\nabla_{3}$ equations and there are completely determined by their initial data. Therefore we obtain 
\begin{eqnarray}
|ER25|\leq C(\mathcal{O}_{0})\int_{0}^{\ubar}||\hnabla^{I}_{3}\bar{\alpha}^{F}||^{2}_{L^{2}(  \Hbar)}+\epsilon C(\mathcal{O}_{0},\mathcal{W},\mathcal{F})\sup_{u}(||\hnabla^{I}_{3}\rho^{F}||_{L^{2}(H)}+||\hnabla^{I}_{3}\sigma^{F}||_{L^{2}(H)}).
\end{eqnarray}
For $ER26$ we utilize the evolution equations
\begin{eqnarray}
\int_{D_{u},\ubar}\left(\langle\hnabla^{I}_{3}\bar{\alpha}^{F},\hnabla_{4}\hnabla^{I}_{3}\bar{\alpha}^{F}\rangle\nonumber+\langle\hnabla^{I}_{3}\rho^{F},\hnabla_{3}\hnabla^{I}_{3}\rho^{F}\rangle+\langle\hnabla^{I}_{3}\sigma^{F},\hnabla_{3}\hnabla^{I}_{3}\sigma^{F}\rangle\right)\\\nonumber
\sim\int_{\mathcal{D}_{u,\ubar}}\left(\langle\hnabla^{I}_{3}\bar{\alpha}^{F},-\frac{1}{2}\hnabla^{I}_{3}( \tr\chi\bar{\alpha}^{F})-2\hnabla^{I}_{3}(~^{*}\etabar\cdot\sigma^{F})-2\hnabla^{I}_{3}(\etabar\cdot\rho^{F})+2\hnabla^{I}_{3}(\omega\bar{\alpha}^{F})-\hnabla^{I}_{3}(\chibarhat\cdot\alpha^{F})\rangle\right.\\\nonumber 
\left.+\langle\hnabla^{I}_{3}\rho^{F},\hnabla^{I}_{3}( \tr\chibar\rho^{F})+\hnabla^{I}_{3}((\eta-\etabar)\cdot\bar{\alpha}^{F})\rangle+\langle\hnabla^{I}_{3}\sigma^{F},-\hnabla^{I}( \tr\chibar\sigma^{F})+\hnabla^{I}_{3}((\eta-\etabar)\cdot~^{*}\bar{\alpha}^{F})\rangle\right)\\
+\int_{\mathcal{D}_{u,\ubar}}\langle\hnabla^{I}_{3}\bar{\alpha}^{F},[\hnabla_{4},\hnabla^{I}_{3}]\bar{\alpha}^{F}\rangle+\int_{\mathcal{D}_{u,\ubar}}\langle\hnabla^{I}_{3}\bar{\alpha}^{F},[\hnabla,\hnabla^{I}_{3}](\rho^{F},\sigma^{F})\rangle
\nonumber+\int_{\mathcal{D}_{u,\ubar}}\langle\hnabla^{I}_{3}\rho^{F},[\hnabla_{3},\hnabla^{I}_{3}]\rho^{F}\rangle\\\nonumber+\int_{\mathcal{D}_{u,\ubar}}\langle\hnabla^{I}_{3}\rho^{F},[\hnabla,\hnabla^{I}]\bar{\alpha}^{F}\rangle
+\int_{\mathcal{D}_{u,\ubar}}\langle\hnabla^{I}_{3}\sigma^{F},[\hnabla,\hnabla^{I}_{3}]\bar{\alpha}^{F}\rangle +\int_{\mathcal{D}_{u,\ubar}}\langle\hnabla^{I}_{3}\alpha^{F},((\eta+\etabar)(\hnabla^{I}_{3}\rho^{F}+\hnabla^{I}_{3}\sigma^{F})\rangle.
\end{eqnarray}
Now we notice some of the key features of the commuted equations. Consider the $\hnabla_{3}$ commuted equation for $\bar{\alpha}^{F}$ 
\begin{eqnarray}
\hnabla_{4}\hnabla_{3}\bar{\alpha}^{F}=-\frac{1}{2}(-\frac{1}{2} \tr\chibar \tr\chi+2\omegabar \tr\chi+2div\eta\nonumber+2|\eta|^{2}+2\rho-\widehat{\chi}\cdot \chibarhat)\bar{\alpha}^{F}\\\nonumber 
-\frac{1}{2} \tr\chi\nabla_{3}\bar{\alpha}^{F}-\hnabla\hnabla_{3}\rho^{F}-~^{*}\hnabla\hnabla_{3}\sigma^{F}-2~^{*}\hnabla_{3}\etabar\cdot\sigma^{F}-2~^{*}\etabar\cdot\hnabla_{3}\sigma^{F}-2\hnabla_{3}\etabar\cdot\rho^{F}\\\nonumber-2\etabar\cdot\hnabla_{3}\rho^{F}
+2\hnabla_{3}\omega\bar{\alpha}^{F}+2\omega\hnabla_{3}\bar{\alpha}^{F}-\hnabla_{3}\chibarhat\cdot\alpha^{F}-\chibarhat\cdot\hnabla_{3}\alpha^{F}\\\nonumber 
=-\frac{1}{2}(-\frac{1}{2} \tr\chibar \tr\chi+2\omegabar \tr\chi+2div\eta\nonumber+2|\eta|^{2}+2\rho-\widehat{\chi}\cdot \chibarhat)\bar{\alpha}^{F}\\\nonumber 
-\frac{1}{2} \tr\chi\nabla_{3}\bar{\alpha}^{F}-\hnabla\hnabla_{3}\rho^{F}-~^{*}\hnabla\hnabla_{3}\sigma^{F}-2~^{*}(-\bar{\chi}\cdot(\etabar-\eta)+\bar{\beta}^{W}+\frac{1}{2}\mathfrak{T}(\cdot,e_{3}))\cdot\sigma^{F}\\\nonumber 
-2~^{*}\etabar\cdot(-\hat{curl} \bar{\alpha}^{F}- \tr\chibar\sigma^{F}+(\eta-\etabar)\cdot~^{*}\bar{\alpha}^{F})-2(-\bar{\chi}\cdot(\etabar-\eta)+\bar{\beta}^{W}+\frac{1}{2}\mathfrak{T}(\cdot,e_{3}))\cdot\rho^{F}\\\nonumber-2\etabar\cdot(-\hat{div} \bar{\alpha}^{F}+ \tr\chibar\rho^{F}+(\eta-\etabar)\cdot\bar{\alpha}^{F})\\
+2(2\omega\omegabar+\frac{3}{4}|\eta-\etabar|^{2}+\frac{1}{4}(\eta-\etabar)\cdot(\eta+\etabar)-\frac{1}{8}|\eta+\etabar|^{2}\nonumber+\frac{1}{2}\rho^{W}+\frac{1}{4}\mathfrak{T}_{43})\bar{\alpha}^{F}\\\nonumber +2\omega\hnabla_{3}\bar{\alpha}^{F}-(- \tr\chibar\chibarhat-2\omegabar\chibarhat-\bar{\alpha}^{W})\cdot\alpha^{F}\\\nonumber -\chibarhat\cdot(-\frac{1}{2} \tr\chibar\alpha^{F}-\hnabla\rho^{F}+~^{*}\widehat{D}\sigma^{F}-2~^{*}\eta\sigma^{F}+2\eta\rho^{F}+2\omegabar\alpha^{F}\nonumber-\widehat{\chi}\cdot \bar{\alpha}^{F})\\\nonumber 
+[\hnabla_{4},\hnabla_{3}]\bar{\alpha}^{F}+[\hnabla,\hnabla_{3}]\rho^{F}+[\hnabla,\hnabla_{3}]\sigma^{F}.
\end{eqnarray}
Similar to the previous case, after another application of $\nabla_{3}$ produces $\nabla\nabla_{3}\eta$. But since this is at the level of top order derivative it contains $\hnabla^{I}\bar{\alpha}^{F}$ in addition to an algebraic term $\bar{\alpha}^{F}$. Now we may control $\bar{\alpha}^{F}$ in $L^{4}(S)$ and $\nabla\nabla_{3}\eta$ in $L^{2}(H)$. This way we gain a factor of $\epsilon$. Similarly, we control $\nabla_{3}\eta$ and $\nabla_{3}\omegabar$ in $L^{4}(S)$ using lemma \ref{7}. The most dangerous terms are estimated as follows 
\begin{eqnarray}
|\int_{\mathcal{D}_{u,\ubar}}\langle\hnabla^{I}_{3}\bar{\alpha}^{F},(\omega, \tr\chi) \hnabla^{I}_{3}\bar{\alpha}^{F}\rangle|\leq C(\mathcal{O}_{0})\int_{0}^{\ubar}||\hnabla^{I}_{3}\bar{\alpha}^{F}||^{2}_{L^{2}(  \Hbar)},\\
|\int_{\mathcal{D}_{u,\ubar}}\nabla^{I}_{3}\bar{\alpha}^{F}\nabla\nabla_{3}\eta\bar{\alpha}^{F}|\leq \epsilon^{\frac{1}{2}}\sup_{u}||\nabla^{I}_{3}\bar{\alpha}^{F}||_{L^{2}(  \Hbar)}\sup_{\ubar}||\nabla\nabla_{3}\eta||_{L^{2}(H)}\sup_{u,\ubar}||\bar{\alpha}^{F}||_{L^{4}(S)}\\\nonumber\leq \epsilon^{\frac{1}{2}}C(\mathcal{O}_{0},\mathcal{W},\mathcal{F}),\\
|\int_{\mathcal{D}_{u,\ubar}}\nabla^{I}_{3}\bar{\alpha}^{F}\nabla_{3}\varphi\hnabla_{3}\bar{\alpha}^{F}|\\\nonumber\leq \epsilon \sup_{u}||\nabla^{I}_{3}\bar{\alpha}^{F}||_{L^{2}(  \Hbar)}\sup_{u,\ubar}||\hnabla_{3}\bar{\alpha}^{F}||_{L^{4}(S)}||\nabla_{3}\varphi||_{L^{4}(S)}\leq \epsilon C(\mathcal{O}_{0},\mathcal{W},\mathcal{F}),
\end{eqnarray}
where $\varphi$ denotes the connection coefficients.
Collecting all the terms, we obtain 
\begin{eqnarray}
|ER26|\leq \epsilon^{\frac{1}{2}}C(\mathcal{O}_{0},\mathcal{W},\mathcal{F})+\epsilon C(\mathcal{O}_{0},\mathcal{W},\mathcal{F})+C(\mathcal{O}_{0})\int_{0}^{\ubar}||\hnabla^{I}_{3}\bar{\alpha}^{F}||^{2}_{L^{2}(  \Hbar)}\\\nonumber+\epsilon^{\frac{1}{2}}C(\mathcal{O}_{0},\mathcal{W},\mathcal{F})||\hnabla^{I}_{3}\rho^{F}||_{L^{2}(H)} 
+\epsilon^{\frac{1}{2}}C(\mathcal{O}_{0},\mathcal{W},\mathcal{F})||\hnabla^{I}_{3}\sigma^{F}||_{L^{2}(H)}
\end{eqnarray}
and therefore 
\begin{eqnarray}
\int_{  \Hbar_{\ubar}}|\hnabla^{I}_{3}\bar{\alpha}^{F}|^{2}+\int_{H_{u}}|\hnabla^{I}_{3}\rho^{F}|^{2}+\int_{H_{u}}|\hnabla^{I}_{3}\sigma^{F}|^{2}\leq\nonumber \int_{  \Hbar_{0}}|\hnabla^{I}_{3}\bar{\alpha}^{F}|^{2}+\int_{H_{0}}|\hnabla^{I}_{3}\rho^{F}|^{2}+\int_{H_{0}}|\hnabla^{I}_{3}\sigma^{F}|^{2}\\\nonumber 
\leq C(\mathcal{O}_{0})\int_{0}^{\ubar}||\hnabla^{I}_{3}\bar{\alpha}^{F}||^{2}_{L^{2}(  \Hbar)}d\ubar^{'}+\epsilon C(\mathcal{O}_{0},\mathcal{W},\mathcal{F})||\hnabla^{I}_{3}\rho^{F}||_{L^{2}(H)}+\epsilon C(\mathcal{O}_{0},\mathcal{W},\mathcal{F})||\hnabla^{I}_{3}\sigma^{F}||_{L^{2}(H)}\\\nonumber 
+\epsilon^{\frac{1}{2}}C(\mathcal{O}_{0},\mathcal{W},\mathcal{F})+\epsilon C(\mathcal{O}_{0},\mathcal{W},\mathcal{F})+\epsilon^{\frac{1}{2}}C(\mathcal{O}_{0},\mathcal{W},\mathcal{F})||\hnabla^{I}_{3}\rho^{F}||_{L^{2}(H)}\\\nonumber 
+\epsilon^{\frac{1}{2}}C(\mathcal{O}_{0},\mathcal{W},\mathcal{F})||\hnabla^{I}_{3}\sigma^{F}||_{L^{2}(H)}.
\end{eqnarray}
Smallness of $\epsilon$ yields 
\begin{eqnarray}
\int_{  \Hbar_{\ubar}}|\hnabla^{I}_{3}\bar{\alpha}^{F}|^{2}\leq C(\mathcal{O}_{0},\mathcal{W}_{0},\mathcal{F}_{0})+\epsilon^{\frac{1}{2}}C(\mathcal{O}_{0},\mathcal{W},\mathcal{F})\nonumber+\epsilon C(\mathcal{O}_{0},\mathcal{W},\mathcal{F})\\\nonumber +C(\mathcal{O}_{0})\int_{0}^{\ubar}||\hnabla^{I}_{3}\bar{\alpha}^{F}||_{L^{2}(  \Hbar)}d\ubar
\end{eqnarray}
Exact same procedure but commuting the e.o.m with $\hnabla$ in the second time yields the estimate for $\int_{  \Hbar_{\ubar}}|\hnabla_{3}\hnabla\bar{\alpha}^{F}|^{2}$. 
This concludes the proof of the lemma and the energy estimates associated with the Weyl and Yang-Mills curvature.
\end{proof}
\begin{corollary} \textit{$\mathcal{W} \leq C(\mathcal{O}_{0},\mathcal{W}_{0},\mathcal{F}_{0}),~\mathcal{F}\leq C(\mathcal{O}_{0},\mathcal{W}_{0},\mathcal{F}_{0})$}.
\end{corollary}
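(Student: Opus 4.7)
The plan is to sum the energy estimates from Lemmas \ref{71}, \ref{72}, \ref{73}, \ref{74}, \ref{75} and absorb the resulting terms into the definitions of $\mathcal{W}$ (equation \ref{eq:weylnorm}) and $\mathcal{F}$ (equation \ref{eq:yangnorm}). Taking the supremum over $u \in [0,\epsilon]$ and $\ubar \in [0,J]$ on the left-hand side of each estimate yields a bound of the schematic form
\begin{equation*}
\mathcal{W}^{2} + \mathcal{F}^{2} \leq C(\mathcal{O}_{0},\mathcal{W}_{0},\mathcal{F}_{0}) + \epsilon^{1/2} C(\mathcal{O}_{0},\mathcal{W},\mathcal{F}) + \epsilon\, C(\mathcal{O}_{0},\mathcal{W},\mathcal{F}) + C(\mathcal{O}_{0})\int_{0}^{\ubar}\Big(\|\hnabla^{I}\bar{\alpha}^{F}\|^{2}_{L^{2}(\Hbar)} + \|\nabla^{J}\bar{\alpha}\|^{2}_{L^{2}(\Hbar)}\Big)d\ubar',
\end{equation*}
where $I \leq 2$, $J \leq 1$, together with the analogous terms $\|\hnabla_{3}\hnabla^{K}\bar{\alpha}^{F}\|^{2}_{L^{2}(\Hbar)}$ and $\|\nabla_{3}\bar{\alpha}\|^{2}_{L^{2}(\Hbar)}$.

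The first step is to handle the polynomial-in-$\mathcal{W},\mathcal{F}$ nonlinearity on the right-hand side. Since the constants $C(\mathcal{O}_{0},\mathcal{W},\mathcal{F})$ are continuous, polynomial-type functions of their arguments, a standard continuity / bootstrap argument allows us to absorb the terms $\epsilon^{1/2}C(\mathcal{O}_{0},\mathcal{W},\mathcal{F})$ and $\epsilon\, C(\mathcal{O}_{0},\mathcal{W},\mathcal{F})$ into the left-hand side upon choosing $\epsilon$ sufficiently small (depending on $\mathcal{O}_{0},\mathcal{W}_{0},\mathcal{F}_{0}$). The size of $\epsilon$ is fixed once and for all in terms of the initial data, consistent with all the smallness conditions required in Lemmas \ref{1}--\ref{L4alphabar}.

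The second (and most delicate) step is to close the integral inequality in the $\ubar$ variable. For each of the bad components $\bar{\alpha},\nabla\bar{\alpha},\bar{\alpha}^{F},\hnabla\bar{\alpha}^{F},\hnabla^{2}\bar{\alpha}^{F},\hnabla_{3}\bar{\alpha}^{F},\hnabla_{3}\hnabla\bar{\alpha}^{F}$, we note that the corresponding inequality has the Gr\"onwall form
\begin{equation*}
\|\Phi\|^{2}_{L^{2}(\Hbar_{\ubar})} \leq C(\mathcal{O}_{0},\mathcal{W}_{0},\mathcal{F}_{0}) + C(\mathcal{O}_{0})\int_{0}^{\ubar}\|\Phi\|^{2}_{L^{2}(\Hbar_{\ubar'})}d\ubar',
\end{equation*}
so Gr\"onwall's inequality together with $\ubar \leq J$ yields $\|\Phi\|^{2}_{L^{2}(\Hbar_{\ubar})} \leq C(\mathcal{O}_{0},\mathcal{W}_{0},\mathcal{F}_{0})e^{C(\mathcal{O}_{0})J}$, which is still a constant depending only on the initial data. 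The main obstacle here is to ensure the estimates are applied in the correct hierarchical order: the higher order estimates for $\hnabla^{2}\bar{\alpha}^{F}$ and $\nabla\bar{\alpha}$ depend on the lower order bounds already having been closed, so we first run Gr\"onwall on $\bar{\alpha},\bar{\alpha}^{F}$, then on $\nabla\bar{\alpha},\hnabla\bar{\alpha}^{F}$, and finally on $\hnabla^{2}\bar{\alpha}^{F},\hnabla_{3}\hnabla\bar{\alpha}^{F}$, recycling the preceding output each time.

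The third step is to recognize that the initial data norms $\mathcal{W}_{0}$ and $\mathcal{F}_{0}$ (as defined in \ref{eq:weylinitialnorm} and \ref{eq:yanginitialnorm}) include precisely the contributions $\|\nabla^{I}\Psi\|_{L^{2}(H_{0},\Hbar_{0})}$, $\|\nabla_{4}\alpha\|_{L^{2}(H_{0})}$, $\|\nabla_{3}\bar{\alpha}\|_{L^{2}(\Hbar_{0})}$, $\|\hnabla^{I}\Phi^{F}\|_{L^{2}(H_{0},\Hbar_{0})}$, and the relevant null derivatives of $\alpha^{F},\bar{\alpha}^{F}$, so every initial-data contribution on the right-hand side of the summed estimate is controlled by $C(\mathcal{W}_{0},\mathcal{F}_{0})$. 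Combining the three steps, the supremum over $u,\ubar$ of the left-hand sides of the Lemmas \ref{71}--\ref{75} yields $\mathcal{W}^{2},\mathcal{F}^{2} \leq C(\mathcal{O}_{0},\mathcal{W}_{0},\mathcal{F}_{0})$, proving the corollary. Taken together with Lemmas \ref{1}--\ref{L4alphabar}, this simultaneously closes the bootstrap assumption (\ref{eq:bootstrapinitial}) by choosing $\Delta$ larger than the now-bounded connection coefficient norms, completing the proof of the main theorem.
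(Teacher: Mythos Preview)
Your proposal is correct and follows the same route as the paper: the paper's proof is a one-liner invoking Lemmas \ref{71}--\ref{75}, Gr\"onwall's inequality, and the fact that $u\in[0,\epsilon]$, $\ubar\in[0,J]$, and you have simply spelled out those three steps (absorbing the $\epsilon^{1/2}$ and $\epsilon$ terms via smallness, running Gr\"onwall on the $\bar{\alpha}$- and $\bar{\alpha}^{F}$-type integrals, and identifying the initial contributions with $\mathcal{W}_{0},\mathcal{F}_{0}$) in more detail than the paper does.
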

\begin{proof} A direct consequence of lemma \ref{71}- \ref{75}, Gr\"onwall's inequality and the fact that $u\in[0,\epsilon]$ and $\ubar\in[0, J]$.
\end{proof}

\noindent This concludes the proof of the estimates for the Weyl and Yang-Mills curvature throughout $\mathcal{D}_{u,\ubar}$ in terms of the initial data. Once these estimates are derived, we can choose the bootstrap constant $\Delta$ (\ref{eq:bootstrapinitial}) large enough (but finite) to close the argument since the final estimates do not depend on the bootstrap constant $\Delta$. 
\

\section{Concluding Remarks}
\noindent Here we have obtained `semi-global' estimates for the coupled Einstein-Yang-Mills equations in a gauge-invariant way (in the sense of Yang-Mills gauge theory). The null structure of the Einstein-Yang-Mills equations plays a crucial role in achieving this result. This lays the platform to study several problems associated with the coupled Einstein-Yang-Mills dynamics in the immediate future. As we have mentioned, the double null framework is appropriate for the radiation problem associated with gravity or/and Yang-Mills (they both have the same characteristics) equations. Since the Yang-Mills equations are themselves non-linear, they can counterbalance the gravity (e.g., in the case of regular soliton-like solutions). As a consequence, all three possibilities (black hole solution, regular solutions, and naked singular solutions) are open in the context of an evolution problem. Therefore it would be interesting to obtain sharp criteria demarcating all three regimes. 

Another interesting perspective would be to study the Wang-Yau quasi-local energy \cite{Wang1,Wang2} contained in the space-like domain bounded by the membrane $S_{u,\ubar}$. Ultimately, the formation of singularities is associated with the focusing of energy, and since the dynamics of the topological $2-$ spheres $S_{u,\ubar}$ is one of the central parts of the analysis of singularity formation, it is only natural to understand the evolution of the energy contained within it. Therefore, we need a definition of energy in a fully relativistic setting. Luckily as we have mentioned before, \cite{Wang1,Wang2} constructed a notion of quasi-local energy associated with a topological $2-$ sphere $S_{u,\ubar}$. It was proven in \cite{Chen} that such quasi-local energy while evolving along the incoming null direction reproduces the Bel-Robinson energy and the matter stress-energy tensor (at different orders of course) at the limit of approaching vertex. Motivated by this result, it is only natural to study the evolution of this quasi-local energy in the outgoing null direction and observe the behavior in the focusing regime (note an expression of the quasi-local energy was obtained in \cite{puskar} in the presence of a gauge field). We expect an alternate notion of trapped surface formation through the study of this quasi-local energy.

Lastly, we want to mention the fact that our estimates associated with the Yang-Mills curvature components are completely gauge-invariant. In particular, we obtain estimates for the fully gauge covariant angular derivatives. This essentially hints at an apparent similarity with the Maxwell and Yang-Mills theory despite the fact that the latter is a fully non-linear theory because the gauge covariant derivative \textit{hides} the information of the connection and the non-linear coupling shows up only as the commutator of the fully gauge covariant derivatives. This does not cause a problem in the context of obtaining estimates since all the associated inequalities are formulated in terms of the gauge covariant derivatives (and the estimates required for a local existence theory for coupled Yang-Mills equations can be obtained in a gauge-invariant way; note that at the end one ought to choose a gauge and work with the equation for connection. However, our proffered choice of gauge is temporal gauge where the Yang-Mills equations take the form of a symmetric hyperbolic system and the spatial connection can be determined in terms of the gauge invariant norms of the Yang-Mills curvature). There is of course a physical motivation behind this. Since the double null framework in some sense encodes the information about the \textit{physical} nature of the Yang-Mills fields, the choice of the gauge should not matter and one should expect to obtain gauge-invariant estimates as is done in the current context. In this framework of the gauge-invariant estimates, therefore, the exterior stability of the Minkowski space under coupled gravity-Yang-Mills perturbations is expected to hold. In other words, we conjecture ``\textit{Exterior stability of the Minkowski space holds under the coupled gravity-Yang-Mills perturbations}. Since we have developed the framework in this article, we want to pursue such stability conjecture in the near future.

\section{Acknowledgement} \noindent P.M. is supported by the Center of Mathematical Sciences and Applications, Department of Mathematics at Harvard University.

\end{document}